\theoremstyle{definition}
\newtheorem{definition}{Definition}[section] 
\theoremstyle{definition}
\newtheorem{remark}[definition]{Remark}
\theoremstyle{definition}
\newtheorem{example}[definition]{Example}
\theoremstyle{theorem}
\newtheorem{theorem}[definition]{Theorem}
\theoremstyle{prop}
\newtheorem{prop}[definition]{Proposition}
\theoremstyle{cor}
\theoremstyle{lemma}
\newtheorem{lemma}[definition]{Lemma}
\newcommand{\Exterior}{\mathchoice{{\textstyle\bigwedge}}%
    {{\bigwedge}}%
    {{\textstyle\wedge}}%
    {{\scriptstyle\wedge}}}
\begin{document}
\title{\LARGE Super Cartan geometry and the super Ashtekar connection}
\author{
  \large Konstantin Eder\thanks{konstantin.eder@gravity.fau.de} \\
	\large Institute for Quantum Gravity (IQG)\\
		\large Friedrich-Alexander-Universität Erlangen-Nürnberg (FAU)
}
\maketitle

\begin{abstract}
This work is devoted to the geometric approach to supergravity. More precisely, we interpret $\mathcal{N}=1$, $D=4$ supergravity as a super Cartan geometry which provides a link between supergravity and Yang-Mills theory.\\
To this end, we first review important aspects of the theory of supermanifolds and we establish a link between various different approaches. We then introduce super Cartan geometries using the concept of so-called enriched categories. This, among other things, will enable us to implement anticommutative fermionic fields. We will then also show that non-extended $D=4$ supergravity naturally arises in this framework.\\ 
Finally, using this gauge-theoretic interpretation as well as the chiral structure of the underlying supersymmetry algebra, we will derive graded analoga of Ashtekar's self-dual variables and interpret them in terms of generalized super Cartan connections. This gives canonical chiral supergravity the structure of a Yang-Mills theory with gauge supergroup similar to the self-dual variables in ordinary first-order Einstein gravity which was first observed by F\"ul\"op \cite{Fulop:1993wi}.\\
We then construct the parallel transport map corresponding to the super connection in mathematical rigorous way using again enriched categories. This provides the possibility of quantizing gravity and matter degrees of freedom in loop quantum gravity in a unified way.
\end{abstract}

\newpage

\section{Introduction}
Soon after the first discovery of supergravity by Freedman, Ferrara and van Nieuwenhuizen \cite{Freedman:1976xh} in 1976, Ne'eman and Regge studied a new geometric approach to supergravity based on the ideas of Cartan of a purely geometric interpretation of gravity\footnote{here, in the context of Cartan the term 'geometric' means the description of gravity in terms of (a particular kind of) a connection on a principal bundle}. In this theory, now commonly known as Cartan geometry, gravity arises by considering the underlying symmetry groups of flat Minkowski spacetime, i.e., a Klein geometry consisting of the isometry group given by the Poincaré group and the Lorentz group as stabilizer subgroup of a particular spacetime event. Gravity is then obtained by deforming this flat initial data in a particular way by studying a certain kind of connection forms, called Cartan connections, taking values in the Lie algebra of the isometry group of the flat model. This Cartan geometric approach gives gravity a very clear geometric interpretation and even allows the inclusion of matter fields via Kaluza-Klein reduction of higher dimensional pure gravity theories leading for instance to Einstein-Yang-Mills theories. However, it still has some limitations, as, for instance, it does not include fermionic fields. This changes in case of supersymmetry, as graded Lie algebras, by definition, naturally include fermionic generators. It was then realized extending Cartan geometry to the super category that this in fact leads to supergravity. The fermion field, given by the superpartner of the graviton field, then arises from the odd components of a super Cartan connection taking values in the graded extension of the Poincaré algebra. Besides, this description also yields a geometric interpretation of supersymmetry transformations in terms of local gauge transformations along the odd directions of the underlying (graded) structure group.

These ideas were studied more systematically and developed even further by D'Auria and Fré et al. \cite{DAuria:1982uck,Castellani:1991et} to include extended and higher dimensional supergravity theories. Moreover, generalizing the Maurer-Cartan equations to include higher $p$-form gauge fields which naturally appear in higher dimensions, such as the supergravity $C$-field in the unique maximal $D=11$, $\mathcal{N}=1$ supergravity theory, then lead to the concept of free graded differential algebras (FDA). These type of algebras then turned out to have a rigorous geometric interpretation in higher category theory describing the higher gauge fields as components of a higher Cartan connection \cite{Fiorenza:2013nha,Sati:2015yda}.

In the canonical approach to (quantum) supergravity such as, most prominently, in the framework of loop quantum gravity (LQG) (see e.g. \cite{Ashtekar:2004eh,Thiemann:2007pyv} and references therein), this naturally raises the question whether at least some remnant of the underlying geometrical structure of the full theory still can be seen in the resulting canonical theory. In fact, it was observed by F\"ul\"op in the seminal paper \cite{Fulop:1993wi} while studying a specific subfamily of the constraints of canonical $\mathcal{N}=1$, $D=4$ anti-de Sitter supergravity that the corresponding constraint algebra has the structure of a graded Lie algebra leading to some kind of a graded generalization of Ashtekar's self-dual variables. Using this graded connection, Gambini et al. \cite{Gambini:1995db} as well as Ling and Smolin \cite{Ling:1999gn} then studied the possibility of generalizing the notion of spin network states to super spin networks.

In this paper, we will study this idea more systematically using the strong link between supergravity and Cartan geometry. We will therefore provide a mathematical rigorous account to the formulation of a super Cartan geometry. However, the problem of modeling anticommuting classical fermion fields, which is crucial in the context of supersymmetry, turns out be by far non-straightforward. This seems to be usually ignored in the physical literature. Nevertheless, motivated from algebraic geometry, this problem has a beautiful resolution using the concept of enriched categories first studied already by Schmitt \cite{Schmitt:1996hp}. We will apply these ideas to our situation and then see that the resulting picture resembles very closely the description of fermion fields in mathematical rigorous approaches to quantum field theory involving anticommuting classical fermion fields such as perturbative algebraic quantum field theory (pAQFT) \cite{Rejzner:2011au,Rejzner:2016hdj}.
We will then see that non-extended $D=4$ supergravity arises naturally in this framework. For an interesting approach which is different from the present one, using the notion of '\emph{integral forms}' see \cite{Castellani:2014goa,Cremonini:2019aao,Catenacci:2018xsv}.

Based on this geometric formuation, we will be able to interpret the graded analog of Asthekar's self dual variables in terms of generalized super Cartan connections and give a conceptual explanation for the observation of F\"ul\"op. As it turns out, this connection appears very naturally when studying the chiral structure of the underlying supersymmetry algebra and is rooted on the special property of the (bosonic) self-dual variables. Moreover, as we will see, this particular structure of the supersymmetry algebra even survives in case of extended supersymmetry which shows that the existence of the super Ashtekar connection is not just a mere coincidence.

Another advantage of deriving the super Ashtekar connection using the strong relation between supergravity and super Cartan connection is that this naturally leads to an interpretation of the canonical theory as a Yang-Mills theory with supergroup as a gauge group. This is in fact in complete analogy to the classical theory since the self-dual variables give first-order general relativity the structure of a $\mathrm{SL}(2,\mathbb{C})$ Yang-Mills theory. We then use this connection in order to construct the corresponding parallel transport map using again the concept of enriched categories. These results are of course of independent interest and may even have applications in other areas of physics that involve super gauge fields.

As a final step, we will then use these holonomies in order to construct the state space of loop quantum supergravity based on the ideas of \cite{Gambini:1995db,Ling:1999gn}. Therefore, we will also derive the Haar measure of the underlying gauge supergroup. As it turns out, this gives the resulting Hilbert space a very intriguing structure which also uncoveres some of the mathematical beauty underlying the standard quantization of fermion fields in LQG as proposed in \cite{MoralesTecotl:1994ns,Thiemann:1997rq}.\\
\\
The structure of this paper is follows: At the beginning, we will provide a detailed account of the concept of supermanifolds and establish a link between the various different approaches to this subject. We will then recall some basic super Lie groups which play a fundamental role in supersymmetry and supergravity. At the end of this section, we summarize the construction of the super holonomies as considered in detail in \cite{Konsti-FB:2020} and which is based on the concept of enriched categories in order to describe anticommuting classical fermion fields.

In section \ref{Cartan geometry}, we describe the geometrical interpretation of gravity in terms of a Cartan geometry. We will then define super Cartan geometries in the framework of enriched categories in section\ref{superCartan geometry}. Finally, this formulation will be used in order to derive $\mathcal{N}=1$, $D=4$ (anti-de Sitter) supergravity via the super MacDowell-Mansouri action.

With all these ingredients, the graded Ashtekar will be derived in section \ref{Ashtekar} studying the chiral structure of the underlying (extended) super Poincaré/anti-de Sitter group. Using the relation to Yang-Mills theory, we will then use the super holonomies to construct the Hilbert space in the manifest approach to loop quantum supergravity. In particular, we will derive the invariant Haar measure of the gauge supergroup and describe the link to the standard quantization scheme of fermion fields in LQG. Finally, the generalization of the concept of invariant connections to the super category will be considered. This provides a mathematical solid basis to apply these results to symmetry reduced models in the framework of supersymmetric LQC which will be studied in more detail in \cite{Konsti-Kos:2020}. 

\section{Supersymmetry and Supergeometry}

\subsection{Three roads towards a theory of supermanifolds}\label{Supermanifolds}
In the literature, there exist various different approaches to formulate the notion of a supermanifold. Probably, the most popular one is the so-called \emph{algebro-geometric approach} introduced by Berezin, Kostant and Leites \cite{Berezin:1976,Kostant:1975qe}. As can be already inferred from its name, this approach borrows techniques from  algebraic geometry and starts formulating supermanifolds based on the observation that ordinary smooth manifolds can be equivalently be described in terms of the function sheaf defined on it. Albeit being very elegant, its definition is very abstract making it less accessible for physicists for concrete applications. Roughly speaking, this is due to the lack of points as points in this framework are implicitly encoded in the underlying structure sheaf.\\
Hence, another approach to supermanifolds, the so-called \emph{concrete approach}, was initiated by DeWitt \cite{DeWitt:1984} and Rogers \cite{Rogers:1980} defining them similar to ordinary smooth manifolds in terms of a topological space of points, i.e., a topological manifold that locally looks a flat superspace. However, as it turns out, this definition has various ambiguities in formulating the notion of a point yielding too many unphysical degrees of freedom.\\
It was then found by Molotkov \cite{Mol:10} and further developed by Sachse \cite{Sac:08} that both approaches are in fact two sides of the same coin. More precisely, as will be explained in more detail in what follows, it was shown that Rogers-DeWitt supermanfolds can be interpreted in terms of a functor constructed out of a algebro-geometric supermanifold. This functorial intepretation then resolved the ambiguities arising in the Rogers-De Witt approach and also opens the way for a generalization of the theory to infinite dimensional supermanifolds.\\
In the following, we want to introduce the Berezin-Kostant-Leites approach to supermanifold theory and explain in some detail its relation to the Rogers-DeWitt approach using the functor of points which naturally leads to their functorial interpretation as observed by Molotkov and Sachse.\\
\\
The Berezin-Kostant-Leites approach is based on the observation that ordinary smooth manifolds can equivalently be described in terms of locally ringed spaces. Therefore note that any smooth manifold canonically yields the locally ringed space $(M,C^{\infty}_M)$ which is locally isomorphic to some $(V,C^{\infty}_{\mathbb{R}^n}|_V)$ with $V\subseteq\mathbb{R}^n$ open.\\
In fact, it turns that all smooth manifold $M$ can be described this way. That is, if $(M,\mathcal{O}_M)$ is a locally ringed space with $\mathcal{O}_M$ a sheaf on $M$ such that $(M,\mathcal{O}_M)$ is locally isomorphic to some $(V,C^{\infty}_{\mathbb{R}^n})$ with $V\subseteq\mathbb{R}^n$ open. Then, $M$ can be given the structure of smooth manifold in a unique way such that $\mathcal{O}_M\cong C^{\infty}_M$. Even more, it follows that both categories are equivalent.\\
\\
Based on this idea, one defines supermanifolds as some sort of locally super ringed spaces generalzing appropriately the notion of a smooth function.  Therefore, a so-called \emph{supersmooth fucntion} or \emph{superfield} $f$ on the superspace $\mathbb{R}^{m|n}=\mathbb{R}^m\oplus\mathbb{R}^n$ is defined as a function of the form.
\begin{equation}
f=\sum_{I\in M_n}{f_I\theta^I}
\label{eq:2.2.1}
\end{equation} 
with $f_I$ ordinary smooth functions on $\mathbb{R}^m$ for any multi-index $I=(i_1,\ldots,i_k)\in M_n$, $0\leq|I|=k\leq n$, where $\theta^I:=\theta^{i_1}\ldots\theta^{i_k}$. In the following, we will follow very closely \cite{Carmeli:2011} for the definition of algebro-geometric supermanifolds and the construction of the functor of points (see also appendix \ref{SuperAlg} for our choice of conventions in super linear algebra). Therefore, we will omit most of the proofs. 
\begin{definition}
An \emph{algebro geometric supermanifold} of dimension $(m,n)$ is a locally super ringed space $\mathcal{M}=(M,\mathcal{O}_{\mathcal{M}})$ that is locally isomorphic to the superspace $\mathbb{R}^{m|n}$. More precisely, $(M,\mathcal{O}_M)$ consists of a topological space $M$ which is Hausdorff and second countable as well as a sheaf $\mathcal{O}_{\mathcal{M}}$ over $M$ of super commutative rings called \emph{structure sheaf} such that, for any $x\in M$, the stalk $\mathcal{O}_{M,x}$ is a local super ring. Moreover, for $x\in M$, there exists an open neighborhood $U\subset M$ of $x$ as well as an isomorphism $\phi_U=(|\phi_U|,\phi_U^{\sharp})$ of ordinary locally ringed spaces\footnote{A morphism $f=(|f|,f^{\sharp}):\,(X,\mathcal{O}_X)\rightarrow(Y,\mathcal{O}_Y)$ of locally ringed spaces consists of a continuous map $|f|:\,X\rightarrow Y$ between topological spaces as well as a morphism $f^{\sharp}:\,\mathcal{O}_Y\rightarrow f_{*}\mathcal{O}_X$ of sheaves of rings on $Y$ called \emph{pullback} such that, $\forall x\in X$, the induced morphism $f^{\sharp}_x:\,\mathcal{O}_{Y,f(x)}\rightarrow\mathcal{O}_{X,x}$ is local, i.e., $f^{\sharp}_x$ maps the maximal ideal of the stalk $\mathcal{O}_{Y,f(x)}$ to the maximal ideal of $\mathcal{O}_{X,x}$.} 
\begin{equation}
\phi_U=(|\phi_U|,\phi_U^{\sharp}):\,(U,\mathcal{O}_{\mathcal{M}}|_U)\rightarrow(|\phi_U|(U),C^{\infty}_{\mathbb{R}^m}|_{|\phi_U|(U)}\otimes\bigwedge{[\theta^1,\ldots,\theta^n]})
\label{eq:2.2.2}
\end{equation}  
such that $\phi_U^{\sharp}:\,C^{\infty}_{\mathbb{R}^m}|_{|\phi_U|(U)}\otimes\bigwedge{[\theta^1,\ldots,\theta^n]}\rightarrow\mathcal{O}_{\mathcal{M}}|_U$, in addition, is a (even) morphism of sheaves of super algebras. The tuple $(U,\phi_U)$ is called a local \emph{chart} or \emph{superdomain} of $x$. A family $\{(U_{\alpha},\phi_{\alpha})\}_{\alpha\in\Lambda}$ of charts is called an \emph{atlas} of $(M,\mathcal{O}_{\mathcal{M}})$ if $\bigcup_{\alpha\in\Lambda}{U_{\alpha}}=M$. \\
A morphism of $f=(|f|,f^{\sharp}):\,(M,\mathcal{O}_{\mathcal{M}})\rightarrow(N,\mathcal{O}_{\mathcal{N}})$ of algebro geometric supermanifolds is a morphism of the underlying ordinary locally ringed spaces such that $f^{\sharp}:\,\mathcal{O}_{\mathcal{N}}\rightarrow f_{*}\mathcal{O}_{\mathcal{M}}$ also is an (even) morphism of super algebras. Algebro geometric supermanifolds together with morphisms between them form a category $\mathbf{SMan}_{\mathrm{Alg}}$ called the \emph{category of algebro-geometric supermanifolds}.
\end{definition}
\begin{remark}
Choosing a chart $(U,\phi_U)$ of an algebro-geometric supermanifold $(M,\mathcal{O}_{\mathcal{M}})$, this induces a local coordinates $(t^{\sharp i},\theta^{\sharp j})$ on $\mathcal{M}|_U:=(U,\mathcal{O}_{\mathcal{M}}|_U)$ via $t^{\sharp i}:=\phi_U^{\sharp}(t^i)$ and $\theta^{\sharp j}:=\phi_U^{\sharp}(\theta^j)$ $\forall i=1,\ldots,m$, $j=1,\ldots,m$ where $\mathrm{dim}(M,\mathcal{O}_{\mathcal{M}})=(m,n)$. Moreover, any function $f\in\mathcal{O}_{\mathcal{M}}|_U$ is of the form
\begin{equation}
f=\sum_{I\in M_n}{f_I\theta^{\sharp I}}
\label{eq:}
\end{equation}
where, for $I=(i_1,\ldots,i_k)\in M_n$, $\theta^{\sharp I}:=\theta^{\sharp i_1}\cdots\theta^{\sharp i_n}$ and $f_I=\phi_U^{\sharp}(g_I)$ for some smooth function $g_I\in C^{\infty}(|\phi_U|(U))$.
\end{remark}
Any supermanifold naturally contains an ordinary smooth manifold as a submanifold. Therefore, for any algebro-geometric supermanifold $\mathcal{M}=(M,\mathcal{O}_{\mathcal{M}})$ and $U\subset M$ open, consider the set $\mathcal{J}_{\mathcal{M}}(U):=\{f\in\mathcal{O}_{\mathcal{M}}(U)|\,f\text{ is nilpotent}\}$. Then, it follows that $\mathcal{J}_{\mathcal{M}}(U)$ is an ideal in $\mathcal{O}_{\mathcal{M}}(U)$ yielding another sheaf $U\mapsto\mathcal{J}_{\mathcal{M}}(U)$. Hence, one can construct the quotient sheaf $\mathcal{O}_{\mathcal{M}}/\mathcal{J}_{\mathcal{M}}$ whose sections locally have the structure of an ordinary smooth functions. Hence, this yields a locally ringed space 
\begin{equation}
M_0:=(M,\mathcal{O}_{\mathcal{M}}/\mathcal{J}_{\mathcal{M}})
\label{eq:2.2.3}
\end{equation}
which is a submanifold and has the structure of an ordinary smooth manifold.\\
Before we continue, let us mention a central result in the theory of algebro-geometric supermanifolds as it will appear quite frequently in the discussion in what follows. It states that morphisms are uniquely characterized via the pullback of a basis of global sections.  
\begin{theorem}[Global Chart Thorem \cite{Carmeli:2011}]
Let $\mathcal{M}$ be an algebro-geometric supermanifold and $\mathcal{U}^{m|n}=(U,C^{\infty}_U)\subseteq\mathbb{R}^{m|n}$ be a superdomain with $U\subseteq\mathbb{R}^{m}$ open. There is a bijectice correspondence between supermanifold morphisms $\psi:\,\mathcal{M}\rightarrow\mathcal{U}^{m|n}$ and tuples $(t^{\sharp i},\theta^{\sharp j})$ of global sections of $\mathcal{O}(\mathcal{M})$ with $t^{\sharp i}$ even and $\theta^{\sharp j}$ odd, $i=1,\ldots,m$ and $j=1,\ldots,n$, such that $(t^{\sharp i}(x),\theta^{\sharp j}(x))\in U$ $\forall x\in M$. 
\end{theorem}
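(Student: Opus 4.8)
The plan is to exhibit explicit maps in both directions and show they are mutually inverse, the analytic core being a super version of Hadamard's lemma which forces any pullback into the affine superdomain to be substitution into smooth functions.

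\textbf{From morphisms to tuples.} Given $\psi=(|\psi|,\psi^{\sharp})$, I set $t^{\sharp i}:=\psi^{\sharp}(t^i)$ and $\theta^{\sharp j}:=\psi^{\sharp}(\theta^j)$, where $t^i,\theta^j$ are the standard coordinate sections on $\mathcal{U}^{m|n}$. Since $\psi^{\sharp}$ is an even morphism of sheaves of superalgebras, the $t^{\sharp i}$ are even and the $\theta^{\sharp j}$ odd global sections of $\mathcal{O}(\mathcal{M})$. Passing to the reduced manifold $M_0=(M,\mathcal{O}_{\mathcal{M}}/\mathcal{J}_{\mathcal{M}})$, the morphism $\psi$ induces an ordinary smooth map $M_0\to U$ whose $i$-th component is the reduced part of $t^{\sharp i}$; evaluating at $x\in M$ and using that odd sections have vanishing reduced part then gives $(t^{\sharp i}(x),\theta^{\sharp j}(x))=(|\psi|(x),0)\in U$, which is the asserted range condition.

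\textbf{From tuples to morphisms.} Conversely, given a tuple satisfying the range condition, I would first fix the smooth map $|\psi|:M\to U$, $x\mapsto(t^{\sharp 1}(x),\dots,t^{\sharp m}(x))$, which lands in $U$ precisely by hypothesis. To build $\psi^{\sharp}$ I would work on a chart $(U_{\alpha},\phi_{\alpha})$, where $\mathcal{O}_{\mathcal{M}}|_{U_{\alpha}}\cong C^{\infty}_{\mathbb{R}^m}\otimes\Exterior[\theta^1,\dots,\theta^n]$ splits, so that each $t^{\sharp i}|_{U_{\alpha}}$ has a well-defined reduced part and nilpotent part $n^i$. For a superfield $f=\sum_I f_I\,\theta^I$ I then define the pullback by the finite super-Taylor expansion
\begin{equation}
\psi^{\sharp}(f):=\sum_{I\in M_n}\Bigl(\sum_{\alpha}\frac{1}{\alpha!}\,\bigl(\partial^{\alpha}f_I\circ|\psi|\bigr)\,n^{\alpha}\Bigr)\theta^{\sharp I},\qquad n^{\alpha}:=(n^1)^{\alpha_1}\cdots(n^m)^{\alpha_m},
\label{eq:taylorpullback}
\end{equation}
the inner sum terminating because the $n^i$ are nilpotent. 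The main work here is to check that (\ref{eq:taylorpullback}) defines an even, unital, multiplicative morphism $\mathcal{O}_{\mathcal{U}^{m|n}}\to|\psi|_{*}\mathcal{O}_{\mathcal{M}}$ of sheaves of superalgebras that commutes with restrictions and is local on stalks: multiplicativity follows from the Leibniz rule for $\partial^{\alpha}$ together with the binomial expansion of products of the $n^i$, and locality from the fact that a section is a local unit near $x$ iff its reduced value there is nonzero.

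\textbf{Inverses, gluing, and the key lemma.} Applying (\ref{eq:taylorpullback}) to the coordinate superfields $t^i,\theta^j$ returns $t^{\sharp i},\theta^{\sharp j}$, so tuple $\to$ morphism $\to$ tuple is the identity. The genuinely central point is uniqueness, which I expect to be the main obstacle: an \emph{arbitrary} morphism $\psi$ is forced to satisfy (\ref{eq:taylorpullback}) once its coordinate values are fixed. Expanding a superfield in the odd coordinates reduces this to showing that the restriction of $\psi^{\sharp}$ to the even smooth functions $C^{\infty}_U$ is determined by the images $t^{\sharp i}$ of the even coordinates. Here I would invoke Hadamard's lemma iteratively, writing a smooth $g$ as its Taylor polynomial at $p=|\psi|(x)$ plus a remainder; since the ideal generated by the odd sections is nilpotent ($\mathcal{J}_{\mathcal{M}}^{\,n+1}=0$ on a split chart), the nilpotent corrections terminate and the leading term assembles, via the $C^{\infty}$-ring structure of the reduced part, to the smooth substitution $g\circ|\psi|$. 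This pins down $\psi^{\sharp}(g)$ as in (\ref{eq:taylorpullback}). The same local uniqueness guarantees that the chartwise pullbacks agree on overlaps and hence glue to a global morphism, and it yields that morphism $\to$ tuple $\to$ morphism is the identity, completing the bijection.
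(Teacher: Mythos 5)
Your proposal is correct and follows essentially the same route as the paper, whose own proof is only a two-line sketch: the forward direction is the pullback of the coordinate sections exactly as you describe, and the inverse direction is dispatched with ``follows from the local triviality property,'' the substance being deferred to \cite{Carmeli:2011}. Your chartwise Grassmann--Taylor construction together with the iterated Hadamard uniqueness argument is precisely the standard way that deferred step is filled in (the one point worth making explicit is that killing the Hadamard remainders uses that they lie in arbitrarily high powers of the maximal ideal at \emph{every} point of $M$, which forces each coefficient function, being flat everywhere, to vanish identically).
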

\begin{proof}
For a section $f\in\mathcal{O}(\mathcal{M})$, the \emph{value $f(x)\equiv\mathrm{ev}_x(f)$} of $f$ at $x\in M$ is defined as the unique real number such that $f-f(x)$ is not invertible in any open neighborhood of $x$ in $M$. It follows that, if $\phi:\,\mathcal{M}\rightarrow\mathcal{N}$ is a morphism of supermanifolds, then $\phi^{\sharp}(g)(y)=g(|\phi|(y))$ for any $g\in\mathcal{O}(\mathcal{N})$ and $y\in N$.\\
Hence, it is clear, by restricting the global sections $t^i$ and $\theta^j$ of $\mathbb{R}^{m|n}$ to the superdomain $\mathcal{U}^{m|n}$, that their respective pullback $t^{\sharp i}:=\psi^{\sharp}(t^i)$ and $\theta^{\sharp j}:=\psi^{\sharp}(\theta^j)$ w.r.t. a morphism $\psi:\,\mathcal{M}\rightarrow\mathcal{U}^{m|n}$ indeed satisfy the properties as stated in the theorem. The inverse direction follows from the local triviality property of supermanifolds.
\end{proof}
Typical examples of a supermanifolds are obtained via their strong relationship to vector bundles. Let $V\rightarrow E\stackrel{\pi}{\rightarrow}M$ be a real vector bundle over an $m$-dimensional manifold with typical fiber given by a vector space $V$ of dimenional $n$. This naturally yields a locally ringed space defining 
\begin{equation}
\mathbf{S}(E,M):=(M,\Gamma(\Exterior E^*))
\label{eq:2.2.5}
\end{equation}
where $\Gamma(\Exterior E^*)$ denotes the space of of smooth sections of the exterior bundle $\Exterior E^*$. Since $\Gamma(\Exterior E^*)\cong\Exterior\Gamma(E)^*$ naturally carries a $\mathbb{Z}_2$-grading, it follows that it has the structure of a sheaf of local super rings, that is, $\mathbf{S}(E,M)$ defines an algebro-geometric supermanifold of dimension $(m,n)$ also called a \emph{split supermanifold}. A morphism $(\phi,f):\,(E,M)\rightarrow (F,N)$ between two vector bundles induces a morphism $\mathbf{S}(\phi,f):\,\mathbf{S}(E,M)\rightarrow\mathbf{S}(F,N)$ between the corresponding split supermanifolds. Hence, this yields a functor
\begin{equation}
\mathbf{S}:\,\mathbf{Vect}_{\mathbb{R}}\rightarrow\mathbf{SMan}_{\mathrm{Alg}}
\label{eq:2.2.6}
\end{equation}
between the category of real vector bundles to the category of algebro-geometric supermanifolds which we call the \emph{split functor}. It is a general result due to Batchelor \cite{Batchelor:1979} that any algebro-geometric supermanfold is isomorphic to a split supermanifold of the form (\ref{eq:2.2.5}), i.e., (\ref{eq:2.2.6}) is a surjective on objects. However, the split functor is \emph{not} full, i.e., not  every morphism $f:\,\mathbf{S}(E,M)\rightarrow\mathbf{S}(F,N)$ between split manifolds arises from a morphism between the respective vector bundles $(E,M),(F,N)\in\mathbf{Ob}(\mathbf{Vect}_{\mathbb{R}})$. Hence, the structure of morphisms between supermanifolds in general turns out to be much richer than for ordinary vector bundles. This is of utmost importance in modelling for instance supersymmetry transformations.\\
\\
As a next step, we want to find a relation between algebro-geometric and Rogers-DeWitt supermanifolds. A very elegant way in describing this relationship is given by the so-called \emph{functors of point} approach. It is a general technique in algebraic geometry which can be used in order to give, a priori, very abstract objects a more concrete reinterpretation making proofs much easier in certain instances.\\
As already explained at the beginning, a general 'problem' concerning algebro-geometric supermanifolds is the lack of points. In fact, in contrast to ordinary smooth manifolds, the points of the underlying topological space do not suffice to uniquely characterize the sections of the structure sheaf. As it turns out, this can be cured by studying the morphisms between them.
\begin{definition}\label{def:2.3}
Let $\mathcal{M}$ be an algebro-geometric supermanifold. The \emph{functor of points of $\mathcal{M}$} is defined as the covariant functor $\mathcal{M}:\,\mathbf{SMan}_{\mathrm{Alg}}^{\mathrm{op}}\rightarrow\mathbf{Set}$ on the opposite category $\mathbf{SMan}_{\mathrm{Alg}}^{\mathrm{op}}$ associated to $\mathcal{M}$\footnote{The \emph{opposite category} $\mathcal{C}^{\mathrm{op}}$ of a category $\mathcal{C}$ consists of the same class of objects as $\mathcal{C}$, i.e., $\mathbf{Ob}(\mathcal{C}^{\mathrm{op}})=\mathbf{Ob}(\mathcal{C})$, and for each $X,Y\in\mathbf{Ob}(\mathcal{C}^{\mathrm{op}})$ the set $\mathrm{Hom}_{\mathcal{C}^{\mathrm{op}}}(X,Y)$ of morphisms between $X$ and $Y$ is given by $\mathrm{Hom}_{\mathcal{C}}(Y,X)$} which on objects $\mathcal{T}\in\mathbf{Ob}(\mathbf{SMan}_{\mathrm{Alg}})$ is defined as 
\begin{equation}
\mathcal{M}(\mathcal{T}):=\mathrm{Hom}(\mathcal{T},\mathcal{M})
\label{eq:2.2.7}
\end{equation}
also called the \emph{$\mathcal{T}$-point of $\mathcal{M}$} and on morphisms $f\in\mathrm{Hom}(\mathcal{T},\mathcal{S})$, $\mathcal{M}(f)\in\mathrm{Hom}(\mathcal{M}(\mathcal{S}),\mathcal{M}(\mathcal{T}))$ is given by
\begin{equation}
\mathcal{M}(f):\,\mathcal{M}(\mathcal{S})\rightarrow \mathcal{M}(\mathcal{T}),\,g\mapsto g\circ f
\label{eq:2.2.8}
\end{equation}
Hence, the functor of points of $\mathcal{M}$ coincides with the partial $\mathrm{Hom}$-functor $h_\mathcal{M}\equiv\mathrm{Hom}(\mathcal{M},\,\cdot\,)$ on $\mathbf{SMan}_{\mathrm{Alg}}^{\mathrm{op}}$.
\end{definition}
If $\phi:\,\mathcal{M}\rightarrow \mathcal{N}$ is a morphism between algebro-geometric supermanifolds, this yields a map $\phi_\mathcal{T}:\,\mathcal{M}(\mathcal{T})\rightarrow \mathcal{N}(\mathcal{T})$ between the associates $\mathcal{T}$-points setting $\phi_\mathcal{T}(f):=\phi\circ f$ $\forall f\in \mathcal{M}(\mathcal{T})=\mathrm{Hom}(\mathcal{T},\mathcal{M})$. By definition, it then follows that for any morphism $f:\,\mathcal{S}\rightarrow \mathcal{T}$ one has
\begin{equation}
\mathcal{N}(f)\circ\phi_\mathcal{T}(g)=\phi_\mathcal{T}(g)\circ f=\phi\circ g\circ f=\phi_\mathcal{S}\circ \mathcal{M}(f)(g)
\label{eq:2.2.9}
\end{equation}
$\forall g\in \mathcal{M}(\mathcal{T})$, that is, the following diagram is commutative 
\begin{displaymath}
\xymatrix{
       \mathcal{M}(\mathcal{T}) \ar[r]^{\mathcal{M}(f)}  \ar[d]^{\phi_\mathcal{T}} &   \mathcal{M}(\mathcal{S})  \ar[d]^{\phi_\mathcal{S}} \\
         \mathcal{N}(\mathcal{T}) \ar[r]^{\mathcal{N}(f)} &     \mathcal{N}(\mathcal{S})    
     }
		\label{eq:2.2.10}
 \end{displaymath}
Hence, a morphism $\phi:\,\mathcal{M}\rightarrow \mathcal{N}$ induces a natural transformation between the associated functor of points. This poses the question whether all natural transformations arise in this way. This is an immediate consequence of the following well-known lemma.
\begin{lemma}[Yoneda Lemma]
Let $\mathcal{C}$ be a category and $F:\,\mathcal{C}\rightarrow\mathbf{Set}$ be a functor. Then, for any object $X\in\mathbf{Ob}(\mathcal{C})$, the assignment $\eta\mapsto\eta_X(\mathrm{id}_X)$ yields a bijective correspondence between natural transformations $\eta:\,\mathrm{Hom}(X,\,\cdot\,)\rightarrow F$ and the set $F(X)\in\mathbf{Ob}(\mathbf{Set})$. 
\end{lemma}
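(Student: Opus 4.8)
The plan is to exhibit an explicit two-sided inverse to the assignment $\Phi:\,\eta\mapsto\eta_X(\mathrm{id}_X)$ and verify that the two composites are the respective identities. Following the notation of the paper I write $h_X:=\mathrm{Hom}(X,\,\cdot\,)$ for the covariant $\mathrm{Hom}$-functor, so that on a morphism $g:\,Y\rightarrow Z$ it acts by post-composition, $h_X(g):\,f\mapsto g\circ f$. Given an element $u\in F(X)$, I would define a candidate natural transformation $\eta^u:\,h_X\rightarrow F$ componentwise by
\begin{equation}
\eta^u_Y:\,\mathrm{Hom}(X,Y)\rightarrow F(Y),\qquad f\mapsto F(f)(u)
\end{equation}
for each object $Y\in\mathbf{Ob}(\mathcal{C})$, and take $\Psi:\,u\mapsto\eta^u$ as the proposed inverse of $\Phi$.

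First I would check that $\eta^u$ is genuinely natural. For a morphism $g:\,Y\rightarrow Z$ one must verify $F(g)\circ\eta^u_Y=\eta^u_Z\circ h_X(g)$; evaluating both sides on $f\in\mathrm{Hom}(X,Y)$ gives $F(g)(F(f)(u))$ on the left and $F(g\circ f)(u)$ on the right, and these agree precisely by the functoriality relation $F(g\circ f)=F(g)\circ F(f)$. Hence $\Psi$ is well defined.

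Next I would show that $\Phi$ and $\Psi$ are mutually inverse. For $\Phi\circ\Psi$ I compute $\Phi(\eta^u)=\eta^u_X(\mathrm{id}_X)=F(\mathrm{id}_X)(u)=\mathrm{id}_{F(X)}(u)=u$, using $F(\mathrm{id}_X)=\mathrm{id}_{F(X)}$. For $\Psi\circ\Phi$ I set $u:=\eta_X(\mathrm{id}_X)$ for an arbitrary natural transformation $\eta$ and must check that $\eta^u_Y(f)=\eta_Y(f)$ for every object $Y$ and every $f\in\mathrm{Hom}(X,Y)$. This is where naturality of $\eta$ is used, applied to the morphism $f:\,X\rightarrow Y$ itself and evaluated on the distinguished element $\mathrm{id}_X\in\mathrm{Hom}(X,X)$: the naturality square gives $\eta_Y(f\circ\mathrm{id}_X)=F(f)(\eta_X(\mathrm{id}_X))$, that is $\eta_Y(f)=F(f)(u)=\eta^u_Y(f)$, as required.

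The verifications that $\eta^u$ is natural and that the two composites are identities are all routine bookkeeping once the inverse is written down; the one genuinely substantive point — and thus the main obstacle to locate conceptually rather than technically — is the realization that testing the naturality condition on the \emph{universal} element $\mathrm{id}_X$ is exactly what forces any natural transformation out of $h_X$ to be completely determined by its single value $\eta_X(\mathrm{id}_X)$. No additional machinery beyond the functor axioms and the definition of naturality is needed.
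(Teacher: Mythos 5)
Your proof is correct and is the standard Yoneda argument: the inverse $u\mapsto\eta^u$ with $\eta^u_Y(f)=F(f)(u)$, naturality via functoriality of $F$, and the key step of evaluating the naturality square on $\mathrm{id}_X$. The paper itself states the lemma as well known and gives no proof, so there is nothing to compare against; your write-up fills that gap correctly.
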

Applied to our concrete situation, this implies that for the functor of points $h_\mathcal{M}:\,\mathbf{SMan}_{\mathrm{Alg}}^{\mathrm{op}}\rightarrow\mathbf{Set}$ and $h_\mathcal{N}:\,\mathbf{SMan}_{\mathrm{Alg}}^{\mathrm{op}}\rightarrow\mathbf{Set}$ associated to algebro-geometric supermanifolds $\mathcal{M}$ and $\mathcal{N}$, one has a bijective correspondence between natural transformations between $h_\mathcal{M}$ and $h_\mathcal{N}$ and elements in $\mathrm{Hom}(\mathcal{N},\,\cdot\,)(\mathcal{M})=\mathrm{Hom}(\mathcal{M},\mathcal{N})$. In particular, the supermanifolds $\mathcal{M}$ and $\mathcal{N}$ are isomorphic iff the associated functor of points are naturally isomorphic.\\
\\
We next want to find an equivalent description of the $\mathcal{T}$-points of an algebro-geometric supermanifold $\mathcal{M}$ purely in terms of global sections of the structure sheaf $\mathcal{O}_{\mathcal{M}}$. Consider therefore the set $\mathrm{Spec}_{\mathbb{R}}(\mathcal{O}(\mathcal{M})):=\mathrm{Hom}_{\mathbf{SAlg}}(\mathcal{O}(\mathcal{M}),\mathbb{R})$ called the \emph{real spectrum of $\mathcal{O}(\mathcal{M}):=\mathcal{O}_{\mathcal{M}}(M)$}. Since, a morphism $\phi:\,\mathcal{O}(\mathcal{M})\rightarrow\mathbb{R}$ in the real spectrum is always surjective, it follows that the kernel $\mathrm{ker}(\phi)$ yields a maximal ideal in $\mathcal{O}(\mathcal{M})$, i.e., an element of the \emph{maximal spectrum}
\begin{equation}
\mathrm{MaxSpec}_{\mathbb{R}}(\mathcal{O}(\mathcal{M})):=\{I\subset\mathcal{O}(\mathcal{M})|\,I\text{ is a maximal ideal}\}
\label{eq:2.2.11}
\end{equation}
In fact, it follows that all maximal ideals in $\mathcal{O}(\mathcal{M})$ are of this form. This is a direct consequence of the super version of the classical 'Milnor's exercise' \cite{Carmeli:2011}.
\begin{prop}[Super Milnor's exercise]\label{Prop:2.4}
For an algebro-geometric supermanifold $\mathcal{M}$ all the maximal ideals in $\mathcal{O}(\mathcal{M})$ are of the form $\mathfrak{J}_x:=\mathrm{ker}(\mathrm{ev}_x:\,\mathcal{O}(\mathcal{M})\rightarrow\mathbb{R})$ for some $x\in M$, where $\mathrm{ev}_x\in\mathrm{Hom}_{\mathbf{SAlg}}(\mathcal{O}(\mathcal{M}),\mathbb{R})$ is the evaluation map at $x$ mapping $f\in\mathcal{O}(\mathcal{M})$ to its real value $\mathrm{ev}_x(f)\equiv f(x)$.
\end{prop}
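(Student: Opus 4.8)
The plan is to reduce the super statement to the classical \emph{Milnor's exercise} for the body $M_0$, exploiting the fact that every maximal ideal of $\mathcal{O}(\mathcal{M})$ must swallow the entire nilpotent—and in particular the entire odd—part of the ring. First I would note that super commutativity forces every odd section to square to zero, so the odd sector $\mathcal{O}(\mathcal{M})_1$ lies in $\mathcal{J}:=\mathcal{J}_{\mathcal{M}}(M)$; more precisely $\mathcal{J}$ is exactly the kernel of the reduction to the body, i.e.\ the sections whose body vanishes. The key algebraic observation is that $\mathcal{J}$ sits inside the Jacobson radical: if $f$ is nilpotent, then for any $g$ the product $fg$ is again nilpotent (the nilpotents form an ideal), so $1-fg$ is a unit with terminating inverse $\sum_{i\geq 0}(fg)^i$, whence $f$ lies in every maximal ideal. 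Consequently every maximal ideal $I\subset\mathcal{O}(\mathcal{M})$ satisfies $\mathcal{J}\subseteq I$, and is in particular automatically homogeneous.

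Next I would identify the reduced ring. Passing to the quotient by nilpotents collapses the local model $C^{\infty}_{\mathbb{R}^m}\otimes\bigwedge[\theta^1,\ldots,\theta^n]$ to $C^{\infty}_{\mathbb{R}^m}$, so at the level of sheaves $\mathcal{O}_{\mathcal{M}}/\mathcal{J}_{\mathcal{M}}\cong C^{\infty}_M$, which is precisely the body $M_0$ of \eqref{eq:2.2.3}. To promote this to global sections I would use that $\mathcal{J}_{\mathcal{M}}$ is a fine sheaf (a sheaf of modules over $C^{\infty}_M$, which admits partitions of unity), so $H^{1}(M,\mathcal{J}_{\mathcal{M}})=0$ and the short exact sequence $0\to\mathcal{J}_{\mathcal{M}}\to\mathcal{O}_{\mathcal{M}}\to\mathcal{O}_{\mathcal{M}}/\mathcal{J}_{\mathcal{M}}\to 0$ stays exact on global sections. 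This yields a canonical surjection $\pi:\,\mathcal{O}(\mathcal{M})\twoheadrightarrow C^{\infty}(M_0)$ with kernel exactly $\mathcal{J}$, and by construction the super evaluation factors as $\mathrm{ev}_x=\mathrm{ev}^{0}_x\circ\pi$, where $\mathrm{ev}^{0}_x$ is ordinary evaluation on $C^{\infty}(M_0)$; this merely reflects that the real value $f(x)$ only sees the body of $f$.

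Since every maximal $I$ contains $\mathcal{J}$, the correspondence theorem for ideals gives a bijection $I\mapsto\bar I:=I/\mathcal{J}=\pi(I)$ between maximal ideals of $\mathcal{O}(\mathcal{M})$ and maximal ideals of $C^{\infty}(M_0)$. I would then invoke the classical Milnor's exercise for the second countable (hence Lindel\"of and realcompact) manifold $M_0$: every maximal ideal of $C^{\infty}(M_0)$ has residue field $\mathbb{R}$ and equals $\mathfrak{m}_x=\ker(\mathrm{ev}^{0}_x)$ for a unique $x\in M$. Pulling this back along $\pi$ gives $I=\pi^{-1}(\mathfrak{m}_x)=\ker(\mathrm{ev}^{0}_x\circ\pi)=\ker(\mathrm{ev}_x)=\mathfrak{J}_x$, which is the claim.

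The genuinely super-geometric steps—that the whole nilpotent (hence odd) sector lies in every maximal ideal, and that the reduced ring is $C^{\infty}(M_0)$—are essentially formal. The real content, and the step I expect to be the main obstacle for a self-contained argument, is the classical statement itself: excluding \emph{exotic} maximal ideals whose residue field is strictly larger than $\mathbb{R}$. This is exactly where second countability of $M$, built into the definition of an algebro-geometric supermanifold, is indispensable, since for non-realcompact manifolds such ideals genuinely exist. Concretely one would embed $M_0$ properly into some $\mathbb{R}^{N}$ by Whitney's theorem, read off a candidate point $a=(a_1,\ldots,a_N)$ from the images of the coordinate functions $y^1,\ldots,y^N$ in the residue field $C^{\infty}(M_0)/\bar I\cong\mathbb{R}$, and then use properness to show that $\sum_{j}(y^j-a_j)^2\in\bar I$ would be nowhere vanishing—hence invertible, contradicting $\bar I\neq C^{\infty}(M_0)$—unless $a$ actually lies on $M_0$, in which case $\bar I=\mathfrak{m}_a$ with $x=a$.
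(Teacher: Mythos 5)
Your proof takes essentially the same route as the paper's: reduce modulo the nilpotent ideal $\mathcal{J}$ to the body $C^{\infty}(M_0)$ and invoke the classical Milnor exercise there. Your version is in fact slightly more complete, since your observation that $\mathcal{J}$ lies in the Jacobson radical (so every maximal ideal contains it) is precisely what is needed to justify the paper's bare assertion that the image $j^{\sharp}(I)$ is a proper, hence maximal, ideal of $C^{\infty}(M_0)$.
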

\begin{proof}
Let $I\subset\mathcal{O}_{\mathcal{M}}$ be a maximal ideal. On $M_0:=(M,\mathcal{O}_{\mathcal{M}}/\mathcal{J}_{\mathcal{M}})$ consider the the subset $j^{\sharp}(I)\subseteq C^{\infty}(M_0)$ of $C^{\infty}(M_0)$, where $j^{\sharp}:\,\mathcal{O}_{\mathcal{M}}\rightarrow\mathcal{O}_{\mathcal{M}}/\mathcal{J}_{\mathcal{M}}\cong C^{\infty}_{M_0}$ is the pullback of the canonical embedding $j:\,M_0\hookrightarrow\mathcal{M}$. Since $j^{\sharp}$ is a surjective morphism of super rings and $1\notin I$, it follows that $j^{\sharp}(I)$ is a maximal ideal in $C^{\infty}(M_0)$. By the classical Milnor's exercise, we thus have $j^{\sharp}(I):=\mathrm{ker}(\mathrm{ev}_x:\,C^{\infty}(M_0)\rightarrow\mathbb{R})$ for some $x\in M$. Hence, $I\subseteq\mathfrak{J}_x$ implying $I=\mathfrak{J}_x$ by maximality of $I$.
\end{proof}
Hence, according to prop. \ref{Prop:2.4}, we can identify the real spectrum with $\mathrm{MaxSpec}_{\mathbb{R}}(\mathcal{O}(\mathcal{M}))$ and even obtain a bijection $\Psi:\,M\stackrel{\sim}{\longrightarrow}\mathrm{Spec}_{\mathbb{R}}(\mathcal{O}(\mathcal{M}))$ via
\begin{equation}
M\ni x\stackrel{\sim}{\mapsto}(\mathrm{ev}_x:\,\mathcal{O}(\mathcal{M})\rightarrow\mathbb{R})\in\mathrm{Hom}(\mathcal{O}(\mathcal{M}),\mathbb{R})\stackrel{\sim}{\mapsto}\mathrm{ker}(\mathrm{ev}_x)\in\mathrm{MaxSpec}_{\mathbb{R}}(\mathcal{O}(\mathcal{M}))
\label{eq:2.2.12}
\end{equation}
We want to define a topology on $\mathrm{Spec}_{\mathbb{R}}(\mathcal{O}(\mathcal{M}))$ such that $\Psi$ becomes a homoemorphism. Therefore, note that any section $f\in\mathcal{O}(\mathcal{M})$ canonically induces a morphsim $\phi_f:\,\mathrm{Spec}_{\mathbb{R}}(\mathcal{O}(\mathcal{M}))\rightarrow\mathbb{R}$ setting 
\begin{equation}
\hat{f}(\mathrm{ev}_x):=\mathrm{ev}_x(f)=f(x)
\label{eq:2.2.13}
\end{equation}
Hence, let us endow $\mathrm{Spec}_{\mathbb{R}}(\mathcal{O}(\mathcal{M}))$ with the \emph{Gelfand topology} which is defined as the coarsest topology such that the maps $\phi_f$ for all $f\in\mathcal{O}(\mathcal{M})$ are continuous. A basis of this topology is generated by open subsets of the form 
\begin{equation}
\phi_f^{-1}(B_{\epsilon}(x_0))=\{\mathrm{ev}_{x}\in\mathrm{Spec}_{\mathbb{R}}(\mathcal{O}(\mathcal{M}))|\,|(\mathrm{ev}_{x}-\mathrm{ev}_{x_0})(f)|<\epsilon\}
\label{eq:2.2.14}
\end{equation}
for some $f\in\mathcal{O}(\mathcal{M})$ and $B_{\epsilon}(x_0)\subset M$ an open ball of radius $\epsilon$ around $x_0\in M$. It then follows immediately that the map $\Psi:\,M\stackrel{\sim}{\longrightarrow}\mathrm{Spec}_{\mathbb{R}}(\mathcal{O}(\mathcal{M}))$ is continuous w.r.t. this topology, since
\begin{equation}
\Psi^{-1}(\phi_f^{-1}(B_{\epsilon}(x_0)))=|f|^{-1}(B_{\epsilon}(f(x_0)))
\label{eq:2.2.15}
\end{equation}
is open in $M$ as $|f|:\,M\rightarrow\mathbb{R}$ is continuous. In fact, $\Psi$ is even a homeomorphism. Therefore, consider a closed subset $X\subseteq M$ and let $\mathfrak{p}_X$ be the ideal in $\mathcal{O}(\mathcal{M})$ defined as the set of all sections $f\in\mathcal{O}(\mathcal{M})$ vanishing on $X$. Using a partition of unity argument, it can then be seen that
\begin{equation}
X=\{x\in M|\,f(x)=0,\,\forall f\in\mathfrak{p}_X\}
\label{eq:2.2.16}
\end{equation} 
and thus
\begin{equation}
\Psi(X)=\bigcap_{f\in\mathfrak{p}_X}\phi_f^{-1}(\{0\})
\label{eq:2.2.17}
\end{equation}
i.e., $\Psi(X)$ is closed in $\mathrm{Spec}_{\mathbb{R}}(\mathcal{O}(\mathcal{M}))$ proving that $\Psi$ is indeed a homeomorphism.
\begin{theorem}\label{theorem:2.6}
Let $\mathcal{M}$ and $\mathcal{N}$ be algebro-geometric supermanifolds. Then, their exists a bijective correspondence between the set $\mathrm{Hom}(\mathcal{M},\mathcal{N})$ of morphisms of algebro-geometric supermanifolds and the set $\mathrm{Hom}_{\mathbf{SAlg}}(\mathcal{O}(\mathcal{N}),\mathcal{O}(\mathcal{M}))$ of superalgebra morphisms between the superalgebras of global sections of the respective structure sheaves. 
\end{theorem}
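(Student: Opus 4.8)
The plan is to exhibit explicit maps in both directions and show they are mutually inverse. In the forward direction, a morphism $\phi=(|\phi|,\phi^{\sharp}):\mathcal{M}\to\mathcal{N}$ gives the pullback $\phi^{\sharp}:\mathcal{O}_{\mathcal{N}}\to\phi_{*}\mathcal{O}_{\mathcal{M}}$ of sheaves of superalgebras; evaluating on global sections over $N$ yields $\phi^{\sharp}_N:\mathcal{O}(\mathcal{N})\to(\phi_{*}\mathcal{O}_{\mathcal{M}})(N)=\mathcal{O}_{\mathcal{M}}(|\phi|^{-1}(N))=\mathcal{O}(\mathcal{M})$, an (even) superalgebra morphism. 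This assignment $\Phi:\phi\mapsto\phi^{\sharp}_N$ is clearly functorial; the content of the theorem is that it is a bijection, so the real work lies in constructing the inverse from an abstract $\Theta\in\mathrm{Hom}_{\mathbf{SAlg}}(\mathcal{O}(\mathcal{N}),\mathcal{O}(\mathcal{M}))$.

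First I would reconstruct the continuous base map. For $x\in M$ the composite $\mathrm{ev}_x\circ\Theta:\mathcal{O}(\mathcal{N})\to\mathbb{R}$ is an element of $\mathrm{Spec}_{\mathbb{R}}(\mathcal{O}(\mathcal{N}))$, so by the Super Milnor's exercise (Prop.~\ref{Prop:2.4}) there is a unique $y\in N$ with $\mathrm{ev}_x\circ\Theta=\mathrm{ev}_y$; I set $|\phi|(x):=y$. Continuity is automatic: precomposition with $\Theta$ defines a map $\chi\mapsto\chi\circ\Theta$ of real spectra which is continuous for the Gelfand topologies, since $\chi\mapsto(\chi\circ\Theta)(g)=\phi_{\Theta(g)}(\chi)$ for every $g$, and because the identifications $\Psi$ are homeomorphisms (as established above) $|\phi|$ is continuous.

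Next, the superdomain case. When $\mathcal{N}=\mathcal{U}^{m|n}$ I would apply the Global Chart Theorem: the tuple $(\Theta(t^i),\Theta(\theta^j))$ consists of even resp. odd global sections of $\mathcal{O}(\mathcal{M})$, and $(\mathrm{ev}_x(\Theta(t^i)),\mathrm{ev}_x(\Theta(\theta^j)))=(t^i(|\phi|(x)),0)\in U$ by the previous step, so the hypotheses are met and there is a unique morphism $\psi_{\Theta}:\mathcal{M}\to\mathcal{U}^{m|n}$ realizing this tuple. That $\psi_{\Theta}^{\sharp}$ reproduces $\Theta$ on \emph{all} of $\mathcal{O}(\mathcal{U}^{m|n})$, and not merely on the coordinates $t^i,\theta^j$, is the point where a Hadamard/Taylor argument is needed: every section expands as a finite polynomial in the $\theta^j$ with $C^{\infty}$ coefficients, and any $\mathbb{R}$-superalgebra morphism is forced to commute with the smooth functional calculus of those coefficients, so that agreement on generators propagates to agreement everywhere. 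This pins down $\Phi$ as a bijection over superdomains.

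Finally, globalization. For general $\mathcal{N}$ I would cover $N$ by chart domains $V_{\beta}$ with $\mathcal{N}|_{V_{\beta}}\cong\mathcal{U}^{m|n}_{\beta}$ and set $U_{\beta}:=|\phi|^{-1}(V_{\beta})$. The essential lemma, and the main obstacle, is that $\Theta$ is \emph{local}: if $g\in\mathcal{O}(\mathcal{N})$ vanishes on an open $W\subseteq N$, then $\Theta(g)$ vanishes on $|\phi|^{-1}(W)$. Granting this, bump functions (available since the body is Hausdorff and second countable, hence admits partitions of unity) let me extend any local section over $V_{\beta}$ to a global one, apply $\Theta$, and read off a well-defined germ, thereby defining $\phi^{\sharp}$ on $\mathcal{O}_{\mathcal{N}}(V_{\beta})$; the superdomain case identifies this restriction with $\psi_{\Theta}$ on each chart. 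Uniqueness in the local case guarantees the pieces agree on overlaps, so they glue to a morphism of sheaves $\phi^{\sharp}:\mathcal{O}_{\mathcal{N}}\to\phi_{*}\mathcal{O}_{\mathcal{M}}$; locality of the induced stalk maps (again via Prop.~\ref{Prop:2.4}) makes $\phi=(|\phi|,\phi^{\sharp})$ a morphism of locally ringed spaces, hence of algebro-geometric supermanifolds, with $\Phi(\phi)=\Theta$. Injectivity of $\Phi$ follows symmetrically, since a morphism is determined on each chart by the images of the coordinate sections. The proof of the locality lemma, resting on the partition-of-unity construction, is the technical heart of the argument.
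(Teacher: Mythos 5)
Your proposal is correct, but it takes a genuinely different route from the paper. The paper's proof is only a sketch: it declares the forward direction immediate and handles the inverse direction by invoking \emph{localization of rings}, i.e.\ the fact that each stalk $\mathcal{O}_{\mathcal{M},x}$ is the localization of $\mathcal{O}(\mathcal{M})$ at the maximal ideal $\mathfrak{J}_x$, so that an abstract superalgebra morphism $\Theta$ immediately induces local stalk morphisms and hence a sheaf morphism (details deferred to \cite{Carmeli:2011}). You instead reconstruct the morphism by hand: the base map from the real spectrum via Prop.~\ref{Prop:2.4}, the chart-level morphism via the Global Chart Theorem, and the gluing via a locality lemma proved with bump functions. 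Both routes work, and yours has the virtue of using only tools already developed in this section (the Gelfand topology, super Milnor's exercise, the Global Chart Theorem); your continuity argument for $|\phi|$ via precomposition on spectra is clean and correct. The price is that you must carry two nontrivial technical burdens explicitly: (a) the Hadamard/Taylor step showing that agreement of $\psi_{\Theta}^{\sharp}$ and $\Theta$ on the coordinate generators $t^i,\theta^j$ propagates to all sections $f=\sum_I f_I\theta^I$ — this requires showing that any $\mathbb{R}$-superalgebra morphism commutes with the smooth functional calculus, i.e.\ $\Theta(f_I(t))=\mathbf{G}(f_I)(\Theta(t))$ in the sense of \eqref{eq:2.2.23.1}; and (b) the locality lemma, which you correctly identify as the technical heart (for $x\in|\phi|^{-1}(W)$ pick a bump $h$ supported in $W$ with $h\equiv 1$ near $|\phi|(x)$; then $hg=0$ forces $\Theta(h)\Theta(g)=0$ with $\Theta(h)$ invertible near $x$ since its value there is $1$). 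The localization approach packages both of these into a single structural statement about stalks, which is why the paper can afford to be so terse; your version is longer but self-contained and arguably more informative.
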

\begin{proof}[Sketch of Proof]
One direction is immediate, i.e., that the pullback of a supermanifold morphism $\phi:\,\mathcal{M}\rightarrow\mathcal{N}$ induces a morphism $\phi^{\sharp}:\,\mathcal{O}(\mathcal{N})\rightarrow\mathcal{O}(\mathcal{M})$ of the respective structure sheaves. The proof of the inverse direction uses a standard tool in algebraic geometry called \emph{localization of rings}. See \cite{Carmeli:2011} for more details. 
\end{proof}
Hence, according to this theorem, in the following, we will identify the $\mathcal{T}$-points $\mathcal{M}(\mathcal{T})$ of an algebro-geometric supermanifold $\mathcal{M}$ with $\mathrm{Hom}(\mathcal{O}(\mathcal{M}),\mathcal{O}(\mathcal{T}))$.\\
For instance, let us consider the $\mathcal{T}$-points $\mathbb{R}^{m|n}(\mathcal{T})=\mathrm{Hom}(\mathcal{O}(\mathbb{R}^{m|n}),\mathcal{O}(\mathcal{T}))$ of the superspace $\mathbb{R}^{m|n}$. By the global chart theorem, this set can be identified with 
\begin{align}
\mathbb{R}^{m|n}(\mathcal{T})&\cong\{(t^1,\ldots,t^m,\theta^1,\ldots,\theta^n)|\,t^i\in\mathcal{O}(\mathcal{T})_0,\,\theta^j\in\mathcal{O}(\mathcal{T})_1\}=\mathcal{O}(\mathcal{T})_0^m\oplus\mathcal{O}(\mathcal{T})_1^n\nonumber\\
&=(\mathcal{O}(\mathcal{T})\otimes\mathbb{R}^{m|n})_0
\label{eq:2.2.18}
\end{align}
For $J(\mathcal{T}):=\{f\in\mathcal{O}(\mathcal{T})|\,\text{$f$ is nilpotent}\}$ the ideal of nilpotent sections of $\mathcal{O}(\mathcal{T})$, this yields the canonical projection $\epsilon:\,\mathcal{O}(\mathcal{T})\rightarrow\mathcal{O}(\mathcal{T})/J(\mathcal{T})\cong C^{\infty}(\mathbf{B}(\mathcal{T}))$ which can be extended to the body map
\begin{equation}
\epsilon_{m,n}:\,(\mathcal{O}(\mathcal{T})\otimes\mathbb{R}^{m|n})_0\rightarrow C^{\infty}(\mathbf{B}(\mathcal{T}))^m
\label{eq:2.2.19}
\end{equation}
In the following, we want to restrict to a subclass of supermanifolds $\mathcal{T}\in\mathbf{SMan}_{\mathrm{Alg}}$ for which $C^{\infty}(\mathbf{B}(\mathcal{T}))\cong\mathbb{R}$, i.e., for which the underlying topological space $T=\{*\}$ just consists of a single point. Hence, if follows $\mathcal{T}\cong(\{*\},\Lambda_N)=\mathbb{R}^{0|N}$ for some $N\in\mathbb{N}_0$. 
\begin{definition}
An algebro-geometric supermanifold $\mathcal{T}$ is called a \emph{super point} if the underlying topological space $T$ only consists of a single point. The subclass of super points form a full subcategory $\mathbf{SPoint}$ of $\mathbf{SMan}_{\mathrm{Alg}}$ called the \emph{category of superpoints}.
\end{definition}
\begin{prop}[see \cite{Mol:10,Sac:08}]
Let $\mathbf{Gr}$ be the \emph{category of (finite-dimensional) Grassmann algebras} whose objects are given by equivalence classes of Grassmann algebras $\Lambda_N\in\mathbf{Ob}(\mathbf{Gr})$, $N\in\mathbb{N}_0$, and for $\Lambda_N,\Lambda_{N'}\in\mathbf{Ob}(\mathbf{Gr})$, $\mathrm{Hom}_{\mathbf{Gr}}(\Lambda_N,\Lambda_{N'})$ is given by the set of superalgebra morphisms between Grassmann algebras. Then, the assignment
\begin{align}
\mathbf{Gr}^{\mathrm{op}}\rightarrow\mathbf{SPoint},\,\Lambda_N&\mapsto(\{*\},\Lambda_N)\\
(\phi:\,\Lambda_N\rightarrow\Lambda_{N'})&\mapsto(\mathrm{id}_{\{*\}},\phi)\nonumber
\label{eq:2.2.20}
\end{align}
yields an equivalence of categories.\qed
\end{prop}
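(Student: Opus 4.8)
The plan is to show that the stated assignment, which I will denote $F\colon\mathbf{Gr}^{\mathrm{op}}\to\mathbf{SPoint}$, is a functor which is both fully faithful and essentially surjective; by the standard criterion these two properties together are equivalent to $F$ being an equivalence of categories. The two ingredients doing all the work are the Global Sections Theorem (Theorem \ref{theorem:2.6}), specialized to super points, and the structural observation made just before the definition of $\mathbf{SPoint}$ that every super point is necessarily of the form $\mathbb{R}^{0|N}$.

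First I would record well-definedness and functoriality. For each $N$ the locally super ringed space $(\{*\},\Lambda_N)$ is precisely the superspace $\mathbb{R}^{0|N}$, hence an object of $\mathbf{SPoint}$, with superalgebra of global sections $\mathcal{O}((\{*\},\Lambda_N))=\Lambda_N$. A superalgebra morphism $\phi$ between Grassmann algebras is sent to the morphism of locally super ringed spaces over the one-point space whose pullback on global sections is $\phi$; since the pullback of a composite reverses the order of composition, the assignment is compatible with passing to the opposite category, so $F$ is a genuine (covariant) functor on $\mathbf{Gr}^{\mathrm{op}}$. Full faithfulness is then essentially immediate from Theorem \ref{theorem:2.6}: because $\mathbf{SPoint}$ is a full subcategory of $\mathbf{SMan}_{\mathrm{Alg}}$, the hom-set $\mathrm{Hom}_{\mathbf{SPoint}}((\{*\},\Lambda_N),(\{*\},\Lambda_{N'}))$ coincides with the corresponding hom-set in $\mathbf{SMan}_{\mathrm{Alg}}$, which the theorem puts in bijection with $\mathrm{Hom}_{\mathbf{SAlg}}(\Lambda_{N'},\Lambda_N)$ via the pullback map. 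On the other side, $\mathrm{Hom}_{\mathbf{Gr}^{\mathrm{op}}}(\Lambda_N,\Lambda_{N'})=\mathrm{Hom}_{\mathbf{Gr}}(\Lambda_{N'},\Lambda_N)$ is by definition the same set $\mathrm{Hom}_{\mathbf{SAlg}}(\Lambda_{N'},\Lambda_N)$. Since $F$ sends a morphism to the super-point morphism whose pullback it is, the action of $F$ on hom-sets is exactly the inverse of the Theorem \ref{theorem:2.6} bijection, hence bijective.

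It remains to verify essential surjectivity, which is where I would invoke the structural remark. Given an arbitrary super point $\mathcal{T}=(\{*\},\mathcal{O}_{\mathcal{T}})\in\mathbf{SPoint}$, local triviality forces $\mathcal{T}$ to be isomorphic to some $\mathbb{R}^{m|n}$; but its reduced manifold $\mathcal{T}_0$ is the single point $\{*\}$, so the even dimension must vanish, $m=0$, and therefore $\mathcal{T}\cong\mathbb{R}^{0|n}=(\{*\},\Lambda_n)=F(\Lambda_n)$. Thus every object of $\mathbf{SPoint}$ lies in the essential image of $F$, completing the argument.

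I expect the only genuinely non-formal step to be essential surjectivity, i.e.\ justifying that $C^{\infty}(\{*\})\cong\mathbb{R}$ collapses the even directions so that the global sections of a super point form a Grassmann algebra with vanishing even dimension; this, however, is handed to us by the discussion preceding Definition of $\mathbf{SPoint}$. Once Theorem \ref{theorem:2.6} is available, full faithfulness is bookkeeping, the only care being to track the contravariance encoded in $\mathbf{Gr}^{\mathrm{op}}$ and to read ``equivalence classes of Grassmann algebras'' as selecting a skeleton of $\mathbf{Gr}$, so that the essential image meets the objects of $\mathbf{SPoint}$ exactly.
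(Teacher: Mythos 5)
Your proof is correct; the paper itself omits the argument (the proposition is stated with a reference to Molotkov and Sachse and closed with \qed), and the route you take — full faithfulness via Theorem \ref{theorem:2.6} applied to global sections over a one-point space, plus essential surjectivity from the observation that a super point must have vanishing even dimension and hence be isomorphic to $\mathbb{R}^{0|N}=(\{*\},\Lambda_N)$ — is exactly the standard argument and coincides with the discussion the paper gives immediately before the definition of $\mathbf{SPoint}$. The only care needed, which you correctly flag, is the bookkeeping of contravariance and reading the objects of $\mathbf{Gr}$ as a skeleton.
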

In the following, we will therefore identify superpoints with finite-dimensional Grassmann algebras. From (\ref{eq:2.2.18}), it follows for $N\in\mathbb{N}_0$
\begin{equation}
\mathbb{R}^{m|n}(\Lambda_N)\cong(\Lambda_N\otimes\mathbb{R}^{m|n})_0=:\Lambda^{m,n}_{N}
\label{eq:2.2.21}
\end{equation} 
We equip $\Lambda^{m,n}_{N}$ with the coarsest topology such that the body map $\epsilon_{m,n}:\,\Lambda^{m,n}_{N}\rightarrow\mathbb{R}^m$ is continuous, the so-called \emph{DeWitt}-topology. Next, we want to find a suitable notion of smooth functions on $\Lambda^{m,n}_{N}$.\\
By the global chart theorem, a section $f\in\mathcal{O}(\mathbb{R}^{m|n})$ can be identified with morphism a morphism $f:\,\mathbb{R}^{m|n}\rightarrow\mathbb{R}^{1|1}$. According to \eqref{eq:2.2.9}, this in turn induces a natural transformation $f_\mathcal{T}:\,\mathbb{R}^{m|n}(\mathcal{T})\rightarrow\mathbb{R}^{1|1}(\mathcal{T})$ between the respective functor of points. Sticking to Grassmann algebras, we want to find an explicit form of $f_{\Lambda_N}$. Therefore, let $(x,\xi)\in\Lambda^{m,n}_{N}$ which we can identify with a morphism $g:\,\mathbb{R}^{0|N}\rightarrow\mathbb{R}^{m|n}$ such that $g^{\sharp}(t^i)=x^i$ and $g^{\sharp}(\theta^j)=\xi^j$ $\forall i=1,\ldots,m$, $j=1,\ldots,n$. It then follows again from the global chart theorem that $f_{\Lambda_N}(x,\xi)$ can be identified with an element in $\Lambda_N^{1,1}\equiv\Lambda_N$ whose even and odd part is given by $g^{\sharp}(f^{\sharp}(t))$ and $g^{\sharp}(f^{\sharp}(\theta))$, respectively, where $t$ and $\theta$ denote the global sections of $\mathcal{O}(\mathbb{R}^{1|1})$. Thus, expanding $f=\sum_I{f_I\theta^I}$, this yields
\begin{align}
f_{\Lambda_N}(x,\xi)&=g^{\sharp}(f^{\sharp}(t))+g^{\sharp}(f^{\sharp}(\theta))=\sum_{I,J}{\frac{1}{I!}\partial_If_J\big{|}_{|g|}s(g^{\sharp}(t))^I(g^{\sharp}(\theta))^J}\nonumber\\
&=\sum_{I,J}{\frac{1}{I!}\partial_If_J(\epsilon_{m,n}(x))s(x)^I\xi^J}=:\sum_J{\mathbf{G}(f_J)(x)\xi^J}
\label{eq:2.2.23}
\end{align}
where $s(x):=x-\epsilon_{m,n}(x)$ is the \emph{soul map} and 
\begin{equation}
\mathbf{G}(f)(x):=\sum_{I}{\frac{1}{I!}\partial_I f(\epsilon_{m,n}(x))s(x)^I}
\label{eq:2.2.23.1}
\end{equation}
is called the \emph{Grassmann-analytic continuation} of $f$ or simply its \emph{$\mathbf{G}$-extension}. Functions of the form (\ref{eq:2.2.23}) are precisely supersmooth functions in the sense of Rogers-DeWitt! In the standard literature, one also calls them of class $H^{\infty}$. As a result, $\Lambda^{m,n}_{N}$ together with functions of the form (\ref{eq:2.2.23}) yields a Rogers-DeWitt supermanifold or \emph{$H^{\infty}$-supermanifold}.
The assignment
\begin{align}
\mathrm{Hom}_{\mathbf{SMan}_{\mathrm{Alg}}}(\mathbb{R}^{m|n},\mathbb{R}^{1|1})&\rightarrow H^{\infty}(\Lambda^{m,n}_{N})\label{eq:2.2.22}\\
f&\mapsto f_{\Lambda_N}\nonumber
\end{align}
is clearly surjective but in general not injective unless $N\geq n$.\\
We next want to extend these considerations from superspaces to arbitrary algebro-geometric supermanifolds. Therefore, we make the following definition.
\begin{definition}
For $N\in\mathbb{N}$, the functor $\mathbf{H}_N:\,\mathbf{SMan}_{\mathrm{Alg}}\rightarrow\mathbf{Sets}$ is defined on objects $\mathcal{M}\in\mathbf{Ob}(\mathbf{SMan}_{\mathrm{Alg}})$ via 
\begin{equation}
\mathbf{H}_N(\mathcal{M}):=\mathcal{M}(\Lambda_N)=\mathrm{Hom}(\mathcal{O}(\mathcal{M}),\Exterior\mathbb{R}^N)
\label{eq:2.2.24}
\end{equation}
and on morphisms $f:\,\mathcal{M}\rightarrow\mathcal{N}$ according to 
\begin{equation}
\mathbf{H}_N(f):\,\mathrm{Hom}(\mathcal{O}(\mathcal{M}),\Exterior\mathbb{R}^N)\rightarrow\mathrm{Hom}(\mathcal{O}(N),\Exterior\mathbb{R}^N),\,\phi\mapsto\phi\circ f^{\sharp}
\label{eq:2.2.25}
\end{equation}
\end{definition}
The set $\mathcal{M}(\Lambda_N)$ contains the real spectrum $\mathrm{Spec}_{\mathbb{R}}(\mathcal{O}(\mathcal{M}))=\mathrm{Hom}(\mathcal{O}(\mathcal{M}),\mathbb{R})$ as a proper subset. According to prop. \ref{Prop:2.4} (see also (\ref{eq:2.2.12})), this set can be identified with $M$ and thus, in particular, naturally inherits a topology. Using this property, we again introduce the \emph{DeWitt-topology} on $\mathcal{M}(\Lambda_N)$ to be coarsest topology such that the projection\footnote{in case $\mathcal{M}$ is simply given by the superspace $\mathbb{R}^{m|n}$, this coincides with the body map (\ref{eq:2.2.19}).} 
\begin{equation}
\mathbf{B}:\,\mathcal{M}(\Lambda_N)\rightarrow\mathrm{Spec}_{\mathbb{R}}(\mathcal{O}(\mathcal{M}))\cong M,\,\psi\mapsto\epsilon\circ\psi
\label{eq:2.2.25.1}
\end{equation} 
is continuous. 
\begin{prop}\label{neuProp}
Let $U\subseteq M$ be an open subset of the underlying topological space $M$ of an algebro-geometric supermanifold $\mathcal{M}$. Let us identify $U$ via $\Psi:\,M\rightarrow\mathrm{Spec}_{\mathbb{R}}(\mathcal{O}(\mathcal{M})),\,x\mapsto\mathrm{ev}_x$ with an open subset in the real spectrum. Then, it follows that the open subsets $\mathbf{B}^{-1}(U)$ in $\mathcal{M}(\Lambda_N)$ are given by
\begin{equation}
\mathbf{B}^{-1}(U):=\{\psi:\,\mathcal{O}(\mathcal{M})\rightarrow\Lambda_N|\,\epsilon\circ\psi=\mathrm{ev}_x\text{ for some }x\in U\}
\label{eq:}
\end{equation}
In particular, one has $\mathbf{B}^{-1}(U)=\mathcal{M}|_U(\Lambda_N)$ with $\mathcal{M}|_U:=(U,\mathcal{O}_{\mathcal{M}}|_U)$.
\end{prop}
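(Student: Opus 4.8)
The plan is to prove the two assertions separately: the description of $\mathbf{B}^{-1}(U)$ is essentially a tautology, whereas the identification with $\mathcal{M}|_U(\Lambda_N)$ requires a genuine factorisation argument resting on a partition-of-unity (bump function) lemma. For the first claim I would simply unwind the definitions. By construction $\mathbf{B}(\psi)=\epsilon\circ\psi$ and $\Psi(x)=\mathrm{ev}_x$, so $\psi\in\mathbf{B}^{-1}(U)$ holds precisely when $\epsilon\circ\psi$ belongs to $\Psi(U)=\{\mathrm{ev}_x\mid x\in U\}$, i.e. when $\epsilon\circ\psi=\mathrm{ev}_x$ for some $x\in U$. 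This gives the stated form of $\mathbf{B}^{-1}(U)$ directly from the definition of the DeWitt topology and the homeomorphism $\Psi$.

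For the identification $\mathbf{B}^{-1}(U)=\mathcal{M}|_U(\Lambda_N)$, note first that by Theorem \ref{theorem:2.6} applied to $\mathcal{M}|_U$ one has $\mathcal{M}|_U(\Lambda_N)=\mathrm{Hom}_{\mathbf{SAlg}}(\mathcal{O}_{\mathcal{M}}(U),\Lambda_N)$. The open immersion $j:\,\mathcal{M}|_U\hookrightarrow\mathcal{M}$ induces on global sections the restriction homomorphism $\rho_U:\,\mathcal{O}(\mathcal{M})\rightarrow\mathcal{O}_{\mathcal{M}}(U),\ f\mapsto f|_U$, and hence a map $\iota:\,\phi\mapsto\phi\circ\rho_U$ from $\mathcal{M}|_U(\Lambda_N)$ into $\mathcal{M}(\Lambda_N)$. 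A short body computation shows that $\iota$ takes values in $\mathbf{B}^{-1}(U)$: writing $\epsilon\circ\phi=\mathrm{ev}^U_x$ for the $x\in U$ furnished by Proposition \ref{Prop:2.4} applied to $\mathcal{M}|_U$, one finds $\epsilon\circ\iota(\phi)=\mathrm{ev}^U_x\circ\rho_U=\mathrm{ev}_x$, since evaluating a global section at $x\in U$ agrees with restricting it to $U$ and then evaluating. I would then prove that $\iota$ is a bijection onto $\mathbf{B}^{-1}(U)$ by constructing an explicit inverse.

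The key technical ingredient, and the main obstacle, is the following locality lemma: if $\psi\in\mathbf{B}^{-1}(U)$ has body $x\in U$ and $f\in\mathcal{O}(\mathcal{M})$ vanishes on some open neighbourhood $V\ni x$, then $\psi(f)=0$. To see this I would choose an even bump section $g\in\mathcal{O}(\mathcal{M})$ with $g(x)=1$ and $\mathrm{supp}\,g\subset V$; then $gf=0$ globally, while $\psi(g)\in\Lambda_N$ has nonzero body $\epsilon(\psi(g))=g(x)=1$ and is therefore invertible in the Grassmann algebra $\Lambda_N$, whence $\psi(f)=\psi(g)^{-1}\psi(gf)=0$. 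Granting this, I would define the candidate inverse $\Phi(\psi):\,\mathcal{O}_{\mathcal{M}}(U)\rightarrow\Lambda_N$ by $\Phi(\psi)(h):=\psi(\widetilde{\chi h})$, where $\chi$ is an even bump with $\chi\equiv 1$ near $x$ and $\mathrm{supp}\,\chi\subset U$, and $\widetilde{\chi h}$ denotes the extension by zero of $\chi h\in\mathcal{O}_{\mathcal{M}}(U)$ to a global section, which exists by the sheaf property.

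The locality lemma then does all the work: it makes $\Phi(\psi)$ independent of the choice of $\chi$ (two choices differ near $x$ only by a section vanishing there), it turns $\Phi(\psi)$ into a unital superalgebra morphism (since $\chi^2 h_1 h_2$ and $\chi h_1 h_2$ agree near $x$, and $\widetilde{\chi h_1}\cdot\widetilde{\chi h_2}=\widetilde{\chi^2 h_1 h_2}$), and it yields both $\Phi(\psi)\circ\rho_U=\psi$ and $\iota\circ\Phi=\mathrm{id}$, establishing surjectivity and injectivity of $\iota$ simultaneously and hence the claimed identification. The only delicate points are the non-surjectivity of $\rho_U$ — which is exactly why the extension-by-zero construction, rather than a naive factorisation through $\mathrm{im}(\rho_U)$, is required — and the routine but slightly tedious check that $\Phi(\psi)$ respects the $\mathbb{Z}_2$-graded multiplication and the grading.
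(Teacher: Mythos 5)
Your proof is correct and the first half coincides with the paper's (both are a one-line unwinding of the DeWitt topology and the homeomorphism $\Psi$), but for the identification $\mathbf{B}^{-1}(U)=\mathcal{M}|_U(\Lambda_N)$ you take a genuinely different, more self-contained route. The paper first converts $\psi\in\mathbf{B}^{-1}(U)$ into a supermanifold morphism $\phi:\,\mathbb{R}^{0|N}\rightarrow\mathcal{M}$ via Theorem \ref{theorem:2.6}, observes that $|\phi|(\{*\})=x\in U$, and then uses the fact that $\phi^{\sharp}$ is a morphism of \emph{sheaves} (hence commutes with restrictions and is determined by the stalk map at $x$) to get the locality statement for free, after which the restriction/extension correspondence is declared immediate. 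You instead stay entirely on the level of superalgebra morphisms $\mathcal{O}(\mathcal{M})\rightarrow\Lambda_N$ and \emph{prove} locality at the body point directly, via the bump-function trick that $\psi(g)$ is invertible in $\Lambda_N$ whenever $g(x)=1$ — which is exactly the mechanism hidden inside the paper's appeal to Theorem \ref{theorem:2.6} (whose proof rests on localization). What your approach buys is an explicit, checkable inverse (extension by zero of $\chi h$) in place of the paper's ``it is immediate to see''; what it costs is the routine verification that $\Phi(\psi)$ is a well-defined unital superalgebra morphism, which you correctly reduce to the locality lemma. One small expositional slip: you write that the lemma yields ``both $\Phi(\psi)\circ\rho_U=\psi$ and $\iota\circ\Phi=\mathrm{id}$'', but these are the same identity; for injectivity of $\iota$ you also need $\Phi\circ\iota=\mathrm{id}$ on $\mathcal{M}|_U(\Lambda_N)$, which follows by applying your locality lemma to the supermanifold $\mathcal{M}|_U$ itself (since $(\chi|_U-1)h$ vanishes near the body point of $\phi$). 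This is fully within reach of the machinery you set up, so it is a wording issue rather than a gap.
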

\begin{proof}
The first assertion is immediate, since $\psi\in\mathbf{B}^{-1}(U)$ if and only if $\epsilon\circ\psi\in\Psi(U)$, i.e., $\epsilon\circ\psi=\mathrm{ev}_x$ for some $x\in U$.\\
To prove the last one, note that, by theorem \ref{theorem:2.6}, one can identify a superalgebra morphism $\psi:\,\mathcal{O}(\mathcal{M})\rightarrow\Lambda_N$ with the pullback of a supermanifold morphism $\phi:=(|\phi|,\phi^{\sharp}):\,\mathbb{R}^{0|N}=(\{*\},\Lambda_N)\rightarrow\mathcal{M}$. For any $f\in\mathcal{O}(\mathcal{M})$, $\epsilon(\phi^{\sharp}(f))$ is defined as the unique real number such that $\phi^{\sharp}(f)-\epsilon(\phi^{\sharp}(f))$ is not invertible. This is precisely the definition of the value of a section of $\Lambda_N$ at $\{*\}$, i.e., $\epsilon(\phi^{\sharp}(f))=\phi^{\sharp}(f)(\{*\})=f(|\phi|(\{*\}))$. Since, $\epsilon\circ\phi^{\sharp}=\mathrm{ev}_x$ for some $x\in M$, this yields $f(|\phi|(\{*\}))=\mathrm{ev}_x(f)=f(x)$ for any $f\in\mathcal{O}(\mathcal{M})$ which implies $|\phi|(\{*\})=x$.\\
Note that $\phi^{\sharp}$ is a morphism of sheaves and thus, in particular, commutes with restrictions. Hence, if, for $f\in\mathcal{O}(\mathcal{M})$, there exists an open neighborhood $x\in V$ such that $f|_V=0$, then $\phi^{\sharp}(f)=0$. That is, $\phi^{\sharp}$ is uniquely determined by the induced stalk morphism $\phi^{\sharp}_x:\,\mathcal{O}_{\mathcal{M},x}\rightarrow\Lambda_N$. From this, it is immediate to see that any $\psi\in\mathcal{M}|_U(\Lambda_N)=\mathrm{Hom}(\mathcal{O}_{\mathcal{M}}(U),\Lambda_N)$ can trivially be extended to a morphism $\psi:\,\mathcal{O}(\mathcal{M})\rightarrow\Lambda_N$ satisfying $\epsilon\circ\psi=\mathrm{ev}_x$ for some $x\in U$, i.e., $\psi\in\mathbf{B}^{-1}(U)$. Conversely, it follows that any morphism in $\mathbf{B}^{-1}(U)$ arises in this way. This proves the last assertion.
\end{proof}
By the local property, for any $x\in M$, there exists an open subset $x\in U\subseteq M$ such that $\mathcal{M}|_U$ is isomorphic to a superdomain $\mathcal{U}^{m|n}$ which is a submanifold of the superspace $\mathbb{R}^{m|n}$. Applying the functor (\ref{eq:2.2.24}) and using prop. \ref{neuProp}, we thus obtain an isomorphism 
\begin{equation}
\mathbf{B}^{-1}(U)=\mathcal{M}|_{U}(\Lambda_N)\rightarrow\mathcal{U}^{m|n}(\Lambda_N)\subseteq\mathbb{R}^{m|n}(\Lambda_N)
\label{eq:}
\end{equation}
i.e., a local \emph{superchart} of $\mathcal{M}(\Lambda_N)$. By (\ref{eq:2.2.23}), it follows immediately that the transition map between two local supercharts defines a $H^{\infty}$-smooth function. As a consequence, $\mathcal{M}(\Lambda_N)$ indeed carries the structure of a $H^{\infty}$ supermanifold. Hence, it follows that the $\Lambda$-points of an algebro-geometric supermanfold naturally define supermanifolds in the sense of Rogers-DeWitt (or more generally $\mathcal{A}$-manifolds in the sense of Tuynman \cite{Tuynman:2004}). Moreover, the corresponding topological space $\mathbf{B}(\mathcal{M}(\Lambda_N))=\mathrm{Spec}_{\mathbb{R}}(\mathcal{O}(\mathcal{M}))$ has the structure of an ordinary $C^{\infty}$-manifold.
\begin{remark}
Just for sake of completeness, note that each $\mathcal{M}\in\mathbf{Ob}(\mathbf{SMan}_{\mathrm{Alg}})$ gives rise to the obvious functor
\begin{equation}
\mathcal{M}:\,\mathbf{Gr}\rightarrow\mathbf{Top}
\label{eq:2.2.26}
\end{equation}
which maps Grassmann-algebras $\Lambda$ to $\Lambda$-points $\mathcal{M}(\Lambda)$. This is precisely a supermanifold in the sense of Molotkov-Sachse.
\end{remark}
Similar to (\ref{eq:2.2.22}), for any $U\subseteq M$ open, one obtains a map
\begin{align}
\mathcal{O}_{\mathcal{M}}(U)\cong\mathrm{Hom}(\mathcal{M}|_U,\mathbb{R}^{1|1})&\rightarrow H^{\infty}(\mathcal{M}|_U(\Lambda_N))=\mathbf{B}_{*}H^{\infty}_{\mathcal{M}(\Lambda_N)}(U)\label{eq:2.2.27}\\
f&\mapsto f_{\Lambda_N}\nonumber
\end{align}
which is generally surjective but injective iff $N\geq n$. In particular, one can show that it defines a morphism of sheaves, i.e., it commutes with restrictions.\\ 
Consider next a $H^{\infty}$ supermanifold $\mathcal{K}\in\mathbf{Ob}(\mathbf{SMan}_{H^{\infty}})$. To $\mathcal{K}$, one can associate the \emph{body} $\mathbf{B}(\mathcal{K})$ defined as the subset of $\mathcal{K}$ given by
\begin{equation}
\mathbf{B}(\mathcal{K}):=\{x\in\mathcal{K}|\,f(x)\in\mathbb{R},\,\forall f\in H^{\infty}(\mathcal{K})\}
\label{eq:}
\end{equation}
which, by definition, has the structure of an ordinary smooth manifold.
This can be extended to morphisms $f:\,\mathcal{K}\rightarrow\mathcal{L}$ between $H^{\infty}$ supermanifolds, setting\footnote{Note that $f(\mathbf{B}(\mathcal{K}))\subseteq\mathbf{B}(\mathcal{L})$ so that $f|_{\mathbf{B}(\mathcal{K})}:\,\mathbf{B}(\mathcal{K})\rightarrow\mathbf{B}(\mathcal{L})$ is indeed well-defined. In fact, for any $g\in H^{\infty}(\mathcal{L})$ and $x\in\mathbf{B}(\mathcal{K})$, it follows $g(f(x))=(g\circ f)(x)\in\mathbb{R}$ as $g\circ f$ is smooth and therefore $f(x)\in\mathbf{B}(\mathcal{L})$.} $\mathbf{B}(f):=f|_{\mathbf{B}(\mathcal{K})}:\,\mathbf{B}(\mathcal{K})\rightarrow\mathbf{B}(\mathcal{L})$. This yields a functor $\mathbf{B}:\,\mathbf{SMan}_{H^{\infty}}\rightarrow\mathbf{Man}$ called the \emph{body functor}. In case $\mathcal{K}$ is given by a $\Lambda_N$-point $\mathcal{M}(\Lambda_N)$ of an algebro-geometric supermanifold $\mathcal{M}\in\mathbf{Ob}(\mathbf{SMan}_{\mathrm{Alg}})$ with odd dimension bounded by $N$, one can identify $\mathbf{B}(\mathcal{K})$ with the real spectrum $\mathrm{Spec}_{\mathbb{R}}(\mathcal{O}(\mathcal{M}))$ justifying the notation.\\
In fact, in this case, \eqref{eq:2.2.27} implies that smooth functions on $\mathcal{K}$ are given by natural transformations $f_{\Lambda_N}$ induced by morphisms $f\in\mathrm{Hom}(\mathcal{M},\mathbb{R}^{1|1})$. For $\phi\in\mathcal{K}=\mathrm{Hom}(\mathcal{O}(\mathcal{M}),\Lambda_N)$, $f_{\Lambda_N}(\phi)$ can be identified with the element $\phi\circ f^{\sharp}\in\Lambda_N$. Hence, $\phi\in\mathbf{B}(\mathcal{N})\Leftrightarrow f_{\Lambda_N}(\phi)\in\mathrm{Hom}(\mathcal{O}(\mathbb{R}^{1|1}),\mathbb{R})\cong\mathbb{R}$ $\forall f\in\mathrm{Hom}(\mathcal{M},\mathbb{R}^{1|1})$ if and only if $\phi\in\mathrm{Hom}(\mathcal{O}(\mathcal{M}),\mathbb{R})$, that is, iff $\phi$ is contained in the real spectrum $\mathrm{Spec}_{\mathbb{R}}(\mathcal{O}(\mathcal{M}))$.\\
\\
To any $H^{\infty}$ supermanifold $\mathcal{K}$, one can associate the locally ringed space
\begin{equation}
\mathbf{A}(\mathcal{K}):=(\mathbf{B}(\mathcal{K}),\mathbf{B}_{*}H^{\infty}_{\mathcal{K}})
\label{eq:}
\end{equation}
which has the structure of an algebro-geometric supermanifold. A morphism $f:\,\mathcal{K}\rightarrow\mathcal{L}$ between $H^{\infty}$ supermanifolds $\mathcal{K}$ and $\mathcal{L}$ canonically induces a morphism 
\begin{equation}
\mathbf{A}(f)=(f|_{\mathbf{B}(\mathcal{K})},f^*):\,\mathbf{A}(\mathcal{K})\rightarrow\mathbf{A}(\mathcal{L})
\label{eq:}
\end{equation}
between the corresponding algebro-geometric supermanifolds, where $f^*$ denotes the ordinary pullback of smooth functions. Hence, this yields a functor 
\begin{equation}
\mathbf{A}:\,\mathbf{SMan}_{H^{\infty}}\rightarrow\mathbf{SMan}_{\mathrm{Alg}}
\label{eq:}
\end{equation}
Let us restrict $\mathbf{H}_N$ to the full subcategory $\mathbf{SMan}_{\mathrm{Alg},N}\subset\mathbf{SMan}_{\mathrm{Alg},N}$ of algebro-geometric supermanifolds with odd dimension bounded by $N$. Then, based on the previous observations, if follows $\mathbf{A}(\mathbf{H}_N(\mathcal{M}))\cong\mathcal{M}$ for any $\mathcal{M}\in\mathbf{Ob}(\mathbf{SMan}_{\mathrm{Alg},N})$. In fact, we have the following.
\begin{theorem}
The functor $\mathbf{A}\circ\mathbf{H}_N:\,\mathbf{SMan}_{\mathrm{Alg},N}\rightarrow\mathbf{SMan}_{\mathrm{Alg},N}$ is naturally equivalent to the identity functor $\mathrm{id}:\,\mathbf{SMan}_{\mathrm{Alg},N}\rightarrow\mathbf{SMan}_{\mathrm{Alg},N}$. 
\end{theorem}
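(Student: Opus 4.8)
The plan is to construct an explicit natural isomorphism $\eta:\mathbf{A}\circ\mathbf{H}_N\Rightarrow\mathrm{id}$ realizing the object-wise isomorphisms $\mathbf{A}(\mathbf{H}_N(\mathcal{M}))\cong\mathcal{M}$ noted just before the statement, and then to verify naturality in $\mathcal{M}$. For each $\mathcal{M}\in\mathbf{Ob}(\mathbf{SMan}_{\mathrm{Alg},N})$ I would assemble the component $\eta_\mathcal{M}:\mathbf{A}(\mathbf{H}_N(\mathcal{M}))\to\mathcal{M}$ from two pieces. Recall $\mathbf{A}(\mathbf{H}_N(\mathcal{M}))=(\mathbf{B}(\mathcal{M}(\Lambda_N)),\mathbf{B}_{*}H^{\infty}_{\mathcal{M}(\Lambda_N)})$ with $\mathbf{B}(\mathcal{M}(\Lambda_N))=\mathrm{Spec}_{\mathbb{R}}(\mathcal{O}(\mathcal{M}))$. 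On topological spaces I take $|\eta_\mathcal{M}|:=\Psi_\mathcal{M}^{-1}$, where $\Psi_\mathcal{M}:M\to\mathrm{Spec}_{\mathbb{R}}(\mathcal{O}(\mathcal{M})),\,x\mapsto\mathrm{ev}_x$ is the homeomorphism established via Prop.~\ref{Prop:2.4} and (\ref{eq:2.2.12}). On structure sheaves the pullback $\eta_\mathcal{M}^{\sharp}$ is given, over every open $U\subseteq M$, by the map $f\mapsto f_{\Lambda_N}$ of (\ref{eq:2.2.27}); here Prop.~\ref{neuProp} supplies the identification $\mathbf{B}^{-1}(\Psi_\mathcal{M}(U))=\mathcal{M}|_U(\Lambda_N)$ needed to read this as a map $\mathcal{O}_\mathcal{M}(U)\to(\eta_\mathcal{M})_{*}\mathbf{B}_{*}H^{\infty}_{\mathcal{M}(\Lambda_N)}(U)$.

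The key observation making $\eta_\mathcal{M}$ an isomorphism is the dimension bound: every object of $\mathbf{SMan}_{\mathrm{Alg},N}$ has odd dimension $n\leq N$, so locally $N\geq n$ and, by the statement accompanying (\ref{eq:2.2.27}), the assignment $f\mapsto f_{\Lambda_N}$ is not merely surjective but also injective, hence a sheaf isomorphism (it is a morphism of sheaves since it commutes with restrictions). Together with the homeomorphism $\Psi_\mathcal{M}$ this shows $\eta_\mathcal{M}$ is an isomorphism of algebro-geometric supermanifolds, recovering $\mathbf{A}(\mathbf{H}_N(\mathcal{M}))\cong\mathcal{M}$.

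It then remains to check naturality, i.e.\ that for every morphism $\phi:\mathcal{M}\to\mathcal{N}$ in $\mathbf{SMan}_{\mathrm{Alg},N}$ the square built from $\eta_\mathcal{M},\eta_\mathcal{N},\phi$ and $(\mathbf{A}\circ\mathbf{H}_N)(\phi)$ commutes. On topological spaces this reduces to $\mathbf{B}(\mathbf{H}_N(\phi))\circ\Psi_\mathcal{M}=\Psi_\mathcal{N}\circ|\phi|$, which I would obtain by unwinding $\mathbf{H}_N(\phi):\psi\mapsto\psi\circ\phi^{\sharp}$ from (\ref{eq:2.2.25}) and the definition (\ref{eq:2.2.25.1}) of $\mathbf{B}$, using $\mathrm{ev}_x\circ\phi^{\sharp}=\mathrm{ev}_{|\phi|(x)}$ (the value identity from the proof of the global chart theorem). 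On structure sheaves it reduces to the compatibility $(\phi^{\sharp}g)_{\Lambda_N}=\mathbf{H}_N(\phi)^{*}(g_{\Lambda_N})$ for all $g\in\mathcal{O}(\mathcal{N})$, which is exactly the statement that $f\mapsto f_{\Lambda_N}$ is the natural transformation of functors of points induced by $f$, cf.\ (\ref{eq:2.2.9}).

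I expect the main obstacle to be this second, sheaf-level half of the naturality check, namely verifying that $f\mapsto f_{\Lambda_N}$ intertwines the two pullbacks. The cleanest route is to avoid recomputing the Grassmann-analytic continuation (\ref{eq:2.2.23}) by hand and instead phrase $f_{\Lambda_N}$ intrinsically: identifying $f\in\mathcal{O}_\mathcal{M}(U)$ with a morphism $\mathcal{M}|_U\to\mathbb{R}^{1|1}$ and $f_{\Lambda_N}$ with the induced natural transformation of $\Lambda_N$-points, the required compatibility becomes the associativity of composition of supermanifold morphisms $\psi\mapsto\psi\circ\phi^{\sharp}$, which holds tautologically. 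The only genuine care needed is to confirm that all identifications are taken over matching open sets, i.e.\ that $\Psi_\mathcal{M}$ intertwines restriction to $U$ on $M$ with restriction to $\mathbf{B}^{-1}(\Psi_\mathcal{M}(U))=\mathcal{M}|_U(\Lambda_N)$ on $\Lambda_N$-points, which is precisely Prop.~\ref{neuProp}.
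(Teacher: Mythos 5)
Your proposal is correct and follows essentially the same route as the paper: the object-wise isomorphism is assembled from the homeomorphism $\Psi_\mathcal{M}$ and the sheaf map $f\mapsto f_{\Lambda_N}$ (injective because the odd dimension is bounded by $N$), and naturality is verified on topological spaces via $\mathrm{ev}_x\circ\phi^{\sharp}=\mathrm{ev}_{|\phi|(x)}$ and on structure sheaves via the tautological identity $g\circ\phi\circ\psi=\phi^{\sharp}(g)\circ\psi$, which are exactly the two commuting squares in the paper's proof. Your additional care about matching open sets via Prop.~\ref{neuProp} only makes explicit what the paper leaves implicit.
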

\begin{proof}
We have to show that the following diagrams are commutative
\begin{displaymath}
\xymatrix{
       M \ar[r]^{|f|}\ar[d]_{\Psi} &   N \ar[d]^{\Psi} \\
         \mathrm{Spec}_{\mathbb{R}}(\mathcal{O}(\mathcal{M})) \ar[r]  &     \mathrm{Spec}_{\mathbb{R}}(\mathcal{O}(\mathcal{N}))    
     }\quad\quad		
\xymatrix{
        \mathcal{O}_{\mathcal{N}} \ar[rr]^{f^{\sharp}} \ar[d]_{\cong}&  &  f_{*}\mathcal{O}_{\mathcal{M}} \ar[d]^{\cong} \\
         \mathbf{B}_{*}H^{\infty}_{\mathcal{N}(\Lambda_N)} \ar[rr]^{\mathbf{H}_N(f)^*}  &     &  f_{*}(\mathbf{B}_{*}H^{\infty}_{\mathcal{M}(\Lambda_N)})    
     }
		\label{eq:}
 \end{displaymath}
for any $\mathcal{M},\mathcal{N}\in\mathbf{Ob}(\mathbf{SMan}_{\mathrm{Alg},N})$ and morphisms $f=(|f|,f^{\sharp}):\,\mathcal{M}\rightarrow\mathcal{N}$ where, in the diagram on the left, the lower arrow is given by the restriction of $\mathbf{H}_N(f)$ to the real spectrum $\mathrm{Spec}_{\mathbb{R}}(\mathcal{O}(\mathcal{M}))$.\\
That the left diagram commutes follows immediately, since, by definition \eqref{eq:2.2.25}, we have
\begin{equation}
\mathbf{H}_N(f)(\mathrm{ev_x})=\mathrm{ev}_x\circ f^{\sharp}=\mathrm{ev}_{|f|(x)}
\label{eq:}
\end{equation}
for any $x\in M$. To see the commutativity of the right diagram, note that, by identifying $g\in\mathcal{O}(\mathcal{N})$ with a morphism $g:\,\mathcal{N}\rightarrow\mathbb{R}^{1|1}$. the pullback $f^{\sharp}(g)$ is given by the morphism $g\circ f:\,\mathcal{M}\rightarrow\mathbb{R}^{1|1}$. Moreover, identifying $\phi\in\mathcal{M}(\Lambda_N)$ with a morphism $\phi:\,\mathbb{R}^{0|N}\rightarrow\mathcal{M}$, we have $\mathbf{H}_N(f)(\phi)=f\circ\phi$. Thus, this yields
\begin{equation}
\mathbf{H}_N(f)^*(g_{\Lambda_N})(\phi)=g_{\Lambda_N}(\mathbf{H}_N(f)(\phi))=g\circ f\circ\phi=f^{\sharp}(g)\circ\phi=(f^{\sharp}(g))_{\Lambda_N}(\phi)
\label{eq:}
\end{equation}
for any $g\in\mathcal{O}(\mathcal{N})$ and $\phi\in\mathcal{M}(\Lambda_N)$. This proves the theorem.
\end{proof}
Conversely, it can be shown that $\mathbf{H}_N\circ\mathbf{A}$ is naturally equivalent to the identity functor on the full subcategory $\mathbf{SMan}_{H^{\infty},N}$ of $H^{\infty}$ supermanifolds with odd dimension bounded by $N$ (see also \cite{Batchelor:1980,Rogers:1980}). Thus, the functors $\mathbf{H}_N:\,\mathbf{SMan}_{\mathrm{Alg},N}\rightarrow\mathbf{SMan}_{H^{\infty},N}$ and $\mathbf{A}:\,\mathbf{SMan}_{H^{\infty},N}\rightarrow\mathbf{SMan}_{\mathrm{Alg},N}$ provide an equivalence of categories.\\
\\
To summarize, any algebro-geometric supermanfold induces a functor of the form (\ref{eq:2.2.26}) assigning Grassmann-algebras to the corresponding $H^{\infty}$ supermanifold. Moreover, in case that the number of odd generators of the Grassmann-algebra is large enough, via (\ref{eq:2.2.24}) one even obtains an equivalence of categories which allows one to uniquely reconstruct the underlying algebro-geometric supermanifold. For this reason, many constructions on algebro-geometric supermanifolds can equivalently be performed on the corresponding $H^{\infty}$ supermanifolds (in fact, we will mainly do so in what follows as $H^{\infty}$ manifolds are often easier to handle for physical applications).\\
However, the choice of a particular Grassmann-algebra is completely ambigiuous and therefore tend to have superfluous (physical) degrees of freedom. Consequently, any definition made on a $H^{\infty}$ supermanifold should not depend on a particular but, in the sense of Molotkov-Sachse, behave functorially under the change of a Grassmann-algebra. In the following, working with a particular $H^{\infty}$ supermanifold $\mathcal{M}$, we will only assume that the number of odd generators of the Grassmann-algebra $\Lambda$ over which $\mathcal{M}$ is modeled is large enough, i.e., greater than the odd dimension of $\mathcal{M}$\footnote{for this reason, in the standard literature, one typically chooses the infinite dimensional Grassmann-algebra $\Lambda_{\infty}$ generated by infinite number of Grassmann-generators which may be obtained as an inductive limit of the finite-dimensional ones. Also in this case, one can show that the category of algebro-geometric and $H^{\infty}$ supermanifolds modeled over $\Lambda_{\infty}$ are indeed equivalent \cite{Batchelor:1980,Rogers:1980}.}. 

\subsection{The super Poncaré and anti-de Sitter group}\label{section:2.3}
Let us recall the basic super Lie groups and their corresponding super Lie algebras which play a central role in supersymmetry and supergravity and which will provide the foundations for the derivation of the super Ashtekar connection for extended supergravity theories in four spacetime dimensions.\\
A $H^{\infty}$ super Lie group $\mathcal{G}$ is a group object in the category $\mathbf{SMan}_{H^{\infty}}$ (see for instance \cite{Rogers:2007,Tuynman:2004} and references therein). The \emph{super Lie module} $\mathrm{Lie}(\mathcal{G})$ is defined as the tangent space $T_e\mathcal{G}$ at the identity $e\in\mathcal{G}$ and turns out to have the structure of a super $\Lambda$-vector space $\mathrm{Lie}(\mathcal{G})=:\mathfrak{g}\otimes\Lambda$ with $\mathfrak{g}$ a super Lie algebra over $\mathbb{R}$. In fact, $\mathfrak{g}$ can be identified with the super Lie algebra of the associated algebro-geometric super Lie group.
\begin{definition}\label{bilinear map}
A (homogeneous) \emph{super bilinear form} $B$ of parity $|B|\in\mathbb{Z}_2$ on a super $\Lambda^{\mathbb{C}}$-vector space $\mathcal{V}=V\otimes\Lambda^{\mathbb{C}}$, $\Lambda^{\mathbb{C}}:=\Lambda\otimes\mathbb{C}$, is a right-linear map $B:\,\mathcal{V}\times\mathcal{V}\rightarrow\Lambda^{\mathbb{C}}$ which satisfies $B(\mathcal{V}_i,\mathcal{V}_j)\subseteq(\Lambda^{\mathbb{C}})_{|B|+i+j}$ and is graded symmetric, i.e.,
\begin{equation}
B(v,w)=(-1)^{|v||w|}B(w,v),\quad\forall v,w\in\mathcal{V}
\label{eq:2.3.1}
\end{equation}
Moreover, we require $B$ to be \emph{smooth} in the sense that $B(V,V)\subseteq\mathbb{C}$ \cite{Tuynman:2018}. Finally, the smooth super bilinear form $B$ is called non-degenerate if for any $v\in V$ there exists $w\in V$ such that $B(v,w)\neq 0$.
\end{definition}

\begin{remark}
From the smoothness requirement of a super bilinear form $B$, it follows immediately that $B|_{V\times V}$ defines a graded symmetric bilinear form on the super vector space $V$ in the sense of \cite{Cheng:2012}, i.e., $B(V_i,V_j)=0$ unless $i+j=|B|$. Moreover, $B|_{V_0\times V_0}$ is symmetric and $B|_{V_1\times V_1}$ is antisymmetric. If $B$ is furthermore even and non-degenerate then so is $B|_{V\times V}$ which implies that $V_1$ is necessarily even dimensional and one can always find a homogeneous basis $(e_i,f_j)$ of $V$ (resp. $\mathcal{V}$) such that, w.r.t. this basis, $B$ takes the form
\begin{equation}
\begin{pmatrix}
	\mathds{1}_m & 0\\
	0 & J_{2n}
\end{pmatrix}
\label{eq:2.3.2}
\end{equation}
where $\mathrm{dim}\,V_0=m$ and $\mathrm{dim}\,V_1=2n$ and $J_{2n}$ is the standard symplectic form on $\mathbb{C}^{2n}$
\begin{equation}
J_{2n}:=\begin{pmatrix}
	0 & \mathds{1}_n\\
	-\mathds{1}_n & 0
\end{pmatrix}
\label{eq:2.3.3}
\end{equation}
We will call (\ref{eq:2.3.3}) the \emph{standard representation} of $B$.
\end{remark}
\begin{definition}
Let $(\mathcal{V},B(\,\cdot\,,\,\cdot\,))$ be a super $\Lambda^{\mathbb{C}}$-vector space equipped with an even non-degenerate super bilinear form $B:\,\mathcal{V}\times\mathcal{V}\rightarrow\Lambda^{\mathbb{C}}$. The \emph{orthosymplectic super Lie group} $\mathrm{OSp}(\mathcal{V})$ is defined as the super Lie subgroup of the general linear group $\mathrm{GL}(\mathcal{V})$ consisting of all those group elements that preserve $B$, i.e., $g\in\mathrm{OSp}(\mathcal{V})$ if and only if
\begin{equation}
B(gv,gw)=B(v,w),\quad\forall v,w\in\mathcal{V}
\label{eq:2.3.4}
\end{equation}
It follows from the 'stabilizer theorem' (see e.g. Prop. 5.13 in \cite{Tuynman:2004} or Prop. 8.4.7. in \cite{Carmeli:2011} in the pure algebraic setting) that $\mathrm{OSp}(\mathcal{V})$ defines an embedded super Lie subgroup of the general linear group $\mathrm{GL}(\mathcal{V})$ with super Lie algebra
\begin{equation}
\mathfrak{osp}(\mathcal{V}):=\{X\in\mathfrak{gl}(\mathcal{V})|\,B(Xv,w)+(-1)^{|X||v|}B(v,Xw)\,\forall v,w\in V\}
\label{eq:2.3.5}
\end{equation}
If $(e_i,f_j)$ is homogeneous basis of $\mathcal{V}$ such that $\mathcal{V}\cong(\Lambda^{\mathbb{C}})^{m|2n}$ and $B$ acquires the standard representation (\ref{eq:2.3.3}), the orthosymplectic super Lie group is also simply denoted by $\mathrm{OSp}(m|2n)$. Accordingly. the bosonic sub super Lie algebra takes the form $\mathfrak{osp}(m|2n)_0=\mathfrak{so}(m)\oplus\mathfrak{sp}(2n)$.
\end{definition}
In order to find a graded generalization for the isometry group $\mathrm{SO}(2,3)$ of anti-de Sitter spacetime $\mathrm{AdS}_4$\footnote{The four-dimensional anti-de Sitter spacetime is an embedded submanifold of the semi-Riemannian manifold $\mathbb{R}^{2,3}$ equipped with the metric $\eta_{AB}=\mathrm{diag}(-+++-)$ given by
\begin{equation}
\mathrm{AdS}_4:=\{x\in\mathbb{R}^{2,3}|\,\eta_{AB}x^Ax^B=-L^2\}
\label{eq:2.3.6}
\end{equation}
with $L$ the so-called anti-de Sitter radius}, we consider the following Lie algebra representation of $\mathfrak{so}(2,3)$.\\
Let $\gamma^I$, $I=0,\ldots,3$, be the gamma matrices satisfying the Clifford algebra relations in $D=4$ 
\begin{equation}
\{\gamma_I,\gamma_J\}=2\eta_{IJ}
\label{eq:2.3.7}
\end{equation}
where $\eta_{IJ}=\mathrm{diag}(-+++)$. Similar as in \cite{Nicolai:1984hb} and \cite{Freedman:1983na}, we define totally antisymmetric matrices $\Sigma^{AB}$, $A,B=0,\ldots,4$, via
\begin{equation}
\Sigma^{IJ}:=\frac{1}{2}\gamma^{IJ}:=\frac{1}{4}[\gamma^I,\gamma^J]\quad\text{as well as}\quad\Sigma^{4I}:=-\gamma^{I4}:=\frac{1}{2}\gamma^I
\label{eq:2.3.8}
\end{equation}
where indices are raised and lowered w.r.t. the metric $\eta_{AB}=\mathrm{diag}(-+++-)$. These satisfy the following commutation relations
\begin{equation}
[\Sigma_{AB},\Sigma_{CD}]=\eta_{BC}\Sigma_{AD}-\eta_{AC}\Sigma_{BD}-\eta_{BD}\Sigma_{AC}+\eta_{AD}\Sigma_{BC}
\label{eq:2.3.9}
\end{equation}
and thus indeed provide a representation of $\mathfrak{so}(2,3)$. In fact, choosing a real representation of the gamma matrices, it follows that the charge conjugation matrix $C$ is of the form $C=-iJ_4$ and, by the symmetry properties of the gamma matrices,
\begin{equation}
(C\Sigma_{AB})^T=C\Sigma_{AB}
\label{eq:2.3.10}
\end{equation}
Hence, the $\Sigma_{AB}$ generate $\mathfrak{sp}(4)$ the Lie algebra universal covering group $\mathrm{Sp}(4,\mathbb{R})$ of $\mathrm{SO}(2,3)$. Thus, a candidate for the graded extension of the anti-de Sitter group with $\mathcal{N}$-fermionic generators is given by the orthosymplectic Lie group $\mathrm{OSp}(\mathcal{N}|4)$. We therefore choose $\mathcal{V}=(\Lambda^{\mathbb{C}})^{\mathcal{N}|4}$ as super vector space equipped with the bilinear form
\begin{equation}
\Omega=\begin{pmatrix}
	\mathds{1} & 0\\
	0 & C
\end{pmatrix}
\label{eq:2.3.11}
\end{equation}
The algebra $\mathfrak{osp}(\mathcal{N}|4)$ is then generated by all $X\in\mathfrak{gl}(\mathcal{V})$ satisfying
\begin{equation}
X^{sT}\Omega+\Omega X=0
\label{eq:2.3.12}
\end{equation}
where $X^{sT}$ denotes the super transpose of $X$. Writing $X$ in the block form
\begin{equation}
X=\begin{pmatrix}
	X_{11} & X_{12}\\
	X_{21} & X_{22}
\end{pmatrix}
\label{eq:2.3.13}
\end{equation}
(\ref{eq:2.3.12}) is equivalent to
\begin{equation}
X_{11}^T=-X,\quad(CX_{22})^T=CX_{22},\quad X_{12}=-X_{21}^TC
\label{eq:2.3.14}
\end{equation}
and thus, in particular, $X_{11}\in\mathfrak{so}(\mathcal{N})$ and $X_{22}\in\mathfrak{sp}(4)$. Thus, following \cite{Wipf:2016}, based on the above observation, we set
\begin{equation}
M_{AB}:=\begin{pmatrix}
	0 & 0\\
	0 & \Sigma_{AB}
\end{pmatrix}\quad\text{and}\quad T^{rs}:=\begin{pmatrix}
	A^{rs} & 0\\
	0 & 0
\end{pmatrix}
\label{eq:2.3.15}
\end{equation}
as generators for the bosonic sub super Lie algebras $\mathfrak{sp}(4)$ and $\mathfrak{so}(\mathcal{N})$, respectively, where $(A^{rs})_{pq}:=2\delta_p^{[r}\delta_q^{s]}$, $p,q,r,s=1,\ldots,\mathcal{N}$. For the fermionic generators, we set \cite{Wipf:2016}
\begin{equation}
Q_{\alpha}^r:=\begin{pmatrix}
	0 & -\bar{e}_{\alpha}\otimes e_r\\
	e_{\alpha}\otimes e_r^T & 0
\end{pmatrix}
\label{eq:2.3.16}
\end{equation}
where $(\bar{e}_{\alpha})_{\beta}=C_{\alpha\beta}$. It then follows by direct computation that
\begin{equation}
[M_{AB},Q_{\alpha}^r]=Q_{\beta}^r\tensor{(\Sigma_{AB})}{^{\beta}_{\alpha}}\quad\text{and}\quad[T^{pq},Q_{\alpha}^r]=\delta^{qr}Q_{\alpha}^p-\delta^{pr}Q_{\alpha}^q
\label{eq:2.3.17}
\end{equation}
In order to compute the Lie bracket between two fermionic generators, one can use the Fierz identity $2(e_{\alpha}\bar{e}_{\beta}+e_{\beta}\bar{e}_{\alpha})=\gamma_I(C\gamma^I)_{\alpha\beta}+\frac{1}{2}\gamma_{JI}(C\gamma^{IJ})_{\alpha\beta}$ where the sum terminates after second order in the gamma matrices as the higher rank gamma matrices are antisymmetric with respect to the charge conjugation $C$. Thus, one finds
\begin{equation}
[Q_{\alpha}^r,Q_{\beta}^s]=\delta^{rs}(C\Sigma^{AB})M_{AB}-C_{\alpha\beta}T^{rs}
\label{eq:2.3.18}
\end{equation}
Defining $P_I:=\Sigma_{4I}$ and reintroducing the cosmological constant by rescaling $P_I\rightarrow P_I/L$ and $Q_{\alpha}^r\rightarrow Q_{\alpha}^r/\sqrt{2L}$, one finally ends up with the following (graded) commutation relations
\begin{align}
[M_{IJ},Q^r_{\alpha}]&=\frac{1}{2}Q_{\beta}^r\tensor{(\gamma_{IJ})}{^{\beta}_{\alpha}}\label{eq:2.3.19}\\
[P_I,Q^r_{\alpha}]&=-\frac{1}{2L}Q^r_{\beta}\tensor{(\gamma_I)}{^{\beta}_{\alpha}}\label{eq:2.3.20}\\
[P_I,P_J]&=\frac{1}{L^2}M_{IJ}\label{eq:2.3.21}\\
[Q_{\alpha}^r,Q_{\beta}^s]=\delta^{rs}\frac{1}{2}(C\gamma^I)_{\alpha\beta}P_I+&\delta^{rs}\frac{1}{4L}(C\gamma^{IJ})_{\alpha\beta}M_{IJ}-\frac{1}{2L}C_{\alpha\beta}T^{rs}
\label{eq:2.3.22}
\end{align}
which is in the form we will use in what follows. Performing the Inönü-Wigner contraction, i.e., taking the limit $L\rightarrow\infty$, one obtains the super Poincaré Lie algebra. 
\begin{remark}\label{remark:2.13}
The super Poincaré group can also be desribed as the isometry group $\mathrm{ISO}(\mathbb{R}^{1,3|4})$ of super Minkowski spacetime $\mathbb{R}^{1,3|4\mathcal{N}}$. For instance, for $\mathcal{N}=1$, super Minkowksi spacetime is given by the split supermanifold $\mathbf{S}(\Delta_{\mathbb{R}},\mathbb{R}^{1,3})$ associated to the trivial vector bundle $\mathbb{R}^{1,3}\times\Delta_{\mathbb{R}}$ with $\Delta_{\mathbb{R}}$ the four-dimensional real Majorana representation of $\mathrm{Spin}^+(1,3)$. The super Poincaré group is then given by the semi-direct product\footnote{for any $C^{\infty}$ manifold $M$, the supermanifold $\mathbf{S}(M)$ is defined by applying the split functor on the trivial vector bundle $M\times\{0\}\rightarrow M$. This yields a functor $\mathbf{S}:\,\mathbf{Man}\rightarrow\mathbf{SMan}$.}
\begin{equation}
\mathrm{ISO}(\mathbb{R}^{1,3|4})=\mathcal{T}^{1,3|4}\rtimes_{\Phi}\mathbf{S}(\mathrm{Spin}^+(1,3))
\label{eq:}
\end{equation}
where $\mathcal{T}^{1,3|4}\cong\mathbf{S}(\Delta_{\mathbb{R}},\mathbb{R}^{1,3})$ is the super translation group (see e.g. \cite{Rogers:2007}) and $\Phi:\,\mathbf{S}(\mathrm{Spin}^+(1,3))\rightarrow\mathrm{GL}(\mathcal{T}^{1,3|4})$ is the group action obtained by applying the split functor on the group representation
\begin{equation}
\mathrm{Spin}^+(1,3)\ni g\mapsto\mathrm{diag}(\lambda^+(g),\kappa_{\mathbb{R}}(g))\in\mathrm{GL}(\mathbb{R}^{1,3}\oplus\Pi\Delta_{\mathbb{R}})
\label{eq:}
\end{equation}
of $\mathrm{Spin}^+(1,3)$ on the super vector space $\mathbb{R}^{1,3}\oplus\Pi \Delta_{\mathbb{R}}$ with $\lambda^+:\,\mathrm{Spin}^+(1,3)\rightarrow\mathrm{SO}^+(1,3)$ the universal covering map and $\Pi:\,\mathbf{SVec}\rightarrow\mathbf{SVec}$ the \emph{parity functor}, i.e., $\Pi \Delta_{\mathbb{R}}$ is viewed as a purely odd super vector space.
\end{remark}
\subsection{Super parallel transport map}\label{Holonomies}
One of the main issues while working in the standard category of supermanifolds, both in the $H^{\infty}$ or the algebraic category, is that when restricting on the body, the only nonvanishing field components turn out to be purely bosonic. Hence, there are no fermionic degrees of freedom on the body. This, however, seems to be incompatible with various constructions in physics. For instance, in the D'Auria-Fré formalism of supergraviy \cite{DAuria:1982uck,Castellani:1991et}, one has the so-called \emph{rheonomy principle} stating that physical fields are completely fixed by their pullback to the body manifold. Moreover, supersymmetry transformation depend on an anticommutative fermionic generator and, in particular, are nontrivial by restricting superfields to the body.\\
As we will see, all these issues can be remedied simultaneously by factorising a given supermanifold $\mathcal{M}$ by an additional parametrising supermanifold $\mathcal{S}$ and studying superfields on $\mathcal{S}\times\mathcal{M}$. Therefore,  as will become clear in what follows, we are interested in a certain subclass of such superfields which depend as little as possible on this additional supermanifold making them covariant in a specific sense under a change of parametrisation. This idea is based on a proposal formulated already by Schmitt in \cite{Schmitt:1996hp} motivated by the functorial approach to supermanifold theory according to Molotkov \cite{Mol:10} and Sachse \cite{Sac:09} and which also recently found application in context of superconformal field theories on super Riemannian surfaces \cite{Jost:2014wfa,Kessler:2019bwp} as well as the local approach to super QFTs \cite{Hack:2015vna}.\\
The following presentation summarizes the main results obtained in \cite{Konsti-FB:2020}. We will therefore omit most of the proofs. We adopt the terminology of \cite{Hack:2015vna} introducing the notion of a relative supermanifold. However, unlike as in \cite{Hack:2015vna}, in order to study fermionic fields, we will not restrict on superpoints as parametrising supermanifolds. As we will see at the end of this section, the resulting picture resembles the construction of parallel transport map of super connections on super vector bundles as studied in the purely algebraic setting in \cite{Florin:2008,Groeger:2013aja}. Moreover, the description of fermionic fields turns out to be quite similar to considerations in perturbative algebraic QFT (pAQFT) \cite{Rejzner:2011au,Rejzner:2016hdj}. 
\begin{definition}\label{def:9.1}
Let $\mathcal{S}$ and $\mathcal{M}$ be $H^{\infty}$ supermanifolds. The pair $(\mathcal{S}\times\mathcal{M},\mathrm{pr}_{\mathcal{S}})$ with $\mathrm{pr}_{\mathcal{S}}:\,\mathcal{S}\times\mathcal{M}\rightarrow\mathcal{S}$ the projection onto the first factor is called a \emph{$\mathcal{S}$-relative supermanifold} also denoted by $\mathcal{M}_{/\mathcal{S}}$.\\
A morphism $\phi:\,\mathcal{M}_{/\mathcal{S}}\rightarrow\mathcal{N}_{/\mathcal{S}}$ between $\mathcal{S}$-relative supermanifolds is a morphism $\phi:\,\mathcal{S}\times\mathcal{M}\rightarrow\mathcal{S}\times\mathcal{N}$ of $H^{\infty}$ supermanifolds preserving the projections, i.e. the following diagram is commutative
\begin{displaymath}
	 \xymatrix{
         \mathcal{S}\times\mathcal{M}\ar[rr]^{\phi} \ar[dr]_{\mathrm{pr}_{\mathcal{S}}}  & &    \mathcal{S}\times\mathcal{N} \ar[dl]^{\mathrm{pr}_{\mathcal{S}}}\\
            &    \mathcal{S}   &   
     }
		\label{eq:9.1}
 \end{displaymath}
Hence, $\phi(s,p)=(s,\tilde{\phi}(s,p))$ $\forall (s,p)\in\mathcal{S}\times\mathcal{M}$ with $\tilde{\phi}:=\mathrm{pr}_{\mathcal{N}}\circ\phi:\,\mathcal{S}\times\mathcal{M}\rightarrow\mathcal{N}$. This yields a category $\mathbf{SMan}_{/\mathcal{S}}$ called the \emph{category of $\mathcal{S}$-relative supermanifolds}.
\end{definition}
The following proposition gives a different characterization of morphism between $\mathcal{S}$-relative supermanifolds.
\begin{prop}[after \cite{Hack:2015vna}]\label{prop:9.2}
Let $\mathcal{M}_{/\mathcal{S}},\mathcal{N}_{/\mathcal{S}}\in\mathbf{Ob}(\mathbf{SMan}_{/\mathcal{S}})$ be $\mathcal{S}$-relative supermanifolds. Then, the map
\begin{align}
\alpha_{\mathcal{S}}:\,\mathrm{Hom}_{\mathbf{SMan}_{/\mathcal{S}}}(\mathcal{M}_{/\mathcal{S}},\mathcal{N}_{/\mathcal{S}})&\rightarrow\mathrm{Hom}_{\mathbf{SMan}_{H^{\infty}}}(\mathcal{S}\times\mathcal{M},\mathcal{S}\times\mathcal{N})\label{eq:9.2}\\
(\phi:\,\mathcal{S}\times\mathcal{M}\rightarrow\mathcal{S}\times\mathcal{N})&\mapsto(\mathrm{pr}_{\mathcal{N}}\circ\phi:\,\mathcal{S}\times\mathcal{M}\rightarrow\mathcal{N})\nonumber
\end{align}
is a bijection with the inverse given by
\begin{align}
\alpha_{\mathcal{S}}^{-1}:\,\mathrm{Hom}_{\mathbf{SMan}_{H^{\infty}}}(\mathcal{S}\times\mathcal{M},\mathcal{N})&\rightarrow\mathrm{Hom}_{\mathbf{SMan}_{/\mathcal{S}}}(\mathcal{M}_{/\mathcal{S}},\mathcal{N}_{/\mathcal{S}})\label{eq:9.3}\\
(\psi:\,\mathcal{S}\times\mathcal{M}\rightarrow\mathcal{N})&\mapsto((\mathrm{id}_{\mathcal{S}}\times\psi)\circ(d_{\mathcal{S}}\times\mathrm{id}_{\mathcal{M}}):\,\mathcal{S}\times\mathcal{M}\rightarrow\mathcal{S}\times\mathcal{N})\nonumber
\end{align}
with $d_{\mathcal{S}}:\,\mathcal{S}\rightarrow\mathcal{S}\times\mathcal{S}$ the diagonal map.\qed
\end{prop}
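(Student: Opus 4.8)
The plan is to recognise this as a direct manifestation of the universal property of the categorical product in $\mathbf{SMan}_{H^{\infty}}$, so that essentially no pointwise computation is required. The category $\mathbf{SMan}_{H^{\infty}}$ admits finite products, and for any $H^{\infty}$ supermanifold $X$ the pair of projections yields a bijection, natural in $X$,
\begin{equation*}
\mathrm{Hom}(X,\mathcal{S}\times\mathcal{N})\;\xrightarrow{\ \sim\ }\;\mathrm{Hom}(X,\mathcal{S})\times\mathrm{Hom}(X,\mathcal{N}),\qquad \phi\mapsto(\mathrm{pr}_{\mathcal{S}}\circ\phi,\ \mathrm{pr}_{\mathcal{N}}\circ\phi).
\end{equation*}
I would apply this with $X=\mathcal{S}\times\mathcal{M}$. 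By Definition \ref{def:9.1}, a morphism of $\mathcal{S}$-relative supermanifolds is precisely a morphism $\phi\colon\mathcal{S}\times\mathcal{M}\to\mathcal{S}\times\mathcal{N}$ whose first component satisfies $\mathrm{pr}_{\mathcal{S}}\circ\phi=\mathrm{pr}_{\mathcal{S}}$. Under the displayed bijection such $\phi$ correspond exactly to the pairs whose first entry is the fixed morphism $\mathrm{pr}_{\mathcal{S}}$, and these are parametrised freely by their second entry $\mathrm{pr}_{\mathcal{N}}\circ\phi\in\mathrm{Hom}(\mathcal{S}\times\mathcal{M},\mathcal{N})$. This assignment is exactly $\alpha_{\mathcal{S}}$, whence $\alpha_{\mathcal{S}}$ is a bijection; here one should note that in the codomain displayed in \eqref{eq:9.2} the factor $\mathcal{S}\times\mathcal{N}$ is a typographical slip for $\mathcal{N}$, as the defining formula makes clear.

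It then remains to verify that the explicit formula for $\alpha_{\mathcal{S}}^{-1}$ reproduces the mediating morphism with components $\mathrm{pr}_{\mathcal{S}}$ and $\psi$. First I would check that $\alpha_{\mathcal{S}}^{-1}(\psi)$ does land in $\mathbf{SMan}_{/\mathcal{S}}$, i.e. preserves the projection. Writing $\mathrm{pr}_1\colon\mathcal{S}\times\mathcal{S}\times\mathcal{M}\to\mathcal{S}$ for the first projection and $\mathrm{pr}_{23}\colon\mathcal{S}\times\mathcal{S}\times\mathcal{M}\to\mathcal{S}\times\mathcal{M}$ for the projection onto the last two factors, the defining property of the diagonal gives the two identities
\begin{equation*}
\mathrm{pr}_1\circ(d_{\mathcal{S}}\times\mathrm{id}_{\mathcal{M}})=\mathrm{pr}_{\mathcal{S}},\qquad \mathrm{pr}_{23}\circ(d_{\mathcal{S}}\times\mathrm{id}_{\mathcal{M}})=\mathrm{id}_{\mathcal{S}\times\mathcal{M}}.
\end{equation*}
Since $\mathrm{id}_{\mathcal{S}}\times\psi$ has first projection $\mathrm{pr}_1$ and second projection $\psi\circ\mathrm{pr}_{23}$, composing with $d_{\mathcal{S}}\times\mathrm{id}_{\mathcal{M}}$ and using these two identities shows that the $\mathcal{S}$- and $\mathcal{N}$-components of $\alpha_{\mathcal{S}}^{-1}(\psi)$ are $\mathrm{pr}_{\mathcal{S}}$ and $\psi$ respectively. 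Thus $\alpha_{\mathcal{S}}^{-1}(\psi)$ is the relative morphism whose image under $\alpha_{\mathcal{S}}$ is $\psi$, so $\alpha_{\mathcal{S}}\circ\alpha_{\mathcal{S}}^{-1}=\mathrm{id}$; as $\alpha_{\mathcal{S}}$ was already shown to be a bijection, this forces $\alpha_{\mathcal{S}}^{-1}$ to be its genuine two-sided inverse.

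The one genuine point of care — and my reason for routing the argument through the universal property rather than through points — is that, as emphasised in Section \ref{Supermanifolds}, morphisms of (super)manifolds are \emph{not} determined by their effect on the underlying topological points, so the naive assignment $(s,p)\mapsto(s,\psi(s,p))$ is not by itself a proof. The universal-property argument avoids this entirely, since it uses only the mapping property of the product together with the two projection identities for the diagonal, all of which are honest identities of morphisms. Equivalently, one may run the same bookkeeping on the functor of points: by the Yoneda lemma a morphism is fixed by the induced natural transformation on $\mathcal{T}$-points, where $(\mathcal{S}\times\mathcal{M})(\mathcal{T})=\mathcal{S}(\mathcal{T})\times\mathcal{M}(\mathcal{T})$ and points \emph{do} suffice, and the evident set-level bijection, being natural in $\mathcal{T}$, descends to the asserted bijection of Hom-sets. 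I expect no serious obstacle beyond correctly matching the explicit composite against the abstract mediating morphism, which the two diagonal identities render routine.
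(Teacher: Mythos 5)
Your argument is correct. Note that the paper itself offers no proof of this proposition --- it is stated with a terminal \qed and attributed to the reference [Hack et al.], so there is no in-paper argument to compare against. Your route through the universal property of the product in $\mathbf{SMan}_{H^{\infty}}$ is sound: the bijection $\mathrm{Hom}(X,\mathcal{S}\times\mathcal{N})\cong\mathrm{Hom}(X,\mathcal{S})\times\mathrm{Hom}(X,\mathcal{N})$ restricted to the fibre over the fixed first component $\mathrm{pr}_{\mathcal{S}}$ is exactly $\alpha_{\mathcal{S}}$, and your two diagonal identities correctly identify the components of $(\mathrm{id}_{\mathcal{S}}\times\psi)\circ(d_{\mathcal{S}}\times\mathrm{id}_{\mathcal{M}})$ as $(\mathrm{pr}_{\mathcal{S}},\psi)$, which pins down $\alpha_{\mathcal{S}}^{-1}$ as the two-sided inverse. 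You are also right that the codomain displayed in \eqref{eq:9.2} should read $\mathcal{N}$ rather than $\mathcal{S}\times\mathcal{N}$, and your insistence on working with identities of morphisms rather than with underlying topological points is exactly the care the $H^{\infty}$ setting demands.
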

Let $\lambda:\,\mathcal{S}\rightarrow\mathcal{S}'$ be a morphism between parametrising supermanifolds, we will also call such a morphism a \emph{change of parametrisation}. Then, any smooth map $\phi:\,\mathcal{S'}\times\mathcal{M}\rightarrow\mathcal{N}$ can be pulled back via $\lambda$ to a morphism $\lambda^*\phi:=\phi\circ(\lambda\times\mathrm{id}_{\mathcal{M}}):\,\mathcal{S}\times\mathcal{M}\rightarrow\mathcal{N}$. Using \ref{eq:9.3}, this yields the map \cite{Hack:2015vna}
\begin{align}
\lambda^{*}:\,\mathrm{Hom}_{\mathbf{SMan}_{/\mathcal{S}'}}(\mathcal{M}_{/\mathcal{S}'},\mathcal{N}_{/\mathcal{S}'})&\rightarrow\mathrm{Hom}_{\mathbf{SMan}_{/\mathcal{S}}}(\mathcal{M}_{/\mathcal{S}},\mathcal{N}_{/\mathcal{S}})\label{eq:9.4}\\
\phi&\mapsto\alpha_{\mathcal{S}}^{-1}(\alpha_{\mathcal{S}'}(\phi)\circ(\lambda\times\mathrm{id}_{\mathcal{M}}))\nonumber
\end{align}
Hence, for $\phi:\,\mathcal{M}_{/\mathcal{S}'}\rightarrow\mathcal{N}_{/\mathcal{S}'}$, $\lambda^*(\phi)$ explicitly reads
\begin{align}
\lambda^*(\phi)(s,p)=(s,\mathrm{pr}_{\mathcal{N}}\circ\phi(\lambda(s),p))
\label{eq:9.5}
\end{align}
$\forall(s,p)\in\mathcal{S}\times\mathcal{M}$. The following proposition demonstrates that the set of morphisms between relative supermanifolds is functorial in the parametrising supermanifold and thus indeed have the required properties under change of parametrization.
\begin{prop}[after \cite{Hack:2015vna}]
The assignment
\begin{align}
\mathbf{SMan}\rightarrow\mathbf{Set}:\,\mathbf{Ob}(\mathbf{SMan})\ni\mathcal{S}&\mapsto\mathrm{Hom}_{\mathbf{SMan}_{/\mathcal{S}}}(\mathcal{M}_{/\mathcal{S}},\mathcal{N}_{/\mathcal{S}})\in\mathbf{Ob}(\mathbf{Set})\nonumber\\
(\lambda:\,\mathcal{S}\rightarrow\mathcal{S}')&\mapsto\lambda^{*}\nonumber
\label{eq:9.6}
\end{align}
defines a contravariant functor on the category $\mathbf{SMan}$ of $H^{\infty}$ supermanifolds. Moreover, the map $\lambda^{*}$ associated to the morphism $\lambda:\,\mathcal{S}\rightarrow\mathcal{S}'$ preserves compositions, i.e., $\lambda^*(\phi\circ\psi)=\lambda^*(\phi)\circ\lambda^{*}(\psi)$ for any $\psi:\,\mathcal{M}\rightarrow\mathcal{N}$ and $\phi:\,\mathcal{N}\rightarrow\mathcal{L}$
\end{prop}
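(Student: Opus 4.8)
The plan is to carry out all three verifications — preservation of identities, contravariance, and preservation of composition — directly from the explicit pointwise description \eqref{eq:9.5} of the pullback, namely $\lambda^{*}(\phi)(s,p)=(s,\mathrm{pr}_{\mathcal{N}}\circ\phi(\lambda(s),p))$. Since $H^{\infty}$ supermanifolds possess genuine points and a morphism is determined by its underlying set map, it suffices to check each identity on an arbitrary pair $(s,p)\in\mathcal{S}\times\mathcal{M}$. Well-definedness of $\lambda^{*}(\phi)$ as a morphism of $\mathcal{S}$-relative supermanifolds is already guaranteed by Proposition \ref{prop:9.2}, since $\lambda^{*}(\phi)=\alpha_{\mathcal{S}}^{-1}(\alpha_{\mathcal{S}'}(\phi)\circ(\lambda\times\mathrm{id}_{\mathcal{M}}))$ lies in the image of $\alpha_{\mathcal{S}}^{-1}$, so no separate check is needed that $\lambda^{*}$ lands in the right $\mathrm{Hom}$-set.

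First I would verify that $\mathrm{id}_{\mathcal{S}}^{*}=\mathrm{id}$. Given $\phi:\,\mathcal{M}_{/\mathcal{S}}\rightarrow\mathcal{N}_{/\mathcal{S}}$, formula \eqref{eq:9.5} gives $\mathrm{id}_{\mathcal{S}}^{*}(\phi)(s,p)=(s,\mathrm{pr}_{\mathcal{N}}\circ\phi(s,p))$; writing $\phi(s,p)=(s,\tilde{\phi}(s,p))$ as in Definition \ref{def:9.1}, the right-hand side equals $\phi(s,p)$, so the two morphisms coincide. For contravariance, let $\lambda:\,\mathcal{S}\rightarrow\mathcal{S}'$ and $\mu:\,\mathcal{S}'\rightarrow\mathcal{S}''$ and take $\phi:\,\mathcal{M}_{/\mathcal{S}''}\rightarrow\mathcal{N}_{/\mathcal{S}''}$. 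Expanding $\mu^{*}(\phi)$ by \eqref{eq:9.5} and then substituting $\lambda(s)$ into its first argument, one finds $\lambda^{*}(\mu^{*}(\phi))(s,p)=(s,\mathrm{pr}_{\mathcal{N}}\circ\phi(\mu(\lambda(s)),p))$, which is precisely $(\mu\circ\lambda)^{*}(\phi)(s,p)$; hence $(\mu\circ\lambda)^{*}=\lambda^{*}\circ\mu^{*}$, establishing the contravariant functor.

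The last claim, preservation of composition, is proved along the same lines. For $\psi:\,\mathcal{M}_{/\mathcal{S}'}\rightarrow\mathcal{N}_{/\mathcal{S}'}$ and $\phi:\,\mathcal{N}_{/\mathcal{S}'}\rightarrow\mathcal{L}_{/\mathcal{S}'}$, write $\tilde{\psi}=\mathrm{pr}_{\mathcal{N}}\circ\psi$ and $\tilde{\phi}=\mathrm{pr}_{\mathcal{L}}\circ\phi$. Because both morphisms respect the projection onto the parametrising factor, one computes $(\phi\circ\psi)(s',p)=(s',\tilde{\phi}(s',\tilde{\psi}(s',p)))$, so that \eqref{eq:9.5} yields $\lambda^{*}(\phi\circ\psi)(s,p)=(s,\tilde{\phi}(\lambda(s),\tilde{\psi}(\lambda(s),p)))$. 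On the other side, composing $\lambda^{*}(\phi)$ with $\lambda^{*}(\psi)$ and again invoking the projection-preserving form produces the identical expression, whence $\lambda^{*}(\phi\circ\psi)=\lambda^{*}(\phi)\circ\lambda^{*}(\psi)$.

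Essentially everything here is bookkeeping; the only point requiring genuine care is to track which parametrising factor each $\mathrm{pr}$ and each argument refers to, and — if one prefers a point-free argument over the $H^{\infty}$-pointwise one — to re-derive \eqref{eq:9.5} from the categorical definition \eqref{eq:9.4} by unwinding $\alpha_{\mathcal{S}}$, $\alpha_{\mathcal{S}}^{-1}$ and the diagonal $d_{\mathcal{S}}$, which is the step where the functoriality of the assignment $\mathcal{S}\mapsto\mathcal{S}\times\mathcal{M}$ is actually used. I expect this reduction to the explicit formula to be the main, albeit modest, obstacle, after which each identity collapses to an elementary substitution.
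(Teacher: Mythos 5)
Your proof is correct. The paper itself gives no proof of this proposition (it is stated ``after \cite{Hack:2015vna}'' and the verification is left to the reader/reference), so there is nothing to diverge from; your pointwise check via \eqref{eq:9.5} is exactly the routine argument one would supply, and it is legitimate in the $H^{\infty}$ category since morphisms there are genuine set maps determined by their values on points, with smoothness and well-definedness of $\lambda^{*}(\phi)$ already secured by the categorical formula \eqref{eq:9.4} and Prop.~\ref{prop:9.2}. The three substitution computations (identity, $(\mu\circ\lambda)^{*}=\lambda^{*}\circ\mu^{*}$, and $\lambda^{*}(\phi\circ\psi)=\lambda^{*}(\phi)\circ\lambda^{*}(\psi)$) all check out.
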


\begin{definition}
\begin{enumerate}[label=(\roman*)]
	\item Let $\mathcal{M}_{/\mathcal{S}}\in\mathbf{Ob}(\mathbf{SMan}_{/\mathcal{S}})$ be a $\mathcal{S}$-relative supermanifold. A  vector field $X$ on $\mathcal{S}\times\mathcal{M}$ is called \emph{$\mathcal{S}$-relative} if
\begin{equation}
X(f\otimes 1)=0\,\forall f\in H^{\infty}(\mathcal{S})
\label{eq:9.7}
\end{equation}
The $\mathcal{S}$-relative vector fields form a super $H^{\infty}(\mathcal{S}\times\mathcal{M})$-submodule $\Gamma(\mathcal{M}_{/\mathcal{S}})$ of $\Gamma(\mathcal{S}\times\mathcal{M})$ isomorphic to $H^{\infty}(\mathcal{S})\otimes\Gamma(\mathcal{M})$.
	\item A \emph{$\mathcal{S}$-relative $1$-form} $\omega$ on $\mathcal{M}_{/\mathcal{S}}$ is a left-linear morphism of super $H^{\infty}(\mathcal{S}\times\mathcal{M})$-modules $\omega\in\underline{\mathrm{Hom}}_L(\Gamma(\mathcal{M}_{/\mathcal{S}}),H^{\infty}(\mathcal{S}\times\mathcal{M}))$. The set $\Omega^1(\mathcal{M}_{/\mathcal{S}})$ of $\mathcal{S}$-relative $1$-forms on $\mathcal{M}_{/\mathcal{S}}$ defines a super $H^{\infty}(\mathcal{S}\times\mathcal{M})$-submodule of $\Omega^1(\mathcal{S}\times\mathcal{M})$ isomorphic to $\Omega^1(\mathcal{M})\otimes H^{\infty}(\mathcal{S})$.
\end{enumerate}
\end{definition}
Let us consider a $H^{\infty}$ principal super fiber bundle $\mathcal{G}\rightarrow\mathcal{P}\stackrel{\pi}{\rightarrow}\mathcal{M}$ with $\mathcal{G}$-right action $\Phi:\,\mathcal{P}\times\mathcal{G}\rightarrow\mathcal{P}$ as well as a supermanifold $\mathcal{S}$. Taking products, this yields a fiber bundle 
\begin{displaymath}
	 \xymatrix{
         \mathcal{G}\ar[r] &    \mathcal{S}\times\mathcal{P} \ar[d]^{\pi_{\mathcal{S}}}\\
            &    \mathcal{S}\times\mathcal{M}  
     }
		\label{eq:9.8}
 \end{displaymath}
with projection $\pi_{\mathcal{S}}:=\mathrm{id}_{\mathcal{S}}\times\pi$ and $\mathcal{G}$-right action $\Phi_{\mathcal{S}}:=\mathrm{id}_{\mathcal{S}}\times\Phi:\,(\mathcal{S}\times\mathcal{P})\times\mathcal{G}\rightarrow\mathcal{S}\times\mathcal{P}$. Since, by definition, $\pi_{\mathcal{S}}$ and $\Phi_{\mathcal{S}}$ are morphisms of $\mathcal{S}$-relative supermanifolds, this yields a fiber bundle $\mathcal{G}\rightarrow\mathcal{P}_{/\mathcal{S}}\stackrel{\pi_{\mathcal{S}}}{\rightarrow}\mathcal{M}_{/\mathcal{S}}$ in the category $\mathbf{SMan}_{/\mathcal{S}}$ of $\mathcal{S}$-relative supermanifolds which will be called a \emph{$\mathcal{S}$-relative principal super fiber bundle}.
\begin{definition}\label{Def:2.18}
A \emph{$\mathcal{S}$-relative super connection $1$-form} $\mathcal{A}$ on the $\mathcal{S}$-relative principal super fiber bundle $\mathcal{G}\rightarrow\mathcal{P}_{/\mathcal{S}}\stackrel{\pi_{\mathcal{S}}}{\rightarrow}\mathcal{M}_{/\mathcal{S}}$ is an even $\mathrm{Lie}(\mathcal{G})$-valued $\mathcal{S}$-relative 1-form $\mathcal{A}\in\Omega^1(\mathcal{P}_{/\mathcal{S}},\mathfrak{g}):=\Omega^1(\mathcal{P}_{/\mathcal{S}})\otimes\mathfrak{g}$ such that\footnote{As the notation already suggests, forms on a supermanifold $\mathcal{M}$ are regarded as left linear morphisms, i.e., sections of the exterior power of the left-dual tangent bundle $\tensor[^*]{T\mathcal{M}}{}$ of $\mathcal{M}$.}
\begin{enumerate}[label=(\roman*)]
	\item $\braket{\mathds{1}\otimes\widetilde{X}|\mathcal{A}}=X$ $\forall X\in\mathfrak{g}$
	\item $(\Phi_{\mathcal{S}})_g^*\mathcal{A}=\mathrm{Ad}_{g^{-1}}\circ\mathcal{A}$ $\forall g\in\mathcal{G}$.
\end{enumerate}
where $\widetilde{X}:=(\mathds{1}\otimes X_e)\circ\Phi^*\in\Gamma(\mathcal{P})$ is the fundamental vector field on $\mathcal{P}$ generated by $X\in \mathfrak{g}$ lifted to a ($\mathcal{S}$-relative) fundamental vector field $\mathds{1}\otimes\widetilde{X}$ on $\mathcal{P}_{/\mathcal{S}}$.
\end{definition}
\begin{remark}\label{remark:2.19}
Condition (ii) above for a $\mathcal{S}$-relative super connection $1$-form requires some explanation. It is a general fact in the $H^{\infty}$-category, given the $H^{\infty}$-smooth map $\Phi_{\mathcal{S}}:\,\mathcal{P}_{/\mathcal{S}}\times\mathcal{G}\rightarrow\mathcal{P}_{/\mathcal{S}}$ as well as a point $g\in\mathcal{G}$, the map
\begin{equation}
(\Phi_{\mathcal{S}})_g:=\Phi_{\mathcal{S}}(\cdot,g):\,\mathcal{P}_{/\mathcal{S}}\rightarrow\mathcal{P}_{/\mathcal{S}}
\label{eq:}
\end{equation}
in general, will not be of class $H^{\infty}$, unless $g\in\mathbf{B}(\mathcal{G})$.\footnote{Note that the set smooth functions on a $H^{\infty}$ supermanifold is a $\mathbb{R}$-vector space. For a general product supermanifold $\mathcal{M}\times\mathcal{N}$, one has $H^{\infty}(\mathcal{M}\times\mathcal{N})\cong H^{\infty}(\mathcal{M})\otimes H^{\infty}(\mathcal{N})$ . If then $f\otimes g$ is a smooth function on $\mathcal{M}\times\mathcal{N}$, it follows that $f\otimes g(\cdot,p)=f\cdot g(p)\in H^{\infty}(\mathcal{M})\Leftrightarrow g(p)\in\mathbb{R}\Leftrightarrow p\in\mathbf{B}(\mathcal{N})$ (cf. (\ref{eq:2.2.25.1})). In fact, this has its explanation in the algebraic category since, due to super Milnor's exercise, the real spectrum $\mathrm{Hom}(\mathcal{O}(\mathcal{M}),\mathbb{R})$ is given by the set of morphisms $\mathrm{ev}_p:\,\mathcal{O}(\mathcal{M})\rightarrow\mathbb{R}$ associated to points $p$ on the underlying topological space of an algebro-geometric supermanifold.} However, following \cite{Tuynman:2004} one can still associate a tangent map to $(\Phi_{\mathcal{S}})_g$ even if $g$ is not an element of the body. Therefore, one uses the identification $T(\mathcal{P}_{/\mathcal{S}}\times\mathcal{G})\cong T(\mathcal{P}_{/\mathcal{S}})\times T\mathcal{G}$ and set
\begin{equation}
(\Phi_{\mathcal{S}})_{g*}(X_{(s,p)}):=D_{((s,p),g)}\Phi_{\mathcal{S}}(X_{(s,p)},0_g)
\label{eq:}
\end{equation}
for any $X_{(s,p)}\in T_{(s,p)}(\mathcal{P}_{/\mathcal{S}})$. In \cite{Tuynman:2004}, this is called a \emph{generalized tangent map}. This can be extended to define a generalized pullback of forms w.r.t. $(\Phi_{\mathcal{S}})_g$ as used for instance in condition (ii).
\end{remark}
\begin{definition}
Let $\mathcal{G}\rightarrow\mathcal{P}_{/\mathcal{S}}\stackrel{\pi_{\mathcal{S}}}{\rightarrow}\mathcal{M}_{/\mathcal{S}}$ be a $\mathcal{S}$-relative principal super fiber bundle and $\gamma:\,\mathcal{S}\times\mathcal{I}\rightarrow\mathcal{M}$, with $\mathcal{I}\subseteq\Lambda^{1,0}$ a super interval, a smooth path on $\mathcal{M}_{\mathcal{S}}$. Given a $\mathcal{S}$-relative super connection $1$-form $\mathcal{A}$ on $\mathcal{P}_{/\mathcal{S}}$, a smooth path $\gamma^{hor}:\,\mathcal{S}\times\mathcal{I}\rightarrow\mathcal{P}$ on $\mathcal{P}_{/\mathcal{S}}$ is called a horizontal lift of $\gamma$ w.r.t. $\mathcal{A}$ if $\pi\circ\gamma^{hor}=\gamma$ and $\braket{(\mathds{1}\otimes\partial_t)\alpha_{\mathcal{S}}^{-1}(\gamma^{hor})(s,t)|\mathcal{A}}=0$ $\forall (s,t)\in\mathcal{S}\times\mathcal{I}$.  
\end{definition}
\begin{prop}\label{prop:9.7}
Let $\mathcal{A}\in\Omega^1(\mathcal{P}_{/\mathcal{S}},\mathfrak{g})$ be a $\mathcal{S}$-relative super connection $1$-form on the $\mathcal{S}$-relative principal super fiber bundle $\mathcal{G}\rightarrow\mathcal{P}_{/\mathcal{S}}\stackrel{\pi_{\mathcal{S}}}{\rightarrow}\mathcal{M}_{/\mathcal{S}}$ as well as $\gamma:\,\mathcal{S}\times\mathcal{I}\rightarrow\mathcal{M}$ a smooth path. Let furthermore $f:\,\mathcal{S}\rightarrow\mathcal{P}$ be a smooth map. Then, there exists a unique horizontal lift $\gamma^{hor}:\,\mathcal{S}\times\mathcal{I}\rightarrow\mathcal{M}$ of $\gamma$ w.r.t. $\mathcal{A}$ such that $\gamma^{hor}(\,\cdot\,,0)=f$.
\end{prop}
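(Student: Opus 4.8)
The plan is to reduce the statement to a parameter–dependent first–order ordinary differential equation on the structure group $\mathcal{G}$ and to invoke the existence and uniqueness theory for such equations in the $H^{\infty}$-category, with the parametrising supermanifold $\mathcal{S}$ playing the role of a smooth family of parameters. First I would exploit the local triviality of the bundle: since $\mathbf{B}(\mathcal{I})$ is a compact interval, I cover the image of $|\gamma|$ on the body by finitely many open sets over which $\mathcal{P}$ trivialises and subdivide $\mathcal{I}$ accordingly, so that it suffices to construct and uniquely determine the lift over each subinterval and then glue. On a trivialising neighbourhood choose a local section $\sigma:\,\mathcal{U}\rightarrow\mathcal{P}$; using Proposition \ref{prop:9.2} together with \eqref{eq:9.3} to pass freely between $\mathcal{S}$-relative morphisms and ordinary $H^{\infty}$-maps $\mathcal{S}\times\mathcal{I}\rightarrow\mathcal{P}$, any candidate lift over $\mathcal{U}$ can be written as $\gamma^{hor}(s,t)=\Phi(\sigma(\gamma(s,t)),g(s,t))$ for a uniquely determined $\mathcal{G}$-valued map $g:\,\mathcal{S}\times\mathcal{I}\rightarrow\mathcal{G}$.

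Next I would translate the horizontality condition into an equation for $g$. Using the two defining properties of the connection in Definition \ref{Def:2.18}, namely that $\mathcal{A}$ reproduces fundamental vector fields and transforms under the $\mathcal{G}$-action by $\mathrm{Ad}$, a direct computation along $\gamma^{hor}$ shows that the condition $\braket{(\mathds{1}\otimes\partial_t)\alpha_{\mathcal{S}}^{-1}(\gamma^{hor})|\mathcal{A}}=0$ is equivalent to the left–invariant parallel transport equation
\begin{equation}
\theta_{\mathcal{G}}\big((\mathds{1}\otimes\partial_t)g\big)=-\mathrm{Ad}_{g^{-1}}\big\langle(\mathds{1}\otimes\partial_t)\gamma\,\big|\,\sigma^{*}\mathcal{A}\big\rangle,
\label{eq:ptransport}
\end{equation}
where $\theta_{\mathcal{G}}$ denotes the left Maurer--Cartan form of $\mathcal{G}$ and $\sigma^{*}\mathcal{A}\in\Omega^1(\mathcal{U}_{/\mathcal{S}},\mathfrak{g})$ is the local gauge potential. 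The right-hand side of \eqref{eq:ptransport} is an $H^{\infty}$-smooth, $s$- and $t$-dependent element of $\mathrm{Lie}(\mathcal{G})$, so this is a first–order ODE on the super Lie group $\mathcal{G}$, and the initial datum $\gamma^{hor}(\,\cdot\,,0)=f$ (which is compatible since $\pi\circ f=\gamma(\,\cdot\,,0)$) fixes $g(\,\cdot\,,0)$ uniquely in the relative sense.

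The core analytic input is then the existence and uniqueness of solutions of \eqref{eq:ptransport}. Following the theory of ODEs on $H^{\infty}$ supermanifolds of Tuynman \cite{Tuynman:2004}, as adapted to the present relative setting in \cite{Konsti-FB:2020}, one first solves the body equation on $\mathbf{B}(\mathcal{G})$, obtaining a unique classical integral curve, and then determines the soul contributions order by order in the nilpotent generators of the underlying Grassmann algebra. At each order the equation becomes linear and inhomogeneous, hence uniquely and globally solvable on the subinterval, and the $\mathbf{G}$-extension \eqref{eq:2.2.23.1} guarantees that the resulting $g$ is jointly $H^{\infty}$ in $(s,t)$. Uniqueness on overlaps then lets me glue the local solutions into a global $\gamma^{hor}:\,\mathcal{S}\times\mathcal{I}\rightarrow\mathcal{P}$ with $\pi\circ\gamma^{hor}=\gamma$ and $\gamma^{hor}(\,\cdot\,,0)=f$.

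The main obstacle I anticipate is not the gluing but the careful bookkeeping forced by the relative $H^{\infty}$-setting. The right translation $(\Phi_{\mathcal{S}})_g$ implicit in \eqref{eq:ptransport} need not be an $H^{\infty}$-map when $g\notin\mathbf{B}(\mathcal{G})$, so the derivative along the lift must be interpreted through the generalized tangent map of Remark \ref{remark:2.19}; verifying that the equation and its solution are well posed with respect to this generalized differential, and that smooth dependence on $\mathcal{S}$ is genuinely preserved under the $\mathbf{G}$-extension, is the delicate point. Once this is settled, the classical Picard--Lindel\"of scheme transported to the super Lie group yields both existence and uniqueness, completing the argument.
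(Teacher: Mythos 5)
Your proposal is correct and follows essentially the same route as the paper: localize to a trivializing neighbourhood, write the candidate lift as $\Phi_{\mathcal{S}}\circ(\delta\times g)$ with $\delta=\tilde{s}\circ(\mathrm{id}\times\gamma)$, and reduce horizontality to the right-invariant equation $(\mathds{1}\otimes\partial_t)g=-R_{g(s,t)*}\mathcal{A}^{\gamma}(s,t)$ with $g(\,\cdot\,,0)$ fixed by $f$ — your Maurer--Cartan formulation is the identical equation, since $L_{g*}\circ\mathrm{Ad}_{g^{-1}}=R_{g*}$. The only divergence is that where the paper invokes right-invariance and defers the $H^{\infty}$ existence and uniqueness to \cite{Konsti-FB:2020}, you make the analytic input explicit via the body-equation-plus-nilpotent-order induction, which is consistent with that reference and with the generalized-tangent-map subtlety you correctly flag.
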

\begin{proof}[Sketch of proof.]
It suffices to assume that $\gamma$ is contained within a local trivialization neighborhood of $\mathcal{S}\times\mathcal{P}$, i.e., there exists an open subset $U\subseteq\mathcal{M}$ and a smooth morphism $\tilde{s}:\,U_{/\mathcal{S}}:=\mathcal{M}_{/\mathcal{S}}|_{\mathcal{S}\times U}\rightarrow\mathcal{P}_{/\mathcal{S}}$ of $\mathcal{S}$-relative supermanifolds such that $\pi_{\mathcal{S}}\circ\tilde{s}=\mathrm{id}_{U_{/\mathcal{S}}}$ and $\mathrm{im}\,\gamma\subseteq\pi^{-1}_{\mathcal{S}}(\mathcal{S}\times U)$. Furthermore, w.l.o.g., we can assume that $\tilde{s}(\,\cdot\,,\gamma(\,\cdot\,,0))=f$.\\
Set $\delta:=\tilde{s}\circ(\mathrm{id}\times\gamma):\,\mathcal{I}_{/\mathcal{S}}\rightarrow\mathcal{P}_{/\mathcal{S}}$. It follows that a horizontal lift has to be of the form $\gamma^{hor}:=\Phi_{\mathcal{S}}\circ(\delta\times g)$ for some smooth function $g:\,\mathcal{S}\times\tilde{\mathcal{I}}\rightarrow\mathcal{G}$ defined on some open subset $\tilde{\mathcal{I}}\subseteq\mathcal{I}$. Hence, this yields $\braket{(\mathds{1}\otimes\partial_t)\alpha_{\mathcal{S}}^{-1}(\gamma^{hor})(s,t)|\mathcal{A}}=0$ if and only if
\begin{equation}
(\mathds{1}\otimes\partial_t)g(s,t)=-R_{g(s,t)*}\braket{(\mathds{1}\otimes\partial_t)\delta(s,t)|\mathcal{A}}=:-R_{g(s,t)*}\mathcal{A}^{\gamma}(s,t)
\label{eq:9.9}
\end{equation}
where $\mathcal{A}^{\gamma}(s,t):=\braket{(\mathds{1}\otimes\partial_t)\delta(s,t)|\mathcal{A}}=\braket{(\mathds{1}\otimes\partial_t)\alpha_{\mathcal{S}}^{-1}(\gamma)(s,t)|\tilde{s}^*\mathcal{A}}$ and $R_g$ is the right translation on $\mathcal{G}$, with the initial condition $g(\,\cdot\,,0)=e$. Hence, the claim follows if one can show that (\ref{eq:9.9}) admits a $H^{\infty}$ smooth solution with $\tilde{\mathcal{I}}=\mathcal{I}$. This follows from the right-invariance of (\ref{eq:9.9}) (see \cite{Konsti-FB:2020} for more details). 
\end{proof}
\begin{remark}
For a smooth map $f:\,\mathcal{S}\rightarrow\mathcal{M}$, one can consider the pullback super fiber bundle 
\begin{equation}
f^*\mathcal{P}=\{(s,p)|\,f(s)=\pi(p)\}\subset\mathcal{S}\times\mathcal{P}
\label{eq:9.13}
\end{equation}
over $\mathcal{S}$. A smooth section $\tilde{\phi}:\mathcal{S}\rightarrow f^*\mathcal{P}$ of the pullback bundle is then of the form $\tilde{\phi}(s)=(s,\phi(s))$ $\forall s\in\mathcal{S}$ with $\phi:\,\mathcal{S}\rightarrow\mathcal{P}$ a smooth map satsifying $\pi\circ\phi=f$. Hence, we can identify
\begin{equation}
\Gamma(f^*\mathcal{P})=\{\phi:\,\mathcal{S}\rightarrow\mathcal{P}|\,\pi\circ\phi=f\}
\label{eq:9.14}
\end{equation}
\end{remark}
\begin{definition}\label{def:9.9}
Under the conditions of prop. \ref{prop:9.7}, the \emph{parallel transport in $\mathcal{P}_{/\mathcal{S}}$ along $\gamma$} w.r.t. the connection $\mathcal{A}$ is defined as
\begin{align}
\mathscr{P}^{\mathcal{A}}_{\mathcal{S},\gamma}:\,\Gamma(\gamma_0^*\mathcal{P})&\rightarrow\Gamma(\gamma_1^*\mathcal{P})\label{eq:9.15}\\
\phi&\mapsto\gamma^{hor}_{\phi}(\,\cdot\,,1)\nonumber
\end{align}
where, for $\Gamma(\gamma_0^*\mathcal{P})\ni\phi:\mathcal{S}\rightarrow\mathcal{P}$, $\gamma^{hor}_{\phi}$ is the unique horizontal lift of $\gamma$ with respect to $\mathcal{A}$ such that $\gamma^{hor}_{\phi}(\,\cdot\,,0)=\phi$.
\end{definition}
Given a morphism $\lambda:\,\mathcal{S}\rightarrow\mathcal{S}'$, this induces the pullback $\lambda^*:\,H^{\infty}(\mathcal{S}')\rightarrow H^{\infty}(\mathcal{S}),\,f\mapsto\lambda^*f=f\circ\lambda$ on the function sheaves. Since the super $H^{\infty}(\mathcal{S}\times\mathcal{M})$-module $\Gamma(\mathcal{M}_{/\mathcal{S}})$ of $\mathcal{S}$-relative vector fields on $\mathcal{M}_{/\mathcal{S}}$ is isomorphic to $H^{\infty}(\mathcal{S})\otimes\Gamma(\mathcal{M})$, this yields the morphism
\begin{align}
\lambda^*\equiv\lambda^*\otimes\mathds{1}:\,\Gamma(\mathcal{M}_{/\mathcal{S}'})&\rightarrow\Gamma(\mathcal{M}_{/\mathcal{S}})\label{eq:9.16}\\
f\otimes X&\mapsto\lambda^*f\otimes X\nonumber
\end{align}
Moreover, from $\Omega^1(\mathcal{M}_{/\mathcal{S}})\cong\Omega^1(\mathcal{M})\otimes H^{\infty}(\mathcal{S})$ we obtain the morphism
\begin{align}
\lambda^*\equiv\mathds{1}\otimes\lambda^*:\,\Omega^1(\mathcal{M}_{/\mathcal{S}'})&\rightarrow\Omega^1(\mathcal{M}_{/\mathcal{S}})\\
\omega\otimes f&\mapsto\omega\otimes\lambda^*f\nonumber
\label{eq:9.17}
\end{align}
By definition, it then follows 
\begin{equation}
\braket{\lambda^*X|\lambda^*\mathcal{A}}=\lambda^*\braket{X|\mathcal{A}}
\label{eq:9.18}
\end{equation}
In fact, since \ref{eq:9.18} is a local property, let us choose a local coordinate neighborhood such that $X$ and $\omega$ can be locally expanded in the form $X=f^i\otimes X_i$ and $\omega=\omega_j\otimes g^j$ with $X_i$ and $\omega_j$ smooth vector fields and 1-forms on $\mathcal{M}$, respectively. We then compute
\begin{align}
\braket{\lambda^*X|\lambda^*\mathcal{A}}&=\braket{\lambda^*f^i\otimes X_i|\omega_j\otimes\lambda^*g^j}=\lambda^*f^i\braket{X_i|\omega_j}\lambda^*g^j\nonumber\\
&=\lambda^*\braket{f^i\otimes X_i|\omega_j\otimes g^j}=\lambda^*\braket{X|\mathcal{A}}
\label{eq:9.19}
\end{align}
\begin{prop}\label{prop:9.10}
The parallel tansport map enjoys the following properties:
\begin{enumerate}[label=(\roman*)]
	\item $\mathscr{P}^{\mathcal{A}}_{\mathcal{S}}$ is functorial under compositions of paths, that is, for smooth paths $\gamma:\,\mathcal{S}\times\mathcal{I}\rightarrow\mathcal{M}$ and $\delta:\,\mathcal{S}\times\mathcal{I}\rightarrow\mathcal{M}$ on $\mathcal{M}_{/\mathcal{S}}$, one has
\begin{equation}
\mathscr{P}^{\mathcal{A}}_{\mathcal{S},\gamma\circ\delta}=\mathscr{P}^{\mathcal{A}}_{\mathcal{S},\gamma}\circ\mathscr{P}^{\mathcal{A}}_{\mathcal{S},\delta}
\label{eq:9.20}
\end{equation}
\item $\mathscr{P}^{\mathcal{A}}_{\mathcal{S},\gamma}$ is covariant under change of parametrization in the sense that if $\lambda:\,\mathcal{S}\rightarrow\mathcal{S}'$ is a morphism of supermanifolds, then the diagram
\begin{equation}
	 \xymatrix{
         \Gamma(f^*\mathcal{P})\ar[r]^{\mathscr{P}^{\mathcal{A}}_{\mathcal{S'},\gamma}}\ar[d]_{\lambda^*} &   \Gamma(g^*\mathcal{P}) \ar[d]^{\lambda^*}\\
            \Gamma((f\circ\lambda)^*\mathcal{P})\ar[r]^{\mathscr{P}^{\lambda^*\!\mathcal{A}}_{\mathcal{S},\lambda^*\!\gamma}} &   \Gamma((g\circ\lambda)^*\mathcal{P})  
     }
		\label{eq:9.21}
 \end{equation}
is commutative for any smooth path $\gamma:\,f\Rightarrow g$ on $\mathcal{M}_{/\mathcal{S}'}$.
\end{enumerate}
\end{prop}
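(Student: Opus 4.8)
The plan for both parts is to leverage the uniqueness of horizontal lifts established in Proposition \ref{prop:9.7}: in each case I will produce a candidate lift, verify that it is horizontal and satisfies the correct initial condition, and then invoke uniqueness to identify it with the lift appearing in Definition \ref{def:9.9}. The two statements are thus formally parallel, differing only in which structure (concatenation of paths, respectively pullback along $\lambda$) is shown to commute with the lifting procedure.

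For part (i), I would start with a section $\phi\in\Gamma(\delta_0^*\mathcal{P})$ and let $\delta^{hor}_\phi$ be its horizontal lift along $\delta$, so that $\psi:=\mathscr{P}^{\mathcal{A}}_{\mathcal{S},\delta}(\phi)=\delta^{hor}_\phi(\,\cdot\,,1)$. Next let $\gamma^{hor}_\psi$ be the horizontal lift of $\gamma$ with $\gamma^{hor}_\psi(\,\cdot\,,0)=\psi$. I would then define the concatenated lift along $\gamma\circ\delta$ by traversing $\delta^{hor}_\phi$ on the first half and $\gamma^{hor}_\psi$ on the second; this is well-defined at the junction precisely because the endpoint $\delta^{hor}_\phi(\,\cdot\,,1)$ coincides with the starting point $\psi$ of $\gamma^{hor}_\psi$. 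Since the horizontality constraint $\braket{(\mathds{1}\otimes\partial_t)\alpha_{\mathcal{S}}^{-1}(\,\cdot\,)(s,t)|\mathcal{A}}=0$ is a pointwise condition in $t$, the concatenation is horizontal on each piece and hence is a horizontal lift of $\gamma\circ\delta$ with initial value $\phi$. By uniqueness it equals $(\gamma\circ\delta)^{hor}_\phi$, and evaluating at $t=1$ yields $\mathscr{P}^{\mathcal{A}}_{\mathcal{S},\gamma\circ\delta}(\phi)=\gamma^{hor}_\psi(\,\cdot\,,1)=\mathscr{P}^{\mathcal{A}}_{\mathcal{S},\gamma}(\psi)$, which is the desired identity \eqref{eq:9.20}.

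For part (ii), the essential ingredient is the compatibility relation \eqref{eq:9.18}, $\braket{\lambda^*X|\lambda^*\mathcal{A}}=\lambda^*\braket{X|\mathcal{A}}$. Given $\phi\in\Gamma(f^*\mathcal{P})$ with horizontal lift $\gamma^{hor}_\phi$ along $\gamma$ w.r.t.\ $\mathcal{A}$, I would form the pullback $\lambda^*\gamma^{hor}_\phi:=\gamma^{hor}_\phi\circ(\lambda\times\mathrm{id}_{\mathcal{I}})$ and claim it is the horizontal lift of $\lambda^*\gamma$ w.r.t.\ $\lambda^*\mathcal{A}$ with initial value $\lambda^*\phi=\phi\circ\lambda$. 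The initial condition $\lambda^*\gamma^{hor}_\phi(\,\cdot\,,0)=\phi\circ\lambda$ and the projection property $\pi\circ\lambda^*\gamma^{hor}_\phi=\gamma\circ(\lambda\times\mathrm{id})=\lambda^*\gamma$ are immediate from \eqref{eq:9.5}. For horizontality one computes $\braket{(\mathds{1}\otimes\partial_t)\alpha_{\mathcal{S}}^{-1}(\lambda^*\gamma^{hor}_\phi)(s,t)|\lambda^*\mathcal{A}}$ and reduces it, via \eqref{eq:9.18} together with the chain rule for the generalized tangent map, to $\lambda^*\braket{(\mathds{1}\otimes\partial_t)\alpha_{\mathcal{S}'}^{-1}(\gamma^{hor}_\phi)(\,\cdot\,,t)|\mathcal{A}}$, which vanishes. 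Uniqueness then gives $\lambda^*\gamma^{hor}_\phi=(\lambda^*\gamma)^{hor}_{\lambda^*\phi}$, and evaluating at $t=1$ yields $\lambda^*\mathscr{P}^{\mathcal{A}}_{\mathcal{S}',\gamma}(\phi)=\mathscr{P}^{\lambda^*\mathcal{A}}_{\mathcal{S},\lambda^*\gamma}(\lambda^*\phi)$, i.e.\ commutativity of \eqref{eq:9.21}.

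The main obstacle I anticipate is the horizontality step in part (ii): one must verify that the velocity $(\mathds{1}\otimes\partial_t)\alpha_{\mathcal{S}}^{-1}(\lambda^*\gamma^{hor}_\phi)$ is genuinely the $\lambda$-pullback of $(\mathds{1}\otimes\partial_t)\alpha_{\mathcal{S}'}^{-1}(\gamma^{hor}_\phi)$ under the identification \eqref{eq:9.16}, which demands careful bookkeeping of the adjunction $\alpha_{\mathcal{S}}^{-1}$ and of the generalized tangent maps in the relative category. Once this identification is in place, the compatibility \eqref{eq:9.18} does the remaining work. In part (i) the only delicate point is the smoothness of the concatenated lift at the junction, which however does not affect the endpoint evaluation and is handled by applying uniqueness separately on each smooth segment.
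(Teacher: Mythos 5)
Your proposal is correct and follows essentially the same route as the paper: part (i) is the concatenation-of-lifts argument settled by uniqueness of solutions of the horizontal lift equation, and part (ii) rests on the identity $(\mathds{1}\otimes\partial_t)\alpha_{\mathcal{S}}^{-1}(\lambda^*\gamma^{hor})=\lambda^*((\mathds{1}\otimes\partial_t)\alpha_{\mathcal{S}'}^{-1}(\gamma^{hor}))$ combined with the compatibility \eqref{eq:9.18} and uniqueness, exactly as in the paper. The only difference is that you spell out the junction/initial-condition checks slightly more explicitly than the paper does, which is harmless.
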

\begin{proof}
As shown in the proof of Prop. \ref{prop:9.7}, the parallel transport map is locally defined via a first order differential equation. Hence, the functoriality property under the composition of paths is an immediate consequence of the uniqueness of solutions of differential equations once fixing the initial conditions. In fact, this implies $(\gamma\circ\delta)^{hor}=\gamma^{hor}\circ\delta^{hor}$ yielding (\ref{eq:9.20}) by definition \ref{eq:9.15}.\\
To prove the invariance under reparametrizations, notice that for a supermanifold morphism $\lambda:\,\mathcal{S}\rightarrow\mathcal{S}'$, one has $(\mathds{1}\otimes\partial_t)\lambda^*\gamma^{hor}=\lambda^*((\mathds{1}\otimes\partial_t)\gamma^{hor})$ so that, by definition \ref{eq:9.4}, it follows
\begin{equation}
(\mathds{1}\otimes\partial_t)\alpha_{\mathcal{S}}^{-1}(\lambda^*\gamma^{hor})=\lambda^*((\mathds{1}\otimes\partial_t)\alpha_{\mathcal{S}'}^{-1}(\gamma^{hor}))
\label{eq:9.22}
\end{equation}
and thus
\begin{align}
\braket{(\mathds{1}\otimes\partial_t)\alpha_{\mathcal{S}}^{-1}(\lambda^*\gamma^{hor})|\lambda^*\!\mathcal{A}}&=\braket{\lambda^*((\mathds{1}\otimes\partial_t)\alpha_{\mathcal{S}'}^{-1}(\gamma^{hor}))|\lambda^*\!\mathcal{A}}\nonumber\\
&=\braket{(\mathds{1}\otimes\partial_t)\alpha_{\mathcal{S}'}^{-1}(\gamma^{hor})|\mathcal{A}}=0
\label{eq:9.23}
\end{align} 
according to (\ref{eq:9.18}). Since $\lambda^*\gamma^{hor}_{\phi}(\,\cdot\,,0)=\phi\circ\lambda=\lambda^*\phi$ and $\pi\circ\lambda^*\gamma^{hor}_{\phi}=\lambda^*\gamma$, by uniqueness, this yields $\lambda^*\gamma^{hor}_{\phi}=(\lambda^*\gamma)^{hor}_{\lambda^*\phi}$ and therefore
\begin{equation}
\mathscr{P}^{\lambda^*\!\mathcal{A}}_{\mathcal{S},\lambda^*\!\gamma}(\lambda^*\phi)=(\lambda^*\gamma)^{hor}_{\lambda^*\phi}(\,\cdot\,,1)=\lambda^*\gamma^{hor}_{\phi}(\,\cdot\,,1)=\lambda^*(\mathscr{P}^{\mathcal{A}}_{\mathcal{S}',\gamma}(\phi))
\label{eq:9.24}
\end{equation}
$\forall\phi\in\Gamma(f^*\mathcal{P})$ proving the commutativity of the diagram (\ref{eq:9.21}).
\end{proof}
\begin{definition}
A \emph{global gauge transformation} $f$ on the $\mathcal{S}$-relative principal super fiber bundle $\mathcal{G}\rightarrow\mathcal{P}_{/\mathcal{S}}\rightarrow\mathcal{M}_{/\mathcal{S}}$ is a morphism $f:\,\mathcal{P}_{/\mathcal{S}}\rightarrow\mathcal{P}_{/\mathcal{S}}$ of $\mathcal{S}$-relative supermanifolds which is fiber-preserving and $\mathcal{G}$-equivariant, i.e., $\pi_{\mathcal{S}}\circ f=\pi_{\mathcal{S}}$ and $f\circ\Phi_{\mathcal{S}}=\Phi_{\mathcal{S}}\circ(f\times\mathrm{id})$. The set of global gauge transformations on $\mathcal{P}_{/\mathcal{S}}$ will be denoted by $\mathscr{G}(\mathcal{P}_{/\mathcal{S}})$.
\end{definition}
\begin{prop}
Their exists a bijective correspondence between the set $\mathscr{G}(\mathcal{P}_{/\mathcal{S}})$ of global gauge transformations on the $\mathcal{S}$-relative principal super fiber bundle $\mathcal{G}\rightarrow\mathcal{P}_{/\mathcal{S}}\rightarrow\mathcal{M}_{/\mathcal{S}}$ and the set 
\begin{equation}
H^{\infty}(\mathcal{S}\times\mathcal{P},\mathcal{G})^{\mathcal{G}}:=\{\sigma:\,\mathcal{S}\times\mathcal{P}\rightarrow\mathcal{G}|\,\sigma\circ\Phi_{\mathcal{S}}=\alpha\circ(\sigma\times\mathrm{id})\}
\label{eq:9.25}
\end{equation}
via
\begin{equation}
H^{\infty}(\mathcal{S}\times\mathcal{P},\mathcal{G})^{\mathcal{G}}\ni\sigma\mapsto\Phi_{\mathcal{S}}\circ(\mathrm{id}\times\sigma)\circ d_{\mathcal{S}\times\mathcal{P}}\in\mathscr{G}(\mathcal{P}_{/\mathcal{S}})
\label{eq:9.26}
\end{equation}
where $\alpha:\,\mathcal{G}\times\mathcal{G}\rightarrow\mathcal{G},\,(g,h)\mapsto h^{-1}gh$. In particular, global gauge transformations are super diffeomorphisms on $\mathcal{P}_{/\mathcal{S}}$ and $\mathscr{G}(\mathcal{P}_{/\mathcal{S}})$ forms an abstract group under composition of smooth maps. 
\end{prop}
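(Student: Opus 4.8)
The plan is to reproduce, in the $\mathcal{S}$-relative $H^{\infty}$-category, the classical identification of a gauge transformation $f$ with the equivariant map $\sigma$ determined by $f(p)=p\cdot\sigma(p)$. On points the prescribed map sends $\sigma\in H^{\infty}(\mathcal{S}\times\mathcal{P},\mathcal{G})^{\mathcal{G}}$ to the morphism $f_\sigma$ acting by $f_\sigma(s,p)=\Phi_{\mathcal{S}}((s,p),\sigma(s,p))=(s,p\cdot\sigma(s,p))$. First I would check that $f_\sigma$ is a bona fide element of $\mathscr{G}(\mathcal{P}_{/\mathcal{S}})$. Since $f_\sigma$ is a composite of the $H^{\infty}$-morphisms $d_{\mathcal{S}\times\mathcal{P}}$, $\mathrm{id}\times\sigma$ and $\Phi_{\mathcal{S}}$, it is $H^{\infty}$; because $\Phi_{\mathcal{S}}=\mathrm{id}_{\mathcal{S}}\times\Phi$ leaves the $\mathcal{S}$-entry fixed, $\mathrm{pr}_{\mathcal{S}}\circ f_\sigma=\mathrm{pr}_{\mathcal{S}}$, so $f_\sigma$ is a morphism of $\mathcal{S}$-relative supermanifolds; and $\pi\circ\Phi=\pi\circ\mathrm{pr}_{\mathcal{P}}$ yields $\pi_{\mathcal{S}}\circ f_\sigma=\pi_{\mathcal{S}}$, i.e. $f_\sigma$ is fiber-preserving.

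Next I would verify $\mathcal{G}$-equivariance. Writing out $f_\sigma\circ\Phi_{\mathcal{S}}$ on $((s,p),g)$ and substituting the equivariance $\sigma(s,p\cdot g)=g^{-1}\sigma(s,p)g$ (which is exactly $\sigma\circ\Phi_{\mathcal{S}}=\alpha\circ(\sigma\times\mathrm{id})$), the factors $g\,g^{-1}$ cancel and one is left with $(s,p\cdot\sigma(s,p)\,g)=\Phi_{\mathcal{S}}(f_\sigma(s,p),g)$, which is precisely $\Phi_{\mathcal{S}}\circ(f_\sigma\times\mathrm{id})$; hence $f_\sigma\in\mathscr{G}(\mathcal{P}_{/\mathcal{S}})$.

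For the inverse map the key ingredient is the division (translation) map $\tau\colon\mathcal{P}\times_{\mathcal{M}}\mathcal{P}\to\mathcal{G}$ of the principal bundle, characterized by $p\cdot\tau(p,q)=q$ whenever $\pi(p)=\pi(q)$; its existence and $H^{\infty}$-smoothness follow from local triviality, where in a trivializing chart it reads $\tau((u,g_1),(u,g_2))=g_1^{-1}g_2$. Given $f\in\mathscr{G}(\mathcal{P}_{/\mathcal{S}})$ I write $f(s,p)=(s,\tilde f(s,p))$ with $\pi(\tilde f(s,p))=\pi(p)$ by fiber-preservation, so $(p,\tilde f(s,p))$ lies in $\mathcal{P}\times_{\mathcal{M}}\mathcal{P}$ and I set $\sigma_f(s,p):=\tau(p,\tilde f(s,p))$, an $H^{\infty}$-map by composition. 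The $\mathcal{G}$-equivariance of $f$ translates, via freeness of the action, into $\sigma_f(s,p\cdot g)=g^{-1}\sigma_f(s,p)g$, i.e. $\sigma_f\in H^{\infty}(\mathcal{S}\times\mathcal{P},\mathcal{G})^{\mathcal{G}}$. The two assignments are mutually inverse by the defining uniqueness of $\tau$, so that $\sigma_{f_\sigma}=\sigma$ and $f_{\sigma_f}=f$ hold pointwise.

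Finally, a direct computation using equivariance shows that composition corresponds to the pointwise product, $f_{\sigma_1}\circ f_{\sigma_2}=f_{\sigma_1\sigma_2}$, with the constant map $e$ giving $\mathrm{id}$; since $H^{\infty}(\mathcal{S}\times\mathcal{P},\mathcal{G})^{\mathcal{G}}$ is closed under pointwise product and inversion ($\alpha$ being conjugation), every $f_\sigma$ has the two-sided inverse $f_{\sigma^{-1}}$ and is therefore a superdiffeomorphism, whence $\mathscr{G}(\mathcal{P}_{/\mathcal{S}})$ is a group. I expect the main obstacle to be the rigorous construction and $H^{\infty}$-smoothness of the division map $\tau$ in the super category, together with the subtlety flagged in Remark \ref{remark:2.19}: one may not fix individual non-body elements $g\in\mathcal{G}$, but here $\sigma$ enters as a genuine $\mathcal{G}$-valued $H^{\infty}$-function, so the full action morphism $\Phi_{\mathcal{S}}$ is used and smoothness is preserved throughout.
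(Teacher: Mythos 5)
Your proposal is correct and follows essentially the same route as the paper's (abbreviated) argument: the only genuinely non-classical point is the $H^{\infty}$-smoothness of $\sigma_f$, which the paper establishes by writing $f$ in a local trivialization as $((s,x),g)\mapsto((s,x),\sigma(s,x,g))$ and reading off $\sigma_f=\mu_{\mathcal{G}}(g^{-1},\sigma(s,x,g))$ — exactly your division map $\tau$ evaluated in a chart. Your additional verifications (equivariance of $f_\sigma$, mutual inverseness, group structure) are the parts the paper defers to "the same as in the classical theory," and your remark that $\sigma$ enters as a full $\mathcal{G}$-valued morphism, so the partial-evaluation subtlety of Remark \ref{remark:2.19} is avoided, is the right observation.
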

\begin{prop}\label{Prop:2.26}
Let $\mathcal{A}\in\Omega^1(\mathcal{P}_{/\mathcal{S}},\mathfrak{g})$ be a $\mathcal{S}$-relative super connection $1$-form and $f\in\mathscr{G}(\mathcal{P}_{/\mathcal{S}})$ a global gauge transformation on the $\mathcal{S}$-relative principal super fiber bundle $\mathcal{G}\rightarrow\mathcal{P}_{/\mathcal{S}}\stackrel{\pi_{\mathcal{S}}}{\rightarrow}\mathcal{M}_{/\mathcal{S}}$. Then,
\begin{enumerate}[label=(\roman*)]
	\item $f^*\mathcal{A}\in\Omega^1(\mathcal{P}_{/\mathcal{S}},\mathfrak{g})$ is a $\mathcal{S}$-relative connection 1-form and, in particular,
\begin{equation}
f^*\mathcal{A}=\mathrm{Ad}_{\sigma_f^{-1}}\circ\mathcal{A}+\sigma_f^*\theta_{\mathrm{MC}}
\label{eq:9.29}
\end{equation}
\item the diagram
\begin{displaymath}
	 \xymatrix{
         \Gamma(g^*\mathcal{P})\ar[r]^{\mathscr{P}^{\mathcal{A}}_{\mathcal{S},\gamma}}\ar[d]_{\alpha_{\mathcal{S}}\circ f\circ\alpha_{\mathcal{S}}^{-1}} &   \Gamma(h^*\mathcal{P}) \ar[d]^{\alpha_{\mathcal{S}}\circ f\circ\alpha_{\mathcal{S}}^{-1}}\\
            \Gamma(g^*\mathcal{P})\ar[r]^{\mathscr{P}^{f^*\mathcal{A}}_{\mathcal{S},\gamma}} &   \Gamma(h^*\mathcal{P})  
     }
		\label{eq:9.30}
 \end{displaymath}
is commutative for any smooth path $\gamma:\,g\Rightarrow h$ on $\mathcal{M}_{/\mathcal{S}}$.
\end{enumerate}
\end{prop}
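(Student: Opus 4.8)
The plan is to establish the transformation formula \eqref{eq:9.29} of part (i) by a direct computation of the (generalized) tangent map of $f$, and then to obtain the gauge covariance in part (ii) from this formula together with the uniqueness of horizontal lifts proved in Proposition \ref{prop:9.7}. For part (i), I would first use the bijection of the preceding proposition to write $f=\Phi_{\mathcal{S}}\circ(\mathrm{id}\times\sigma_f)\circ d_{\mathcal{S}\times\mathcal{P}}$ for the associated gauge function $\sigma_f\in H^{\infty}(\mathcal{S}\times\mathcal{P},\mathcal{G})^{\mathcal{G}}$, so that $f(s,p)=(s,p)\cdot\sigma_f(s,p)$. The heart of the argument is to evaluate $f^*\mathcal{A}$ on an $\mathcal{S}$-relative tangent vector $X$, which amounts to applying $\mathcal{A}$ to the tangent map of $f$. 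Since $\sigma_f$ in general takes values outside the body $\mathbf{B}(\mathcal{G})$, I would invoke the generalized tangent map of Remark \ref{remark:2.19} together with the identification $T(\mathcal{P}_{/\mathcal{S}}\times\mathcal{G})\cong T(\mathcal{P}_{/\mathcal{S}})\times T\mathcal{G}$ in order to split the tangent map of $f$ by a Leibniz rule into a right-translation part, where $p$ is varied and $\sigma_f$ held fixed, and a vertical part, where $\sigma_f$ is varied and $p$ held fixed. The first part pushes forward under $(\Phi_{\mathcal{S}})_{\sigma_f}$, while the second is precisely the $\mathcal{S}$-relative fundamental vector field generated by the Lie-algebra element $\braket{X|\sigma_f^*\theta_{\mathrm{MC}}}$.

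Applying $\mathcal{A}$ to this decomposition, the two defining properties of a connection in Definition \ref{Def:2.18} then do the remaining work: property (ii), the equivariance $(\Phi_{\mathcal{S}})_g^*\mathcal{A}=\mathrm{Ad}_{g^{-1}}\circ\mathcal{A}$ understood through the generalized pullback of Remark \ref{remark:2.19}, converts the right-translation part into $\mathrm{Ad}_{\sigma_f^{-1}}\circ\mathcal{A}(X)$, and property (i) converts the vertical part into $\braket{X|\sigma_f^*\theta_{\mathrm{MC}}}$, which together yield \eqref{eq:9.29}. That $f^*\mathcal{A}$ is again an even $\mathrm{Lie}(\mathcal{G})$-valued $\mathcal{S}$-relative connection $1$-form I would check directly from the axioms rather than from the formula: fiber preservation and $\mathcal{G}$-equivariance of $f$ give $f_*(\mathds{1}\otimes\widetilde{X})=(\mathds{1}\otimes\widetilde{X})\circ f$ for every fundamental vector field, whence $\braket{\mathds{1}\otimes\widetilde{X}|f^*\mathcal{A}}=\braket{\mathds{1}\otimes\widetilde{X}|\mathcal{A}}=X$, while the commutation $f\circ(\Phi_{\mathcal{S}})_g=(\Phi_{\mathcal{S}})_g\circ f$ forces $(\Phi_{\mathcal{S}})_g^*f^*\mathcal{A}=f^*(\Phi_{\mathcal{S}})_g^*\mathcal{A}=\mathrm{Ad}_{g^{-1}}\circ f^*\mathcal{A}$.

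For part (ii), the strategy is to reduce the commutativity of \eqref{eq:9.30} to the uniqueness statement of Proposition \ref{prop:9.7}. The basic observation, immediate from the definition of the pullback, is that for any smooth path $\delta$ on $\mathcal{P}_{/\mathcal{S}}$ one has $\braket{(\mathds{1}\otimes\partial_t)\alpha_{\mathcal{S}}^{-1}(\delta)|f^*\mathcal{A}}=\braket{(\mathds{1}\otimes\partial_t)\alpha_{\mathcal{S}}^{-1}(f\circ\delta)|\mathcal{A}}$, so that $\delta$ is horizontal with respect to $f^*\mathcal{A}$ exactly when $f\circ\delta$ is horizontal with respect to $\mathcal{A}$; that is, the gauge transformation intertwines the two notions of horizontality. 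Fiber preservation $\pi_{\mathcal{S}}\circ f=\pi_{\mathcal{S}}$ guarantees that the path obtained from the $\mathcal{A}$-horizontal lift $\gamma^{hor}_{\phi}$ by the gauge transformation is again a lift of $\gamma$, and one checks that it carries the gauge-transformed initial datum $(\alpha_{\mathcal{S}}\circ f\circ\alpha_{\mathcal{S}}^{-1})(\phi)$. Uniqueness of horizontal lifts then identifies it with the $f^*\mathcal{A}$-horizontal lift of $\gamma$ through that datum, and evaluation at $t=1$ gives the desired commutativity.

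The step I expect to be the main obstacle is the tangent-map computation in part (i): making the Leibniz-type splitting of the generalized tangent map of $f$ precise when $\sigma_f$ is not valued in the body, and justifying the use of the generalized pullback in the equivariance property, is where all the genuinely $H^{\infty}$-specific care is required. Once \eqref{eq:9.29} together with the horizontality characterization above are in hand, part (ii) is essentially a bookkeeping argument resting on the uniqueness already established in Proposition \ref{prop:9.7}.
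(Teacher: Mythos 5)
Part (i) of your proposal coincides with the paper's own argument: you verify the two connection axioms directly from the $\mathcal{G}$-equivariance and fiber preservation of $f$, and you obtain \eqref{eq:9.29} by writing $f=\Phi_{\mathcal{S}}\circ(\mathrm{id}\times\sigma_f)\circ d_{\mathcal{S}\times\mathcal{P}}$ and splitting the generalized tangent map as $f_{*}X_p=(\Phi_{\mathcal{S}})_{\sigma_f(p)*}X_p+\widetilde{\braket{X_p|\sigma_f^{*}\theta_{\mathrm{MC}}}}$, exactly as the paper does. This part is correct.

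In part (ii), however, your two key assertions contradict each other. You correctly record that $\braket{(\mathds{1}\otimes\partial_t)\alpha_{\mathcal{S}}^{-1}(\delta)|f^{*}\mathcal{A}}=\braket{(\mathds{1}\otimes\partial_t)\alpha_{\mathcal{S}}^{-1}(f\circ\delta)|\mathcal{A}}$, so that $\delta$ is $f^{*}\mathcal{A}$-horizontal iff $f\circ\delta$ is $\mathcal{A}$-horizontal. But you then apply $f$ to the $\mathcal{A}$-horizontal lift $\gamma^{hor}_{\phi}$ and declare the image to be $f^{*}\mathcal{A}$-horizontal; by your own criterion that would require $f\circ f\circ\gamma^{hor}_{\phi}$ to be $\mathcal{A}$-horizontal, which is not known. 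What the criterion actually gives is that $f^{-1}\circ\gamma^{hor}_{\phi}$ is $f^{*}\mathcal{A}$-horizontal, so the correct bookkeeping is the paper's: take the $\mathcal{A}$-horizontal lift through $\alpha_{\mathcal{S}}(f\circ\alpha_{\mathcal{S}}^{-1}(\phi))$ and pull it back with $f^{-1}$ to obtain the $f^{*}\mathcal{A}$-horizontal lift through $\phi$. This yields $\mathscr{P}^{f^{*}\mathcal{A}}_{\mathcal{S},\gamma}=F^{-1}\circ\mathscr{P}^{\mathcal{A}}_{\mathcal{S},\gamma}\circ F$ with $F=\alpha_{\mathcal{S}}\circ f\circ\alpha_{\mathcal{S}}^{-1}$, i.e.\ the diagram with $F^{-1}$ as the vertical arrows. (You may have been led astray by the printed diagram, whose vertical arrows carry $f$ rather than $f^{-1}$; as stated it is consistent with the pullback convention only after that replacement.) The idea and the appeal to uniqueness of horizontal lifts are right, but as written the identification step does not go through: you must transport the horizontal lift by $f^{-1}$, not by $f$.
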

We want to give an explicit local expression of the parallel transport map making it more accessible for applications in physics, in particular, in section \ref{OSp}. Therefore, we assume that $\mathcal{G}$ is a super matrix Lie group, i.e., an embedded super Lie subgroup of the general linear group $\mathrm{GL}(\mathcal{V})$ of a super $\Lambda$-vector space $\mathcal{V}=V\otimes\Lambda$. Let $\gamma:\,\mathcal{S}\times\mathcal{I}\rightarrow\mathcal{M}$ be a smooth path which is contained within a local trivialization neighborhood of $\mathcal{P}_{/\mathcal{S}}$ and $\tilde{s}:\,U_{/\mathcal{S}}:=\mathcal{M}_{/\mathcal{S}}|_{\mathcal{S}\times U}\rightarrow\mathcal{P}_{/\mathcal{S}}$ the corresponding smooth section. Then, equation (\ref{eq:9.9}) in the proof of prop. \ref{prop:9.7} reads
\begin{equation}
(\mathds{1}\otimes\partial_t)g(s,t)=-\mathcal{A}^{\gamma}(s,t)\cdot g(s,t)
\label{eq:9.38}
\end{equation}
with $\mathcal{A}^{\gamma}(s,t):=\braket{(\mathds{1}\otimes\partial_t)\alpha_{\mathcal{S}}^{-1}(\gamma)(s,t)|\tilde{s}^*\mathcal{A}}$. Furthermore, suppose that $U$ defines a local coordinate neighborhood of $\mathcal{M}$. The 1-form $\tilde{s}^*\mathcal{A}$ on $\mathcal{S}\times U$ can then be expanded as
\begin{equation}
\tilde{s}^*\mathcal{A}=\mathrm{d}x^{\mu}\mathcal{A}^{(\tilde{s})}_{\mu}+\mathrm{d}\theta^{\alpha}\mathcal{A}^{(\tilde{s})}_{\alpha}
\label{eq:9.39}
\end{equation} 
with smooth even and odd functions $\mathcal{A}^{(\tilde{s})}_{\mu}$ and $\mathcal{A}^{(\tilde{s})}_{\alpha}$ on $\mathcal{S}\times U$, respectively. This yields 
\begin{equation}
\mathcal{A}^{\gamma}(s,t)=:\dot{x}^{\mu}\mathcal{A}^{(\tilde{s})}_{\mu}(s,t)+\dot{\theta}^{\alpha}\mathcal{A}^{(\tilde{s})}_{\alpha}(s,t)
\label{eq:9.40}
\end{equation}
Hence, the solution of equation (\ref{eq:9.38}) with the initial condition $g(\,\cdot\,,0)=\mathds{1}$ takes the form
\begin{equation}
g(s,t)=\mathcal{P}\exp\left(-\int_{0}^t{\mathrm{d}t'\,\dot{x}^{\mu}\mathcal{A}^{(\tilde{s})}_{\mu}(s,t')+\dot{\theta}^{\alpha}\mathcal{A}^{(\tilde{s})}_{\alpha}(s,t')}\right)
\label{eq:9.41}
\end{equation}
This is the most general local expression of the parallel transport map corresponding to a $\mathcal{S}$-relative super connection 1-form. This form is used for instance in \cite{Mason:2010yk} in the discussion about the relation between super twistor theory and $\mathcal{N}=4$ super Yang-Mills theory (see also \cite{Groeger:2013aja}). Note that in case $\mathcal{S}=\{*\}$ is a single point, the odd coefficients in (\ref{eq:9.41}) become zero so that this expression just reduces the parallel transport map of an ordinary connection 1-form on a principal bundle.\\
\\
Finally, let us restrict to a subclass of smooth paths on $\mathcal{M}_{/\mathcal{S}}$ obtained via the lift of smooth paths $\gamma:\,\mathcal{I}\rightarrow\mathcal{M}$ on the bosonic sub supermanifold\footnote{for a $H^{\infty}$ supermanifold $\mathcal{M}$, the bosonic sub supermanifold $\mathcal{M}_0$ is defined as the split supermanifold $\mathbf{S}(\mathbf{B}(\mathcal{M}))$ (in \cite{Tuynman:2004,Tuynman:2018} this is also called the $\mathbf{G}$-extension of $\mathbf{B}(\mathcal{M})$ as this can be viewed as a generalization of the ordinary $\mathbf{G}$-extension of functions (cf. (\ref{eq:2.2.23.1})))} $\mathcal{M}_0$ of $\mathcal{M}$. A $\mathcal{S}$-relative connection 1-form $\mathcal{A}\in\Omega^1(\mathcal{P}_{/\mathcal{S}},\mathfrak{g})$ induces via pullback along the inclusion $\iota:\,\mathcal{S}\times\mathcal{M}_0\hookrightarrow\mathcal{S}\times\mathcal{M}$ a $\mathcal{S}$-relative super connection 1-form $\iota^*\mathcal{A}$ on the pullback bundle $\mathcal{G}\rightarrow\iota^*\mathcal{P}_{/\mathcal{S}}\rightarrow(\mathcal{M}_0)_{/\mathcal{S}}$. Let 
\begin{equation}
\iota^*\mathcal{A}=\mathrm{pr}_{\mathfrak{g}_0}\circ\iota^*\mathcal{A}+\mathrm{pr}_{\mathfrak{g}_1}\circ\iota^*\mathcal{A}=:\omega+\psi
\label{eq:9.43}
\end{equation}
be the decomposition of $\iota^*\mathcal{A}$ according to the even and odd part of the super Lie algebra $\mathfrak{g}=\mathfrak{g}_0\oplus\mathfrak{g}_1$. Since $\omega\in\Omega^1(\iota^*\mathcal{P}_{/\mathcal{S}},\mathfrak{g}_0)_0\cong\Omega^1(\mathcal{P}|_{\mathcal{M}_0},\mathfrak{g}_0)_0\otimes H^{\infty}(\mathcal{S})_0$, it follows that $\omega$ can be reduced to a $\mathcal{S}$-relative super connection 1-form on the $\mathcal{S}$-relative principal super fiber bundle $\mathcal{G}_0\rightarrow(\mathcal{P}_0)_{/\mathcal{S}}\rightarrow(\mathcal{M}_0)_{/\mathcal{S}}$ which will be denoted by the same symbol. Hence, $\omega$ gives rise to a parallel transport map $\mathscr{P}^{\omega}_{\mathcal{S},\gamma}$ along $\alpha_{\mathcal{S}}(\mathrm{id}\times\gamma):\,\mathcal{S}\times\mathcal{I}\rightarrow\mathcal{M}_{0}$.\\
Suppose that $\gamma$ is contained within a local tirvialization neighborhood of $\mathcal{P}_0$ and let  $\tilde{s}:\,U_{/\mathcal{S}}\rightarrow(\mathcal{P}_0)_{/\mathcal{S}}$ be the corresponding local section with $U\subset\mathcal{M}_0$ open. Let $g[\mathcal{A}]:\,\mathcal{S}\times\mathcal{I}\rightarrow\mathcal{G}$ be the solution of the parallel transport equation (\ref{eq:9.9}) of $\mathcal{A}$
\begin{equation}
\partial_tg[\mathcal{A}](s,t)=-R_{g[\mathcal{A}](s,t)*}\mathcal{A}^{\gamma}(s,t)
\label{eq:9.44}
\end{equation}
with the initial condition $g(\,\cdot\,,0)=e$, where 
\begin{equation}
\mathcal{A}^{\gamma}:=\braket{\mathds{1}\otimes\partial_t\gamma|\tilde{s}^*\mathcal{A}}=\braket{\mathds{1}\otimes\partial_t\gamma|\tilde{s}^*\omega}+\braket{\mathds{1}\otimes\partial_t\gamma|\tilde{s}^*\psi}=:\omega^{\gamma}+\psi^{\gamma}
\label{eq:9.45}
\end{equation}
Furthermore, let $g[\omega]:\,\mathcal{S}\times\mathcal{I}\rightarrow\mathcal{G}_0$ be the solution of the corresponding parallel transport equation of $\omega$. Set $g[\psi]:=g[\omega]^{-1}\cdot g[\mathcal{A}]:\,\mathcal{S}\times\mathcal{I}\rightarrow\mathcal{G}$. Using $\partial_t(g[\omega]^{-1})=-L_{g[\omega]^{-1*}}R_{g[\omega]^{-1*}}(\partial_tg[\omega])=L_{g[\omega]^{-1*}}\omega^{\gamma}$, it then follows
\begin{align}
\partial_t g[\psi]&=D\mu_{\mathcal{G}}(\partial_t(g[\omega]^{-1}),\partial_tg[\mathcal{A}])=R_{g[\mathcal{A}]*}L_{g[\omega]^{-1*}}\omega^{\gamma}-L_{g[\omega]^{-1*}}R_{g[\mathcal{A}]*}\mathcal{A}^{\gamma}\nonumber\\
&=-R_{g[\mathcal{A}]*}L_{g[\omega]^{-1*}}\psi^{\gamma}=-R_{g[\psi]*}R_{g[\omega]*}L_{g[\omega]^{-1*}}\psi^{\gamma}=-R_{g[\psi]*}\mathrm{Ad}_{g[\omega]^{-1}}(\psi^{\gamma})
\label{eq:9.46}
\end{align}
that is, $g[\psi]$ is the solution of the equation
\begin{equation}
\partial_t g[\psi]=-R_{g[\psi]*}\mathrm{Ad}_{g[\omega]^{-1}}(\psi^{\gamma})
\label{eq:9.47}
\end{equation} 
For a super matrix Lie group $\mathcal{G}$, the solution of (\ref{eq:9.47}) can be explicitly written as
\begin{equation}
g[\psi](s,t)=\mathcal{P}\mathrm{exp}\left(-\int_0^t{\mathrm{d}\tau\,(\mathrm{Ad}_{g[\omega]^{-1}}\psi^{\gamma})(s,\tau)}\right)
\label{eq:9.48}
\end{equation}
Hence, for instance, if $\gamma$ is closed loop on $\mathcal{M}_0$, in this gauge the Wilson loop takes the form
\begin{equation}
W_{\gamma}[\mathcal{A}]=\mathrm{str}\left(g_{\gamma}[\omega]\cdot\mathcal{P}\mathrm{exp}\left(-\oint_{\gamma}{\mathrm{Ad}_{g_{\gamma}[\omega]^{-1}}\psi^{(\tilde{s})}}\right)\right):\,\mathcal{S}\rightarrow\mathcal{G}
\label{eq:9.49}
\end{equation}
where $\psi^{(\tilde{s})}:=\tilde{s}^*\psi$. It follows from \ref{Prop:2.26} that $W_{\gamma}[\mathcal{A}]$ is invariant under local gauge transformations. In fact, $g_{\gamma}[\mathcal{A}]$ transforms as
\begin{equation}
g_{\gamma}[\mathcal{A}](s)\rightarrow\phi(s)\cdot g_{\gamma}[\mathcal{A}](s)\cdot\phi(s)^{-1},\quad\forall s\in\mathcal{S}
\label{eq:}
\end{equation}
for some smooth function $\phi:\,\mathcal{S}\rightarrow\mathcal{G}$. Hence, due to cyclicity of the supertrace, (\ref{eq:9.49}) is indeed invariant. Finally, by prop. \ref{prop:9.10} (ii) the Wilson loop is also invariant under change of parametrisations. That is, if $\lambda:\,\mathcal{S}'\rightarrow\mathcal{S}$ is a supermanifold morphism, then
\begin{equation}
\lambda^*W_{\gamma}[\mathcal{A}]=W_{\gamma}[\lambda^*\mathcal{A}]:\,\mathcal{S}'\rightarrow\mathcal{G}
\label{eq:9.50}
\end{equation}
Thus, due these properties, $W_{\gamma}[\mathcal{A}]$ can be regarded as a fundamental physical quantity according to \cite{Schmitt:1996hp}.

\begin{remark}\label{remark:2.39}
Suppose $\mathcal{S}$ is a superpoint. Choosing a basis $Q_{\alpha}\in\mathfrak{g}_1$ of the odd part of the super Lie algebra $\mathfrak{g}$, it follows that the pullback of the fermionic components $\psi^{\alpha}$ of the $\mathcal{S}$-relative super connection form to the bosonic sub supermanifold $\mathcal{M}_0$ as defined via (\ref{eq:9.43}) are elements of the set
\begin{equation}
\Omega^1(\mathcal{M}_0)\otimes H^{\infty}(\mathcal{S})_1\cong\Omega^1(\mathbf{B}(\mathcal{M}))\otimes\left(\Exterior\,\mathbb{R}^{N}\right)_1^*
\label{eq:}
\end{equation}
with $N$ equal to the odd dimension of $\mathcal{S}$. Hence, for any smooth vector field $X$ on $\mathbf{B}(\mathcal{M})$, it follows that $\braket{X|\psi^{\alpha}}$ can be identified with an odd functional on $\Exterior\mathbb{R}^N$. This is similar to the approach in pAQFT, where fermionic fields are described in terms of odd functionals on configuration space \cite{Rejzner:2011au,Rejzner:2016hdj}. This suggests to set $\mathcal{S}:=\Omega^1_{hor}(\mathbf{B}(\mathcal{P}),\mathfrak{g}_1)\otimes\Lambda$ regarded as a purely odd infinite dimensional super vector space. This, however, would require a generalization of this present formalism to infinite dimensional (locally convex) supermanifolds (see \cite{Schuett:2018} for recent advances in this direction).
\end{remark} 
By definition, $g[\mathcal{A}]$ defines a smooth map $g[\mathcal{A}]:\,\mathcal{S}\rightarrow\mathcal{G}$ from the parametrizing supermanifold $\mathcal{S}$ to the gauge group $\mathcal{G}$. Since, one has an equivalence of categories between algebro-geometric and $H^{\infty}$ supermanifolds, it follows that
\begin{equation}
H^{\infty}(\mathcal{S},\mathcal{G})\cong\mathrm{Hom}_{\mathbf{SMan}_{\mathrm{Alg}}}(\mathbf{A}(\mathcal{S}),\mathbf{A}(\mathcal{G}))
\label{eq:}
\end{equation}
Thus, it follows that $g[\mathcal{A}]$ can be identified with a $\mathbf{A}(\mathcal{S})$-point of $\mathbf{A}(\mathcal{G})$ according to definition \ref{def:2.3}. This coincides with the results of \cite{Florin:2008} and \cite{Groeger:2013aja} where the parallel transport of super connections on super vector bundles in the pure algebraic setting has been considered. It was found that the parallel transport map has the interpretation in terms of $\mathcal{T}$-points of a general linear group.\\     
\\   
Let us finally comment on the choice of the parametrizing supermanifold $\mathcal{S}$. Working in the algebraic category, a typical choice for $\mathcal{S}$ is a superpoint $\mathcal{S}=(\{*\},\Lambda_N)$ with $\Lambda_N$ a Grassmann algebra with $N$ fermionic generators. In this case, it follows $g[\mathcal{A}]\in\mathcal{G}(\Lambda_N)$, that is, $g[\mathcal{A}]$ is a $\Lambda_N$-point of $\mathcal{G}$. But this means, if $N$ is large enough (i.e., larger than the odd dimension of $\mathcal{G}$), we again end up with a Rogers-DeWitt supermanifold $\mathcal{G}(\Lambda_N)$ and $g[\mathcal{A}]$ can be identfied as elements of the group $\mathcal{G}(\Lambda_N)$. This, once more, reflects the strong link between these two approaches to supermanifold theory.

\section{Gravity as Cartan geometry}\label{Cartan geometry}
In this section, we want to review the interpretation of gravity as Cartan geometry as this will serve a starting point for a very elegant approach to supergravity as decribed in section \ref{superCartan geometry} and a derivation of a super analog of Asthekar's connection in section \ref{Ashtekar}. A detailed account on the relation between Cartan geometry and general relativity can be found for instance in \cite{Wise:2006sm}.\\
\\
In his famous Erlangen program, F. Klein studied the idea of classifying the geometry of space via the underlying group of symmetries. For instance, one can consider Minkowski spacetime $\mathbb{M}=(\mathbb{R}^{1,3},\eta)$ and study the corresponding Lie group $\mathrm{ISO}(\mathbb{R}^{1,3})$ of isometries which is isomorphic to the Poincaré group $\mathbb{R}^{1,3}\rtimes\mathrm{SO}^+(1,3)$. If one then chooses a specific event $p\in\mathbb{M}$, one can consider the corresponding stabilizer subgroup $\mathrm{SO}^+(1,3)$ which preserves that point. Since the isometry group acts transitively on $\mathbb{M}$, it follows that Minkowski spacetime can be described in terms of the coset space
\begin{equation}
\mathbb{M}\cong\mathrm{ISO}(\mathbb{R}^{1,3})/\mathrm{SO}^+(1,3)
\label{eq:3.1.1}
\end{equation}
Hence, the collection of spacetime events can equivalently be described in terms of the underling symmetry groups. A similar kind of reasoning applies in case of the other maximally symmetric homogeneous spacetimes such as de Sitter of anti-de Sitter spacetime playing a central role in general relativity and cosmology. Hence, one makes the following definition. 
\begin{definition}
A \emph{Klein geometry} is a pair $(G,H)$ consisting of a Lie Group $G$ and an embedded Lie subgroup $H\hookrightarrow G$ such that $G/H$ is connected.
\end{definition}
Given a Klein geometry $(G,H)$, the coset space $G/H$ has the structure of principal $H$-bundle
\begin{displaymath}
	 \xymatrix{
         G\ar[d]_{\pi}  &    H \ar[l] \\
            G/H &     
     }
		\label{eq:3.1.2}
 \end{displaymath}
Moreover, on $G$ there exists a canonical $\mathfrak{g}$-valued 1-form $\theta_{\mathrm{MC}}\in\Omega^1(G,\mathfrak{g})$, called \emph{Maurer-Cartan form} which, choosing a basis of left-invariant vector fields $X_i\in\mathfrak{g}$, $i=1,\ldots,\mathrm{dim}\,\mathfrak{g}$, is defined as
\begin{equation}
\theta_{\mathrm{MC}}=X_i\otimes\omega^i
\label{eq:3.1.3}
\end{equation} 
where $\omega^i\in\Omega^1(G)$ is the corresponding dual basis of left-invariant one-forms on $G$ satisfying $\omega^i(X_j)=\delta^i_j$. It follows by definition that the Maurer-Cartan form is $G$-equivariant, i.e.\footnote{this can be seen directly using the equivalent definition in terms of the left-translation $\theta_{\mathrm{MC}}(X_g)=L_{g^{-1}*}X_g$.} 
\begin{equation}
R^*_g\theta_{\mathrm{MC}}=\mathrm{Ad}_{g^{-1}}\circ\theta_{\mathrm{MC}}
\label{eq:3.1.4}
\end{equation}
$\forall g\in G$ with $R_g:\,G\rightarrow G$ denoting the right translation on $G$. By definition, $\theta_{\mathrm{MC}}$ maps left-invariant vector fields to themselves, i.e., $\theta_{\mathrm{MC}}(X)=X_e$ $\forall X\in\mathfrak{g}$ and, as a consequence, yields and isomorphism $\theta_{\mathrm{MC}}:\,T_gG\rightarrow\mathfrak{g}$ of vector spaces at any $g\in G$. Moreover, it satisfies the \emph{Maurer-Cartan equation}
\begin{equation}
\mathrm{d}\theta_{\mathrm{MC}}+[\theta_{\mathrm{MC}}\wedge\theta_{\mathrm{MC}}]=0
\label{eq:3.1.5}
\end{equation}
As seen above, standard examples of Klein geometries $(G,H)$ arising in physics are given by the Minkowski spacetime $(\mathrm{ISO}(\mathbb{R}^{1,3}),\mathrm{SO}^+(1,3))$, de Sitter $(\mathrm{SO}(1,4),\mathrm{SO}^+(1,3))$ and anti-de Sitter spacetime $(\mathrm{SO}(2,3),\mathrm{SO}^+(1,3))$. These have in common that the Lie algebra $\mathfrak{g}$ of $G$ can be split into $\mathrm{Ad}(H)$-invariant subspaces $\mathfrak{g}=\mathfrak{h}\oplus\mathfrak{g}/\mathfrak{h}$ with $\mathfrak{h}$ the Lie algebra of $H$. Moreover, on the moduli space $\mathfrak{g}/\mathfrak{h}$ there exists a canonical $\mathrm{Ad}(H)$-invariant bilinear form. In this case, the Klein geometry is called \emph{metric} and \emph{reductive}. \\
Hence, we see that flat spacetime can be equivalently be described in terms of Klein geometry. Based on this observation, Cartan formulated a theory now known as Cartan geometry which can be interpreted as a deformed Klein geometry such as gravity is a deformed version of flat Minkowski spacetime.   
\begin{definition}
A \emph{metric reductive Cartan geometry} $(\pi:\,P\rightarrow M,A)$ modeled on a metric reductive Klein geometry $(H,G;\eta)$ is a principal fiber bundle $H\rightarrow P\rightarrow M$ with structure group $H$ together with a $\mathfrak{g}$-valued 1-form $A\in\Omega^1(P,\mathfrak{g})$ on $P$ called \emph{Cartan connection} such that
\begin{enumerate}[label=(\roman*)]
	\item $A_p(X_p)=X$ $\forall X\in\mathfrak{h}=T_e\mathcal{H}$, $p\in\mathcal{P}$
	\item $\Phi_h^*A=\mathrm{Ad}_{h^{-1}}\circ A$ $\forall h\in\mathcal{H}$
	\item the map $A_p:\,T_pP\rightarrow\mathfrak{g}$ defines an isomorphism of vector spaces for any $p\in P$.
\end{enumerate}
where the last condition is also called the \emph{Cartan condition}.
\end{definition}
Given a metric reductive Cartan geometry $(\pi:\,P\rightarrow M,A)$, one can split the Cartan connection $A$ by projecting it according to the decomposition $\mathfrak{g}=\mathfrak{h}\oplus\mathfrak{g}/\mathfrak{h}$ of the Lie algebra of $G$ yielding
\begin{equation}
A=\mathrm{pr}_{\mathfrak{h}}\circ A+\mathrm{pr}_{\mathfrak{g}/\mathfrak{h}}\circ A=:\omega+\theta
\label{eq:3.1.6}
\end{equation}
with 1-forms $\omega\in\Omega^1(P,\mathfrak{h})$ and $\theta\in\Omega^1(P,\mathfrak{g/\mathfrak{h}})$.  Due to the conditions (i) and (ii) of the Cartan connections, it follows immediately that $\omega$ defines an ordinary principal connection 1-form in the sense of Ehresmann.\\
Let $\mathscr{H}:=\mathrm{ker}(\omega)$ be the induced horizontal distribution of the tangent bundle $TP$. If $\tilde{X}\in\mathscr{V}$ is a vertical vector field with $X\in\mathfrak{h}$ one has $A(\tilde{X})=X=\omega(X)$ and thus $\theta(X)=0$. Hence, since $\mathfrak{g}/\mathfrak{h}$ defines a $H$-invariant subspace, together with property (ii), this immediately implies that the soldering form is horizontal of type $(H,\mathrm{Ad})$, i.e. $\theta\in\Omega^1_{hor}(P,\mathfrak{g}/\mathfrak{h})^{(H,\mathrm{Ad})}$. In fact, $\theta$ even provides 
an identification of the principal bundle $P$ as a $H$-reduction of the frame bundle $\mathscr{F}(M)$ explaining its name.\\
Therefore, note that $\theta_p^{-1}:=(\theta_p|_{\mathscr{H}_p})^{-1}$ for any $p\in P$ defines an isomorphism on $\mathfrak{g}/\mathfrak{h}$ and $\omega_p|_{\mathscr{H}_p}$ is an isomorphism onto $T_{\pi(p)}M$ so that $D_p\pi\circ\theta_p^{-1}:\,\mathfrak{g}/\mathfrak{h}\stackrel{\sim}{\rightarrow}T_{\pi(p)}M$ is a linear frame at $\pi(p)$. Hence, this yields a map
\begin{align}
\iota:\,P\rightarrow\mathscr{F}(M),\,p\mapsto D_p\pi\circ\theta_p^{-1}
\label{eq:3.1.6.1}
\end{align}
By property (ii), we have $\Phi_h^*\theta_p(Y_p)=\theta_{ph}(D_p\Phi_{h}(Y_p))=\mathrm{Ad}_{h^{-1}}(\theta_p(Y_p))$ $\forall Y_p\in T_pP$ and $h\in H$ and therefore
\begin{equation}
\theta^{-1}_{ph}=D_p\Phi_{h}\circ\theta_p^{-1}\circ\mathrm{Ad}_{g}
\label{eq:3.1.7}
\end{equation}
from which we obtain
\begin{equation}
\iota(p\cdot h)=D_{ph}\pi\circ\theta_{ph}^{-1}=D_{ph}\pi\circ D_p\Phi_{h}\circ\theta_p^{-1}\circ\mathrm{Ad}_h=\iota(p)\circ\mathrm{Ad}_h
\label{eq:3.1.8}
\end{equation}
$\forall p\in P,\,h\in H$. That is, $\iota:\,P\rightarrow\mathscr{F}(M)$ is $H$-equivariant and fiber-preserving so that $P$ defines a $H$-reduction of the frame bundle w.r.t. the group morphism $\mathrm{Ad}:\,H\rightarrow\mathrm{GL}(\mathfrak{g}/\mathfrak{h})$.\\
Moreover, it follows that $\iota$ induces a an isomorphism (denoting by the same symbol)
\begin{align}
\iota:\,P\times_{\mathrm{Ad}}\mathfrak{g}/\mathfrak{h}&\stackrel{\sim}{\longrightarrow}TM\label{eq:3.1.9}\\
[(p,X)]&\longmapsto D_p\pi(\theta_p^{-1}(X))
\end{align}
between the associated vector bundle $P\times_{\mathrm{Ad}}\mathfrak{g}/\mathfrak{h}$ and the tangent bundle of $M$.\\
\\
To a Cartan connection $A$ one associates the \emph{Cartan curvature} $F(A)\in\Omega^2(P,\mathfrak{g})$ according to
\begin{equation}
F(A):=\mathrm{d}A+\frac{1}{2}[A\wedge A]
\label{eq:3.1.10}
\end{equation}
In case of a 'flat' Klein geometry, the Cartan connection is given by the Maurer-Cartan form (\ref{eq:3.1.3}) which satisfies the structure equation (\ref{eq:3.1.5}), i.e., the Cartan curvature is identically zero. Thus $F(A)$ indicates the deviation of a Cartan geometry from a flat Klein geometry. In fact, one can prove that a Cartan geometry is isomorphic to the homogeneous model $(G\rightarrow G/H,\theta_{\mathrm{MC}})$ if and only if the corresponding Cartan curvature vanishes.\\
Decomposing $F(A)$ according to the decomposition $\mathfrak{g}=\mathfrak{h}\oplus\mathfrak{g}/\mathfrak{h}$ of the Lie algebra, one obtains
\begin{align}
F(A)=\mathrm{pr}_{\mathfrak{h}}\circ F(A)+\mathrm{pr}_{\mathfrak{g}/\mathfrak{h}}\circ F(A)=F(\omega)+\Theta^{(\omega)}+\frac{1}{2}[\theta\wedge\theta]
\label{eq:3.1.11}
\end{align}
where
\begin{equation}
F(\omega)=D^{(\omega)}\omega=\mathrm{d}\omega+\frac{1}{2}[\omega\wedge\omega]
\label{eq:3.1.12}
\end{equation}
is the curvature of the connection 1-form $\omega$ and
\begin{equation}
\Theta^{(\omega)}:=D^{(\omega)}\theta=\mathrm{d}\theta+[\omega\wedge\theta]
\label{eq:3.1.13}
\end{equation}
is the corresponding \emph{torsion 2-form}. To see that in fact encodes the torsion of the connection, note that $\omega$ indues a connection on the associated vector bundle $P\times_{\mathrm{Ad}}\mathfrak{g}/\mathfrak{h}$ and thus, via (\ref{eq:3.1.9}), a connection $\nabla:\,\Gamma(TM)\rightarrow\Gamma(TM)$ on the tangent bundle. For vector fields $X,Y\in\Gamma(TM)$, it is given by 
\begin{equation}
(\nabla_XY)_x=\iota([p,X^{hor}\theta(Y^{hor})])
\label{eq:3.1.14}
\end{equation}
for any $x\in M$ and $p\in P$ with $\pi(p)=x$ where $X^{hor},Y^{hor}$ denote the horizontal lifts of $X$ and $Y$, respectively.\footnote{for a vector field $X\in\Gamma(TM)$, the corresponding horizontal lift $X^{hor}\in\Gamma(TP)$ is a vector field on $P$ which is horizontal, i.e., $X^{hor}_p\in\mathscr{H}_p$ $\forall p\in P$, and satisfyies $D_p\pi(X^{hor})=X_{\pi(p)}$}. Moreover, in general, given a representation $\rho:\,H\rightarrow\mathrm{GL}(V)$ of $H$ on a vector space $V$, there exists an isomorphism 
\begin{equation}
\Omega^{k}_{hor}(P,V)^{(H,\rho)}\stackrel{\sim}{\rightarrow}\Omega^k(M,P\times_{\rho}V),\,\omega\mapsto\overline{\omega}
\label{eq:3.1.15}
\end{equation}
between $V$-valued $k$-forms of type $(H,\rho)$ and $k$-forms with values in the associated bundle $P\times_{\rho}V$. Hence, we can associate to $\Theta^{(\omega)}$ a 2-form $\overline{\Theta^{(\omega)}}\in\Omega^2(M,P\times_{\mathrm{Ad}}\mathfrak{g}/\mathfrak{h})$, which, applying (\ref{eq:3.1.9}), yields another form $\iota\circ\overline{\Theta^{(\omega)}}\in\Omega^2(M)\otimes\Gamma(TM)$. For vector fields $X,Y\in\Gamma(TM)$, we then compute
\begin{align}
\iota\circ\overline{\Theta^{(\omega)}}(X_x,Y_x)&=\iota\circ[p,\mathrm{d}\theta(X^{hor},Y^{hor})]\nonumber\\
&=\iota\circ[p,X^{hor}\theta(Y^{hor})-Y^{hor}\theta(X^{hor})-\theta([X^{hor},Y^{hor}])]\nonumber\\
&=\nabla_XY-\nabla_YX-D_p\pi([X,Y]^{hor})=T^{\nabla}(X_x,Y_x)
\label{eq:3.1.16}
\end{align}
for any $x\in M$ and $p\in P$ such that $\pi(p)=x$. Hence, $\Theta^{(\omega)}$ indeed encodes the torsion of the associated affine connection $\nabla$ on the tangent bundle of $M$.\\
\\
With all these observations, let us now make contact to general relativity. As seen already at the beginning, flat Minkowski spacetime can be described in terms of the metric Klein geometry $(\mathrm{ISO}(\mathbb{R}^{1,3}),\mathrm{SO}^+(1,3);\eta)$. Hence, we consider gravity as a metric reductive Cartan geometry $(P\rightarrow M,A;\eta)$ modeled on the metric reductive Klein geometry $(\mathrm{ISO}(\mathbb{R}^{1,3}),\mathrm{SO}^+(1,3);\eta)$ where $\mathrm{SO}^+(1,3)\rightarrow P\stackrel{\pi}{\rightarrow} M$ is a principal bundle with structure group $\mathrm{SO}^+(1,3)$ and $A\in\Omega^1(P,\mathfrak{iso}(\mathbb{R}^{1,3}))$ is a Cartan connection.\\
By \ref{eq:3.1.6.1}, we know that $P$ defines a $\mathrm{SO}^+(1,3)$-reduction of the frame bundle $\mathscr{F}(M)$ of $M$. As such, it induces a Lorentzian metric $g\in\Gamma(T^*M\otimes T^*M)$ on $M$ which, for vector fields $X,Y\in\Gamma(TM)$, is defined as
\begin{equation}
g(X_x,Y_x):=\eta(\iota^{-1}_p(X_x),\iota_p^{-1}(Y_x))
\label{eq:}
\end{equation}
$\forall p\in P_x,\,x\in M$. Note that $g$ is in fact well-defined, i.e., independent on the choice of $p\in P_x$, since $\iota$ is equivariant and $\eta$, by definition, is a bilinear form invariant under the Adjoint representation of $\mathrm{SO}^+(1,3)$ on $\mathbb{R}^{1,3}$. Hence, $M$ is in fact a Lorentzian manifold and $P$ can be identified with the bundle $\mathscr{F}_{\mathrm{SO}}(M)$ of Lorentz frames on $M$.\\
Let $s:\,M\supset U\rightarrow P$ be a local section of $P$ and $e^I\in\Omega^1(U)$ be defined via $s^*\theta=:e^IP_I$. Then
\begin{align}
g_{\mu\nu}&=\eta(s^*\theta_{\mu},s^*\theta_{\nu})=e^I_{\mu}e^J_{\nu}\eta_{IJ}
\label{eq:}
\end{align}
i.e. $(e^I)$ defines a local coframe with the corresponding frame fields being given by $e_I:=s^*\iota(P_I)$. With these ingredients, we can define an action on $M$ via 
\begin{equation}
S[A]=\frac{1}{4\kappa}\int_{M}{s^*(F(\omega)^{IJ}\wedge\theta^K\wedge\theta^L)\epsilon_{IJKL}}
\label{eq:3.1.20}
\end{equation}
where $\kappa=8\pi G$. This action turns out to coincide with the first order Palatini action of pure gravity. As we see, the whole theory including the underlying structure of the spacetime is completely encoded in the Cartan connection. \\
There also exists another version of action (\ref{eq:3.1.20}) which depends on the Cartan connection in a more explicit way. This requires a non-vanishing cosmological constant which we take as negative for convenience (for a positive cosmological constant, this in fact completely analogous). Therefore, let us consider Cartan geometry on anti-de Sitter $(\mathrm{SO}(2,3),\mathrm{SO}^+(1,3))$. Since then $[P_I,P_J]=\frac{1}{L^2}M_{IJ}$, it follows that the Lorentzian part of the Cartan curvature acquires an additional contribution yielding
\begin{equation}
F(A)^{IJ}=F(\omega)^{IJ}+\frac{1}{L^2}\theta^I\wedge \theta^J
\label{eq:}
\end{equation}
The \emph{MacDowell-Mansouri action} is then defined as follows \cite{MacDowell:1977jt,Wise:2006sm}
\begin{equation}
S[A]=\frac{L^2}{8\kappa}\int_{M}{s^*(F(A)^{IJ}\wedge F(A)^{KL})\epsilon_{IJKL}}
\label{eq:3.1.21}
\end{equation} 
which has the form of an ordinary $\mathrm{SO}^+(1,3)$ Yang-Mills action. Using that the $F(\omega)\wedge F(\omega)$-term is a Gauss-Bonnet term an thus purely topological, this indeed yields first order Einstein gravity with a nontrivial cosmological constant.

\section{Super Cartan geometry and Supergravity}\label{superCartan geometry}
As explained in the previous section, gravity has a very elegant geometric interpretation in terms of a Cartan geometry modeled on Klein geometry of flat Minkowski, de Sitter or anti-de Sitter spacetime. As it turns out, this description also carries over to the super category providing a geometrical foundation of supergravity. This is in fact the starting point of the D'Auria-Fré approach to supergravity \cite{DAuria:1982uck,Castellani:1991et}. However, as already discussed in section \ref{Holonomies}, in order to obtain nontrivial fermionic degrees of freedom as well as supersymmetry transformations on the body of a supermanifold, we will define the notion of super Cartan geometry using the concept of enriched categories. In \cite{Hack:2015vna}, \emph{super Cartan structures} on supermanifolds were introduced and also lifted trivially to Cartan structures in the relative category. However, to the best of the author's knowledge, a precise definition of super Cartan geometry on (nontrivial) principal super fiber bundles in the framework of enriched categories seems not to exist so far in the mathematical literature (for a different approach towards a mathematical rigorous formulation of geometric supergravity using the notion of \emph{chiral triples} see \cite{Cortes:2018lan}).\\
For a motivation of super Cartan geometry, let us consider first the 'flat' case given by a super Klein geometry\footnote{All the following definitions will be formulated in the $H^{\infty}$ category. However, they can also be extended to the algebraic category without major changes (see also \cite{Konsti-FB:2020}).}.      
\begin{definition}
A \emph{super Klein geometry} is a pair $(\mathcal{G},\mathcal{H})$ consisting of a super Lie group $\mathcal{G}$ as well as an embedded super Lie subgroup $\mathcal{H}\hookrightarrow\mathcal{G}$ such that $\mathcal{G}/\mathcal{H}$ is connected.
\end{definition}
\begin{remark}
Suppose one has given a pair $(\mathcal{G},\mathcal{H})$ of super Lie groups with $\mathcal{H}\hookrightarrow\mathcal{G}$ an embedded super Lie subgroup. By definition of the DeWitt topology, $\mathcal{G}/\mathcal{H}$ is connected iff $\mathbf{B}(\mathcal{G}/\mathcal{H})\cong\mathbf{B}(\mathcal{G})/\mathbf{B}(\mathcal{H})$ is connected, that is, iff $(\mathbf{B}(\mathcal{G}),\mathbf{B}(\mathcal{H}))$ is a Klein geometry.
\end{remark}
As shown in \cite{Tuynman:2018}, as in the classical theory, a super Klein geometry $(\mathcal{G},\mathcal{H})$ canonically induces super fiber bundle with typical fiber $\mathcal{H}$ via
\begin{displaymath}
	 \xymatrix{
         \mathcal{G}\ar[d]_{\pi}  &    \mathcal{H}\ar[l] \\
            \mathcal{G}/\mathcal{H} &     
     }
		\label{eq:4.1}
 \end{displaymath}
together with the natural $\mathcal{H}$-right action $\Phi:\,\mathcal{G}\times\mathcal{H}\rightarrow\mathcal{G}$ on $\mathcal{G}$. Hence $\mathcal{H}\rightarrow\mathcal{G}\stackrel{\pi}{\rightarrow}\mathcal{G}/\mathcal{H}$ has the structure of a principal $\mathcal{H}$-bundle. Let $(X_i)_i$ be a homogeneous basis of $\mathrm{Lie}(\mathcal{G})\cong T_e\mathcal{G}=\mathfrak{g}\otimes\Lambda$ and $(\tensor[^i]{\omega}{})_i$ the associated left dual basis\footnote{given a super $\Lambda$-vector space $\mathcal{V}:=V\otimes\Lambda$ with homogeneous basis $(T_a)$, the corresponding left-dual basis $(\tensor[^b]{T}{})$ is a homogeneous basis of $\tensor[^*]{\mathcal{V}}{}$, i.e., left linear maps $\tensor[^a]{T}{}:\,\mathcal{V}\rightarrow\Lambda$ satisfying $\braket{T_{a}|\tensor[^b]{T}{}}=\delta^{b}_a$} of left-invariant 1-forms $\tensor[^i]{\omega}{}\in\Omega^1(\mathcal{G})$ on $\mathcal{G}$ satisfying $\braket{X_{i}|\tensor[^j]{\omega}{}}=\delta{_i^j}$, $\forall i,j=1,\ldots,n$. On $\mathcal{G}$, one can then define the \emph{(super) Maurer-Cartan form} $\theta_{\mathrm{MC}}\in\Omega^1(\mathcal{G},\mathfrak{g})$ via
\begin{equation}
\theta_{\mathrm{MC}}:=\tensor[^i]{\omega}{}\otimes X_i
\label{eq:4.2}
\end{equation}
yielding a smooth $\mathrm{Lie}(\mathcal{G})$-valued left-invariant 1-form on $\mathcal{G}$. By definition, the fundamental vector fields on $\mathcal{G}$ correspond to the subspace of left-invariant vector fields $X\in\mathrm{Lie}(\mathcal{H})$. Hence, it follows immediately from (\ref{eq:4.2}) that $\braket{X|\theta_{\mathrm{MC}}}=X_e$ $\forall X\in\mathrm{Lie}(\mathcal{H})$. Moreover, this also implies that the map $(\theta_{\mathrm{MC}})_g:\,T_g\mathcal{G}\rightarrow T_e\mathcal{G}$ is an isomorphism of super $\Lambda$-modules for any $g\in\mathcal{G}$ (and even an isomorphism of super $\Lambda$-vector spaces if $g\in\mathbf{B}(\mathcal{G})$). Finally, since the right action on $\mathcal{G}$ essentially coincides with the restriction of the group multiplication, it can be shown that \cite{Tuynman:2004}
\begin{equation}
R_{h}^{*}\theta_{\mathrm{MC}}=\mathrm{Ad}_{h^{-1}}\circ\theta_{\mathrm{MC}}
\label{eq:4.3}
\end{equation}
$\forall h\in\mathcal{H}$, where $R_{h}^{*}$ denotes the generalized pullback w.r.t. the right translation $R_h:=\Phi(\cdot,h)$, on $\mathcal{G}$ w.r.t. $h\in\mathcal{H}$ (see remark \ref{remark:2.19} and \cite{Tuynman:2004}). This motivates the following definition. 

\begin{definition}\label{Def:4.2}
A \emph{super Cartan geometry} $(\pi_{\mathcal{S}}:\,\mathcal{P}_{/\mathcal{S}}\rightarrow\mathcal{M}_{/\mathcal{S}},\mathcal{A})$ modeled on a super Klein geometry $(\mathcal{G},\mathcal{H})$ is a $\mathcal{S}$-relative principal super fiber bundle $\mathcal{H}\rightarrow\mathcal{P}_{/\mathcal{S}}\rightarrow\mathcal{M}_{/\mathcal{S}}$ with structure group $\mathcal{H}$ together with a smooth even $\mathrm{Lie}(\mathcal{G})$-valued $\mathcal{S}$-relative 1-form $\mathcal{A}\in\Omega^1(\mathcal{P}_{/\mathcal{S}},\mathfrak{g})_0$ on $\mathcal{P}_{/\mathcal{S}}$ called \emph{super Cartan connection} such that
\begin{enumerate}[label=(\roman*)]
	\item $\braket{\mathds{1}\otimes\widetilde{X}|\mathcal{A}}=X$,\quad $\forall X\in\mathfrak{h}$
	\item $(\Phi_{\mathcal{S}})^*_{h}\mathcal{A}=\mathrm{Ad}_{h^{-1}}\circ\mathcal{A}$,\quad $\forall h\in\mathcal{H}$
	\item choosing an embedding $\iota_{\mathcal{P}}:\,\mathcal{P}\hookrightarrow\mathcal{P}\times\mathcal{S}$, the pullback of $\mathcal{A}$ w.r.t. $\iota_{\mathcal{P}}$ yields an isomorphism $\iota_{\mathcal{P}}^*\mathcal{A}_p:\,T_p\mathcal{P}\rightarrow\mathrm{Lie}(\mathcal{G})$ of free super $\Lambda$-modules for any $p\in\mathcal{P}$
\end{enumerate}
where the last condition will be called the \emph{super Cartan condition}. If (iii) is not satisfied, we call $\mathcal{A}$ a \emph{generalized super Cartan connection}. 
\end{definition}
In the following, we will be interested on a particular subclass of super Cartan geometries. More precisely, we assume that the super Lie algebra $\mathfrak{g}$ of $\mathcal{G}$ admits a decomposition of the form $\mathfrak{g}=\mathfrak{g}/\mathfrak{h}\oplus\mathfrak{h}$ with $\mathfrak{h}$ the super Lie algebra of $\mathfrak{h}$ and $\mathfrak{g}/\mathfrak{h}$ a super vector space which is invariant w.r.t. the Adjoint action of $\mathcal{H}$. As in the classical theory, we will call such a super Cartan geometry \emph{reductive}.

Thus, let $\mathcal{A}\in\Omega^1(\mathcal{P}_{/\mathcal{S}},\mathfrak{g})_0$ be a super Cartan connection corresponding to a reductive super Cartan geometry. Let us split this connection according to the decomposition $\mathfrak{g}=\mathfrak{g}/\mathfrak{h}\oplus\mathfrak{h}$ of the super Lie algebra of $\mathcal{G}$ yielding
\begin{equation}
\mathcal{A}=\mathrm{pr}_{\mathfrak{g}/\mathfrak{h}}\circ\mathcal{A}+\mathrm{pr}_{\mathfrak{h}}\circ\mathcal{A}=:E+\omega
\label{eq:4.4}
\end{equation}
where $E$ will be called the \emph{supervielbein}. It then follows from condition (i) and (ii) above that $\omega$ defines $\mathcal{S}$-relative super connection 1-form according to \ref{Def:2.18}. Moreover, the supervielbein is even and, since $\mathfrak{g}/\mathfrak{h}$ defines a super $\mathcal{H}$-module, horizontal of type $(\mathcal{H},\mathrm{Ad})$, i.e., $E\in\Omega^1_{hor}(\mathcal{P}_{/\mathcal{S}},\mathfrak{g}/\mathfrak{h})_0$.

Let $s\in\mathbf{B}(\mathcal{S})$ and $\iota_{\mathcal{P}}:\,\mathcal{P}\rightarrow\mathcal{S}\times\mathcal{P}:\,p\mapsto(s,p)$ be a smooth embedding. This induces a smooth horizontal 1-form $\iota_{\mathcal{P}}^*E$ on $\mathcal{P}$ which, by condition (iii), is non-degenerate. Furthermore, it follows immediately that $\iota_{\mathcal{P}}^*\omega\in\Omega^1(\mathcal{P},\mathfrak{h})_0$ defines an ordinary super connection 1-form on $\mathcal{P}$. Hence, as in section \ref{Cartan geometry}, it follows that this induces a morphism between $\mathcal{P}$ and the frame bundle $\mathscr{F}(\mathcal{M})$ via  
\begin{align}
\mathcal{P}\rightarrow\mathscr{F}(\mathcal{M}),\,p\mapsto D_p\pi\circ E_p^{-1}
\label{eq:4.5}
\end{align}
where, for any $p\in\mathcal{P}$, $D_p\pi\circ E_p^{-1}:\,(\mathfrak{g}/\mathfrak{h})\otimes\Lambda\stackrel{\sim}{\rightarrow}T_{\pi(p)}M$ is an isomorphism of free super $\Lambda$-modules. Hence, $\mathcal{P}$ defines a $\mathcal{H}$-reduction of $\mathscr{F}(\mathcal{M})$. Moreover, it follows that $\mathcal{M}$ has the same dimension as the super vector space $\mathfrak{g}/\mathfrak{h}$.\\
\\ 
We want to apply the above definition to supergravity. Hence, we consider super Cartan geometry $(\pi_{\mathcal{S}}:\,\mathcal{P}_{/\mathcal{S}}\rightarrow\mathcal{M}_{/\mathcal{S}},\mathcal{A})$ modeled over the Klein geometry\footnote{for notational convenience, we will identify $\mathrm{Spin}^+(1,3)$ with the bosonic supermanifold $\mathbf{S}(\mathrm{Spin}^+(1,3))$ in what follows.} $(\mathrm{ISO}(\mathbb{R}^{1,3|4}),\mathrm{Spin}^+(1,3))$ corresponding to super Minkowski spacetime or $(\mathrm{OSp}(1|4),\mathrm{Spin}^+(1,3))$ for super AdS spacetime. In both cases the super Lie algebra $\mathfrak{g}$ can be decomposed as 
\begin{equation}
\mathfrak{g}=\mathfrak{g}_0\oplus\mathfrak{g}_1\cong\mathfrak{spin}^+(1,3)\oplus\mathbb{R}^{1,3}\oplus\Delta_{\mathbb{R}}=:\mathfrak{spin}^+(1,3)\oplus\mathfrak{t}
\label{eq:}
\end{equation}
with $\Delta_{\mathbb{R}}$ the vector space of the four-dimensional real Majorana representation $\kappa_{\mathbb{R}}$ and the super vector space $\mathfrak{t}\equiv\mathfrak{t}^{1,3|4}$, which in case of vanishing cosmological constant, can be identfied with the super Lie algebra of the super translation group $\mathcal{T}^{1,3|4}$ (see remark \ref{remark:2.13}). Hence, the supervielbein decomposes as 
\begin{equation}
E=\mathrm{pr}_{\mathfrak{t}}\circ\mathcal{A}=:\theta+\psi=:\theta^IP_I+\psi^{\alpha}Q_{\alpha}
\label{eq:}
\end{equation}
with $\psi\in\Omega^1_{hor}(\mathcal{P}_{/\mathcal{S}},\Delta_{\mathbb{R}})_0$ and $\theta\in\Omega^1_{hor}(\mathcal{P}_{/\mathcal{S}},\mathbb{R}^{1,3})_0$ defining $\mathcal{S}$-relative horizontal 1-forms of type $(\mathrm{Spin}^+(1,3),\mathrm{Ad})$ called the \emph{Rarita-Schwiger field} and co-frame, respectively. Hence, the super Cartan connection takes the form
\begin{equation}
\mathcal{A}=\theta^IP_I+\frac{1}{2}\omega^{IJ}M_{IJ}+\psi^{\alpha}Q_{\alpha}
\label{eq:}
\end{equation}
Due to (\ref{eq:4.5}), the supervielbein induces a $\mathrm{Spin}^+(1,3)$-reduction $\mathcal{P}\rightarrow\mathscr{F}(\mathcal{M})$ of the frame bundle. Applying the body functor, this in turn induces a $\mathrm{Spin}^+(1,3)$-reduction $P:=\mathbf{B}(\mathcal{P})\rightarrow\mathscr{F}(M)$ of the frame bundle of the body $M:=\mathbf{B}(\mathcal{M})$. That is, the body carries a spin structure.
\begin{remark}
Let $\iota:\,\mathcal{P}\rightarrow\mathcal{S}\times\mathcal{P}$ be an embedding and suppose $\mathcal{P}$ is trivial, i.e., $\mathcal{P}\cong\mathcal{M}\times\mathcal{G}$ with respect to a global trivialization $s:\,\mathcal{M}\rightarrow\mathcal{P}$. This in turn induces a (homogeneous) basis $(s_I,s_{\alpha})$ of global sections $s_I:=[s,P_I]$, $s_{\alpha}:=[s,Q_{\alpha}]$ of the associated super vector bundle $\mathcal{E}:=\mathcal{P}\times_{\mathrm{Spin}^+(1,3)}(\mathfrak{t}\otimes\Lambda)$. This yields an isomorphism
\begin{align}
\Omega^1(\mathcal{M},\mathfrak{t}^*)&\rightarrow\Omega^1(\mathcal{M},\mathcal{E})\cong\Omega^1_{hor}(\mathcal{P},\mathfrak{t})^{(\mathrm{Spin}^+(1,3),\mathrm{Ad})}\nonumber\\
\omega&\mapsto \omega^{I}s_{I}+\omega^{\alpha}s_{\alpha}
\label{eq:4.0.0}
\end{align}
It thus follows from condition (iii) of a super Cartan connection that, via (\ref{eq:4.0.0}), the pullback $\iota^*E\in\Omega^1_{hor}(\mathcal{P},\mathfrak{t})$ induces a non-degenerate 1-form $\tilde{E}\in\Omega^1(\mathcal{M},\mathfrak{t}^*)$. Consequently, the pair $(\mathcal{M},\tilde{E})$ defines super Cartan structure in the sense of \cite{Hack:2015vna}. Conversely, if $(\mathcal{M},\tilde{E})$ is a super Cartan structure with $\tilde{E}\in\Omega^1(\mathcal{M},\mathfrak{t})$ being non-degenerate, one can use (\ref{eq:4.0.0}) to get a non-degenerate 1-form $E\in\Omega^1_{hor}(\mathcal{P},\mathfrak{t})$ which can be lifted trivially to a $\mathcal{S}$-relative 1-form $\mathds{1}\otimes E\in\Omega^1_{hor}(\mathcal{P}_{/\mathcal{S}},\mathfrak{t})$ satisfying condition (iii) above. Hence, definition (\ref{Def:4.2}) provides a generalization of super Cartan structures in the sense of \cite{Hack:2015vna} to a generalized notion of super Cartan connections on nontrivial $\mathcal{S}$-relative principal super fiber bundles.    
\end{remark} 
The action of $\mathcal{N}=1$, $D=4$ supergravity can be obrtained as a supersymmetric generalization of the MacDowell-Mansouri action (\ref{eq:3.1.21}) on AdS spacetime \cite{MacDowell:1977jt}. We therefore consider super Cartan geometry modeled on the super Klein geometry $(\mathrm{OSp}(1|4),\mathrm{Spin}^+(1,3))$. The \emph{Cartan curvature} of $\mathcal{A}$ is given by 
\begin{equation}
F(\mathcal{A})=\mathrm{d}\mathcal{A}+\frac{1}{2}[\mathcal{A}\wedge\mathcal{A}]=\mathrm{d}\mathcal{A}+\frac{1}{2}(-1)^{|T_{\underline{A}}||T_{\underline{B}}|}\mathcal{A}^{\underline{A}}\wedge\mathcal{A}^{\underline{B}}\otimes[T_{\underline{A}},T_{\underline{B}}]
\label{eq:4.0.2}
\end{equation}
with respect to a homogeneous basis $(T_{\underline{A}})_{\underline{A}}$ of $\mathfrak{osp}(1|4)$, $\underline{A}\in(I,IJ,\alpha)$, where the minus sign in (\ref{eq:4.0.2}) appears due to the (anti)commutation of $T_{\underline{A}}$ and $\mathcal{A}^{\underline{B}}$. It then follows from $[M_{IJ},P_K]=\eta_{IK}P_J-\eta_{JK}P_I$ as well as (\ref{eq:2.3.19})-(\ref{eq:2.3.22}) that the components of $F(\mathcal{A})$ in the translational part of the super Lie algebra, also called the \emph{supertorsion}, take the form 
\begin{align}
F(\mathcal{A})^I&=\mathrm{d}\theta^I+\tensor{\omega}{^I_J}\wedge \theta^J+\frac{1}{4}((-1)^{|Q_{\alpha}||Q_{\beta}|}\psi^{\alpha}\wedge\psi^{\beta}\otimes[Q_{\alpha},Q_{\beta}])^{I}\nonumber\\
&=\Theta^{(\omega) I}-\frac{1}{4}\bar{\psi}\wedge\gamma^I\psi
\label{eq:torsion}
\end{align}
since $(-1)^{|Q_{\alpha}||Q_{\beta}|}=-1$, with $\Theta^{(\omega)}$ is the torsion 2-form associated to the spin connection $\omega$. For the spinorial components, we find
\begin{align}
F(\mathcal{A})^{IJ}&=\mathrm{d}\omega^{IJ}+\tensor{\omega}{^I_K}\wedge\omega^{KJ}+\frac{1}{2L^2}\theta^I\wedge \theta^J-\frac{1}{2}(\psi^{\alpha}\wedge\psi^{\beta}\otimes[Q_{\alpha},Q_{\beta}])^{IJ}\nonumber\\
&=F(\omega)^{IJ}+\frac{1}{L^2}\theta^I\wedge \theta^J-\frac{1}{4L}\bar{\psi}\wedge\gamma^{IJ}\psi
\label{eq:4.0.3}
\end{align}
with $F(\omega)$ the curvature of $\omega$. Finally, for the odd part, we obtain
\begin{align}
\Sigma^{\alpha}:=F(\mathcal{A})^{\alpha}&=\mathrm{d}\psi^{\alpha}+\frac{1}{4}\omega^{IJ}\tensor{(\gamma_{IJ})}{^{\alpha}_{\beta}}\wedge\psi^{\beta}-\frac{1}{2L}e^I\wedge\psi^{\beta}\tensor{(\gamma_{I})}{^{\alpha}_{\beta}}\nonumber\\
&=D^{(\omega)}\psi^{\alpha}-\frac{1}{2L}e^I\wedge\psi^{\beta}\tensor{(\gamma_{I})}{^{\alpha}_{\beta}}
\label{eq:4.0.4}
\end{align}
with $D^{(\omega)}\psi=\mathrm{d}\psi+\frac{1}{4}\omega^{IJ}\gamma_{IJ}\wedge\psi$ the exterior covariant derivative in the Majorana representation. Before we state the supergravity action, note that, by the rheonomy principle, the physical degrees of freedom are completely encoded on the body of the supermanifold. Hence, it suffices to consider local sections $\tilde{s}:\,U_{/\mathcal{S}}\rightarrow\iota_{\mathcal{M}_0}^*\mathcal{P}_{/\mathcal{S}}$ of the pullback bundle $\iota_{\mathcal{M}_0}^*\mathcal{P}_{/\mathcal{S}}$ on the bosonic sub supermanifold $\iota_{\mathcal{M}_0}:\,(\mathcal{M}_0)_{/\mathcal{S}}\hookrightarrow\mathcal{M}_{/\mathcal{S}}$. With respect to these type of localizations, the \emph{super MacDowell-Mansouri action} for $\mathcal{N}=1$, $D=4$ supergravity then reads as follows \cite{MacDowell:1977jt,Castellani:2013iq}
\begin{equation}
S(\mathcal{A})=\frac{L^2}{2\kappa}\int_{M}{\tilde{s}^*\left(\frac{1}{4}F(\mathcal{A})^{IJ}\wedge F(\mathcal{A})^{KL}\epsilon_{IJKL}+\frac{i}{L}\bar{\Sigma}\wedge\gamma_{*}\Sigma\right)}
\label{eq:4.0.5}
\end{equation}
Since the trace over internal indices in (\ref{eq:4.0.5}) is manifestly $\mathrm{Spin}^+(1,3)$-invariant, it is clear that the action invariant under local $\mathrm{Spin}^+(1,3)$-gauge transformations and therefore does not depend on the choice of the local trivialization. To see that this in fact leads to $\mathcal{N}=1$ supergravity, let us further evaluate the individual terms in (\ref{eq:4.0.5}). From (\ref{eq:4.0.3}), we conclude
\begin{align}
F(\mathcal{A})^{IJ}\wedge F(\mathcal{A})^{KL}\epsilon_{IJKL}=&\frac{2}{L^2}F(\omega)^{IJ}\wedge\theta^K\wedge\theta^L\epsilon_{IJKL}+F(\omega)^{IJ}\wedge F(\omega)^{KL}\epsilon_{IJKL}\nonumber\\
&-\frac{1}{2L^3}\bar{\psi}\wedge\gamma^{IJ}\psi\wedge\theta^K\wedge\theta^L\epsilon_{IJKL}-\frac{1}{2L}F(\omega)^{IJ}\wedge\bar{\psi}\wedge\gamma^{KL}\psi\epsilon_{IJKL}\nonumber\\
&+\frac{1}{L^4}\theta^I\wedge\theta^J\wedge\theta^K\wedge\theta^L\epsilon_{IJKL}
\label{eq:4.0.6}
\end{align}
For the second term, note that the conjugate $\bar{\Sigma}$ is given by
\begin{align}
\bar{\Sigma}=\Sigma^TC=&\mathrm{d}\bar{\psi}+\frac{1}{4}\bar{\psi}\wedge\omega^{IJ}\gamma_{IJ}-\frac{1}{2L}\bar{\psi}\gamma_I\wedge\theta^I\nonumber\\
=&D^{(\omega)}\bar{\psi}-\frac{1}{2L}\bar{\psi}\gamma_I\wedge \theta^I
\label{eq:4.0.7}
\end{align}
Hence, this yields
\begin{align}
\bar{\Sigma}\wedge\gamma_{*}\Sigma=&D^{(\omega)}\bar{\psi}\wedge\gamma_{*}D^{(\omega)}\psi+\frac{1}{2L}D^{(\omega)}\bar{\psi}\wedge\gamma_{*}\gamma_{I}\psi\wedge\theta^I+\frac{1}{2L}\bar{\psi}\wedge\gamma_{*}\gamma_ID^{(\omega)}\psi\wedge\theta^I\nonumber\\
&-\frac{1}{4L^2}\bar{\psi}\wedge\gamma_{IJ}\gamma_{*}\psi\wedge\theta^I\wedge\theta^J\nonumber\\
=&D^{(\omega)}\bar{\psi}\wedge\gamma_{*}D^{(\omega)}\psi+\frac{1}{L}\bar{\psi}\wedge\gamma_{*}\gamma_ID^{(\omega)}\psi\wedge\theta^I-\frac{1}{8iL^2}\bar{\psi}\wedge\gamma^{KL}\gamma_{*}\psi\wedge\theta^I\wedge\theta^J\epsilon_{IJKL}
\label{eq:4.0.8}
\end{align}
where in the last line we used that $\tensor{\epsilon}{_{IJ}^{KL}}\gamma_{KL}=2i\gamma_{IJ}\gamma_{*}$. If we insert (\ref{eq:4.0.6}) and (\ref{eq:4.0.8}) into (\ref{eq:4.0.5}), this then leads to 
\begin{align}
S(\mathcal{A})=&\frac{1}{2\kappa}\int_{M}\left(\frac{1}{2}F(\omega)^{IJ}\wedge e^K\wedge e^L\epsilon_{IJKL}-\frac{1}{4L}\bar{\psi}\wedge\gamma^{IJ}\psi\wedge e^K\wedge e^L\epsilon_{IJKL}\right.\nonumber\\
&\left.+\frac{1}{4L^2}e^I\wedge e^J\wedge e^K\wedge e^L\epsilon_{IJKL}+
i\bar{\psi}\wedge\gamma_{*}\gamma_ID^{(\omega)}\psi\wedge e^I\right)\nonumber\\
&-\frac{L}{16\kappa}\int_M{\epsilon_{IJKL}F(\omega)^{IJ}\wedge\bar{\psi}\wedge\gamma^{KL}\psi-8iD^{(\omega)}\bar{\psi}\wedge\gamma_{*}D^{(\omega)}\psi}
\label{eq:4.0.9}
\end{align}
where we set $\tilde{s}^*\theta^I=:e^I$ and we have dropped a topological term proportional to the Gauss-Bonnet term. In fact, it turns out that (\ref{eq:4.0.9}) can be simplified even further. Therefore, recall that $\psi$, as odd part of the supervielbein $E$, defines a horizontal 1-form of type $(\mathrm{Spin}^+(1,3),\kappa_{\mathbb{R}})$, i.e. $\psi\in\Omega^1(P_{/\mathcal{S}},\Delta_{\mathbb{R}})^{(\mathrm{Spin}^+(1,3),\kappa_{\mathbb{R}})}$. Hence, from the standard rules for the exterior covariant derivative, one obtains the \emph{Bianchi identity}
\begin{equation}
D^{(\omega)}D^{(\omega)}\psi=\kappa_{\mathbb{R}*}(F(\omega))\wedge\psi=\frac{1}{4}F(\omega)^{IJ}\gamma_{IJ}\wedge\psi
\label{eq:}
\end{equation}
One can thus equivalently write for the first term on the last line of (\ref{eq:4.0.9})
\begin{align}
\epsilon_{IJKL}F(\omega)^{IJ}\wedge\bar{\psi}\wedge\gamma^{KL}\psi&=2i\bar{\psi}\wedge F(\omega)^{IJ}\gamma_{IJ}\gamma_{*}\wedge\psi\nonumber\\
&=8i\bar{\psi}\wedge\gamma_{*}D^{(\omega)}D^{(\omega)}\psi\nonumber\\
&=8iD^{(\omega)}\bar{\psi}\wedge\gamma_{*}D^{(\omega)}\psi-\mathrm{d}(8i\bar{\psi}\wedge\gamma_{*}D^{(\omega)}\psi)
\label{eq:}
\end{align}
Thus dropping the boundary term, one finally ends up with
\begin{align}
S(\mathcal{A})=&\frac{1}{2\kappa}\int_{M}\left(\frac{1}{2}F(\omega)^{IJ}\wedge e^K\wedge e^L\epsilon_{IJKL}+
i\bar{\psi}\wedge\gamma_{*}\gamma_ID^{(\omega)}\psi\wedge e^I\right.\nonumber\\
&\left.-\frac{1}{4L}\bar{\psi}\wedge\gamma^{IJ}\psi\wedge e^K\wedge e^L\epsilon_{IJKL}+\frac{1}{4L^2}e^I\wedge e^J\wedge e^K\wedge e^L\epsilon_{IJKL}  \right)
\label{eq:Action}
\end{align}
This is precisely the action of $\mathcal{N}=1$, $D=4$ anti-de Sitter supergravity as stated for instance in \cite{Freedman:2012zz}. By definition, $S(\mathcal{A})\in H^{\infty}(\mathcal{S})_0$ describes an even functional on the parametrizing supermanifold $\mathcal{S}$. In particular, since we are working in the relative category, it transforms covariantly under change of parametrization. That is, given a change of parametrization $\lambda:\,\mathcal{S}'\rightarrow\mathcal{S}$, one has
\begin{equation}
\lambda^*S(\mathcal{A})=S(\lambda^*\mathcal{A})\in H^{\infty}(\mathcal{S}')_0
\label{eq:143}
\end{equation}
These are precisely the properties to be satisfied by a physical quantity according to \cite{Schmitt:1996hp} since physical degrees of freedom should not depend on the choice of a particular parametrizing supermanifold.
\begin{remark}
Due to (\ref{eq:143}), it is suggested in \cite{Schmitt:1996hp} to take $\mathcal{S}$ as the (infinite dimensional) configuration space as the action functional defined on any other finite dimensional parametrizing supermanifold then may be obtained via pullback (i.e. change of parametrization). In this way, in particular, it follows that fermionic fields are described in terms of odd functionals on configuration space. This is in fact the interpretation of fermionic fields in pAQFT \cite{Rejzner:2011au,Rejzner:2016hdj} (cf. remark \ref{remark:2.39}).
\end{remark}
Note that, for $\mathcal{S}\cong\{*\}$, action (\ref{eq:4.0.5}) reduces to the action (\ref{eq:3.1.21}) of pure Einstein gravity. One thus needs to ensure that $\mathcal{S}$ has nontrivial odd dimensions because otherwise all fermionic fields in $S(\mathcal{A})$ would simply drop off. But, as we have seen from the computations above, the anticommuting nature of the fermion fields has been crucial for the derivation of the action (\ref{eq:Action}) and thus for the interpretation of supergravity in terms of super Cartan geometry. In fact, this is also important for the supersymmetry of the action (\ref{eq:Action}). 

As will become clear below, in the Cartan geometric picture, supersymmetry transformations have the  interpretation in terms of (field dependent) gauge transformations. Therefore, we need to lift $\mathcal{A}$ to a principal connection. This can be done considering associated bundles.
\begin{lemma}\label{cor:4.4}
Let $\mathcal{H}\rightarrow\mathcal{P}\rightarrow\mathcal{M}$ be a principal super fiber bundle with structure group $\mathcal{H}$ and $\mathcal{H}$-right action $\Phi:\,\mathcal{P}\times\mathcal{H}\rightarrow\mathcal{H}$. Let $\lambda:\,\mathcal{H}\rightarrow\mathcal{G}$ be a morphism of super Lie groups and $\rho_{\lambda}:=\mu_{\mathcal{G}}\circ(\lambda\times\mathrm{id}_{\mathcal{G}}):\,\mathcal{H}\times\mathcal{G}\rightarrow\mathcal{G}$ the induced $H^{\infty}$-smooth left action of $\mathcal{H}$ on $\mathcal{G}$. On $\mathcal{P}\times\mathcal{H}$ consider the map
\begin{equation}
\Phi^{\times}:\,(\mathcal{P}\times\mathcal{H})\times\mathcal{G}\rightarrow\mathcal{P}\times\mathcal{G},\,((p,h),g)\mapsto(\Phi(p,g),\rho_{\lambda}(g^{-1},h))
\label{eq:4.0.10}
\end{equation} 
Then, $\Phi^{\times}$ defines an effective $H^{\infty}$-smooth $\mathcal{G}$-right action on $\mathcal{P}\times\mathcal{G}$. Let $\mathcal{E}:=\mathcal{P}\times_{\rho_{\lambda}}\mathcal{G}:=(\mathcal{P}\times\mathcal{H})/\mathcal{G}$ be the corresponding coset space and $\pi_{\mathcal{E}}:\,\mathcal{E}\rightarrow\mathcal{M}$ be defined as
\begin{align}
\pi_{\mathcal{E}}:\,\mathcal{E}\rightarrow\mathcal{M},\,[p,g]\mapsto\pi_{\mathcal{P}}(p)
\label{eq:4.0.11}
\end{align}
Then, $\mathcal{E}$ can be equipped with the structure of a $H^{\infty}$ supermanifold such that $\pi_{\mathcal{E}}$ is a $H^{\infty}$-smooth surjective map and $(\mathcal{E},\pi_{\mathcal{E}},\mathcal{M},\mathcal{G})$ turns into a principal $\mathcal{G}$-bundle.\\ 
Furthermore, let $\iota:\,\mathcal{P}\rightarrow\mathcal{P}\times_{\mathcal{H}}\mathcal{G}$ be defined as $\iota(p):=[p,e]$ $\forall p\in\mathcal{P}$, then $\iota$ is smooth, fiber-preserving and $\mathcal{H}$-equivariant in the sense that $\iota\circ\Phi=\widetilde{\Phi}\circ(\iota\times\lambda)$. Moreover, if $\lambda:\,\mathcal{H}\hookrightarrow\mathcal{G}$ is an embedding, then $\iota$ is an embedding.
\end{lemma}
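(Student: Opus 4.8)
The plan is to realise this as the $H^{\infty}$ super-analogue of the classical construction of an associated principal bundle, so the whole argument is modelled on the bosonic case with the single proviso that every quotient must be produced inside the category $\mathbf{SMan}_{H^{\infty}}$. First I would make explicit the two commuting actions implicit in \eqref{eq:4.0.10}: the $\mathcal{H}$-action $(p,g)\cdot h=(\Phi(p,h),\rho_{\lambda}(h^{-1},g))$ on $\mathcal{P}\times\mathcal{G}$, by which one quotients, and the right $\mathcal{G}$-action given by multiplication on the second factor, which is the effective action descending to $\mathcal{E}$. That these are genuine smooth actions is purely algebraic: it follows from $\Phi$ being a right action together with $\rho_{\lambda}$ being a left action, the latter being a direct consequence of $\lambda$ being a morphism of super Lie groups. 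Effectiveness, indeed freeness, is inherited from the freeness of the principal $\mathcal{H}$-action $\Phi$, since $p\cdot h=p$ already forces $h=e$.

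The key step is to endow $\mathcal{E}=(\mathcal{P}\times\mathcal{G})/\mathcal{H}$ with an $H^{\infty}$ structure. Rather than invoking an abstract quotient theorem, I would build $\mathcal{E}$ from local data. Over a trivialising neighbourhood $U\subseteq\mathcal{M}$ one has $\mathcal{P}|_{U}\cong U\times\mathcal{H}$, and the assignment $[(x,h),g]\mapsto(x,\lambda(h)g)$ furnishes a bijection $\mathcal{E}|_{U}\cong U\times\mathcal{G}$; one checks directly that it is well defined on $\mathcal{H}$-orbits. These local models glue via the transition cocycle of $\mathcal{P}$ pushed forward through $\lambda$ and composed with left multiplication in $\mathcal{G}$, and since $\lambda$ and $\mu_{\mathcal{G}}$ are $H^{\infty}$-smooth the resulting cocycle is $H^{\infty}$-smooth. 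This simultaneously exhibits $\mathcal{E}$ as an $H^{\infty}$ supermanifold, shows $\pi_{\mathcal{E}}$ to be a smooth surjection, and renders the descended right $\mathcal{G}$-action locally into right multiplication on the fibre; hence $(\mathcal{E},\pi_{\mathcal{E}},\mathcal{M},\mathcal{G})$ is a principal $\mathcal{G}$-bundle. Compatibility with the DeWitt topology, and the identification of $\mathbf{B}(\mathcal{E})$ with the classical associated bundle, follow by applying the body functor $\mathbf{B}$ to the local models.

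It then remains to treat $\iota(p):=[p,e]$. In the trivialisation above $\iota$ reads $(x,h)\mapsto(x,\lambda(h))$, from which smoothness and the identity $\pi_{\mathcal{E}}\circ\iota=\pi_{\mathcal{P}}$ are immediate. For $\mathcal{H}$-equivariance I would use the defining equivalence relation $(p\cdot h,e)\sim(p,\lambda(h))$ to obtain
\begin{equation}
\iota(\Phi(p,h))=[\Phi(p,h),e]=[p,\lambda(h)]=\widetilde{\Phi}(\iota(p),\lambda(h)),
\label{eq:plan-equiv}
\end{equation}
which is precisely $\iota\circ\Phi=\widetilde{\Phi}\circ(\iota\times\lambda)$. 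Finally, if $\lambda$ is an embedding, injectivity of $\iota$ follows since $[p,e]=[p',e]$ forces $\lambda(h)=e$ and $p\cdot h=p'$ for some $h$, whence $h=e$ by injectivity of $\lambda$ and freeness of $\Phi$, so $p=p'$; the local form $(x,h)\mapsto(x,\lambda(h))$ then shows $\iota$ to be an immersion and a homeomorphism onto its image.

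I expect the only genuine obstacle to be the construction of the quotient supermanifold in the $H^{\infty}$ category, where the classical quotient-manifold theorem is not available verbatim. The resolution is exactly the gluing construction above: one never quotients abstractly but instead produces $\mathcal{E}$ from the cocycle of $\mathcal{P}$ and identifies it a posteriori with the set-theoretic quotient $(\mathcal{P}\times\mathcal{G})/\mathcal{H}$. Equivalently, one may appeal to the quotient results for actions of embedded super Lie subgroups in \cite{Tuynman:2004,Carmeli:2011} that already underlie the existence of $\mathcal{G}/\mathcal{H}$ used in the super Klein geometry of Definition~\ref{Def:4.2}.
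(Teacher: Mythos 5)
Your proposal is correct and follows essentially the same route as the paper, whose proof simply remarks that the argument is ``almost the same as in the classical theory'' and defers to \cite{Tuynman:2004,Konsti-FB:2020}; your explicit local-trivialisation and cocycle construction is a legitimate way of carrying that out in the $H^{\infty}$ category while sidestepping an abstract quotient theorem. The one subtlety the paper singles out --- that smoothness is generally not preserved under partial evaluation of $H^{\infty}$ maps (remark \ref{remark:2.19}), so that maps such as $p\mapsto[p,e]$ are smooth precisely because $e\in\mathbf{B}(\mathcal{G})$ --- is handled only implicitly by your local formula $(x,h)\mapsto(x,\lambda(h))$ and would be worth flagging explicitly.
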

\begin{proof}
The proof is almost the same as in the classical theory (see \cite{Tuynman:2004,Konsti-FB:2020}). One only needs care about smoothness in the various constructions as smoothness is generally not preserved under partial evaluation of smooth functions (see remark \ref{remark:2.19}). But, it turns out that everything works fine since $e\in\mathbf{B}(\mathcal{G})$.  
\end{proof}
\begin{remark}\label{remark:4.5}
Given the associated bundle $\mathcal{P}\times_{\rho_{\lambda}}\mathcal{G}$ as constructed in corollary \ref{cor:4.4}, as in section \ref{Holonomies}, we can lift it trivially to a $\mathcal{S}$-relative principal $\mathcal{G}$-bundle $(\mathcal{P}\times_{\rho_{\lambda}}\mathcal{G})_{/\mathcal{S}}\cong\mathcal{P}_{/\mathcal{S}}\times_{\rho_{\lambda}}\mathcal{G}$.
\end{remark}
\begin{prop}\label{prop:4.6}
Let $\mathcal{H}\rightarrow\mathcal{P}_{/\mathcal{S}}\rightarrow\mathcal{M}_{/\mathcal{S}}$ be a $\mathcal{S}$-relative principal super fiber bundle with structure group $\mathcal{H}$ as well as $(\mathcal{G},\mathcal{H})$ a super Klein geometry. Then, there is a bijective correspondence between generalized super Cartan connections in $\Omega^1(\mathcal{P}_{/\mathcal{S}},\mathfrak{g})_0$ and super connection 1-forms in $\Omega^1(\mathcal{P}_{/\mathcal{S}}\times_{\mathcal{H}}\mathcal{G},\mathfrak{g})_0$ with $\mathcal{P}_{/\mathcal{S}}\times_{\mathcal{H}}\mathcal{G}$ the $\mathcal{G}$-extension of $\mathcal{P}_{/\mathcal{S}}$ as constructed in remark \ref{remark:4.5}.
\end{prop}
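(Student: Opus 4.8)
The plan is to establish the classical Cartan-to-Ehresmann correspondence (see e.g. Sharpe) in the present $\mathcal{S}$-relative super setting, using the canonical embedding $\iota:\,\mathcal{P}\hookrightarrow\mathcal{P}\times_{\mathcal{H}}\mathcal{G}$ from Lemma \ref{cor:4.4} together with the Maurer-Cartan form $\theta_{\mathrm{MC}}$ of $\mathcal{G}$. Throughout I lift everything trivially to the relative category (Remark \ref{remark:4.5}) and write $\iota$ also for $\mathrm{id}_{\mathcal{S}}\times\iota$. Crucially, since we only aim at \emph{generalized} super Cartan connections, the non-degeneracy condition (iii) of Definition \ref{Def:4.2} is irrelevant, so the statement reduces to matching the two defining conditions (i) and (ii) of a super Cartan connection with the two conditions of a super connection $1$-form in Definition \ref{Def:2.18}.

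For the first direction, given a super connection $\tilde{\mathcal{A}}\in\Omega^1(\mathcal{P}_{/\mathcal{S}}\times_{\mathcal{H}}\mathcal{G},\mathfrak{g})_0$, I set $\mathcal{A}:=\iota^*\tilde{\mathcal{A}}$. Since $\iota$ is fiber-preserving and $\mathcal{H}$-equivariant, i.e. $\iota\circ\Phi=\widetilde{\Phi}\circ(\iota\times\lambda)$ by Lemma \ref{cor:4.4}, the fundamental vector field $\widetilde{X}$ of $\mathcal{P}_{/\mathcal{S}}$ generated by $X\in\mathfrak{h}$ is $\iota$-related to the $\mathcal{G}$-fundamental vector field generated by $\lambda_*X=X\in\mathfrak{h}\subseteq\mathfrak{g}$; condition (i) for $\tilde{\mathcal{A}}$ then yields $\braket{\mathds{1}\otimes\widetilde{X}|\mathcal{A}}=X$. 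Likewise, the $\mathcal{H}$-equivariance of $\iota$ together with condition (ii) for $\tilde{\mathcal{A}}$ and $\mathrm{Ad}_{\lambda(h)^{-1}}|_{\mathfrak{g}}=\mathrm{Ad}_{h^{-1}}$ gives $(\Phi_{\mathcal{S}})_h^*\mathcal{A}=\mathrm{Ad}_{h^{-1}}\circ\mathcal{A}$, so $\mathcal{A}$ is a generalized super Cartan connection.

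For the converse, given a generalized super Cartan connection $\mathcal{A}$, I build the connection on the $\mathcal{G}$-extension by the standard formula: on $\mathcal{P}_{/\mathcal{S}}\times\mathcal{G}$ I consider the even $\mathfrak{g}$-valued $1$-form
\[
\hat{\mathcal{A}}_{(p,g)}:=\mathrm{Ad}_{g^{-1}}\bigl(\mathrm{pr}_{\mathcal{P}}^*\mathcal{A}\bigr)_{(p,g)}+\bigl(\mathrm{pr}_{\mathcal{G}}^*\theta_{\mathrm{MC}}\bigr)_{(p,g)}
\]
and show, using the left-invariance of $\theta_{\mathrm{MC}}$ and condition (ii) for $\mathcal{A}$, that $\hat{\mathcal{A}}$ is invariant under the $\mathcal{H}$-action $(p,g)\mapsto(p\cdot h,\lambda(h)^{-1}g)$ defining the quotient $\mathcal{P}_{/\mathcal{S}}\times_{\mathcal{H}}\mathcal{G}$; as the quotient map is a surjective submersion, $\hat{\mathcal{A}}$ descends to a well-defined $\tilde{\mathcal{A}}\in\Omega^1(\mathcal{P}_{/\mathcal{S}}\times_{\mathcal{H}}\mathcal{G},\mathfrak{g})_0$. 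The $\theta_{\mathrm{MC}}$-term reproduces the $\mathcal{G}$-fundamental vector fields (condition (i)) by the defining property of the Maurer-Cartan form, while the $\mathrm{Ad}$-structure together with the right-equivariance (\ref{eq:4.3}) of $\theta_{\mathrm{MC}}$ yields the $\mathcal{G}$-equivariance (ii); hence $\tilde{\mathcal{A}}$ is a genuine super connection $1$-form. Finally, because $\iota(p)=[p,e]$ factors through $\mathcal{P}\times\{e\}$, the pullback annihilates the $\theta_{\mathrm{MC}}$-term and leaves $\mathrm{Ad}_{e}=\mathrm{id}$, so $\iota^*\tilde{\mathcal{A}}=\mathcal{A}$. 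Uniqueness, and hence bijectivity, follows since any connection restricting to $\mathcal{A}$ along $\iota(\mathcal{P})$ is forced everywhere by $\mathcal{G}$-equivariance, as $\mathcal{P}_{/\mathcal{S}}\times_{\mathcal{H}}\mathcal{G}=\iota(\mathcal{P}_{/\mathcal{S}})\cdot\mathcal{G}$.

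The main obstacle will be the super-geometric subtleties of the $H^{\infty}$ category rather than the algebra, which is classical. As emphasised in Remark \ref{remark:2.19}, the right translations $(\Phi_{\mathcal{S}})_g$ and $(\widetilde{\Phi})_g$ fail to be $H^{\infty}$ for $g\notin\mathbf{B}(\mathcal{G})$, so all equivariance statements and the factor $\mathrm{Ad}_{g^{-1}}$ must be interpreted via the generalized tangent maps and generalized pullbacks of \cite{Tuynman:2004}; one must check that $\hat{\mathcal{A}}$ is genuinely $H^{\infty}$ and that its descent to the quotient is smooth. This parallels the care already needed in Lemma \ref{cor:4.4}, is again controlled by the fact that $\iota$ is evaluated at $e\in\mathbf{B}(\mathcal{G})$, and the remaining $g$-dependence is handled by working with $\Lambda$-points, i.e. functorially in the parametrising supermanifold.
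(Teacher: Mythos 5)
Your proposal is correct and follows essentially the same route as the paper: pullback along the canonical embedding $\hat{\iota}$ in one direction, and in the other the standard formula $\mathrm{Ad}_{g^{-1}}\circ\mathrm{pr}_{\mathcal{P}}^*\mathcal{A}+\mathrm{pr}_{\mathcal{G}}^*\theta_{\mathrm{MC}}$ descended through the submersion $\hat{\pi}$, with well-definedness checked against the kernel of $D\hat{\pi}$ (equivalently, your $\mathcal{H}$-invariance check) and uniqueness forced by $\mathcal{G}$-equivariance. The paper merely phrases the descended form via its evaluation on $D\hat{\pi}(X_p,Y_g)$ rather than defining it upstairs first, and likewise defers the $H^{\infty}$-smoothness subtleties to the generalized tangent map machinery you correctly flag.
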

\begin{proof}[Sketch of Proof.]
One direction is immediate, i.e., given a $\mathcal{S}$-relative super connection 1-form $\mathcal{A}$ on $\mathcal{P}_{/\mathcal{S}}\times_{\mathcal{H}}\mathcal{G}$, the pullback $\hat{\iota}^*\mathcal{A}$ w.r.t. the embedding $\hat{\iota}:=\mathrm{id}\times\iota:\,\mathcal{P}_{/\mathcal{S}}\rightarrow\mathcal{P}_{/\mathcal{S}}\times_{\mathcal{H}}\mathcal{G}$, with $\iota$ as defined in \ref{cor:4.4}, yields a generalized super Cartan connection on $\mathcal{P}_{/\mathcal{S}}$ according to definition \ref{Def:4.2}.\\
Conversely, suppose $\mathcal{A}\in\Omega^1(\mathcal{P}_{/\mathcal{S}},\mathfrak{g})_0$ is a generalized super Cartan connection. Let $\hat{\pi}:\,\mathcal{P}_{/\mathcal{S}}\times\mathcal{G}\rightarrow\mathcal{P}_{/\mathcal{S}}\times_{\mathcal{H}}\mathcal{G}$ be the canonical projection. If $\hat{\Phi}_{\mathcal{S}}$ denotes the $\mathcal{G}$-right action on $\mathcal{P}_{/\mathcal{S}}\times_{\mathcal{H}}\mathcal{G}$, it follows that the fundamental vector fields are given by
\begin{align}
\widetilde{Y}_{[p,g]}&=(\hat{\Phi}_{\mathcal{S}})_{[p,g]*}(Y_e)=D_{(p,g)}\hat{\pi}(0_{p},L_{g*}Y)
\label{}
\end{align}
for any $Y\in\mathrm{Lie}(\mathcal{G})$ and $p\in\mathcal{P}_{/\mathcal{S}}$, $g\in\mathcal{G}$, where we used the generalized tangent map. Furthermore, for any $X_{p}\in T_{p}(\mathcal{P}_{/\mathcal{S}})$, one has
\begin{align}
D_{(p,g)}\hat{\pi}(X_{p},0_g)=\hat{\iota}_{*}((\Phi_{\mathcal{S}})_{g*}X_{p})
\label{}
\end{align}
$\forall (p,g)\in\mathcal{P}_{/\mathcal{S}}\times\mathcal{G}$. Hence, this yields
\begin{align}
D_{(p,g)}\hat{\pi}(X_{p},Y_g)&=D_{(p,g)}\hat{\pi}(X_{p},0_g)+D_{(p,g)}\hat{\pi}(0_{p},Y_g)\nonumber\\
&=\hat{\iota}_{*}((\Phi_{\mathcal{S}})_{g*}X_p)+D_{(p,g)}\hat{\pi}(0_p,L_{g*}\circ L_{g^{-1}*}(Y_g))\nonumber\\
&=\iota_{*}((\Phi_{\mathcal{S}})_{g*}X_p)+\widetilde{\braket{Y_g|\theta_{\mathrm{MC}}}}_{[p,g]}
\label{}
\end{align}
Therefore, if there exists a super connection 1-form $\hat{\iota}_{*}\mathcal{A}$ whose pullback under $\hat{\iota}$ is given by $\mathcal{A}$, then it necessarily has to be of the form
\begin{equation}
\braket{D_{(p,g)}\hat{\pi}(X_p,Y_g)|\hat{\iota}_{*}\mathcal{A}_{[p,g]}}=\mathrm{Ad}_{g^{-1}}\braket{X_p|\mathcal{A}_p}+\braket{Y_g|\theta_{\mathrm{MC}}}
\label{eq:p5.7.1}
\end{equation}
In particular, as $\hat{\pi}$ is a submersion, it is uniquely determined by (\ref{eq:p5.7.1}). Using that the kernel of $\hat{\pi}_{*}$ is given by
\begin{equation}
\mathrm{ker}\,D_{(p,g)}\hat{\pi}=\{((\mathds{1}\otimes Y)_{p},-R_{g*}Y)|Y\in\mathrm{Lie}(\mathcal{H})\}
\label{eq:6.15}
\end{equation}
it is easy to see that (\ref{eq:p5.7.1}) is indeed well-defined. That $\hat{\iota}_{*}\mathcal{A}$ is $\mathcal{G}$-equivariant and maps fundamental vector fields to the corresponding generator follows from the respective properties of $\mathcal{A}$ and the Maurer-Cartan form. Finally, by construction, it is clear that $\hat{\iota}_{*}\mathcal{A}$ defines an even $\mathcal{S}$-relative 1-form. For more details see \cite{Konsti-FB:2020}. 
\end{proof}
By Prop. \ref{prop:4.6}, we can thus lift the super Cartan connection $\mathcal{A}$ to principal connection 
\begin{equation}
\widehat{\mathcal{A}}:=\iota_{*}\mathcal{A}\in\Omega^1(\mathcal{P}_{/\mathcal{S}}\times_{\mathrm{Spin}^+(1,3)}\mathrm{OSp}(1|4),\mathfrak{osp}(1|4))_0
\label{eq:4.0.11}
\end{equation}
on the associated $\mathrm{OSp}(1|4)$-bundle. Since, in this way, it defines a super connection 1-form à la Ehresmann, we can consider local gauge transformations generated by vectors $Q_{\alpha}$ in the odd part of the super Lie algebra.\\ 
Let $s:\,U_{/\mathcal{S}}\rightarrow\mathcal{P}_{/\mathcal{S}}$ be a local section which can be extended to a local section $\widehat{s}:=(\mathrm{id}\times\iota)\circ s$ of the $\mathrm{OSp}(1|4)$-bundle. We consider a local gauge transformation $g:\,\mathcal{M}_{/\mathcal{S}}\rightarrow\mathrm{OSp}(1|4)$. If $\widehat{s}':=\widehat{s}\cdot g$, the connection $\widehat{\mathcal{A}}$ then transforms as
\begin{equation}
\widehat{\mathcal{A}}_{\widehat{s}'}=\mathrm{Ad}_{g^{-1}}\circ\mathcal{A}_s+g^{-1}\mathrm{d}g
\label{eq:4.0.12}
\end{equation}
If we consider an infinitesimal local gauge transformation generated by a smooth map $\epsilon\in H^{\infty}(\mathcal{M}_{/\mathcal{S}},\Delta_{\mathbb{R}}\otimes\Lambda_1)$, this yields
\begin{equation}
\delta_{\epsilon}\mathcal{A}_s=\mathrm{d}\epsilon+[\mathcal{A}_s\wedge\epsilon]=D^{(\mathcal{A})}\epsilon
\label{eq:4.0.13}
\end{equation}
Hence, using the super Lie algebra relations (\ref{eq:2.3.19})-(\ref{eq:2.3.22}), it follows that the individual components of the super connection 1-form transform as
\begin{align}
\delta_{\epsilon}e^I&=\frac{1}{2}\bar{\epsilon}\gamma^I\psi\label{eq:4.0.14}\\
\delta_{\epsilon}\psi^{\alpha}&=D^{(\omega)}\epsilon^{\alpha}-\frac{1}{2L}e^I\tensor{(\gamma_I)}{^{\alpha}_{\beta}}\epsilon^{\beta}\label{eq:4.0.15}\\
\delta_{\epsilon}\omega^{IJ}&=\frac{1}{2L}\bar{\epsilon}\gamma^{IJ}\psi
\label{eq:4.0.16}
\end{align}
The infinitesimal gauge transformations (\ref{eq:4.0.14}) and (\ref{eq:4.0.15}), when pulled back to the body manifold, are precisely the local supersymmetry transformations as stated in \cite{Freedman:2012zz}. However, under the additional transformation (\ref{eq:4.0.16}), $S(\mathcal{A})$ is not invariant. This is only true in case of a vanishing cosmological constant, i.e., $L\rightarrow\infty$, where (\ref{eq:4.0.16}) simply drops off. Varying $S(\mathcal{A})$ w.r.t. $\omega$ yields the field equations of the spin connection which are equivalent to the supertorsion constraint
\begin{equation}
F(\mathcal{A})^I=0\quad\Leftrightarrow\quad\Theta^{(\omega) I}=\frac{1}{4}\bar{\psi}\wedge\gamma^I\psi
\label{eq:4.0.17}
\end{equation}  
Thus, provided (\ref{eq:4.0.17}) holds, the supergravity action is indeed invariant under the full local gauge transformations generated by $\epsilon\in H^{\infty}((\mathcal{M}_0)_{/\mathcal{S}},\Delta_{\mathbb{R}}\otimes\Lambda_1)$ \cite{vanNieuwenhuizen:2004rh} (in fact, one can simply ignore the transformation of $\omega$ in this case). This is another nice feature of the super Cartan approach to supergravity as supersymmetry transformations have the geometrical intepretation in terms of gauge transformations which provides a link between supergravity and Yang-Mills theory \cite{vanNieuwenhuizen:2004rh}. This will also be crucial for the interpretation of the super Ashtekar connection in terms of a super gauge field in what follows.
\begin{remark}
Let us emphasize again that, technically, the existence of a non-vanishing $\epsilon$ relies crucially on the additional parametrizing supermanifold. Hence, working in the relative category resolves both, nontrivial anticommuting fermionic fields as well as supersymmetry transformations on the body of a supermanifold. 
\end{remark}

\section{The super Ashtekar connection}\label{Ashtekar}
\subsection{Derivation from the full theory}
In the previous section, we have derived $\mathcal{N}=1$, $D=4$ anti-de Sitter supergravity considering it geometrically in terms of a super Cartan geometry. In particular, all the basic entities of the theory turn out to be completely encoded in the super Cartan connection  
\begin{equation}
\mathcal{A}=e^IP_I+\frac{1}{2}\omega^{IJ}M_{IJ}+\psi^{\alpha}Q_{\alpha}
\label{eq:5.1.0}
\end{equation}
taking values in the super Lie algebra $\mathfrak{osp}(1|4)$ corresponding to the underlying super Klein geometry.\\
\\
In 1986 in \cite{Ashtekar:1986yd}, Ashtekar introduced his self-dual variables which give ordinary gravity the structure of a $\mathrm{SL}(2,\mathbb{C})$ Yang-Mills theory. This construction is based on a particular structure of the internal symmetry algebra. In fact, the complexification of the Lie algebra of the  orthochronous Lorentz group $\mathrm{SO}^+(1,3)$ has a decomposition of the form 
\begin{equation}
\mathfrak{so}^+(1,3)_{\mathbb{C}}=\mathfrak{su}(2)_{\mathbb{C}}\oplus\mathfrak{su}(2)_{\mathbb{C}}\cong\mathrm{sl}(2,\mathbb{C})\oplus\mathfrak{sl}(2,\mathbb{C})
\label{eq:5.1.0}
\end{equation}
and thus splits into two proper $\mathfrak{sl}(2,\mathbb{C})$ subalgebras (viewed as complex Lie algebras of complex $\mathrm{SL}(2,\mathbb{C})$). This precisely corresponds to the decomposition of the spin connection $\omega$ into its self-dual $A^+$ and anti self-dual part $A^-$, respectively. In this sense, the self-dual variables can be regarded as chiral sub components of the 4D spin connection.\\
Hence, the natural question arises whether such a construction carries over to the super category. As we will see in what follows, this will be indeed the case, even for extended supersymmetry. Therefore, recall that the Ashtekar variables $A^{\pm}$ are defined as the (anti) self-dual part of the four dimensional spin connection $\omega$ according to
\begin{equation}
A^{\pm}:=\frac{1}{2}\bigg{[}\frac{1}{2}\left(\omega^{IJ}\mp\frac{i}{2}\tensor{\epsilon}{^{IJ}_{KL}}\omega^{KL}\right)\bigg{]}M_{IJ}
\label{eq:5.1.1}
\end{equation}
which takes values in the complexification $\mathfrak{spin}(1,3)_{\mathbb{C}}$ of the Lie algebra of the spin double cover $\mathrm{Spin}^+(1,3)$ of the orthochronous Lorentz group generated by $M_{IJ}$. After some simple algebra, it follows that
\begin{align}
A^{\pm}&=\frac{1}{2}\bigg{[}\frac{1}{2}\left(\omega^{IJ}\mp\frac{i}{2}\tensor{\epsilon}{^{IJ}_{KL}}\omega^{KL}\right)\bigg{]}M_{IJ}\nonumber\\
&=\frac{1}{2}\left(\frac{1}{4}\tensor{\epsilon}{^i_{kl}}\tensor{\epsilon}{_i^{mn}}\omega^{kl}M_{mn}\mp\frac{i}{2}\tensor{\epsilon}{^{0i}_{kl}}\omega^{kl}M_{0i}\mp\frac{i}{2}\tensor{\epsilon}{_{0i}^{KL}}\omega^{0i}M_{KL}+\omega^{0i}M_{0i}\right)\nonumber\\
&=\left(-\frac{1}{2}\tensor{\epsilon}{^i_{kl}}\omega^{kl}\mp i\omega^{0i}\right)\frac{1}{2}\left(-\frac{1}{2}\tensor{\epsilon}{_{i}^{kl}}M_{kl}\pm iM_{0i}\right)=:A^{\pm i}T_i^{\pm}
\label{eq:5.1.3}
\end{align}
where $A^{\pm i}:=\Gamma^i\mp i K^i$, $i=1,\ldots,3$, with $\Gamma^i:=-\frac{1}{2}\tensor{\epsilon}{^i_{kl}}\omega^{kl}$ the 3D spin connection and $K^i:=\omega^{0i}$ the extrinsic curvature. Moreover, $T^{\pm}_i$ are given by 
\begin{equation}
T_i^{\pm}=\frac{1}{2}(J_i\pm i\tilde{K}_i)
\label{eq:5.1.4}
\end{equation}
with $J_i=-\frac{1}{2}\tensor{\epsilon}{_{i}^{jk}}M_{jk}$ and $\tilde{K}_i=M_{0i}$, the generators of local rotations and boosts, respectively. These satisfy the commutation relations 
\begin{equation}
[T^{\pm}_i,T_j^{\pm}]=\tensor{\epsilon}{_{ij}^k}T_k^{\pm}
\label{eq:5.1.5}
\end{equation}
and therefore generate the chiral $\mathfrak{sl}(2,\mathbb{C})$ subalgebras of $\mathfrak{spin}^+(1,3)_{\mathbb{C}}$. Since $\tensor{\gamma}{_{0i}}=\frac{i}{2}\tensor{\epsilon}{_{ijk}}\tensor{\gamma}{^{jk}}\gamma_{*}$, one has the important identity
\begin{align}
\frac{1}{4}\left(-\tensor{\epsilon}{_i^{jk}}\gamma_{jk}\pm i\gamma_{0i}\right)=\frac{\gamma_{*}\pm \mathds{1}}{2}\frac{i}{2}\gamma_{0i}
\label{eq:5.1.2}
\end{align} 
Hence, using (\ref{eq:5.1.2}) it immediately follows that the exterior covariant derivative induced by $A^+$ (resp. $A^-$) acts on purely unprimed (resp. primed) spinor indices according to 
\begin{equation}
D^{(A^+)}\psi^A=\mathrm{d}\psi^A+\tensor{{A^{+}}}{^A_B}\wedge\psi^B,\quad\text{and}\quad D^{(A^-)}\psi_{A'}=\mathrm{d}\psi_{A'}+\tensor{{A^{-}}}{_{A'}^{B'}}\wedge\psi_{B'} 
\label{eq:5.1.6}
\end{equation}
respectively, where $\tensor{{A^{+}}}{^A_B}=A^{+i}\tensor{(\tau_i)}{^A_B}$ and $\tensor{{A^{-}}}{_{A'}^{B'}}=A^{-i}\tensor{(\tau_i)}{_{A'}^{B'}}$ (note that the second identity in (\ref{eq:5.1.6}) can be obtained taking the complex conjugate of the first one). Hence, focusing for the moment on the self-dual sector, let us consider the chiral sub components $Q^r_A$ of the Majorana charges $Q^r_{\alpha}$. From (\ref{eq:2.3.19}) it then follows, using again (\ref{eq:5.1.2}),
\begin{align}
[T_k^+,Q^i_A]=Q^i_{B}\tensor{(\tau_k)}{^B_A}
\label{eq:5.1.7}
\end{align}
that is, the Weyl fermions $Q^r_A$ transform in the fundamental representation of $\mathfrak{sl}(2,\mathbb{C})$ as to be expected from (\ref{eq:5.1.7}). Next, let us consider the anti commutator relations between two Weyl fermions. In the Weyl-representation, the charge conjugation matrix $C$ admits a block diagonal form given by $C=\mathrm{diag}(i\epsilon,i\epsilon)$. From this, we immediately deduce
\begin{align}
(C\gamma^{IJ})_{AB}M_{IJ}&=\frac{i}{2}\left(\epsilon(\sigma^I\bar{\sigma}^J-\sigma^J\bar{\sigma}^I)\right)_{AB}M_{IJ}\nonumber\\
&=2i(\epsilon\sigma^i)_{AB}M_{0i}-\tensor{\epsilon}{^{ij}_k}(\epsilon\sigma^k)_{AB}M_{ij}\nonumber\\
&=2(\epsilon\sigma^i)_{AB}\left(iM_{0i}-\frac{1}{2}\tensor{\epsilon}{_{i}^{jk}}M_{jk}\right)=2(\epsilon\sigma^i)_{AB}T_i^+
\end{align}
Hence, using (\ref{eq:2.3.22}), it follows
\begin{align}
[Q^i_{A},Q^j_{B}]=\delta^{ij}\frac{1}{L}(\epsilon\sigma^k)_{AB}T_k^+-\frac{i}{2L}\epsilon_{AB}T^{ij}
\label{eq:5.1.8}
\end{align} 
The $R$-symmetry generators $T_{rs}$ do not mix the chiral components of the Majorana charges $Q^r_{\alpha}$. Thus, to summarize, we have found that $(T_i^+,T_{rs},Q_{A}^r)$ indeed form a proper chiral sub super Lie algebra of $\mathfrak{osp}(\mathcal{N}|4)_{\mathbb{C}}$ with the graded commutation relations
\begin{align}
[T^+_i,T_j^+]&=\tensor{\epsilon}{_{ij}^k}T_k^+\label{eq:5.1.9.1}\\
[T_i^+,Q^r_A]&=Q^r_{B}\tensor{(\tau_i)}{^B_A}\label{eq:5.1.9.2}\\
[Q^r_{A},Q^s_{B}]&=\delta^{rs}\frac{1}{L}(\epsilon\sigma^i)_{AB}T_i^+-\frac{i}{2L}\epsilon_{AB}T^{rs}\label{eq:5.1.9.3}\\
[T^{pq},Q_{A}^{r}]&=\delta^{qr} Q_{A}^p-\delta^{pr} Q_{A}^q
\label{eq:5.1.9.4}
\end{align}
which precisely coincide with graded commutation relations of the complex orthosymplectic Lie superalgebra $\mathfrak{osp}(\mathcal{N}|2)_{\mathbb{C}}$, the extended supersymmetric generalization of the isometry algebra of $D=2$ anti-de Sitter space \cite{Wipf:2016}. Performing the Inönü-Wigner contraction, i.e., taking the limit $L\rightarrow\infty$, this yields the extended $D=2$ super Poincaré algebra. Similarly, considering the anti self-dual sector, one obtains a proper sub super Lie algebra generated by the anti chiral components $(T_i^{-},T_{rs},Q ^{A'})$ which again forms $\mathfrak{osp}(\mathcal{N}|2)_{\mathbb{C}}$.\\
\\
For the construction of a super analog of Ashtekar's self-dual variables, in what follows, let us restrict to the non-extended case $\mathcal{N}=1$ in which case the theory is described in terms of the super Cartan connection $\mathcal{A}\in\Omega^1(\mathcal{P}_{/\mathcal{S}},\mathfrak{osp}(1|4))$ (\ref{eq:5.1.0}). Based on the above observations, we introduce the following graded self-dual variables 
\begin{equation}
\mathcal{A}^{+}:=A^{+ i}T_i^{+}+\psi^AQ_A\quad\text{and}\quad\mathcal{A}^{-}:=A^{- i}T_i^{-}+\psi_{A'}Q^{A'}
\label{eq:}
\end{equation}
which define $\mathcal{S}$-relative 1-forms on the $\mathcal{S}$-relative principal super fiber bundle $\mathrm{Spin}^+(1,3)\rightarrow\mathcal{P}_{/\mathcal{S}}\rightarrow\mathcal{M}_{/\mathcal{S}}$ with values in the chiral sub superalgbra $\mathfrak{osp}(1|2)_{\mathbb{C}}$ generated by $(T_i^+,Q_A)$ and $(T_i^-,Q^{A'})$, respectively. In case of a vanishing cosmological constant, this gives the $D=2$ super Poincaré algebra.
\begin{remark}
The $D=2$ super Poincaré algebra has an equivalent description in terms of a direct sum super Lie algebra $\mathrm{sl}(2,\mathbb{C})\oplus\Pi\mathbb{C}^2$, where $\Pi\mathbb{C}^2$ is regarded as a purely odd super vector space. Given the fundamental representation $\rho:\,\mathfrak{sl}(2,\mathbb{C})\rightarrow\mathrm{End}(\mathbb{C}^2)$ of $\mathfrak{sl}(2,\mathbb{C})$, the graded commutation relations are given by
\begin{equation}
[(x,v),(x',v')]:=([x,x'],\rho(x)(v)-\rho(x')(v'))
\label{eq:}
\end{equation}
$\forall (x,v),(x',v')\in\mathrm{sl}(2,\mathbb{C})\oplus\Pi\mathbb{C}^2$. In the mathematical literature, such kind of superalgebras are usually called \emph{generalized Takiff Lie superalgebras} \cite{Greenstein:2012}.
\end{remark}
For the rest of this section, let us focus on the chiral case (and the case of non-vanishing cosmological constant), the considerations for $\mathcal{A}^-$ are in fact completely analogous. Let us consider the complexification $\mathcal{P}^{\mathbb{C}}$ of $\mathcal{P}$ defined as the associated super $\mathrm{Spin}^+(1,3)_{\mathbb{C}}$-bundle 
\begin{equation}
\mathcal{P}^{\mathbb{C}}:=\mathcal{P}\times_{\mathrm{Spin}^+(1,3)}\mathrm{Spin}^+(1,3)_{\mathbb{C}}
\label{eq:}
\end{equation}
via the obvious mapping $\mathrm{Spin}^+(1,3)\rightarrow\mathrm{Spin}^+(1,3)_{\mathbb{C}}$. Due to (\ref{eq:5.1.0}), this bundle can be reduced to a super $\mathrm{SL}(2,\mathbb{C})$-bundle $\mathcal{Q}\hookrightarrow\mathcal{P}^{\mathbb{C}}$. It then follows from the chiral nature of the self-dual Ashtekar connection $A^+$ as well as the chiral sub components $\psi^A$ of the Rarita-Schwinger field that $\mathcal{A}^+$ induces a well-defined 1-form
\begin{equation}
\mathcal{A}^+\in\Omega^1(\mathcal{Q}_{/\mathcal{S}},\mathfrak{osp}(1|2)_{\mathbb{C}})_0
\label{eq:}
\end{equation}
which, by construction, satisfies the properties (i) and (ii) of definition \ref{Def:4.2}. That is, $\mathcal{A}^+$ defines a \emph{generalized super Cartan connection} on the $\mathcal{S}$-relative $\mathrm{SL}(2,\mathbb{C})$-bundle $\mathcal{Q}_{/\mathcal{S}}$. This is precisely the \emph{super Ashtekar connection} as first introduced in \cite{Fulop:1993wi}. There, this connection arose by studying the constraint algebra of the canonical theory to be discussed in the following section. Here, we have derived it using the geometrical description of $\mathcal{N}=1$, $D=4$ supergravity in terms of super Cartan geometry and studying the chiral structure of the underlying supersymmetry algebra corresponding to the super Klein geometry. In particular, it follows that that it has the interpretation in terms of a generalized super Cartan connection on the $\mathcal{S}$-relative $\mathrm{SL}(2,\mathbb{C})$-bundle $\mathcal{Q}_{/\mathcal{S}}$. 

Since the chiral structure of the supersymmetry algebra is even preserved in case of extended supersymmetry leading to the orthosymplectic Lie superalgebra $\mathfrak{osp}(\mathcal{N}|2)_{\mathbb{C}}$, this suggests that it might be possible to generalize this construction to extended $D=4$ supergravity theories including further chiral components of the supermultiplet in the definition of the super Ashtekar connection. In fact, graded Ashtekar variables for extended SUGRA theories have been studied in \cite{Tsuda:2000er,Sano:1992jw} and in context of constrained super BF-theory in \cite{Ezawa:1995nj,Ling:2003yw}. Recently, in \cite{Eder:2021nyb}, these variables have been derived for pure $\mathcal{N}=2$, $D=4$ supergravity in the presence of boundaries in a purely geometric way following the Cartan geometric approach as studied in this present article.
\begin{remark}
Let us emphasize that, since the above construction relied crucially on the chiral description of the theory, this construction cannot be carried over to \emph{real Barbero-Immirzi parameters}! In fact, real $\beta$ requires the consideration of both chiral components of the Majorana fermions $Q^r_{\alpha}$. But, the anti commutator between $Q^r_A$ and $Q^{r A'}$ is proportional to $P_I$ which is related to the soldering form $\theta$ corresponding to the dual electric field (see next section). Hence, this does not lead to a proper sub super Lie algebra and the super Ashtekar connection cannot be defined.
\end{remark}

\subsection{The canonical theory and the constraint algebra}\label{section:5.2}
As seen in he previous section, the chiral structure of the underlying supersymmetry algebra of $\mathcal{N}=1$, $D=4$ AdS supergravity enabled us to introduce a graded generalization of Ashtekar's self-dual variables. So far, the discussion has been purely off-shell. As a next step, one thus needs to derive a chiral variant of the supergravity action (\ref{eq:Action}) to specify the dynamics. We will therefore follow \cite{Jacobson:1987cj}. For a purely geometric discussion revealing the underlying $\mathrm{OSp}(1|2)_{\mathbb{C}}$-gauge symmetry of the chiral theory in a manifest way, see \cite{Eder:2021nyb}. Adapting to our conventions, it follows that the action of chiral supergravity takes the form (absorbing $\kappa$ in the Rarita-Schwinger field)
\begin{align}
S(e,\mathcal{A}^+)=&\int_M{\frac{i}{\kappa} \tensor{\Sigma}{^{AB}}\wedge\tensor{F(A^+)}{_{AB}}-e_{AA'}\wedge\bar{\psi}^{A'}\wedge D^{(A^+)}\psi^A}\label{eq:5.2.1}\\
&-\frac{1}{2L}\tensor{\Sigma}{^{AB}}\wedge\psi_A\wedge\psi_B+\frac{1}{2L}\tensor{\Sigma}{^{A'B'}}\wedge\bar{\psi}_{A'}\wedge\bar{\psi}_{B'}+\frac{i}{4\kappa L^2}\tensor{\Sigma}{^{AB}}\wedge\tensor{\Sigma}{_{AB}}\nonumber
\end{align}
with $\bar{\psi}^{A'}$ is the complex conjugate of $\psi^A$ and $F(A^+)=\mathrm{d}A^{+}+A^{+}\wedge A^+$ denotes the (self-dual) curvature of $A^+$. Moreover, the soldering form $e\in\Omega^1_{hor}(\mathcal{P}_{/\mathcal{S}},\mathbb{R}^{1,3})_0$ induces a horizontal 2-form $\Sigma\in\Omega^2(\mathcal{P}_{/\mathcal{S}},\mathfrak{spin}^+(1,3))_0$ which, in spinor indices, takes the form $\Sigma^{AA'BB'}=e^{AA'}\wedge e^{BB'}$. Due to antisymmetry, it can be decomposed according to  
\begin{equation}
\Sigma^{AA'BB'}=\epsilon^{AB}\Sigma^{A'B'}+\epsilon^{A'B'}\Sigma^{AB}
\label{eq:5.2.2}
\end{equation}
with $\Sigma^{AB}$ and $\Sigma^{A'B'}$ the self-dual anti self-dual part of $\Sigma^{AA'BB'}$, respectively, given by 
\begin{equation}
\Sigma^{AB}:=\frac{1}{2}\epsilon_{A'B'}\Sigma^{AA'BB'}\quad\text{and}\quad\Sigma^{A'B'}:=\frac{1}{2}\epsilon_{AB}\Sigma^{AA'BB'}
\label{eq:5.2.3}
\end{equation}
As we see, action (\ref{eq:5.2.1}) only depends on the chiral degrees freedom $(A^+,\psi^A)$ contained in the super Ashtekar connection as well as the soldering form $e$. The remaining field components are fixed via reality conditions (see discussion below). Provided these reality conditions are satisfied, it follows that the imaginary part of the action (\ref{eq:5.2.1}) becomes a boundary term \cite{Jacobson:1987cj}. Thus, in this way, one reobtains the field equations of ordinary real $\mathcal{N}=1$ supergravity.\\
\\
The canonical analysis of chiral $\mathcal{N}=1$ supergravity has been investigated in \cite{Jacobson:1987cj} and \cite{Fulop:1993wi} and, for arbitrary (real) Barbero-Immirzi parameters in \cite{Sawaguchi:2001wi,Tsuda:1999bg,Konsti-SUGRA:2020} and even higher spacetime dimensions in \cite{Bodendorfer:2011pb,Bodendorfer:2011pc}. Here, we will follow \cite{Konsti-SUGRA:2020,Konsti-Kos:2020}, where, for $\beta=-i$, the action takes the form
\begin{equation}
S(e,\mathcal{A}^+)=\int_{\mathbb{R}}{\mathrm{d}t\int_{\Sigma}\mathrm{d}^3x\,\frac{i}{\kappa}E^a_i L_{\partial_t}A^{+ i}_a}-\pi^a_A L_{\partial_t}\psi_a^A-A_t^i G_i+N^a H_a+N H-S^L_A\psi_t^A-\bar{\psi}_{t A'}S^{R A'}
\label{eq:5.2.7}
\end{equation} 
with $\pi_A^a$ the canonically conjugate momentum of $\psi^A_a$ which is related to the corresponding complex conjugate $\bar{\psi}_a^{A'}$ via the \emph{reality condition}
\begin{align}
\pi^a_A&=\tensor{\epsilon}{^{abc}}\bar{\psi}_{b}^{A'}e_{c AA'}
\label{eq:5.2.8}
\end{align} 
Furthermore, $E^a_i:=\sqrt{q}e^a_i$ is the usual electric field conjugate to $A^{+i}_a$. The canonical pairs $(A^{+i}_a,E^a_i)$ and $(\psi^A_a,\pi_A^a)$ build up a graded symplectic phase space with the nonvanishing Poisson brackets
\begin{equation}
\{E_a^i(x),A_b^{+ j}(y)\}=i\kappa\delta^{j}_i\delta^{(3)}(x,y)\quad\text{and}\quad\{\pi^a_A(x),\psi^B_b(y)\}=-\delta^{a}_b\delta_{A}^B\delta^{(3)}(x,y)
\label{eq:5.2.9}
\end{equation}
In the decomposition (\ref{eq:5.2.7}), $S^L_{A}$ and $S^{R A'}$ denote the so-called \emph{left} and \emph{right supersymmetry constraints} taking the form
\begin{equation}
S^L_A=D_a^{(A^+)}\pi^a_A+\frac{2}{L}E^{ai}\psi_a^B(\epsilon\tau_i)_{BA}
\label{eq:5.2.10}
\end{equation}
and 
\begin{align}
S^{R A'}=-\epsilon^{ijk}\frac{E^b_jE^c_k}{2\sqrt{q}}\sigma_i^{AA'}\left(2\epsilon_{AB}D_{[b}^{(A^+)}\psi_{c]}^B+\frac{1}{2L}\pi^a_A\epsilon_{abc}\right)
\label{eq:5.2.12}
\end{align}
respectively. Finally, the Gauss and vector constraint of the theory are given by
\begin{equation}
G_i=\frac{i}{\kappa}D^{(A^+)}_aE^a_i-\pi^a_A\tensor{(\tau_i)}{^A_B}\psi_a^B
\label{eq:5.2.13}
\end{equation}
and 
\begin{align}
H_a:=\frac{i}{\kappa}E_i^{b}F(A^+)^i_{ab}-\tensor{\epsilon}{^{bcd}}e_{a AA'}\bar{\psi}_{b}^{A'}D^{(A^+)}_{c}\psi_{d}^A
\label{eq:5.2.14}
\end{align}
respectively. The Hamiltonian constraint takes the form 
\begin{align}
H=&-\frac{E_i^aE_j^b}{2\kappa\sqrt{q}}\tensor{\epsilon}{^{ij}_k}F(A^+)^k_{ab}-\epsilon^{abc}\bar{\psi}_a^{A'}n_{AA'}D_b^{(A^+)}\psi_c^A\label{eq:5.2.15}\\
&+\frac{E^a_iE^b_j}{2L\sqrt{q}}\tensor{\mathring{\epsilon}}{^{ijk}}(\psi_{a A}n_{BA'}\sigma_k^{AA'}\psi_b^B-\bar{\psi}_a^{A'}n_{AA'}\sigma_k^{AB'}\bar{\psi}_{b B'})+\frac{3}{\kappa L^2}\sqrt{q}\nonumber
\end{align}
where $n^{AA'}$ is the spinor corresponding to the unit normal vector field $n^{\mu}$ orthogonal to the time slices $\Sigma_t$ in the 3+1-decomposition. As for the super Ashtekar connection, the canonically conjugate momenta $\pi$ and $\tilde{E}:=\frac{i}{\kappa}E$ can be combined to a super electric field 
\begin{equation}
\mathcal{E}:=\tensor[^i]{T}{^+}\otimes\tilde{E}_i+\tensor[^A]{Q}{}\otimes\pi_A
\label{eq:5.2.16}
\end{equation}
conjugate to $\mathcal{A}^{+}$ where $(\tensor[^i]{T}{^+},\tensor[^A]{Q}{})$ is a left-dual basis of $\tensor[^*]{\mathrm{Lie}(\mathcal{G})}{}$ with $\mathrm{Lie}(\mathcal{G})=\mathfrak{g}\otimes\Lambda$ the super Lie module of $\mathcal{G}:=\mathrm{OSp}(1|2)_{\mathbb{C}}$. Let us consider the left-adjoint representation 
\begin{equation}
\mathrm{ad}_L:\,\mathrm{Lie}(\mathcal{G})\rightarrow\underline{\mathrm{End}}_L(\mathrm{Lie}(\mathcal{G})):=\underline{\mathrm{Hom}}_L(\mathrm{Lie}(\mathcal{G}),\mathrm{Lie}(\mathcal{G})):\,X\mapsto[\,\cdot\,,X]
\label{eq:}
\end{equation}
Its left-dual of then defines a right-linear map $\tensor[^*]{\mathrm{ad}}{_L}:\,\mathrm{Lie}(\mathcal{G})\mapsto\underline{\mathrm{End}}_R(\tensor[^*]{\mathrm{Lie}(\mathcal{G})}{})$ on $\tensor[^*]{\mathrm{Lie}(\mathcal{G})}{}$ given by
\begin{equation}
\tensor[^*]{\mathrm{ad}}{_L}(X)(\ell):\,Y\mapsto\braket{Y|\tensor[^*]{\mathrm{ad}}{_L}(X)\diamond\ell}=\braket{[Y,X]|\ell}
\label{eq:}
\end{equation}
$\forall\ell\in\tensor[^*]{\mathrm{Lie}(\mathcal{G})}{}$. In fact, $\tensor[^*]{\mathrm{ad}}{_L}$ defines a representation of $\mathrm{Lie}(\mathcal{G})$ on $\tensor[^*]{\mathrm{Lie}(\mathcal{G})}{}$ since
\begin{align}
\braket{Z|[\tensor[^*]{\mathrm{ad}}{_L}(X),\tensor[^*]{\mathrm{ad}}{_L}(Y)]\diamond\ell}&=\braket{[[Z,X],Y]-(-1)^{|X||Y|}[[Z,Y],X]|\ell}\nonumber\\
&=\braket{[[Z,X],Y]+(-1)^{|X||Z|}[X,[Z,Y]]|\ell}\nonumber\\
&=\braket{[Z,[X,Y]]|\ell}=\braket{Z|\tensor[^*]{\mathrm{ad}}{_L}([X,Y])\diamond\ell}
\end{align}
for any $X,Y,Z\in\mathrm{Lie}(\mathcal{G})$ and $\ell\in\tensor[^*]{\mathrm{Lie}(\mathcal{G})}{}$, where, from the first to the second line, we used graded skew-symmetry of the Lie bracket and, from the second to the last line, the graded Jacobi identity (\ref{eq:Jacobi}). The covariant derivative of $\mathcal{E}$ w.r.t. the super Ashtekar connection $\mathcal{A}^+$ in the left co-adjoint representation then takes the form 
\begin{equation}
D^{(\mathcal{A}^+)}\mathcal{E}_{\alpha}=\mathrm{d}\mathcal{E}+\tensor[^*]{\mathrm{ad}}{_L}(\mathcal{A}^+)\wedge\mathcal{E}=\mathrm{d}\mathcal{E}_{\alpha}+(-1)^{|\beta|(|\beta|+|\gamma|)}\tensor{f}{_{\alpha\beta}^{\gamma}}\mathcal{A}^{+\beta}\wedge\mathcal{E}_{\gamma}
\label{eq:}
\end{equation}
where $\alpha=(i,A)$ and $\tensor{f}{_{\alpha\beta}^{\gamma}}$ denote the structure coefficients of the $\mathfrak{osp}(1|2)_{\mathbb{C}}$ Lie superalgebra defined via (\ref{eq:5.1.9.1})-(\ref{eq:5.1.9.4}) for $\mathcal{N}=1$. If one combines the Gauss constraint with the left supersymmetry constraint to the \emph{super Gauss constraint} $\mathscr{G}_{\alpha}:=(G_i,S^L_A)$ it then follows immediately from (\ref{eq:5.2.10}) and (\ref{eq:5.2.13}) as well as (\ref{eq:5.1.9.1})-(\ref{eq:5.1.9.4}) that these match exactly together to give the super Gauss law\footnote{Note that one has to replace $L$ by $L/\kappa$ in (\ref{eq:5.1.9.3}) as $\kappa$ has been absorbed in the Rarita-Schwinger field.}
\begin{equation}
\mathscr{G}=D^{(\mathcal{A}^+)}_a\mathcal{E}^a
\label{eq:}
\end{equation}
This was precisely the observation of \cite{Fulop:1993wi}. Smearing the super Gauss constraint with test functions $\Lambda^{\alpha}:=(\Lambda^i,\Lambda^A)$, it follows from the graded Poisson relations (\ref{eq:5.2.9}) that
\begin{equation}
\{\mathscr{G}[\Lambda],\mathscr{G}[\Lambda']\}=\mathscr{G}([\Lambda,\Lambda'])
\label{eq:}
\end{equation}
Hence, the super Gauss constraint indeed generates local $\mathrm{OSp}(1|2)_{\mathbb{C}}$-gauge transformations as was to be expected form our observations in the previous section due to the interpretation of $\mathcal{A}^+$ in terms of a generalized super Cartan connection. This was the starting point in \cite{Ling:1999gn,Livine:2003hn} for quantizing this theory studying super holonomies corresponding to $\mathcal{A}^+$ which we have constructed in mathematical rigorous way in section \ref{Holonomies}. The construction of the state space of the quantum theory will be considered in the following section.
\begin{remark}
Actually, in context of Yang-Mills theory, we need principal connections. Therefore, as in the previous section, using Prop. \ref{prop:4.6}, we simply lift $\mathcal{A}^+$ to the associated $\mathcal{S}$-relative $\mathrm{OSp}(1|2)_{\mathbb{C}}$-bundle $\mathcal{Q}_{/\mathcal{S}}\times_{\mathrm{SL}(2,\mathbb{C})}\mathrm{OSp}(1|2)_{\mathbb{C}}$. With respect to this connection, we can define holonomies and and construct the state space in the manifest approach to loop quantum supergravity (LQSG).
\end{remark}

Let us finally comment on the reality conditions enforced on the self-dual Ashtekar connection in order to recover ordinary real supegravity. As discussed in detail in \cite{Konsti-Kos:2020,Jacobson:1987cj}, in the canonical theory, it follows that the reality conditions are equivalent to the requirement that the 3D spin connection part $\Gamma^i:=-\tensor{\epsilon}{^i_{jk}}\omega^{jk}$ of $A^+$ satisfies the torsion equation
\begin{equation}
D^{(\Gamma)}e^i\equiv\mathrm{d}e^i+\tensor{\epsilon}{^{i}_{jk}}\Gamma^j\wedge e^k=\Theta^{(\Gamma) i}=\frac{i\kappa}{2}\psi^A\wedge\bar{\psi}^{A'}\sigma^i_{AA'}
\label{eq:2.19}
\end{equation} 
This equation has the unique solution 
\begin{equation}
\Gamma^i\equiv\Gamma^i(e)+C^i(e,\psi,\bar{\psi})
\label{eq:2.20}
\end{equation}
with $\Gamma^i(e)$ the torsion free metric connection
\begin{equation}
\Gamma^i_a(e)=-\epsilon^{ijk}e_j^b\left(\partial_{[a}e_{b]k}+\frac{1}{2}e^c_k e^l_a\partial_{[c}e_{b]l}\right)
\label{eq:2.21}
\end{equation}
and $C^i$ the \emph{contorsion tensor} given by
\begin{equation}
C^i_a=\frac{i\kappa}{4|e|}\epsilon^{bcd}e_d^i\left(2\psi^A_{[a}\bar{\psi}_{b]}^{A'}e_{c AA'}-\psi^A_{b}\bar{\psi}_{c}^{A'}e_{a AA'}\right)
\label{eq:2.22}
\end{equation}
Thus, the reality conditions for the bosonic degrees of freedom are given by
\begin{equation}
A^{+ i}_a+(A^{+ i}_a)^*=2\Gamma^i_a(e)+2C^i_a(e,\psi,\bar{\psi}),\quad E^a_i=\Re(E^a_i)
\label{eq:2.23}
\end{equation}
These ensure that, provided the initial conditions satisfy (\ref{eq:2.23}), the dynamical evolution remains in the real sector of the complex phase space, i.e., the phase space of ordinary real $\mathcal{N}=1$ supergravity.
\subsection{Invariant Haar measure on $\mathrm{OSp}(1|2)_{\mathbb{C}}$ and the state space of LQSG}\label{OSp}
In order to construct the (kinematical) Hilbert space of LQSG and to implement local $\mathrm{OSp}(1|2)_{\mathbb{C}}$-gauge transformations, we want to derive the super analog of a Haar measure on the complex super Lie group $\mathrm{OSp}(1|2)_{\mathbb{C}}$. In the literature, there exist various results in this direction in both the algebraic or concrete approach to supermanifold theory. In \cite{Coulembier:2012} for instance, invariant measures where constructed in the algebraic setting for various real super Lie groups including the super unitary groups $\mathrm{U}(m|n)$ as well the real orthosymplectic groups $\mathrm{OSp}(m|n)$. There, one uses the equivalent description of super Lie groups in terms of \emph{super Harish Chandra pairs}. In fact, it turns out that super Lie groups $\mathcal{G}$ have a relatively simple structure that they are completely determined by the respective body $\mathbf{B}(\mathcal{G})$ and the super Lie algebra $\mathfrak{g}$. However, in the algebraic setting, this correspondence remains rather implicit. \\
In the $H^{\infty}$ category (or more generally for $\mathcal{A}$-manifolds), in \cite{Tuynman:2004}, a concrete algorithm was given constructing invariant Haar measures for arbitrary (real) super Lie groups. This is based on the existence of a concrete relation between a super Lie group $\mathcal{G}$ and the data $(\mathbf{B}(\mathcal{G}),\mathfrak{g})$. More precisely, one has the following
\begin{theorem}[Super Harish-Chandra pair (after \cite{Tuynman:2004,Tuynman:2018})]
Let $\mathcal{G}$ be a $H^{\infty}$ super Lie group with body $G:=\mathbf{B}(\mathcal{G})$. Then, $\mathcal{G}$ is globally split, that is, it is diffeomorphic to split supermanifold $\mathbf{S}(\mathfrak{g}_1,G)$ associated to the trivial vector bundle $G\times\mathfrak{g}_1\rightarrow G$ via the canonical mapping
\begin{align}
\Phi:\,\mathbf{S}(\mathfrak{g}_1,G)&\rightarrow\mathcal{G}\label{eq:5.3.1}\\
(g,X)&\mapsto g\cdot\exp(X)\nonumber
\end{align}
where $\mathbf{S}(\mathfrak{g}_1,G)\cong\mathbf{S}(G)\times(\mathfrak{g}_1\otimes\Lambda)_0$. In particular, there exists a unique super Lie group structure on $\mathbf{S}(\mathfrak{g}_1,G)$ such that (\ref{eq:5.3.1}) turns into a morphism of super Lie groups. Hence, any $H^{\infty}$ super Lie group is uniquely determined via (\ref{eq:5.3.1}) by the data $(G,\mathfrak{g})$ called a \emph{super Harish Chandra pair} consisting of its body $G$ as well as the super Lie algebra $\mathfrak{g}=\mathfrak{g}_0\oplus\mathfrak{g}_1$. 
\end{theorem}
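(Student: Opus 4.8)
The plan is to prove the Super Harish-Chandra pair theorem by constructing the map $\Phi$ explicitly and verifying it is a diffeomorphism, then transporting the group structure back along it. First I would recall that, by the globally split structure of $H^{\infty}$ super Lie groups, the underlying supermanifold $\mathcal{G}$ is modeled over a Grassmann algebra $\Lambda$ with sufficiently many odd generators, and that $\mathbf{B}(\mathcal{G})=G$ is an ordinary Lie group whose Lie algebra is $\mathfrak{g}_0$. The exponential map $\exp:\,\mathrm{Lie}(\mathcal{G})=\mathfrak{g}\otimes\Lambda\to\mathcal{G}$ is defined as the time-one flow of left-invariant vector fields and is $H^{\infty}$-smooth by the standard theory of flows on $H^{\infty}$ supermanifolds. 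Restricting $\exp$ to the odd subspace $(\mathfrak{g}_1\otimes\Lambda)_0\cong\mathfrak{g}_1\otimes\Lambda_1$ and combining with right translation by $g\in G$ gives the candidate map $\Phi(g,X)=g\cdot\exp(X)$.

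The key step is to show $\Phi$ is a diffeomorphism. I would verify this by computing its differential at an arbitrary point and showing it is an isomorphism of the relevant free super $\Lambda$-modules, then invoking the inverse function theorem in the $H^{\infty}$ category. At the body level, $\Phi$ restricts to the identity on $G$ (since $\exp(0)=e$), and along the odd fibers the differential of $\exp$ at the origin is the identity on $\mathfrak{g}_1\otimes\Lambda_1$; the mixed even-odd block is controlled because the only nilpotent contributions come from the soul, and one checks the Jacobian is everywhere invertible using that $\exp$ is a local diffeomorphism near $e$ and that left/right translations are global diffeomorphisms. Because $\mathbf{S}(\mathfrak{g}_1,G)\cong\mathbf{S}(G)\times(\mathfrak{g}_1\otimes\Lambda)_0$ and $\mathcal{G}$ has the same body $G$ and the same odd dimension $\dim\mathfrak{g}_1$, a dimension count confirms $\Phi$ is a bijection on points, and the smooth-inverse claim follows from invertibility of the differential together with properness arguments restricting to compact subsets of $G$.

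Having established that $\Phi$ is a diffeomorphism, I would transport the group multiplication: define $\mu_{\mathbf{S}}:=\Phi^{-1}\circ\mu_{\mathcal{G}}\circ(\Phi\times\Phi)$ on $\mathbf{S}(\mathfrak{g}_1,G)$, which is automatically $H^{\infty}$-smooth since it is a composition of smooth maps, and which makes $\Phi$ a morphism of super Lie groups by construction. Associativity, existence of inverses, and the identity all descend from the corresponding axioms on $\mathcal{G}$ via the bijection. Uniqueness of this super Lie group structure follows because any other structure making $\Phi$ a morphism must satisfy the same defining relation $\mu_{\mathbf{S}}=\Phi^{-1}\circ\mu_{\mathcal{G}}\circ(\Phi\times\Phi)$, which determines $\mu_{\mathbf{S}}$ uniquely. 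Finally, the reconstruction statement—that $\mathcal{G}$ is determined by the pair $(G,\mathfrak{g})$—follows since the transported multiplication $\mu_{\mathbf{S}}$ can be written intrinsically in terms of the Baker-Campbell-Hausdorff formula on $\mathfrak{g}\otimes\Lambda$ together with the adjoint action of $G$ on $\mathfrak{g}_1$, both of which are encoded in the Harish-Chandra pair data.

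The main obstacle I anticipate is the careful handling of smoothness under partial evaluation, flagged already in remark \ref{remark:2.19}: the map $\Phi$ involves right translation $R_{\exp(X)}$ by a group element whose odd part is generically \emph{not} in the body $\mathbf{B}(\mathcal{G})$, so the naive composition need not be $H^{\infty}$-smooth in the strict sense. Resolving this requires working with the \emph{generalized} tangent maps of \cite{Tuynman:2004} and verifying that the full map $\Phi$, viewed as a morphism of the \emph{product} supermanifold $\mathbf{S}(G)\times(\mathfrak{g}_1\otimes\Lambda)_0$, is genuinely smooth even though its partial evaluations are not. This is precisely where the $H^{\infty}$ framework of Tuynman is essential, and I would lean on the cited results \cite{Tuynman:2004,Tuynman:2018} to supply the technical smoothness lemmas rather than reproving them, since the substance of the theorem lies in the explicit construction of $\Phi$ and its inversion rather than in these foundational regularity statements.
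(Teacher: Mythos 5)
The paper does not actually prove this theorem: it is imported wholesale from \cite{Tuynman:2004,Tuynman:2018}, so there is no in-paper argument to compare against. Your outline reconstructs the standard proof from those references --- embed the bosonic sub-supergroup, exponentiate the odd part, check the differential of $\Phi$ is everywhere invertible (using that $\mathrm{ad}_X$ is nilpotent for $X\in(\mathfrak{g}_1\otimes\Lambda)_0$, so the derivative of $\exp$ is a unipotent perturbation of the identity), and transport the multiplication back, with the Baker--Campbell--Hausdorff formula exhibiting the transported product purely in terms of $(G,\mathfrak{g})$ --- and this architecture is sound; it is also exactly the machinery the paper subsequently uses in \S 5.3 to compute the Haar measure. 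Three small imprecisions are worth flagging. First, your opening sentence appeals to ``the globally split structure of $H^{\infty}$ super Lie groups,'' which is the statement being proved; what you actually need there is only the definitional fact that $H^{\infty}$ supermanifolds are modeled over a Grassmann algebra, and nothing downstream uses splitness as a hypothesis, so this is a misattribution rather than a circularity. Second, for $\Phi$ to be surjective the first argument must range over the $\mathbf{G}$-extension $\mathbf{S}(G)$ (even nilpotent directions included), not merely over body points $g\in G$ as you write. Third, the passage from ``invertible differential everywhere plus bijective on bodies'' to ``global diffeomorphism'' is not a properness argument: in the DeWitt topology the fibres of $\mathbf{B}$ are nilpotent/affine, so invertibility of the linear part inverts the whole fibre map, and this is the standard $H^{\infty}$ inverse function argument one should invoke. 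None of these affects the overall correctness of the strategy.
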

As it has been shown in \cite{Schuett:2018} in the Molotkov-Sachse approach, such a correspondence via (\ref{eq:5.3.1}) even holds in case of infinite dimensional Fréchet super Lie groups.\\
A $\mathbb{C}$-linear map $\int_{\mathcal{G}}:\,H_c^{\infty}(\mathcal{G},\mathbb{C})\rightarrow\mathbb{C}$ on $\mathcal{G}$ is called \emph{left-invariant integral} on $\mathcal{G}$ if \cite{Konsti-SPW:2020}
		\begin{equation}
	\mathds{1}\otimes\int_{\mathcal{G}}\circ\,\Theta_L^*=\mathds{1}\otimes\int_{\mathcal{G}}
	\label{eq:5.3.2}
	\end{equation}
where $\Theta_L:\,\mathcal{G}\times\mathcal{G}\rightarrow\mathcal{G}\times\mathcal{G}$ is defined as $\Theta_L(g,h)=(g,\mu_{\mathcal{G}}(g,h))$. Similarly one can define \emph{right invariant integrals} setting $\Theta_R:\,\mathcal{G}\times\mathcal{G}\rightarrow\mathcal{G}\times\mathcal{G},\,(g,h)\mapsto(\mu_{\mathcal{G}}(g,h),h)$. The reason for choosing $\Theta_L$ instead of just the group multiplication on $\mathcal{G}$ as usually done in the literature is that $\Theta_L$ is a \emph{proper map}, i.e., the preimage of compact sets in $\mathcal{G}$ is compact in $\mathcal{G}\times\mathcal{G}$ which necessary for condition (\ref{eq:5.3.2}) to be well-defined (this is true for the group multiplication only in case $\mathcal{G}$ is compact).\\
Integrals on supermanifolds can be formulated in terms of \emph{Berezinian densities} (see \cite{Leites:1980,Rothstein:1987,Varadarajan:2004} for more details). A Berezinian density on a supermanfold $\mathcal{M}$ is defined as a smooth section $\Gamma_c(\mathrm{Ber}(\mathcal{M}))$ with compact support of the \emph{Berezin line bundle}\footnote{in case of an ordinary $C^{\infty}$ manifold $M$ of dimension $n$, these can be identified with sections of the exterior bundle $\Exterior^nT^*M$, i.e., top-degree forms $\Omega^n(M)$ on $M$. This, however, is no longer true in case of supermanifolds as there a straightforward notion of a top-degree form turns out not to exist. For an alternative approach towards top-degree forms on supermanifolds using the concept of so-called \emph{integral forms} see e.g. \cite{Castellani:2014goa,Cremonini:2019aao,Catenacci:2018xsv}} $\mathrm{Ber}(\mathcal{M}):=\mathscr{F}(\mathcal{M})\times_{\mathrm{Ber}}\Lambda^{\mathbb{C}}$ which is a bundle associated to the frame bundle $\mathscr{F}(\mathcal{M})$ via the one-dimensional dual representation $\mathrm{GL}(m|n,\Lambda)\ni A\mapsto\mathrm{Ber}(A)^{-1}\in\mathrm{Aut}(\Lambda^{\mathbb{C}})$ where $\mathrm{dim}\,\mathcal{M}=(m,n)$.\\
For a super Lie group $\mathcal{G}$, a Berezinian density $\nu\in\Gamma_c(\mathrm{Ber}(\mathcal{G}))$ then induces a left-invariant integral iff its trivial extension $\hat{\nu}$ on $\mathcal{G}\times\mathcal{G}$ satisfies
\begin{equation}
\Theta_L^*\hat{\nu}=\hat{\nu}
\label{eq:5.3.3}
\end{equation}
Such a density will be called a \emph{left-invariant Haar measure} on $\mathcal{G}$. To construct such an invariant measure note that, for any super Lie group $\mathcal{G}$, the tangent bundle $T\mathcal{G}$ is always trivializable with a global frame $\mathfrak{p}\in\Gamma(\mathscr{F}(\mathcal{G}))$ induced by a homogeneous basis $(e_i,f_j)$ of left invariant vector fields $e_i,f_j\in\mathfrak{g}$ on $\mathcal{G}$, $i=1,\ldots,m$ and $j=1,\ldots,n$ with $\mathrm{dim}\,\mathcal{G}=(m,n)$. In particular, this yields a global section $\nu_{\mathfrak{g}}:=[\mathfrak{p},1]\in\Gamma(\mathrm{Ber}(\mathcal{G}))$ of the associated Berezin line bundle which, by construction, automatically defines a left-invariant Haar measure. With respect to local coordinates $(x,\xi)$, this density is of the form 
\begin{equation}
\nu_{\mathfrak{g}}=[\mathfrak{p}=(\partial_{x^i},\partial_{\xi_j})\cdot X,1]=[(\partial_{x^i},\partial_{\xi_j}),\mathrm{Ber}(X)^{-1}]
\label{eq:5.3.4}
\end{equation}
where $X$ denotes the matrix representation of the left-invariant vector fields w.r.t. the induced coordinate derivatives. To find an explicit expression for $X$, one can then use the equivalent description of $\mathcal{G}$ in terms of the corresponding super Harish-Chandra pair $(G,\mathfrak{g})$ via identification (\ref{eq:5.3.1}). This requires an intense use of the Baker-Campbell-Hausdorff formula and thus involves various powers of the (right) adjoint representation $\mathrm{ad}_R:\,\mathrm{Lie}(\mathcal{G})\rightarrow\mathrm{End}_R(\mathrm{Lie}(\mathcal{G})),\,X\mapsto[X,\cdot]$. As shown in \cite{Tuynman:2018}, the matrix representation then takes the form
\begin{equation}
X(x,\xi)=\begin{pmatrix}
	C(x) & C(x)\cdot H(\xi)\\
	A(\xi) & B(\xi)
\end{pmatrix}
\label{eq:5.3.5}
\end{equation} 
where $C(x)$ as well as $H(\xi)$, $A(\xi)$ and $B(\xi)$ are submatrices depending purely on even and odd coordinates, respectively, and which are defined via
\begin{align}
&\mathrm{ad}_R(v)(e_i)=:f_j\tensor{A(\xi)}{^j_i},\quad b_{+}(\mathrm{ad}_{R}(v))f_j=:f_k\tensor{B(\xi)}{^k_j}\nonumber\\
\text{and}\quad &h(\mathrm{ad}_{R}(v))f_j=e_i\tensor{H(\xi)}{^i_j}
\label{eq:5.3.6}
\end{align} 
with $v:=f_j\xi^j\in(\mathfrak{g}_1\otimes\Lambda)_0$ and real functions
\begin{equation}
b_{+}(t):=\frac{t\cosh(t)}{\sinh(t)}=1+\frac{1}{3}t^2-\frac{1}{45}t^4+\ldots,\quad h(t):=\frac{e^t-1}{e^t+1}=\frac{1}{2}t-\frac{1}{24}t^3+\ldots
\label{eq:5.3.6.1}
\end{equation}
Moreover, $C(x)$ is determined via the matrix representation of the even left invariant vector fields $e_i$ via
\begin{equation}
e_i|_{(g,v)}=\partial_{x_j}\tensor{C}{^j_i}(x)+\partial_{\xi_k}\tensor{A}{^k_i}(\xi)
\label{eq:5.3.7}
\end{equation}
In particular, when restricting on the body, $C$ can be identified with the matrix representation of the left-invariant vector fields on $G$. The left invariant integral on $\mathbf{S}(\mathfrak{g}_1,G)$ for smooth functions $f\in H^{\infty}_c(\mathbf{S}(\mathfrak{g}_1,G))\cong C_c^{\infty}(G)\otimes\Exterior\mathfrak{g}_1^{*}$ then takes the form 
\begin{align}
\int_{\mathcal{G}}{f\nu_{\mathfrak{g}}}&=\int{\mathrm{d}^nx\int_B{\mathrm{d}^n\xi\,f(x,\xi)\mathrm{Ber}(X)^{-1}(x,\xi)}}\nonumber\\
&=\int{\mathrm{d}^nx\,C(x)^{-1}\int_B{\mathrm{d}^n\xi\,\frac{\det B(\xi)}{\det(\mathds{1}-H(\xi)B(\xi)^{-1}A(\xi))}f(x,\xi)}}\nonumber\\
&=:\int_{G}{\mathrm{d}\mu_H(g)\int_B{\mathrm{d}^n\xi\,\Delta(\xi)f(g,\xi)}}
\label{eq:5.3.8}
\end{align}
where $\mu_H$ is the induced left-invariant Haar measure on $G$ and $\int_B$ denotes the usual \emph{Berezin integral} on $\Exterior\mathfrak{g}_1^{*}$. Hence, the derivation of the invariant integral on $\mathcal{G}$ boils down to the choice of an invariant Haar measure on the body $G$ as well as the derivation of the density $\Delta(\xi)$ in the Berezin integral which, according to (\ref{eq:5.3.6}) and (\ref{eq:5.3.6.1}), only involves the computation of the matrix representation of the adjoint representation on $\mathfrak{g}$.\\
\\
Let us apply this algorithm to compute the invariant Haar measure on the complex orthosymplectic group $\mathrm{OSp}(1|2)_{\mathbb{C}}$ (in case of the real orthosymplectic group, this has been done explcitly already in \cite{Tuynman:2018}). Therefore, let us introduce a homogeneous basis $(e_1,e_2,e_3,f_1,f_2)$ on $\mathfrak{osp}(1|2)_{\mathbb{C}}$ defining  
\begin{alignat}{3}
&e_1:=2iT_{3}^{+},\quad && e_{2}:=i(T_{1}^{+}+i T_{2}^{+}),\quad&& e_{3}:=i(T_{1}^{+}-i T_{2}^{+})\\
&f_1=\frac{1}{\sqrt{2}}(i-1)Q_{+},\quad && f_2:=\frac{1}{\sqrt{2}}(1-i)Q_{-}
\label{eq:5.3.9}
\end{alignat}
Setting $L=\frac{1}{2}$, it then follows from (\ref{eq:5.1.9.1})-(\ref{eq:5.1.9.4}) that the commutators among the even generators of $\mathfrak{osp}(1|2)_{\mathbb{C}}$ satisfy
\begin{equation}
[e_1,e_2]=2e_2,\quad [e_1,e_3]=-2e_3,\quad [e_2,e_3]=e_1
\label{eq:}
\end{equation}
which are the standard commutation relations of $\mathfrak{sl}(2,\mathbb{C})$. For the mixed commutators between even and odd generators, we find
\begin{alignat}{3}
&[e_1,f_1]=f_1,\quad && [e_2,f_1]=0,\quad && [e_3,f_1]=-f_2\\
&[e_1,f_2]=-f_2,\quad && [e_2,f_2]=-f_1,\quad && [e_3,f_2]=0
  \end{alignat}
and, finally, the anti commutators between odd generators yield 
\begin{align}
[f_1,f_1]=-2e_2,\quad [f_2,f_2]=2e_3,\quad [f_1,f_2]=-e_1
\label{eq:}
\end{align}
These are precisely the graded commutation relations of \emph{real} $\mathrm{OSp}(1|2)$. Similar as in \cite{Tuynman:2018}, using (\ref{eq:5.3.1}), we can then identify $\mathrm{OSp}(1|2)_{\mathbb{C}}$ with the split super Lie group $\mathbf{S}(\mathfrak{osp}(1|2)_{1},\mathrm{SL}(2,\mathbb{C}))$ according to
\begin{equation}
\Phi(g,v)=g\exp(v)=g\cdot\begin{pmatrix}
	1-\xi\eta & \eta & \xi\\
	\xi & 1+\frac{1}{2}\xi\eta & 0\\
	-\eta & 0 & 1+\frac{1}{2}\xi\eta
\end{pmatrix}
\label{eq:5.3.10}
\end{equation}
for $g\in\mathrm{SL}(2,\mathbb{C})$ and $v:=\xi f_1+\eta f_2\in(\mathfrak{osp}(1|2)_{1}\otimes\Lambda^{\mathbb{C}})_0$. Using the commutation relations above, it follows immediately that the matrix representation of $\mathrm{ad}(v)$ is given by 
\begin{equation}
\mathrm{ad}(v)=\begin{pmatrix}
	0 & 0 & 0 & -\eta & \xi\\
	0 & 0 & 0 & -2\xi & 0\\
	0 & 0 & 0 & 0 & 2\eta\\
	\xi & -\eta & 0 & 0 & 0\\
	-\eta & 0 & -\xi & 0 & 0
\end{pmatrix}
\label{eq:5.3.10}
\end{equation}
Hence, it follows from (\ref{eq:5.3.6}) as well as (\ref{eq:5.3.7}), using $\mathrm{ad}(v)^n=0$ for $n\geq 3$, \cite{Tuynman:2018}
\begin{equation}
\begin{pmatrix}
	\mathds{1} & H\\
	A & B
\end{pmatrix}=\begin{pmatrix}
	1 & 0 & 0 & -\frac{1}{2}\eta & -\frac{1}{2}\xi\\
	0 & 1 & 0 & \xi & 0\\
	0 & 0 & 1 & 0 & \eta\\
	\xi & -\eta & 0 & 1-\xi\eta & 0\\
	-\eta & 0 & -\xi & 0 & 1-\xi\eta
\end{pmatrix}
\label{eq:5.3.11}
\end{equation}
Actually, for the derivation of (\ref{eq:5.3.8}), it has been implicitly assumed that the super Lie group is real. Hence, we need to view $\mathrm{OSp}(1|2)_{\mathbb{C}}$ as a real super Lie group. A homogeneous basis of the realification of $\mathfrak{g}:=\mathfrak{osp}(1|2)_{\mathbb{C}}$ (resp. $\mathrm{Lie}(\mathcal{G}):=\mathfrak{g}\otimes\Lambda^{\mathbb{C}}$) is then given by $(e_i,ie_i,f_j,if_j)$. Let $\mathcal{R}:\,\underline{\mathrm{End}}_R(\mathrm{Lie}(\mathcal{G}))\rightarrow\underline{\mathrm{End}}_R(\mathrm{Lie}(\mathcal{G})_{\mathbb{R}})$ be the morphism which identifies any $X\in\underline{\mathrm{End}}_R(\mathrm{Lie}(\mathcal{G}))$ with the corresponding real endomorphism $\mathcal{R}(A)$ on the realification $\mathrm{Lie}(\mathcal{G})_{\mathbb{R}}$. For the density $\Delta\equiv\Delta(\xi,\bar{\xi},\eta,\bar{\eta})$ in the Berezin integral we then compute
\begin{equation}
\Delta=\frac{\det(\mathcal{R}(B))}{\det(\mathcal{R}(\mathds{1}-H\cdot B^{-1}\cdot A))}=(1+\xi\eta)(1+\bar{\xi}\bar{\eta})
\label{eq:5.3.12}
\end{equation}
Hence, going back to the original basis $(T_i^+,Q_A)$ of the super Lie algebra $\mathfrak{osp}(1|2)_{\mathbb{C}}$, we find the invariant integral for any smooth function $f\in V:=C_c^{\infty}(\mathrm{SL}(2,\mathbb{C}),\mathbb{C})\otimes\Exterior[\theta^A,\bar{\theta}^{A'}]$ is given by
\begin{equation}
\int_{\mathrm{OSp}(1|2)_{\mathbb{C}}}{f\nu_{\mathfrak{g}}}=\int_{\mathrm{SL}(2,\mathbb{C})}{\mathrm{d}\mu_H(g)\int_B{\mathrm{d}\mu(\theta,\bar{\theta})\,f(g,\theta,\bar{\theta})}}
\label{eq:5.3.13}
\end{equation}
with $\mathrm{d}\mu_H$ an invariant Haar measure on $\mathrm{SL}(2,\mathbb{C})$ and 
\begin{equation}
\mathrm{d}\mu(\theta,\bar{\theta}):=\mathrm{d}\theta^A\mathrm{d}\bar{\theta}^{A'}\left(1-\frac{i}{2}\theta^A\theta_A\right)\left(1+\frac{i}{2}\bar{\theta}^{A'}\bar{\theta}_{A'}\right)
\label{eq:5.3.14}
\end{equation}
This Haar measure induces a super scalar product $\mathscr{S}:\,V\times V\rightarrow\mathbb{C}$, i.e., a graded-symmetric and non-degenerate sesquilinear form on the super vector space $V$ (cf. definition \ref{bilinear map}) via
\begin{equation}
\mathscr{S}(f,g):=\int_{\mathrm{OSp}(1|2)_{\mathbb{C}}}{\bar{f}g\,\nu_{\mathfrak{g}}}
\label{eq:5.3.15}
\end{equation}
By construction, $\mathscr{S}$ is indefinite turning $(V,\mathscr{S})$ into an indefinite inner product space. In \cite{Tuynman:2018}, it has been shown for arbitrary (real) super Lie groups that one can always find a linear map $J:\,V\rightarrow V$ such that $\braket{\cdot,\cdot}:=\mathscr{S}(\cdot,J\cdot)$ defines a positive definite scalar product on $V$. Hence, $(V,\mathscr{S},J)$ has the structure of a \emph{Krein space}. Given such an endomorphism $J$, one can then use the induced scalar product to complete $V$ to a Hilbert space 
\begin{equation}
\mathcal{H}:=\overline{V}^{\|\cdot\|}
\label{eq:5.3.16}
\end{equation}
However, the choice of such an endomorphism is, a priori, not unique and thus needs to be fixed by additional physical requirements. Therefore, as a next step, let us turn to the construction of the kinematical state space in the manifest approach to LQSG. However, due to the non-compactness of $\mathrm{SL}(2,\mathbb{C}) $, even in the purely bosonic theory, it is still unclear how the (kinematical) Hilbert space with self-dual variables can be  defined consistently. Hence, the following considerations will only sketch some ideas following the stanndard procedure in LQG with real variables.\\
\\
In the standard quantization scheme in loop quantum gravity, one considers graphs $\gamma\subset\Sigma$ that are embedded into spatial slices $\Sigma$ of the spacetime manifold $\mathcal{M}_0$ which can be regarded as a purely bosonic supermanifold. This is also reasonable in case of supersymmetry due to the rheonomy principle stating that physical degrees of freedom are completely encoded on the body \cite{DAuria:1982uck,Castellani:1991et}. Hence, we consider graphs $\gamma$ that are embedded into spatial slices $\Sigma$ of the bosonic submanifold $\mathcal{M}_0\subset\mathcal{M}$ of the underlying supermanifold $\mathcal{M}$. Let us fix such a particular graph $\gamma$ consisting of $n$ edges $e_i\in\gamma$, $i=1,\ldots,n$, that intersect at certain vertices $v$ and require that the graph is suitably refined such that the topology of $\Sigma$ can be resolved \cite{Achour:2015zmk}. A \emph{cylindrical function} $\Psi_{\gamma}[\mathcal{A}^{+}]$ on that graph will then generically be of the form
\begin{equation}
\Psi_{\gamma}:=f_{\gamma}(h_{e_1}[\mathcal{A}^{+}],\ldots,h_{e_n}[\mathcal{A}^{+}])
\label{eq:5.3.17}
\end{equation}
for some smooth function $f_{\gamma}\in H^{\infty}_{(c)}(\mathrm{OSp}(1|2)^{\times n}_\mathbb{C},\mathbb{C})\cong \bigotimes_{i=1}^n H^{\infty}_{(c)}(\mathrm{OSp}(1|2)_{\mathbb{C}},\mathbb{C})$. 
In section \ref{Holonomies}, choosing a particular gauge, it has been shown that the holonomies $h_{e}[\mathcal{A}^+]$ associated to the super Ashtekar connection are of the form\footnote{As discussed in the end of section \ref{Holonomies}, if the parametrizing supermanifold $\mathcal{S}$ is chosen to be a superpoint corresponding to a Grassmann algebra $\Lambda$, the holonomies can be identified with group elements of $\mathrm{OSp}(1|2)_{\mathbb{C}}$ regarded as a $H^{\infty}$ super Lie group modeled over $\Lambda$.}
\begin{equation}
h_{e}[\mathcal{A}^+]=h_e[A^+]\cdot\mathcal{P}\mathrm{exp}\left(-\int_{e}{\mathrm{Ad}_{h_{e}[A^+]^{-1}}\psi}\right)
\label{eq:5.3.18}
\end{equation}
with $h_e[A^+]$ the induced holonomies of the (bosonic) self-dual Ashtekar connection $A^+$. According to (\ref{eq:5.1.9.2}), the adjoint representation of $\mathrm{SL}(2,\mathbb{C})$ on the odd part of the Lie superalgebra $\mathfrak{osp}(1|2)_{\mathbb{C}}$ is given by the fundamental respresentation such that 
\begin{equation}
\mathrm{Ad}_{h_{e}[A^+]^{-1}}\psi=Q_B\,\tensor{(h_{e}[A^+]^{-1})}{^B_A}\psi^A
\label{eq:5.3.19}
\end{equation}
From (\ref{eq:5.3.18}) and (\ref{eq:5.3.19}) it follows that holonomies of $\mathcal{A}^+$ are holomorphic functions in the fermionic variables $\psi^A$. Hence, in order to resolve the physical degrees of freedom, it is sufficient to restrict definition (\ref{eq:5.3.17}) of cylindrical functions to functions $f_{\gamma}$ of the form 
\begin{equation}
f_{\gamma}\in V^{\otimes n}:=\bigotimes_{i=1}^n C^{\infty}_{(c)}(\mathrm{SL}(2,\mathbb{C}),\mathbb{C})\otimes\Exterior[\theta^A]
\label{eq:5.3.20}
\end{equation}
According to (\ref{eq:5.3.17}) and (\ref{eq:5.3.20}), we can identify the space of cylindrical functions w.r.t. $\gamma$ with $V^{\otimes n}$. The invariant Haar measure on $\mathrm{OSp}(1|2)_{\mathbb{C}}$ as defined via (\ref{eq:5.3.13}) can be extended to invariant Haar measure on $\mathrm{OSp}(1|2)_{\mathbb{C}}^{\times n}$ taking the tensor product (as may be checked by direct computation, this will automatically satisfy \ref{eq:5.3.3} (resp. (\ref{eq:5.3.2}))). This, in turn, induces a super scalar product $\mathscr{S}$ on $V^{\otimes n}$ which again has the structure of a Krein space. In order to complete to a Hilbert space and construct a positive definite inner product, one has to choose a particular endomorphism $J:V^{\otimes n}\rightarrow V^{\otimes n}$. Therefore, note that the fermionic component $\mathcal{E}_A$ of the super electric field, when implemented as an operator on the resulting Hilbert space, needs to satisfy the quantum version of the reality condition (\ref{eq:5.2.8}). Hence, $J:\,V^{\otimes n}\rightarrow V^{\otimes n}$ has to be chosen such that (\ref{eq:5.2.8}) can be implemented on the kinematical Hilbert space $\mathcal{H}_{\mathrm{kin}}$ obtained via the  completion of $(V^{\otimes n},\braket{\cdot,\cdot})$ w.r.t. to the positive definite scalar product $\braket{\cdot,\cdot}:=\mathscr{S}(\cdot,J\cdot)$. This is in fact in accordance with \cite{Ashtekar:1991hf} where it is suggested that the inner product (for the bosonic degrees of freedom) has to be defined in such way that the reality conditions can be consistently implemented. \\
\\
As discussed in detail in \cite{Tuynman:2018}, there exists an endomorphism $J:\,V\rightarrow V$ such that, w.r.t. functions $f\in V$ written in the form 
\begin{equation}
f=f_0+f_A\psi^A+\frac{1}{2}f_2\psi^A\psi_A
\label{eq:5.3.21}
\end{equation}
with $f_0,f_A,f_2\in C_{(c)}^{\infty}(\mathrm{SL}(2,\mathbb{C}),\mathbb{C})$, the induced scalar product is given by
\begin{equation}
\braket{f,g}=\braket{\braket{f_0,g_0}}+\braket{\braket{f_+,g_+}}+\braket{\braket{f_-,g_-}}+\braket{\braket{f_2,g_2}}
\label{eq:5.3.22}
\end{equation}
with $\braket{\braket{\cdot,\cdot}}$ the scalar product on $C^{\infty}_{(c)}(\mathrm{SL}(2,\mathbb{C}))$ induced by the invariant Haar measure $\mu_H$ on $\mathrm{SL}(2,\mathbb{C})$. This scalar product is then invariant under local $\mathrm{SL}(2,\mathbb{C})$-gauge transformations.\\
As it is shown in \cite{Konsti-Kos:2020}, in the framework of symmetry reduced models, this in fact solves the reality conditions (\ref{eq:5.2.8}). This is also precisely the standard scalar product as used for instance in \cite{Thiemann:1997rq} in context of Dirac fermions or in \cite{Bodendorfer:2011pb} in context of the Rarita-Schwinger field in a complementary approach to LQSG with real variables.\\
\\
Considering the holonomies (\ref{eq:5.3.18}), this may suggest to restrict $C^{\infty}_{(c)}(\mathrm{SL}(2,\mathbb{C}))$ to holomorphic functions on $\mathrm{SL}(2,\mathbb{C})$. However, due to Liouville's theorem, if required to be nontrivial, general functions of this kind cannot be of compact support. This is of course problematic. Hence, either one does not restrict to holomorphic functions considering more general smooth functions on $\mathrm{SL}(2,\mathbb{C})$ (regarded as a real manifold) or the measure on $\mathrm{SL}(2,\mathbb{C})$ is changed appropriately. The last possibility has been studied in \cite{Wilson-Ewing:2015lia} and \cite{Konsti-FB:2020} in the context of symmetry reduced models. There, it was shown that this indeed allows an exact implementation of the remaining reality conditions (\ref{eq:2.23}) where the measure turned out to be distributional. How this can be extended to the full theory, however, is still unclear and remains a task for future investigations.\\
\\
Another proposal is to consider analytic continuations from real to imaginary Barbero-Immiri parameters (see for instance \cite{Frodden:2012dq,Han:2014xna,Achour:2014eqa,Achour:2015zmk,Bodendorfer:2013hla} and references therein for recent advances in this direction). This is based on the fact that $\mathrm{SL}(2,\mathbb{C})$ contains $\mathrm{SU}(2)$ as a compact real form. Perhaps, these kind considerations can be generalized to the present situation in context of supersymmetry with gauge group $\mathrm{OSp}(1|2)_{\mathbb{C}}$.\\
Nevertheless, in the present article, we have demonstrated that structure of the (kinematical) Hilbert space for the fermionic degrees of freedom as normally used in LQG has a very natural explanation in the framework of the manifest approach to LQSG. It would thus be very interesting to further analyze this structure and even generalize it including extended supersymmetries. 
\section{Symmetry reduction in supersymmetric field theories}\label{LQC}
In this last section, we want to generalize the notion of invariant connection forms to the super category. This will provide a solid basis for the construction of (spatially) symmetry reduced models in the context supersymmetric field theories. We will use these results in \cite{Konsti-Kos:2020} to study minisuperspace models in the framework of loop quantum cosmology with local supersymmetry.

Let $\mathcal{M}$ be a $H^{\infty}$ supermanifold and $\mathcal{S}$ be a super Lie group which in the most situations of interest will be the super Lie group of isometries of $\mathcal{M}$ (in fact, in most cases $\mathcal{M}$ will be an ordinary smooth manifold). Suppose, $\mathcal{S}$ acts from the left on $\mathcal{M}$, i.e., there exists a smooth map 
\begin{equation}
f:\,\mathcal{S}\times\mathcal{M}\rightarrow\mathcal{M}
\label{eq:6.1}
\end{equation}
such that
\begin{equation}
f\circ(\mathrm{id}\times f)=f\circ(\mu_{\mathcal{S}}\times\mathrm{id})\quad\text{and}\quad f_e(x)=x\quad\forall x\in\mathcal{M}
\label{eq:6.2}
\end{equation}
Furthermore, we assume that $\mathcal{S}$ acts transitively on $\mathcal{M}$. Hence, if $x\in\mathcal{M}$ is a body point and $\mathcal{S}_x$ is the stabilizer subgroup of $\mathcal{S}$, one can identify $\mathcal{M}\cong\mathcal{S}/\mathcal{S}_x$ which we shall do in what follows. The left action of $\mathcal{S}$ is then given by its standard action on the coset space $\mathcal{S}/\mathcal{S}_x$ which still will be denoted by $f$. Let $\mathcal{G}\stackrel{\pi}{\rightarrow}\mathcal{P}\rightarrow\mathcal{S}/\mathcal{S}_x$ be a principal super fiber bundle over $\mathcal{S}/\mathcal{S}_x$ with structure group $\mathcal{G}$ and $\mathcal{G}$-right action $\Phi:\,\mathcal{P}\times\mathcal{G}\rightarrow\mathcal{P}$. We want to the ask the question about the existence of an $\mathcal{S}$-left action $\hat{f}:\,\mathcal{S}\times\mathcal{P}\rightarrow\mathcal{P}$ on $\mathcal{P}$ such that $\hat{f}$ is $\mathcal{G}$-equivariant bundle automorphism on $\mathcal{P}$ projecting to the left multplication of $\mathcal{S}$ on $\mathcal{S}/\mathcal{S}_x$, i.e.,
\begin{equation}
\hat{f}\circ(\mathrm{id}_{\mathcal{S}}\times\Phi)=\Phi\circ(\hat{f}\times\mathrm{id}_{\mathcal{G}})
\label{eq:6.3}
\end{equation}
and $\pi\circ\hat{f}=f\circ(\mathrm{id}_{\mathcal{S}}\times\pi)$. Therefore, applying the forgetful functor, we consider the set of abstract group homomorphisms $\lambda:\,\mathcal{S}_x\rightarrow\mathcal{G}$. On this set, we introduce the equivalence relation 
\begin{equation}
\lambda\sim\lambda':\Leftrightarrow\exists g\in\mathcal{G}:\,\lambda'=\mathrm{Ad}_{g}\circ\lambda
\label{eq:}
\end{equation}
which yields the set of conjugacy classes $\mathrm{Conj}(\mathcal{S}_x\rightarrow\mathcal{G})$ of abstract group homomorphisms. An equivalence class $[\lambda]\in\mathrm{Conj}(\mathcal{S}_x\rightarrow\mathcal{G})$ will be called \emph{smoothly admissible}, if it contains a $H^{\infty}$-smooth super Lie group homomorphism as a representative. The set of such smoothly admissible conjugacy classes yields a proper subset $\mathrm{Conj}(\mathcal{S}_x\rightarrow\mathcal{G})_{\infty}\subset\mathrm{Conj}(\mathcal{S}_x\rightarrow\mathcal{G})$.
\begin{prop}\label{prop1}
There is a bijective correspondence between equivalence classes of principal $\mathcal{G}$-bundles over $\mathcal{S}/\mathcal{S}_x$ admitting an $\mathcal{S}$-left action which is $\mathcal{G}$-equivariant and projects to the multiplication of $\mathcal{S}$ on the coset space $\mathcal{S}/\mathcal{S}_x$ and smoothly admissible conjugacy classes $[\lambda]\in\mathrm{Conj}(\mathcal{S}_x\rightarrow\mathcal{G})_{\infty}$ of group homomorphisms $\lambda:\,\mathcal{S}_x\rightarrow\mathcal{G}$.
\end{prop}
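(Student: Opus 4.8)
The plan is to adapt the classical Wang-type theorem (the bijection between invariant connections / equivariant bundles over a homogeneous space $G/H$ and conjugacy classes of homomorphisms $H\to G$) to the graded, $H^{\infty}$ setting, working throughout with the super Harish-Chandra pair description of the super Lie groups involved. The key observation is that, since $\mathcal{M}\cong\mathcal{S}/\mathcal{S}_x$ is homogeneous, the total space of any $\mathcal{S}$-equivariant principal $\mathcal{G}$-bundle is itself determined by its fiber over the base point $[e]$ together with the induced action of the stabilizer $\mathcal{S}_x$ on that fiber. I would therefore split the argument into the two directions of the claimed bijection, as is standard, but with careful attention to smoothness, since partial evaluation does not preserve the $H^{\infty}$ class (cf. remark \ref{remark:2.19}).

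First I would construct the map from bundles to conjugacy classes. Given an $\mathcal{S}$-equivariant principal $\mathcal{G}$-bundle $\mathcal{G}\to\mathcal{P}\to\mathcal{S}/\mathcal{S}_x$ with equivariant left action $\hat{f}$ projecting to $f$, I fix a point $p_0\in\mathcal{P}$ in the fiber $\pi^{-1}([e])$ over the base point. For each $h\in\mathcal{S}_x$ the element $\hat{f}_h(p_0)$ again lies in the fiber over $[e]$ (because $\hat{f}$ covers the $\mathcal{S}$-action and $h$ stabilizes $[e]$), so by freeness and transitivity of the $\mathcal{G}$-right action on fibers there is a unique $\lambda(h)\in\mathcal{G}$ with $\hat{f}_h(p_0)=\Phi(p_0,\lambda(h))=p_0\cdot\lambda(h)$. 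Using \eqref{eq:6.3} together with the group-action axioms \eqref{eq:6.2}, a direct computation shows $\lambda(h_1h_2)=\lambda(h_1)\lambda(h_2)$, i.e.\ $\lambda:\mathcal{S}_x\to\mathcal{G}$ is a group homomorphism; a different choice $p_0\cdot g$ of base point replaces $\lambda$ by $\mathrm{Ad}_{g^{-1}}\circ\lambda$, so the conjugacy class $[\lambda]$ is well defined and independent of the choices. Smoothness of $\lambda$ as a representative in its class follows because $\hat{f}$ and $\Phi$ are $H^{\infty}$ morphisms and $e\in\mathbf{B}(\mathcal{G})$, so that the relevant partial evaluations stay in the $H^{\infty}$ category (exactly the point exploited in lemma \ref{cor:4.4}); hence $[\lambda]\in\mathrm{Conj}(\mathcal{S}_x\to\mathcal{G})_{\infty}$.

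For the converse I would start from a smoothly admissible $[\lambda]$ with a smooth representative $\lambda:\mathcal{S}_x\to\mathcal{G}$ and build the bundle as the associated bundle $\mathcal{P}_{\lambda}:=\mathcal{S}\times_{\lambda}\mathcal{G}=(\mathcal{S}\times\mathcal{G})/\mathcal{S}_x$, where $\mathcal{S}_x$ acts by $(s,g)\cdot h:=(sh,\lambda(h)^{-1}g)$, precisely the graded analog of the construction in lemma \ref{cor:4.4} with the roles of structure and acting group exchanged. That lemma already supplies the $H^{\infty}$ supermanifold structure, the smooth surjective projection $[s,g]\mapsto s\mathcal{S}_x$, and the principal $\mathcal{G}$-bundle structure via right multiplication on the $\mathcal{G}$-factor; the $\mathcal{S}$-left action is the evident $\hat{f}_{s'}([s,g]):=[s's,g]$, which is manifestly $\mathcal{G}$-equivariant and covers left multiplication on $\mathcal{S}/\mathcal{S}_x$. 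One then checks that conjugate homomorphisms yield isomorphic equivariant bundles and, finally, that the two constructions are mutually inverse: running the first construction on $\mathcal{P}_{\lambda}$ with base point $[e,e]$ recovers $[\lambda]$, and conversely any equivariant bundle $\mathcal{P}$ is isomorphic to $\mathcal{P}_{\lambda}$ through $[s,g]\mapsto\hat{f}_s(p_0)\cdot g$, which is well defined precisely by the defining relation $\hat f_h(p_0)=p_0\cdot\lambda(h)$.

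The main obstacle I anticipate is not the algebra of the bijection, which is formally identical to the classical Wang theorem, but the smoothness bookkeeping peculiar to the graded $H^{\infty}$ category. Specifically, the maps $h\mapsto\lambda(h)$ and the reconstruction isomorphism $[s,g]\mapsto\hat f_s(p_0)\cdot g$ are defined through partial evaluations of $H^{\infty}$ morphisms, and such evaluations are generically only $H^{\infty}$ at body points (remark \ref{remark:2.19}); I must verify that the constructions factor through the well-behaved cases, which is why admissibility is imposed as the existence of a genuinely \emph{smooth} representative and why the coset quotients must be taken in the sense made rigorous by the super Harish-Chandra pair theorem and lemma \ref{cor:4.4}. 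A secondary subtlety is ensuring that the quotient $(\mathcal{S}\times\mathcal{G})/\mathcal{S}_x$ is again an $H^{\infty}$ supermanifold with the expected body $(\mathbf{B}(\mathcal{S})\times\mathbf{B}(\mathcal{G}))/\mathbf{B}(\mathcal{S}_x)$; this reduces, via the body functor, to the classical homogeneous-space statement plus the splitting theorem, so I would dispatch it by applying $\mathbf{B}$ and invoking the equivalence of categories established in section \ref{Supermanifolds}.
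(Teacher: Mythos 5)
Your proposal is correct and follows essentially the same route as the paper's proof: extracting $\lambda$ from the stabilizer action on a fixed body point of the fiber (with smoothness guaranteed precisely because the point lies in the body), and reconstructing the bundle as $\mathcal{S}\times_{\lambda}\mathcal{G}$ with the reconstruction isomorphism $[\phi,g]\mapsto\Phi_g(\hat f_{\phi}(p))$. If anything, your restriction of the defining relation $\hat f_h(p_0)=p_0\cdot\lambda(h)$ to $h\in\mathcal{S}_x$ is slightly more careful than the paper's phrasing, which nominally quantifies over all of $\mathcal{S}$.
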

\begin{proof}
Suppose $\lambda:\,\mathcal{S}_x\rightarrow\mathcal{G}$ is a smooth representative of a smoothly admissible conjugacy class of super Lie group homomorphisms. Consider then the associated principal supper fiber bundle $\mathcal{S}\times_{\lambda}\mathcal{G}$ with structure group $\mathcal{G}$. On $\mathcal{S}\times\mathcal{G}$, we define the smooth left action
\begin{equation}
\mathcal{S}\times(\mathcal{S}\times\mathcal{G})\rightarrow\mathcal{S}\times\mathcal{G},\quad (\phi,(\psi,g))\mapsto(\phi\circ\psi,g)
\label{eq:6.4}
\end{equation}
Since $(\phi\circ(\psi\circ\phi'),\lambda(\phi')^{-1}(g))=((\phi\circ\psi)\circ\phi',\lambda(\phi')^{-1}(g))$ $\forall\phi,\phi',\psi\in\mathcal{S}$ and $g\in\mathcal{G}$, it follows that (\ref{eq:6.4}) is constant on $\mathcal{G}$-orbits so that (\ref{eq:6.4}) induces a well-defined smooth $\mathcal{S}$-left action on $\mathcal{S}\times_{\lambda}\mathcal{G}$ which is $\mathcal{G}$-equivariant and projects to the multiplication of $\mathcal{S}$ on $\mathcal{S}/\mathcal{S}_x$.\\
Conversely, let $\hat{f}:\,\mathcal{S}\times\mathcal{P}\rightarrow\mathcal{P}$ be a $\mathcal{S}$-left action on $\mathcal{P}$. Let $p\in\mathbf{B}(\mathcal{P})$ be an element of the body. Since the $\mathcal{G}$-right action on $\mathcal{P}$ is transitive on each fiber and $\hat{f}$ is fiber preserving, for any $\phi\in\mathcal{S}$, there exists $\lambda(\phi)\in\mathcal{G}$ such that 
\begin{equation}
\hat{f}_{\phi}(p)=\Phi_{\lambda(\phi)}(p)
\label{eq:6.5}
\end{equation} 
Moreover, since $p\in\mathbf{B}(\mathcal{P})$, the map $\mathcal{S}\rightarrow\mathcal{P},\,\phi\mapsto f_{\phi}(p)$ is of class $H^{\infty}$ proving that $\lambda:\,\mathcal{S}_x\rightarrow\mathcal{G}$ is smooth. By $\mathcal{G}$-equivariance (\ref{eq:6.3}), it follows for $\phi,\psi\in\mathcal{S}$ 
\begin{align}
\hat{f}_{\phi\circ\psi}(p)&=f_{\phi}(f_{\psi}(p))=f_{\phi}(\Phi_{\lambda(\psi)}(p))=\Phi_{\lambda(\psi)}(f_{\phi}(p))=\Phi_{\lambda(\phi)\circ\lambda(\psi)}(p)\nonumber\\
&=\Phi_{\lambda(\phi\circ\psi)}(p)
\label{eq:6.6}
\end{align}
implying $\lambda(\phi\circ\psi)=\lambda(\phi)\circ\lambda(\psi)$, i.e.. $\lambda$ is indeed a super Lie group homomorphism. If $p'\in\mathcal{P}$ is any other point, then, again by transitivity, there exists $g\in\mathcal{G}$ with $\Phi_{g}(p)=p'$. Hence,
\begin{equation}
\hat{f}_{\phi}(p')=\Phi_{g}(\hat{f}(p))=\Phi_{\mathrm{Ad}_{h^{-1}}\lambda(\phi)}(p')
\label{eq:6.7}
\end{equation}
with $\mathrm{Ad}_{h^{-1}}\circ\lambda$ in the same equivalence class as $\lambda$. Finally, let $\mathcal{S}\times_{\lambda}\mathcal{G}$ be the associated principal $\mathcal{G}$-bundle  with smooth $\mathcal{S}$-left action as constructed in the first part of this proof. For $p\in\mathcal{P}$ a body point, consider the map
\begin{equation}
\mathcal{S}\times_{\lambda}\mathcal{G}\rightarrow\mathcal{P},\quad [\phi,g]\mapsto \Phi_{g}(f_{\phi}(p))
\label{eq:6.8}
\end{equation} 
By (\ref{eq:6.3}), it follows immediately that (\ref{eq:6.8}) is well-defined and in fact yields an isomorphism principal bundles.
\end{proof}
Proposition (\ref{prop1}) provides a complete classification of principal fiber bundles admitting such a smooth left action by equivalence classes of smooth super Lie group morphisms $\lambda:\,\mathcal{S}_x\rightarrow\mathcal{G}$. We next want to study connections on $\mathcal{P}$ that are invariant under this left action. Since, $\mathcal{M}$ is typically an ordinary smooth manifold and we would like to include fermionic degrees of freedom in our discussion, we go over to the category of relative supermanifolds. Hence, let us fix a parametrizing supermanifold $\mathcal{T}$. We lift all objects and morphisms to the relative category in the obvious way. If $\mathcal{P}:=\mathcal{S}\times_{\lambda}\mathcal{G}$ is a principal super fiber bundle as in Prop. (\ref{prop1}), it follows that $\mathcal{P}_{/\mathcal{T}}=\mathcal{S}_{/\mathcal{T}}\times_{\lambda}\mathcal{G}$. A $\mathcal{T}$-relative super connection 1-form $\mathcal{A}\in\Omega^1(\mathcal{P}_{/\mathcal{T}},\mathfrak{g})_0$ will be called \emph{$\mathcal{S}$-invariant}, if 
\begin{equation}
(\hat{f}_{\mathcal{T}})^*_{\phi}\mathcal{A}=\mathcal{A}\quad\forall\phi\in\mathcal{S}
\label{eq:6.9}
\end{equation}
with $\hat{f}_{\mathcal{T}}:\,\mathcal{S}\times\mathcal{P}_{/\mathcal{T}}\rightarrow\mathcal{P}_{/\mathcal{T}}$ the lift of the left multiplication $\mathcal{S}\times\mathcal{S}_{/\mathcal{T}}\rightarrow\mathcal{S}_{/\mathcal{T}}:\,(\phi,(t,\psi))\mapsto(t,\phi\circ\psi)$ to a left action on $\mathcal{P}_{/\mathcal{T}}$ (see (\ref{eq:6.4}) in the proof of prop. (\ref{prop1})).
\begin{prop}
Let $\mathcal{P}:=\mathcal{S}\times_{\lambda}\mathcal{G}$ be the associated principal super fiber bundle induced by a smooth super Lie group homomorphism $\lambda:\,\mathcal{S}_x\rightarrow\mathcal{G}$.\\
The $\mathcal{S}$ invariant $\mathcal{T}$-relative super connection 1-forms on $\mathcal{P}_{/\mathcal{T}}$ are in one-to-one correspondence to smooth even maps $\Lambda\in H^{\infty}(\mathcal{\mathcal{T}},\mathrm{Hom}_L(\mathrm{Lie}(\mathcal{S}),\mathrm{Lie}(\mathcal{G})))_0$ from the parametrizing supermanifold $\mathcal{T}$ to super Lie algebra homomorphisms $\mathrm{Hom}_L(\mathrm{Lie}(\mathcal{S}),\mathrm{Lie}(\mathcal{G}))$ satisfying
\begin{equation}
\Lambda(t)\big{|}_{\mathrm{Lie}(\mathcal{S}_x)}=\lambda_{*}
\label{eq:6.10}
\end{equation}
and 
\begin{equation}
\mathrm{Ad}_{\phi^{-1}}\diamond\Lambda(t)=\Lambda(t)\diamond\mathrm{Ad}_{\lambda(\phi)^{-1}}\quad\text{on }\mathrm{Lie}(\mathcal{S})
\label{eq:6.11}
\end{equation}
$\forall t\in\mathcal{T}$ and $\phi\in\mathcal{S}_x$.
\end{prop}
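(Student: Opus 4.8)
The plan is to establish a super, $\mathcal{T}$-relative analogue of Wang's theorem on invariant connections, following the classical two-step strategy: extract a linear datum $\Lambda$ from an invariant connection by evaluating at a distinguished body point, and conversely reconstruct a connection from $\Lambda$ using the Maurer--Cartan forms on $\mathcal{S}$ and $\mathcal{G}$.

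First I would construct the forward map. Let $p_0:=[e,e]\in\mathbf{B}(\mathcal{P})$ be the canonical body point lying over the base point $x\cong[\mathcal{S}_x]\in\mathcal{S}/\mathcal{S}_x$. For $X\in\mathrm{Lie}(\mathcal{S})$ let $\widehat{X}$ be the fundamental vector field on $\mathcal{P}$ generated by the lifted left action $\hat{f}$, lifted further to the $\mathcal{T}$-relative field $\mathds{1}\otimes\widehat{X}$. Since $\mathcal{A}\in\Omega^1(\mathcal{P}_{/\mathcal{T}},\mathfrak{g})_0\cong\Omega^1(\mathcal{P},\mathfrak{g})\otimes H^{\infty}(\mathcal{T})$, the contraction $\braket{\mathds{1}\otimes\widehat{X}|\mathcal{A}}$ evaluated at $p_0$ yields, for each $t\in\mathcal{T}$, an element of $\mathrm{Lie}(\mathcal{G})$; the assignment $X\mapsto\braket{\mathds{1}\otimes\widehat{X}|\mathcal{A}}(\,\cdot\,,p_0)$ is then the required even, left-linear $\Lambda(t)\in\mathrm{Hom}_L(\mathrm{Lie}(\mathcal{S}),\mathrm{Lie}(\mathcal{G}))$, with $H^{\infty}$-dependence on $t$ precisely because $p_0$ is a body point so that the partial evaluation is smooth (cf. remark \ref{remark:2.19} and lemma \ref{cor:4.4}). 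Condition (\ref{eq:6.10}) follows from the defining property of the associated bundle $\mathcal{S}\times_{\lambda}\mathcal{G}$: for $X\in\mathrm{Lie}(\mathcal{S}_x)$ the $\mathcal{S}$-fundamental vector field at $p_0$ coincides with the $\mathcal{G}$-fundamental vector field of $\lambda_{*}(X)$, whence property (i) of definition \ref{Def:2.18} gives $\Lambda(t)(X)=\lambda_{*}(X)$. The equivariance (\ref{eq:6.11}) is obtained by combining the $\mathcal{S}$-invariance (\ref{eq:6.9}) of $\mathcal{A}$ with the $\mathcal{G}$-equivariance property (ii): pushing $\widehat{X}$ forward along $\hat{f}_{\phi}$ for $\phi\in\mathcal{S}_x$ and using (\ref{eq:6.5}) relates $\Lambda\diamond\mathrm{Ad}_{\phi^{-1}}$ to $\mathrm{Ad}_{\lambda(\phi)^{-1}}$ through the conjugation that $\hat{f}$ induces on the fibre.

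Next I would build the inverse. Given $\Lambda$ satisfying (\ref{eq:6.10})--(\ref{eq:6.11}), consider on $\mathcal{S}\times\mathcal{G}$ the $\mathrm{Lie}(\mathcal{G})$-valued $1$-form
\[
\tilde{\mathcal{A}}:=\mathrm{Ad}_{g^{-1}}\circ\big(\Lambda\circ\mathrm{pr}_{\mathcal{S}}^{*}\theta^{\mathcal{S}}_{\mathrm{MC}}\big)+\mathrm{pr}_{\mathcal{G}}^{*}\theta^{\mathcal{G}}_{\mathrm{MC}},
\]
lifted trivially to the $\mathcal{T}$-relative category. Writing $\hat{\pi}:\,\mathcal{S}\times\mathcal{G}\rightarrow\mathcal{S}\times_{\lambda}\mathcal{G}$ for the quotient by the $\mathcal{S}_x$-action $(s,g)\cdot h=(sh,\lambda(h)^{-1}g)$, the two conditions on $\Lambda$ are exactly what is needed for $\tilde{\mathcal{A}}$ to be $\hat{\pi}$-basic: condition (\ref{eq:6.10}) forces $\tilde{\mathcal{A}}$ to annihilate the vertical directions of this action, since the $\Lambda$- and $\theta^{\mathcal{G}}_{\mathrm{MC}}$-contributions cancel as $\mathrm{Ad}_{g^{-1}}(\Lambda(Y)-\lambda_{*}(Y))=0$, while condition (\ref{eq:6.11}) together with the equivariance (\ref{eq:4.3}) of the Maurer--Cartan form supplies the $\mathcal{S}_x$-invariance. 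Hence $\tilde{\mathcal{A}}$ descends to a form $\mathcal{A}$ on $\mathcal{P}_{/\mathcal{T}}$; that $\mathcal{A}$ reproduces the generators in $\mathrm{Lie}(\mathcal{G})$ and is $\mathcal{G}$-equivariant follows from the corresponding properties of $\theta^{\mathcal{G}}_{\mathrm{MC}}$, and its $\mathcal{S}$-invariance (\ref{eq:6.9}) is immediate from the left-invariance of $\theta^{\mathcal{S}}_{\mathrm{MC}}$ built into the first term. Finally I would verify that the two assignments are mutually inverse, which reduces to evaluating $\tilde{\mathcal{A}}$ at $p_0=[e,e]$ and recovering $\Lambda$ from the Maurer--Cartan normalisation.

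I expect the main obstacle to be not the algebra but the smoothness and well-definedness in the super, relative setting. Concretely, showing that $\tilde{\mathcal{A}}$ genuinely descends to an $H^{\infty}$ form on the associated bundle requires the same care as in lemma \ref{cor:4.4}: the partial maps $(\Phi_{\mathcal{S}})_h$ for $h\notin\mathbf{B}(\mathcal{G})$ and general partial evaluations fail to be of class $H^{\infty}$ (remark \ref{remark:2.19}), so both the descent and the invariance computation (\ref{eq:6.9}) must be phrased through generalized tangent maps and pullbacks, and the extraction of $\Lambda$ must be anchored at the body point $p_0$ where evaluation is smooth. Keeping track of the grading, namely the evenness of $\Lambda$ and $\mathcal{A}$, and of left- versus right-linearity in all contractions is the remaining bookkeeping that the super category adds to the classical argument.
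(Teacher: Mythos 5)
Your proposal is correct and follows essentially the same route as the paper: the forward map is the restriction of the invariant connection to the identity (the paper pulls back along $\iota:\,\mathcal{S}_{/\mathcal{T}}\rightarrow\mathcal{P}_{/\mathcal{T}}$ and uses left-invariance to reduce to $T_e\mathcal{S}$, which is equivalent to your evaluation of fundamental vector fields at the body point $[e,e]$), and your reconstruction formula $\mathrm{Ad}_{g^{-1}}\circ\Lambda\circ\theta^{(\mathcal{S})}_{\mathrm{MC}}+\theta^{(\mathcal{G})}_{\mathrm{MC}}$ with the descent argument is precisely the paper's equation for $\braket{D_{(p,g)}\hat{\pi}(X_p,Y_g)|\mathcal{A}_{[p,g]}}$ together with well-definedness on $\mathrm{ker}\,D\hat{\pi}$. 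Your identification of the genuine technical burden (smoothness of partial evaluations, generalized tangent maps, anchoring at a body point) matches how the paper handles it by deferring to the proof of its Prop.~\ref{prop:4.6}.
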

\begin{proof}
Suppose $\mathcal{A}\in\Omega^1(\mathcal{P}_{/\mathcal{T}},\mathfrak{g})_0$ is a $\mathcal{S}$ invariant $\mathcal{T}$-relative super connection 1-form. Let $\iota:\,\mathcal{S}_{/\mathcal{T}}\rightarrow\mathcal{P}_{/\mathcal{T}}$ be the smooth map as defined above. Consider then $\mathcal{A}_{\mathcal{S}}:=\iota^*\mathcal{A}\in\Omega^1(\mathcal{S}_{/\mathcal{T}},\mathfrak{g})_0$. Since $\iota\circ f=\hat{f}\circ(\mathrm{id}\times\iota)$, it follows from the $\mathcal{S}$-invariance of $\mathcal{A}$ that
\begin{equation}
f_{\phi}^*\mathcal{A}_{\mathcal{S}}=\iota^*(\hat{f}_{\phi}^*\mathcal{A})=\iota^*\mathcal{A}=\mathcal{A}_{\mathcal{S}}
\label{eq:6.12}
\end{equation}
i.e., $\mathcal{A}_{\mathcal{S}}$ is left invariant w.r.t. to the standard multiplication on $\mathcal{S}$. As a consequence, $\mathcal{A}_{\mathcal{S}}$ is uniquely determined by its restriction $\mathcal{A}_{\mathcal{S}}|_{T_e\mathcal{S}}:\,\mathcal{T}\times T_{e}\mathcal{S}\rightarrow\mathrm{Lie}(\mathcal{G})$. As this map is left linear in the second argument it follows from lemma IV.3.17 in \cite{Tuynman:2004} that it defines an even smooth map $\mathcal{A}_{\mathcal{S}}|_{T_e\mathcal{S}}:\,\mathcal{T}\rightarrow\mathrm{Hom}_L(\mathrm{Lie}(\mathcal{S}),\allowbreak\mathrm{Lie}(\mathcal{G}))$. Moreover, since the Maurer-Cartan form $\theta^{(\mathcal{S})}_{\mathrm{MC}}|_{T_e\mathcal{S}}:\,T_{e}\mathcal{S}\rightarrow\mathrm{Lie}(\mathcal{S})$ on $T_{e}\mathcal{S}$ is the identity, it follows that
\begin{equation}
\mathcal{A}_{\mathcal{S}}(t)=\theta^{(\mathcal{S})}_{\mathrm{MC}}\diamond\Lambda(t)\quad\forall t\in\mathcal{T}
\label{eq:6.13}
\end{equation} 
on $T_{e}\mathcal{S}$ for some smooth map $\Lambda\in H^{\infty}(\mathcal{T},\mathrm{Hom}_L(\mathrm{Lie}(\mathcal{S}),\allowbreak\mathrm{Lie}(\mathcal{G})))_0$. By left-invariance, it follows that (\ref{eq:6.13}) indeed holds on all of $\mathcal{S}$.\\
Remains to proof that $\Lambda$ satisfies the properties (\ref{eq:6.10}) and (\ref{eq:6.11}) of the proposition. Therefore, for $X\in\mathrm{Lie}(\mathcal{S}_x)$, we compute
\begin{align}
\iota_{*}(\mathds{1}\otimes X)_{p}&=D_{(p,e_{\mathcal{G}})}\hat{\pi}(0_t,X_{\phi},0_{e_{\mathcal{G}}})=D_{(p,e_{\mathcal{G}})}\hat{\pi}(0_{p},\lambda_{*}(X))\nonumber\\
&=\widetilde{\lambda_{*}(X)}_{[p,e_{\mathcal{G}}]}
\label{eq:6.14}
\end{align}
$\forall p=(t,\phi)\in\mathcal{S}_{/\mathcal{T}}$, where in the second equality we used that the kernel of $\hat{\pi}_{*}$ is given by 
\begin{equation}
\mathrm{ker}\,D_{(p,g)}\hat{\pi}=\{(\mathds{1}\otimes Y_{p},-R_{g*}\lambda_{*}(X))|Y\in\mathrm{Lie}(\mathcal{S}_x)\}
\label{eq:6.15}
\end{equation}
Using (\ref{eq:6.14}), this yields
\begin{align}
\lambda_{*}(X)&=\braket{\widetilde{\lambda_{*}(X)}|\mathcal{A}}=\braket{(\mathds{1}\otimes X)_{p}|\mathcal{A}_{\mathcal{S}}}\nonumber\\
&=\braket{\braket{X_{\phi}|\theta_{\mathrm{MC}}^{(\mathcal{S})}}|\Lambda(t)}=\braket{X|\Lambda(t)}
\label{eq:6.16}
\end{align}
$\forall X\in\mathrm{Lie}(\mathcal{S}_x)$. Finally, since $\iota\circ(\hat{R}_{\mathcal{T}})_{\phi}=(\hat{\Phi}_{\mathcal{T}})_{\lambda(\phi)}\circ\iota$, it follows
\begin{align}
\braket{X|\mathrm{Ad}_{\phi^{-1}}\diamond\Lambda(t)}&=\braket{\mathrm{Ad}_{\phi^{-1}}\braket{X|\theta_{\mathrm{MC}}^{(\mathcal{S})}}|\Lambda(t)}=\braket{\braket{R_{\phi*}X|\theta_{\mathrm{MC}}^{(\mathcal{S})}}|\Lambda(t)}\nonumber\\
&=\braket{R_{\phi*}X|\theta_{\mathrm{MC}}^{(\mathcal{S})}\diamond\Lambda(t)}=\braket{R_{\phi*}|\mathcal{A}_{\mathcal{S}}(t)}=\mathrm{Ad}_{\lambda(\phi)^{-1}}\braket{X|\mathcal{A}_{S}(t)}\nonumber\\
&=\mathrm{Ad}_{\lambda(\phi)^{-1}}\braket{X|\Lambda(t)}
\label{eq:6.17}
\end{align}
$\forall X\in\mathrm{Lie}(\mathcal{S})$. Conversely, suppose one has given a smooth map $\Lambda\in H^{\infty}(\mathcal{T},\mathrm{Hom}_L(\mathrm{Lie}(\mathcal{S}),\allowbreak\mathrm{Lie}(\mathcal{G})))_0$ satisfying (\ref{eq:6.10}) and (\ref{eq:6.11}) above. We have to show that there indeed exists a unique super connection 1-form $\mathcal{A}\in\Omega^1(\mathcal{P}_{/\mathcal{T}},\mathfrak{g})_0$ such that $\iota^*\mathcal{A}(t)=\theta^{(\mathcal{S})}_{\mathrm{MC}}\diamond\Lambda(t)$ for any $t\in\mathcal{T}$. This follows along the lines of the proof of prop. \ref{prop:4.6}. As there, one can show that, if $\mathcal{A}$ exists, it necessarily has to be of the form 
\begin{equation}
\braket{D_{(p,g)}\hat{\pi}(X_p,Y_g)|\mathcal{A}_{[p,g]}}=\mathrm{Ad}_{g^{-1}}\braket{X_p|\theta^{(\mathcal{S})}_{\mathrm{MC}}\diamond\Lambda(t)}+\braket{Y_g|\theta^{(\mathcal{G})}_{\mathrm{MC}}}
\label{eq:6.21}
\end{equation}
Moreover, as $\hat{\pi}$ is a submersion, it is uniquely determined by (\ref{eq:6.21}). Similar as in the proof of prop. \ref{prop:4.6}, one concludes that this indeed provides a well-defined super connection 1-form on $\mathcal{P}_{/\mathcal{T}}$.
\end{proof}
\begin{remark}
Note that if $\lambda:\,\mathcal{S}_0\rightarrow\mathcal{G}$ is a group morphism corresponding to a left action of a bosonic super Lie group $\mathcal{S}_0$, i.e., a split super Lie group corresponding to an ordinary symmetry group, then $\lambda$ only takes values in the bosonic super Lie subgroup $\mathcal{G}_0$ of $\mathcal{G}$.\\
This can be cured generalizing definition (\ref{eq:6.1}) to the relative category including parametrizing supermanifolds $\mathcal{T}$ with nontrivial odd dimensions. A classification of these type of actions is then given by maps of the form $\lambda':\,\mathcal{T}\times\mathcal{S}_x\rightarrow\mathcal{G}$ satisfying $\lambda'(t,\phi\circ\psi)=\lambda'(t,\phi)\circ\lambda'(t,\psi)$. In this way, one may be able to construct symmetry reduced connections which are invariant under ordinary spatial symmetries up to local gauge \emph{and} supersymmetry transformations.
\end{remark}

\section{Conclusions}
In this paper, we have studied the Cartan geometric approach to supergravity as well as its application to loop quantum supergravity. To this end, we have provided a mathematical rigorous foundation for the formulation of super Cartan geometries. A crucial ingredient for supersymmetry is the anticommutative nature of fermionic fields. However, as we have seen, modeling anticommuting classical fermion fields turns out to be by far non-straightforward. A resolution is given considering enriched categories as initiated already in \cite{Schmitt:1996hp} based on standard techniques in algebraic geometry. This procedure requires the choice of an additional \emph{parametrizing} supermanifold which encodes the fermionic degrees of freedom. Since the choice is arbitrary, one needs to ensure that physical quantities behave functorially under a change of parametrization. As we have seen, this property follows naturally, if one works in the category of relative supermanifolds. This also reflects the interpretation of supermanifolds in the sense of Molotkov-Sachse \cite{Mol:10,Sac:08} in terms of a functor $\mathbf{Gr}\rightarrow\mathbf{Top}$ assigning Grassmann-algebras to Rogers-DeWitt supermanifolds.

Having formulated the notion of super Cartan geometries in the framework of enriched categories, we have then derived $\mathcal{N}=1$, $D=4$ supergravity via the super MacDowell-Mansouri action. Moreover it follows that, in this geometric framework, SUSY transformations have the interpretation in terms of (field dependent) super gauge transformations.

These results were then applied to the manifest approach towards loop quantum supergravity. To this end, studying the chiral structure of the underlying supersymmetry algebra (corresponding to the super Klein geometry) we have derived the graded analoga of Ashtekar's self-dual variables. Moreover, we were able to interpret these variables in terms of generalized Cartan connections giving rise to super principal connections to the associated principal bundle. In this way, this provides a link between $\mathcal{N}=1$, $D=4$ supergravity and Yang-Mills theory with gauge group given the super anti-de Sitter or super Poincaré group in $D=2$ (where the latter can be obtained via a Inönü-Wigner contraction of the former) in case with or without a cosmological constant, respectively. This is in fact in complete analogy to the classical theory where the ordinary self-dual variables give ordinary gravity the structure of a $\mathrm{SL}(2,\mathbb{C})$ Yang-Mills theory. Moreover, as it turns out, the possibility of defining the graded Ashtekar connection is based crucially on the particular properties of the (bosonic) self-dual variables making them in a sense unique. Furthermore, it follows that the chiral structure of the supersymmetry algebra is even preserved in case of extended supersymmetry. 

This shows that the existence of the graded Ashtekar connection in the $\mathcal{N}=1$ case is not just mere coincidence and might also appear in context of extended SUGRA theories. In fact, chiral $\mathcal{N}=2$, $D=4$ AdS supergravity has been studied in \cite{Tsuda:2000er,Sano:1992jw} and in context of constrained super BF-theory in \cite{Ezawa:1995nj,Ling:2003yw} using the graded Ashtekar connection associated to the superalgebra $\mathrm{OSp}(2|2)_{\mathbb{C}}$. More recently, these variables have been derived in \cite{Eder:2021nyb} for pure $\mathcal{N}=2$, $D=4$ AdS supergravity in the presence of boundaries in a purely geometric way following the Cartan geometric approach as studied in this present article. Thus, based on these observations, these graded variables seem to be right starting point for quantizing supergravity in the framework of loop quantum gravity. Moreover, this Cartan geometric approach may open the possibility for a systematic study of these type of variables including extended supergravity theories with $\mathcal{N}>2$ as well as matter coupled or even higher dimensional SUGRA theories. Among other things, this may also lead to a very natural quantization of higher gauge fields in the framework of LQG. For an interesting treatment of higher gauge fields in a complementary approach that does not keep a part of the supersymmetry manifest but can handle higher dimensional SUGRA theories see \cite{Bodendorfer:2011pc}.

By its very definition, the graded Ashtekar connection encodes both gravity and matter degrees of freedom. As such, it leads to a unified description of both. Since it provides a link between supergravity and Yang-Mills theory, it opens the possibility for a quantization of the theory using standard techniques of LQG following \cite{Gambini:1995db,Ling:1999gn}. Therefore, we have constructed the parallel transport map associated to the super connection 1-form in a mathematical rigorous way. In order to model anticommuting classical fermion fields we again worked in the framework of enriched categories. It follows that the parallel transport map constructed this way indeed has the right properties, i.e., it behaves functorially under a change of parametrization. Moreover, Wilson-loop observables are indeed invariant under (parametrized) super gauge transformations.

 Using the parallel transport map, we then studied the explicit construction of the state space of (manifest) loop quantum supergravity. To this end, we have, in particular, derived the invariant Haar measure of $\mathrm{OSp}(1|2)_{\mathbb{C}}$. The resulting Hilbert space turned out to have a very intriguing structure, giving it the structure of a Krein-space, i.e., an indefinite inner product space that can be completed to a Hilbert space with $\mathrm{SL}(2,\mathbb{C})$-invariant scalar product. However, the quantization of fermions along edges is fundamentally different resulting in 1-dimensional quantum excitations of the fermionic fields similar to gravity. It would be very interesting to see in which sense these different kinds of quantizations can be related.

The construction of the full state space of manifest LQSG, however, still remains incomplete, due to the non-compactness of the underlying gauge group as well as the implementation of the reality conditions. However, in \cite{Konsti-Kos:2020} we made some progress in this direction studying certain symmetry reduced models where the reality conditions can be implemented exactly. We therefore have provided the required mathematical tools in the last section of this present article.

Finally, let us emphasize that this approach also uncovers some of the underlying mathematical structure behind the quantization of fermions in LQG \cite{MoralesTecotl:1994ns,Thiemann:1997rq}. Due to this observation, this approach may also give a concrete idea how matter fields could be quantized in the spin-foam approach to quantum gravity as suggested already in \cite{Livine:2003hn}. The Cartan geometric approach as studied in this article therefore seems to be the right starting point. 



\subsection*{Acknowledgments}
I thank Hanno Sahlmann for helpful discussions at various stages of this work. I also thank Thomas Thiemann and Jerzy Lewandowski for helpful and enlightening discussions. I thank the German Academic Scholarship Foundation (Studienstiftung des Deutschen Volkes) for financial support. 
\newpage
\appendix

\section{Super linear algebra}\label{SuperAlg}
This section is meant to fix some terminology of important aspects in super linear algebra used in the main text. Therefore, we will exclusively focus on $\mathbb{Z}_2$ grading as these are commonly used in physics in the context of (supersymmetric) field theories modeling commuting bosonic and anticommuting fermionic fields. A very deep and detailed account on this fascinating subject can be found, for instance, in the great reference of \cite{Carmeli:2011,Tuynman:2004}.
\begin{definition}\label{Def:2.1}
A \emph{$\mathbb{Z}_2$-graded} or simply \emph{super vector space} $V$ is a vector space over a field $\mathbb{K}$ ($\mathbb{K}=\mathbb{R}$ or $\mathbb{C}$) of the form 
\begin{equation}
V=V_0\oplus V_1
\label{eq:}
\end{equation}
together with a map $|\cdot|:\,\bigcup_{i\in\mathbb{Z}_2}V_i\rightarrow\mathbb{Z}_2$ called \emph{parity map} such that $|v|:=i$ $\forall v\in V_i$. Elements in $V_i$ are called \emph{homogeneous with parity $i\in\mathbb{Z}_2$}. If the dimension of $V_0$ and $V_1$ are given by $\mathrm{dim}\,V_0=m$ and $\mathrm{dim}\,V_1=n$, respectively, then the dimension of $V$ is denoted by $\mathrm{dim}\,V=m|n$. \\
A morphism $\phi:\,V\rightarrow W$ between super vector spaces is a linear map between vector spaces preserving the parity, i.e., $\phi(V_i)\subseteq W_i$ for $i\in\mathbb{Z}_2$. The set of such super vector space morphisms is denoted by $\mathrm{Hom}(V,W)$
\end{definition}

\begin{remark}\label{Remark:2.2}
Instead of just looking at parity preserving morphisms between super vector spaces $V$ and $W$, one can also consider all possible linear maps between the underlying vector spaces. This yields the internal $\underline{\mathrm{Hom}}(V,W)$ which has the structure of a super vector space with the even and odd part $\underline{\mathrm{Hom}}(V,W)_0$ and $\underline{\mathrm{Hom}}(V,W)_1$ given by the parity preserving and parity reversing linear maps between $V$ and $W$, respectively. Hence, $\underline{\mathrm{Hom}}(V,W)_0$ coincides with $\mathrm{Hom}(V,W)$ in definition \ref{Def:2.1}.
\end{remark}

\begin{example}
A trivial but also very important example of a super vector space is given by $\mathbb{R}^{m|n}:=\mathbb{R}^m\oplus\mathbb{R}^n$ called the \emph{$(m,n)$-dimensional superspace}. In fact, any super vector space $V=V_0\oplus V_1$ is isomorphic to a superspace. If $\mathrm{dim}\,V=m|n$, let $\{e_i\}_{i=1,\ldots,m+n}$ be a basis of $V$ such that $\{e_i\}_{i=1,\ldots,m}$ is a basis of $V_0$ and $\{e_j\}_{j=m,\ldots,m+n}$ is a basis of $V_1$. Such a basis is called a \emph{homogeneous basis of} $V$. Then, $V$ is isomorphic, as a super vector space, to the superspace $\mathbb{R}^{m|n}$.
\end{example}

\begin{definition}
A \emph{superalgebra} $A$ is a super vector space $A=A_0\oplus A_1$ together with a bilinear map $m:\,A\times A\rightarrow A$ such that
\begin{equation}
m(A_i,A_j)=:A_i\cdot A_j\subseteq A_{i+j}\quad\forall i,j\in\mathbb{Z}_2
\label{eq:}
\end{equation}
The superalgebra $A$ is called \emph{super commutative} if 
\begin{equation}
a\cdot b=(-1)^{|a||b|}b\cdot a
\label{eq:}
\end{equation}
for all homogeneous $a,b\in A$. 
\end{definition}

\begin{definition}
Let $A$ be a superalgebra. A \emph{super left $A$-module} $\mathcal{V}$ is a super vector space which, in addition, has the structure of a left $A$-module such that
\begin{equation}
A_i\cdot \mathcal{V}_j\subseteq\mathcal{V}_{i+j}\quad\forall i,j\in\mathbb{Z}_2
\label{eq:}
\end{equation}
A morphism $\phi:\,\mathcal{V}\rightarrow\mathcal{W}$ between super left $A$-modules is a map between the underlying super vector spaces such that $\phi(a\cdot v)=a\phi(v)$ $\forall a\in A$, $v\in\mathcal{V}$. Analogously, one defines a \emph{super right $A$-modules} and morphisms between them.
\end{definition}
\begin{remark}
Given super left $A$-module $\mathcal{V}$ one can also turn it into a super right $A$-module setting
\begin{equation}
v\cdot a:=(-1)^{|v||a|}a\cdot v
\label{eq:}
\end{equation}
for homogeneous $a\in A$ and $v\in\mathcal{V}$. For this reason, in the following, we will simply say super $A$-module if we do not want to specify whether it should be regarded as a super left or right $A$-module. If $\mathcal{V}$ and $\mathcal{W}$ are super commutative super $A$-modules, their tensor product $\mathcal{V}\otimes\mathcal{W}$ is defined viewing $\mathcal{V}$ as a left and $\mathcal{W}$ as a right $A$-module.
\end{remark}
Given a super left $A$-modules $\mathcal{V}$ and $\mathcal{W}$ we denote the set of left $A$-module morphisms $\phi:\,\mathcal{V}\rightarrow\mathcal{W}$ by $\mathrm{Hom}_L(\mathcal{V},\mathcal{W})$ (and similarly $\mathrm{Hom}_R(\mathcal{V},\mathcal{W})$ for right $A$-module morphisms). As in remark \ref{Remark:2.2}, instead of just looking at parity preserving morphisms, one can also consider all possible linear maps $\phi:\,\mathcal{V}\rightarrow\mathcal{W}$ between the underlying vector spaces satisfying 
\begin{equation}
\phi(a\cdot v)=a\phi(v),\quad\forall v\in\mathcal{V},\,a\in A
\label{eq:}
\end{equation}
This again yields an internal $\underline{\mathrm{Hom}}_L(\mathcal{V},\mathcal{W})$ which has the structure of a super right $A$-module with $\underline{\mathrm{Hom}}_L(\mathcal{V},\mathcal{W})_0=\mathrm{Hom}(\mathcal{V},\mathcal{W})$.\\
As usual, we denote the evaluation of a morphism $\phi\in\underline{\mathrm{Hom}}_L(\mathcal{V},\mathcal{W})$ at $v\in\mathcal{V}$ via
\begin{equation}
\braket{v|\phi}\in\mathcal{W}
\label{eq:}
\end{equation}
This has the advantage that one does not need to care about signs due to super commutativity after right multiplication with elements $a\in A$, i.e., $\braket{v|\phi a}=\braket{v|\phi}a$ $\forall a\in A$, $v\in\mathcal{V}$.\\
Finally, let us define via $\phi\diamond\psi\in\underline{\mathrm{Hom}}_L(\mathcal{V},\mathcal{W}')$ the composition of two left linear morphisms $\phi:\,\underline{\mathrm{Hom}}_L(\mathcal{V},\mathcal{W})$ and $\psi:\,\underline{\mathrm{Hom}}_L(\mathcal{W},\mathcal{W}')$ given by
\begin{equation}
\braket{v|\phi\diamond\psi}:=\braket{\braket{v|\phi}|\psi},\quad\forall v\in\mathcal{V}\label{eq:}
\end{equation}
which, by definition, is well-behaved under right multiplication.
\begin{definition}
A \emph{super $A$-Lie module} (or \emph{super Lie algebra} or \emph{Lie superalgebra} if $A=\mathbb{K}$) is a super $A$-module $L$ such that bilinear map $m\equiv[\cdot,\cdot]:\,L\times L\rightarrow L$, also called the \emph{bracket}, is graded skew-symmetric, i.e., 
\begin{equation}
[a,b]=-(-1)^{|a||b|}[b,a]
\label{eq:}
\end{equation}
and satisfies the graded Jacobi identity
\begin{equation}
[a,[b,c]]+(-1)^{|a|(|b|+|c|)}[b,[c,a]]+(-1)^{|c|(|a|+|b|)}[c,[a,b]]=0
\label{eq:Jacobi}
\end{equation}
for all homogeneous $a,b,c\in L$.
\end{definition}

\begin{example}
\begin{enumerate}[label=(\roman*)]
 \item If $V$ is a vector space (finite or infinite dimensional), then the exterior power $\bigwedge{V}:=\bigoplus_{k=0}^{\infty}{\bigwedge^{k}V}$ also called \emph{Grassmann algebra} naturally defines a superalgebra with even and odd part given by $(\bigwedge{V})_0:=\bigoplus_{k=0}^{\infty}{\bigwedge^{2k}V}$ and $(\bigwedge{V})_1:=\bigoplus_{k=0}^{\infty}{\bigwedge^{2k+1}V}$, respectively. The Grassmann algebra is super commutative, associative and unital with unit $1\in\mathbb{R}=\bigwedge^{0}V$.
\item For $A$ a superalgebra, the tensor product $A\otimes\mathbb{K}^{m|n}$ defines a super $A$-module with grading $(A\otimes\mathbb{K}^{m|n})_0=A_0\otimes\mathbb{K}^{m}\oplus A_1\otimes\mathbb{K}^{n}$ and $(A\otimes\mathbb{K}^{m|n})_1=A_0\otimes\mathbb{K}^{n}\oplus A_1\otimes\mathbb{K}^{m}$.
\end{enumerate}
\end{example}

\begin{definition}
A super $A$-module $\mathcal{V}$ is called \emph{free} if it contains a homogeneous basis $\{e_i\}_{i=1,\ldots,m+n}$, $\mathrm{dim}\,\mathcal{V}=m|n$, such that any element $v\in\mathcal{V}$ can be written in the form $v=a^ie_i$ with coefficients $a^i\in A$ $\forall i=1,\ldots,m+n$. Equivalently, $\mathcal{V}$ is a free super super $A$-module iff it is isomorphic to $A\otimes\mathbb{K}^{m|n}$.
\end{definition}
\begin{definition}
\begin{enumerate}[label=(\roman*)]
 \item  Let $\mathcal{V}$ be a free super $A$-module. Two homogeneous bases $\{e_i\}_i$ and $\{f_j\}_j$ of $\mathcal{V}$ are called equivalent if they are related to each other by scalar coefficients, i.e., there exists real numbers $\tensor{a}{_i^j}\in\mathbb{K}$ such that $e_i=\tensor{a}{_i^j}f_j$ $\forall i.j$ (for a proof that this indeed defines an equivalence relation see \cite{Tuynman:2004}).
\item A free super $A$-module $\mathcal{V}$ together with a distinguished equivalence class of homogeneous bases of $\mathcal{V}$ is called a \emph{super $A$-vector space}. A morphism $\phi:\,\mathcal{V}\rightarrow\mathcal{W}$ between super $A$-vector spaces is a morphism of super $A$-modules such that $\phi$ maps the equivalence class of bases of $\mathcal{V}$ to the real vector space spanned by the equivalence class of bases of $\mathcal{W}$.
\end{enumerate}
\begin{remark}
If $\mathcal{V}$ is a super $A$-vector space with equivalence class $[\{e_i\}_i]$ of homogeneous bases of $\mathcal{V}$, then $\mathcal{V}\cong A\otimes V$ with $V$ the super vector space spanned by $\{e_i\}_i$ which, in particular, is independent on the choice of a representative of that equivalence class. Hence, the choice of such an equivalence class yields a well-defined super vector space $V$ also called the \emph{body of $\mathcal{V}$}.\\
On the other hand, if $\mathcal{V}$ is a free super $A$-module, one can always choose a homogeneous basis $\{e_i\}_i$ of $\mathcal{V}$ such that $\mathcal{V}$ becomes a super $A$-vector space w.r.t. the equivalence class $[\{e_i\}_i]$. However, such a choice may not be canonical and various different bases exist which are not related by scalar coefficients.
\end{remark}
\end{definition}

\section{Spinor calculus}
We summarize some important formulas concerning gamma matrices in Minkowksi spacetime and their chiral representation as we will need them in section \label{section:2.3} and \label{section:5.2}. The Clifford algebra $\mathrm{Cl}(\mathbb{R}^{1,3},\eta)$ of four-dimensional Minkowski spacetime is generated by gamma matrices $\gamma_I$, $I\in\{0,\ldots,3\}$, satisfying the Clifford algebra relations
\begin{equation}
    \left\{\gamma_{I},\gamma_{J}\right\}=2\eta_{IJ}
		\label{eq:A1.1}
\end{equation}
where the Minkowski $\eta$ is chosen with signature $\eta=\mathrm{diag}(-+++)$. In the chiral theory, we are working in the \emph{chiral} or \emph{Weyl representation} in which the gamma matrices take the form
\begin{equation}
\gamma_{I}=\begin{pmatrix}
0 & \sigma_{I}\\
\bar{\sigma}_{I} & 0
\end{pmatrix}\quad\text{and}\quad\gamma_{*}=\begin{pmatrix}
\mathds{1} & 0\\
0&-\mathds{1}
\end{pmatrix}
\label{eq:A1.2}
\end{equation}
with $\gamma_*:=i\gamma_0\gamma_1\gamma_2\gamma_3$ the highest rank Clifford algebra element also commonly denoted by $\gamma_5$ in four spacetime dimensions. Here,
\begin{equation}
\sigma_{I}=(-\mathds{1},\sigma_i)\quad\text{and}\quad\bar{\sigma}_{I}=(\mathds{1},\sigma_i)
\label{eq:A1.3}
\end{equation}
with $\sigma_i$, $i=1,2,3$ the ordinary Pauli matrices. With respect to spinorial indices, (\ref{eq:A1.3}) are written as $\sigma_I^{AA'}$ and $\bar{\sigma}_{I A'A}$, respectively. These can be used to map the internal indices $I$ of the co-frame $e^I$ to spinorial indices setting
\begin{equation}
e_{\mu}^{AA'}=e^I_{\mu}\sigma_I^{AA'}
\label{eq:A1.4}
\end{equation}
Primed and unprimed Spinor indices are raised and lowered with respect to the complete antsymmetric symbols $\epsilon^{A'B'}$ and $\epsilon^{AB}$, respectively, with the convention
\begin{equation}
\psi_A=\psi^B\epsilon_{BA}\quad\text{and}\quad\psi^A=\epsilon^{AB}\psi_B
\label{eq:A1.5}
\end{equation}
and analogously for primed indices. Due to $\epsilon\sigma_i\epsilon=\sigma_i^T$, one has the useful formula
\begin{equation}
\sigma_{I AA'}=\sigma_I^{BB'}\epsilon_{BA}\epsilon_{B'A'}=-\bar{\sigma}_{I A'A}
\label{eq:A1.6}
\end{equation}
Using (\ref{eq:A1.3}) as well as (\ref{eq:A1.6}), it is easy to see that
\begin{equation}
\sigma^I_{AA'}\sigma_J^{AA'}=-2\delta_J^I
\label{eq:A1.7}
\end{equation}
Finally, in a $3+1$-decomposition $M\cong\mathbb{R}\times\Sigma$ of the four dimensional spacetime $M$, one considers the spinor-valued one-forms $e_a^{AA'}$ which are related to the spatial metric $q$ on $\Sigma$ according to
\begin{equation}
2q_{ab}=-e_{a AA'}e_b^{AA'}
\label{eq:}
\end{equation}
with $a=1,2,3$. These, together with the future-directed unit normal vector field $n^{AA'}$ which is normal to the time slices $\Sigma_t$ and satisfies
\begin{equation}
n_{AA'}e_a^{AA'}=0\quad\text{and}\quad n_{AA'}n^{AA'}=2
\label{eq:A1.8}
\end{equation}
form a basis of spinors with one primed and one unprimed index. Finally, let us mention the important identities
\begin{align}
n_{AA'}n^{AB'}&=\delta_{A'}^{B'}\label{eq:A1.9}\\
n_{AA'}n^{BA'}&=\delta_{A}^{B}\label{eq:A1.10}\\
\sigma_{i AA'}\sigma^{AB'}_j&=-\delta_{ij}\delta_{A'}^{B'}-i\tensor{\epsilon}{_{ij}^k}n_{AA'}\sigma^{AB'}_k\label{eq:A1.11}\\
\sigma_{i AA'}\sigma^{BA'}_j&=-\delta_{ij}\delta_{A}^{B}+i\tensor{\epsilon}{_{ij}^k}n_{AA'}\sigma^{BA'}_k
\label{eq:A1.12}
\end{align}

\end{document}